    \newtheorem{theorem}{Theorem}[section]
    \newtheorem{lemma}[theorem]{Lemma}
    \newtheorem{proposition}[theorem]{Proposition}
    \newtheorem{corollary}[theorem]{Corollary}
    \newtheorem{claim}[theorem]{Claim}
    \newtheorem{conjecture}[theorem]{Conjecture}
    \newtheorem{problem}[theorem]{Problem}
    \newtheorem{observation}[theorem]{Observation}    
    \theoremstyle{definition}
    \newtheorem{definition}[theorem]{Definition}
    \newtheorem{access}[theorem]{Access}
	\newtheorem{notation}[theorem]{Notation}
    \newtheorem{remark}[theorem]{Remark}
    \newtheorem{fact}[theorem]{Fact}
    \newtheorem{convention}[theorem]{Convention}
\newcommand{\cA}{\mathcal{A}}
\newcommand{\cC}{\mathcal{C}}
\newcommand{\cF}{\mathcal{F}}
\newcommand{\cK}{\mathcal{K}}
\newcommand{\cL}{\mathcal{L}}
\newcommand{\cS}{\mathcal{S}}
\newcommand{\cW}{\mathcal{W}}
\newcommand{\cWa}{\cW^{\operatorname{a}}}
\newcommand{\scrF}{\mathscr{F}}
\newcommand{\scrC}{\mathscr{C}}
\newcommand{\scrA}{\mathscr{A}}
\newcommand{\NN}{\mathbb{N}}
\newcommand{\ZZ}{\mathbb{Z}}
\newcommand{\Z}{\mathbb{Z}}
\newcommand{\fD}{\mathfrak{D}}
\newcommand{\fG}{\mathfrak{G}}
\newcommand{\boldg}{\mathbf{g}}
\newcommand{\bolds}{\mathbf{s}}
\DeclareMathOperator{\enc}{enc}
\DeclareMathOperator{\abelgroups}{\mathfrak{Abel}}
\DeclareMathOperator{\groups}{\mathfrak{Groups}}
\DeclareMathOperator{\altgroups}{\mathfrak{Alt}}
\newcommand{\grpclass}{\mathfrak{G}}
\newcommand{\highagr}{\mathcal{L}}
\newcommand{\outputlist}{\widetilde{\mathcal{L}}}
\DeclareMathOperator*{\EE}{\mathbb{E}}
\DeclareMathOperator*{\PP}{\Pr}
\newcommand{\mean}[1][]{\textswab{m}\!_{\,#1}\ell}
\newcommand{\LGH}{\Lambda}
\newcommand{\ind}[2]{\lvert #1 : #2 \rvert}
\DeclarePairedDelimiter{\abs}{\lvert}{\rvert}
\DeclarePairedDelimiter{\ang}{\langle}{\rangle}
\DeclarePairedDelimiter{\gen}{\langle}{\rangle}
\DeclarePairedDelimiter{\paren}{(}{)}
\newcommand{\gengroup}[1]{\langle #1 \rangle} 
\newcommand{\gencoset}[1]{\langle #1 \rangle_{\mathsf{aff}}}
\newcommand{\genaff}[1]{\gencoset{#1}}
\newcommand{\defn}[1]{\textbf{#1}}
\DeclareMathOperator{\partialto}{\rightharpoonup}
\DeclareMathOperator{\dist}{dist}
\DeclareMathOperator{\depth}{depth}
\DeclareMathOperator{\Eq}{Eq}
\DeclareMathOperator{\agr}{agr}
\DeclareMathOperator{\Hom}{Hom}
\DeclareMathOperator{\aHom}{aHom}
\DeclareMathOperator{\Var}{Var}
\DeclareMathOperator{\Cov}{Cov}
\DeclareMathOperator{\poly}{poly}
\DeclareMathOperator{\Alt}{Alt}
\DeclareMathOperator{\Sym}{Sym}
\newcommand{\defeq}{:=}
\newcommand{\from}{\colon}
\DeclareMathOperator{\sig}{enl}
\newcommand{\buck}{\cL}
\newcommand{\buckl}[1]{\buck_{#1}}
\newcommand{\abkpm}{\cL_\psi^M}
\newcommand{\sbk}{\cL}
\newcommand{\gph}{X}
\DeclareMathOperator{\dom}{dom}
\newcommand{\exthom}{\widetilde{\varphi}}
\newcommand{\HomExt}{\textsc{HomExt}}
\newcommand{\Decode}{\textsc{Decode}}
\DeclareMathOperator{\HExt}{HExt}
\newcommand{\halfl}{\frac{\frac12 \lambda + \eps}{\lambda + \eps}}
\newcommand{\epsl}{\frac{\eps}{\lambda + \eps}}
\newcommand{\abelsubs}{\mathcal{A}}
\DeclareMathOperator{\BBmult}{\mathsf{mult}}
\DeclareMathOperator{\BBinv}{\mathsf{inv}}
\DeclareMathOperator{\BBid}{\mathsf{id}}
\newcommand{\vf}{\varphi}                       
\newcommand{\ie}{i.\,e.}                        
\newcommand{\mindist}{\textsf{mindist}}
\renewcommand{\phi}{\varphi}
\newcommand{\eps}{\varepsilon}
\begin{document}
\title{\LARGE List-decoding homomorphism codes with arbitrary codomains}
\author{L\'{a}szl\'{o} Babai\thanks{\tt laci@cs.uchicago.edu}}
\author{Timothy Black\thanks{\tt timblack@math.uchicago.edu}}
\author{Angela Wuu\thanks{\tt wu@math.uchicago.edu}}
\affil{University of Chicago}
\maketitle

\vspace{.2in}\begin{abstract}
	The codewords of the \emph{homomorphism code} $\aHom(G,H)$ 
	are the affine homomorphisms between two finite groups, 
	$G$ and $H$, generalizing Hadamard codes.  Following the 
	work of Goldreich--Levin (1989), Grigorescu et al. (2006),
	Dinur et al. (2008), and 
	Guo and Sudan (2014), we further expand the range of groups
	for which local list-decoding is possible up to $\mindist$,
        the minimum distance of the code.  In particular, for the first
	time, we {\bf do not require either $G$ or $H$ to be solvable}.
	Specifically, we demonstrate a $\poly(1/\eps)$ bound on the
	list size, i.\,e., on the number of codewords within distance 
	$(\mindist-\eps)$ from any received word,
	when $G$ is either abelian or an alternating group,
	and {\bf $H$ is an arbitrary (finite or infinite) group}.
        We conjecture that a similar bound holds for all finite
        simple groups as domains; the alternating groups serve
        as the first test case.

        The abelian vs. arbitrary result then permits us to adapt 
        previous techniques to obtain efficient local list-decoding
        for this case.  We also obtain efficient local list-decoding
	for the permutation representations of alternating groups
        (i.\,e., when the codomain is a symmetric group $S_m$) under the
        restriction that the domain $G=A_n$ is paired with codomain
	$H=S_m$ satisfying $m < 2^{n-1}/\sqrt{n}$.

        The limitations on the codomain in the latter case arise
        from severe technical difficulties stemming from
        the need to solve the \emph{homomorphism extension 
        $(\HomExt)$ problem} in certain cases; these are 
        addressed in a separate paper (Wuu 2018).  

        However, we also introduce an intermediate ``semi-algorithmic''
        model we call \textbf{Certificate List-Decoding} that bypasses
        the $\HomExt$ bottleneck and works in the alternating
        vs.~arbitrary setting.

	Our new combinatorial tools allow us to play on the 
        relatively well-understood top layers of the 
        subgroup lattice of the domain, avoiding the dependence 
        on the codomain, a bottleneck in previous work.
\end{abstract}

\tableofcontents

\section{Introduction}
\subsection{Brief history}

\label{section:intro-briefhistory}
Let $G$ and $H$ be finite groups, to be referred to as
the \emph{domain} and the \emph{codomain}, respectively.
A map $\psi : G\to H$ is an
\emph{affine homomorphism} if it is a translate of a 
homomorphism, \ie, if there exists a homomorphism
$\vf : G\to H$ and an element $h\in H$ such that 
$(\forall g\in G)(\psi(g)=\vf(g)\cdot h)$.  We write
$\Hom(G,H)$ and $\aHom(G,H)$ to denote the set of 
homomorphisms and affine homomorphisms, respectively.
Let $H^G$ denote the set of all functions $f : G\to H$.

We view $\aHom(G,H)$ as a (nonlinear) code within the
code space $H^G$ (the space of possible ``received words'')
and refer to this class of codes
as \emph{homomorphism codes}.

Homomorphism codes are candidates for efficient
\emph{local} list-decoding \emph{up to minimum distance} 
($\mindist$) and in many cases it is known that their 
minimum distance is (asymptotically) equal 
to the list-decoding bound. 

This line of work goes back to the celebrated paper 
by Goldreich and Levin (1989) \cite{GL89} who found
local list-decoders for Hadamard codes, \ie, for 
homomorphism codes with domain $G=\Z_2^n$ and
codomain $H=\Z_2$.  This result was extended to 
homomorphism codes of abelian groups (both the domain
and the codomain abelian)
by Grigorescu, Kopparty, and Sudan (2006) \cite{GKS06} 
and Dinur, Grigorescu, Kopparty, and Sudan (2008)
\cite{DGKS08} and to the case of supersolvable
domain and nilpotent codomain by Guo and Sudan (2014)
\cite{GS14}, cf. \cite{BGSW}.  

While homomorphism codes have low (logarithmic) rates,
they tend to have remarkable list-decoding properties.

In particular,
in all cases studied so far (including the present paper), 
for an \emph{arbitrary} received word $f \in H^G$, and any $\eps > 0$, 
the number of codewords within radius $(\mindist - \eps)$
is bounded by $\poly(1/\eps)$ (as opposed to some faster-growing
function of $\eps$, as permitted in the theory of list-decoding).
This is an essential feature for the complexity-theoretic
application (hard-core predicates) by Goldreich and Levin.

We call the $\poly(1/\eps)$ bound \emph{economical},
and a homomorphism code permitting such a bound 
{\bf combinatorially economically list-decodable (CombEcon)}.

By \emph{efficient} decoding we mean
$\poly(\log |G|, 1/\eps)$ queries to the received word and \linebreak
$\poly(\log|G|, \log|H|, 1/\eps)$  additional work.
We call a CombEcon code {\bf AlgEcon} 
{\bf (algorithmically economically list-decodable)}
if it permits efficient decoding in this sense.  So the cited 
results show that homomorphism codes with 
abelian domain and codomain, and more generally with
supersolvable domain and nilpotent codomain, are CombEcon 
and AlgEcon.

In all work on the subject, this efficiency depends on the 
computational representation of the groups used (presentation
in terms of generators and relators, black-box access, 
permutation groups, matrix groups).
We shall make the representation required
explicit in all algorithmic results.

\subsection{Our contribution -- combinatorial bounds}
In this paper we further expand the range of groups
for which efficient local list-decoding is possible up to the 
minimum distance.  In particular, for the first
time, we {\bf do not require either $G$ or $H$ to be solvable}.
In fact, in our combinatorial and semi-algorithmic results
(see below), {\bf the codomain is an arbitrary (finite or infinite)
group}.  We say that a class $\fG$ of finite groups is {\bf universally
CombEcon} if for all $G\in\fG$ and arbitrary (finite or infinite)
$H$, the code $\aHom(G,H)$ is CombEcon.  This paper is the first 
to demonstrate the existence of significant universally CombEcon classes.

\begin{convention} \label{conv:intr-infinite}
When speaking of a homomorphism code $\aHom(G,H)$, the domain $G$ 
will always be a finite group, but the codomain $H$ will, in general,
not be restricted to be finite.
\end{convention}

\begin{theorem}[Main combinatorial result]  \label{thm:maincomb}
Finite abelian and alternating groups are universally CombEcon.
\end{theorem}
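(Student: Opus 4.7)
The plan is to handle the abelian and alternating cases in parallel, exploiting in each case the algebraic structure of the subgroup lattice of $G$ to produce a list-size bound that does not depend on $H$. Fix a received word $f\in H^G$ and let $\vf_1,\ldots,\vf_t\in\aHom(G,H)$ be codewords with $\agr(\vf_i,f)\ge 1-\mindist+\eps$. The key observation is that any two distinct affine homomorphisms $\vf_i,\vf_j\colon G\to H$ agree either on the empty set or on a coset of a proper subgroup $S_{ij}\le G$, so $\agr(\vf_i,\vf_j)$ equals $0$ or $1/[G:S_{ij}]$. A standard Cauchy--Schwarz / second-moment inequality of the form
\[
t^2(1-\mindist+\eps)^2 \;\le\; t+\sum_{i\neq j}\agr(\vf_i,\vf_j)
\]
then shows that many of the subgroups $S_{ij}$ must lie in the \emph{top layer} of the subgroup lattice of $G$ (that is, be of small index). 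The codomain $H$ has now disappeared, and what remains is a purely combinatorial problem on the subgroup lattice of $G$.

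For the abelian case, the subgroup lattice decomposes via primary decomposition, so after grouping codewords by their behavior on each prime-power component we may pass to the setting where $G$ is an abelian $p$-group. The well-understood lattice structure then lets me bucket the pairs $(i,j)$ by the lattice level of $S_{ij}$ and carry out the counting level by level. Because the agreement values $1/[G:S_{ij}]$ are quantized, the second-moment estimate above, iterated over levels, yields a $\poly(1/\eps)$ bound on $t$ with no appeal to the structure of $H$.

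The alternating case carries the main combinatorial weight of the paper. For $G=A_n$, I would replace the primary decomposition with the O'Nan--Scott description of the maximal subgroups of $A_n$, partitioned into intransitive, imprimitive, and primitive types. The intransitive and imprimitive maximal subgroups (of shapes $A_k\times A_{n-k}$ and $A_k\wr S_{n/k}$, respectively) support a recursion on smaller alternating groups; the primitive case is controlled by CFSG, which bounds both the number and the orders of primitive subgroups of $A_n$. I would invoke the \antgame (and its refinement, the \antflaggame) to encode this ``chase'' through the top layers of the subgroup lattice of $A_n$ as a combinatorial game whose analysis yields the desired polynomial bound on $t$.

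The main obstacle, as expected, is the alternating case---not the primitive layer, which CFSG tames, but the imprimitive recursion. One must arrange that the restriction of a close codeword to an $A_k\wr S_{n/k}$-type maximal subgroup behaves like a close codeword for a smaller instance of the problem, and that the loss in the $\eps$-parameter at each recursive step is only polynomial. Closing this recursion uniformly in $H$ is precisely what the \antgame formalism is designed to accomplish.
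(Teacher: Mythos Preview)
Your second-moment inequality is essentially the paper's strong negative correlation / sphere-packing bound (Lemma~\ref{lemma:strong-negative-correlation} and Lemma~\ref{lemma:sphere-packing}), so the opening move is correct. After that, however, both halves of your plan diverge from the paper and, as written, have real gaps.

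For the abelian case, the paper does \emph{not} argue directly on the subgroup lattice of an abelian $p$-group. Instead it introduces \emph{abelian enlargements} $\sig_{\psi(G)}(\cdot)$ (Section~\ref{sec:abelian-enlargements}) and proves a structure theorem (Theorem~\ref{thm:codomain-few-abelian-subgroups}): there is a set $\abelsubs$ of $O(1/\eps^3)$ finite abelian subgroups of $H$ such that every $\vf\in\highagr$ has $\vf(G)\le M$ for some $M\in\abelsubs$. This reduces the $\{\text{abelian}\to\text{arbitrary}\}$ problem to the known $\{\text{abelian}\to\text{abelian}\}$ CombEcon bound of \cite{DGKS08}. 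Your ``level-by-level'' counting over the $p$-group lattice is not fleshed out, and it is unclear how you would get a bound independent of the number of levels (i.e., of $\log|G|$) without something playing the role of the enlargement trick or the DGKS black box.

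For the alternating case, the paper's argument is far lighter than what you propose: no O'Nan--Scott, no CFSG, no recursion through imprimitive wreath products. The paper uses only Jordan--Liebeck (Theorem~\ref{thm:JordanLiebeck}) to say that every subgroup of $A_n$ of density $>\Lambda^2$ contains a pointwise stabilizer $(A_n)_{(\Delta)}$ with $|\Delta|\le 5$, and Babai's subgroup-chain bound (Theorem~\ref{thm:babai-subgroup-chain}) to say such stabilizers have depth $\le 8$. The list is split into $O(1/\eps^2)$ buckets by sphere packing, each bucket into $\binom{n}{5}$ sub-buckets indexed by $\Delta$, and each sub-bucket has size $\le 1/\eps^8$ by a depth argument (Corollary~\ref{cor:bucket-bound}). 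The \antgame and \antflaggame you invoke are not developed anywhere in the paper; those macros are vestigial. Your imprimitive recursion is exactly the kind of machinery the paper's ``play on the top layers of the subgroup lattice'' strategy is designed to avoid, and the obstacle you flag---controlling the $\eps$-loss uniformly in $H$ through the recursion---is one the paper never has to face.
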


We explain this result in detail.  
By ``distance'' in a code we mean normalized Hamming distance.

\medskip\noindent
(Restatement of Theorem~\ref{thm:maincomb}.) \quad
Let the domain $G$ be a finite abelian or alternating group
and $H$ an arbitrary (finite or infinite) group.  
Let $\mindist$ denote the minimum distance of the 
homomorphism code $\aHom(G,H)$ and let $\eps > 0$.
Let $f\in H^G$ be an arbitrary received word.
Then the number of codewords within $(\mindist-\eps)$
of $f$ is at most $\poly(1/\eps)$.

\bigskip
The degree of the polynomial in the $\poly(1/\eps)$
expression for abelian domains $G$ is $C+4$ where $C$ 
is the degree in the corresponding $\{$abelian$\to$abelian$\}$
result (currently $C\approx 105$~\cite{GS14}). 
For alternating domains $G$, we prove a bound of 9 on the degree 
of the polynomial; with additional work, this can be improved 
to 7. 

Our choice of the alternating groups as the domain
is our test case of what we believe is a general
phenomenon valid for all finite simple groups.

\begin{conjecture}
The class of finite simple groups is universally CombEcon.
\end{conjecture}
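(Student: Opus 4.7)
The plan is to invoke the Classification of Finite Simple Groups (CFSG) and handle each family separately, using Theorem~\ref{thm:maincomb} as the base of the induction. The cyclic groups of prime order are abelian, hence already done. The alternating groups $A_n$ for $n\ge 5$ are covered by the same theorem. The 26 sporadic groups form a finite collection, so for each fixed $G$ one can in principle carry out a direct case analysis (or absorb all 26 into a uniform worst-case constant), presenting no conceptual difficulty. The substantive cases are therefore the classical and exceptional groups of Lie type, which must be handled family-by-family.

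For each Lie-type family, I would follow the combinatorial template that works for alternating groups, which the paper describes as exploiting the ``top layers of the subgroup lattice of the domain.'' Roughly: given a received word $f\in H^G$ and codewords $\varphi_1,\dots,\varphi_N \in \aHom(G,H)$ each within $(\mindist - \eps)$ of $f$, any pair $\varphi_i,\varphi_j$ agrees on a set of density at least $2\eps$ in $G$, and this set is a coset of the equalizer subgroup $\Eq(\varphi_i,\varphi_j) = \{g\in G : \varphi_i(g)=\varphi_j(g)\}$. Each pair thus produces a subgroup of index $O(1/\eps)$. To bound $N$ by $\poly(1/\eps)$ one needs (a) a bound on the number of subgroups of $G$ of index $O(1/\eps)$, and (b) a bound on the number of distinct codewords that can pairwise share a given subgroup as equalizer. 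Both ingredients must be extracted from the structure of the top of the subgroup lattice of $G$, exactly as in the alternating case, and then assembled via a sunflower/pigeonhole argument on the set of equalizer subgroups.

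The main obstacle is making (a) uniform over all finite simple groups. For classical groups of Lie type, Aschbacher's theorem partitions maximal subgroups into nine geometric classes plus an almost-simple class $\mathcal{S}$; combined with Liebeck--Shalev-style bounds on the number of subgroups of bounded index, and with the exponential-in-rank lower bound on the minimal faithful permutation degree, this should deliver a $\poly(1/\eps)$ subgroup count, but the quantitative bookkeeping across the nine families is delicate. The exceptional groups require the analogous treatment via the Liebeck--Seitz program. A further subtlety, shared with the alternating case in this paper, is that because $H$ is arbitrary (possibly infinite), step (b) must be carried out without any access to the representation theory or subgroup structure of $H$: one can only use intrinsic properties of $G$ and of the candidate equalizer subgroup together with the abstract extension problem for $\Hom(L,H)$ when $L\le G$ has small index. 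This intrinsic-to-$G$ perspective is precisely what the paper's framework advertises, and it is what makes CFSG --- rather than any theorem about $H$ --- the natural tool for attacking the conjecture.
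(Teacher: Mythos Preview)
This statement is a \emph{conjecture} in the paper, not a theorem; the paper offers no proof and explicitly presents the alternating groups as the first test case. So there is no ``paper's proof'' to compare against, and your proposal should be read as a research outline rather than a proof.

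As a plan, the CFSG case split is natural, and the paper itself remarks that certain Lie-type families are SRG (Section~\ref{section:intro-SRG}), which is the mechanism the authors would presumably pursue. However, your concrete technical step is wrong. You assert that any two codewords $\varphi_i,\varphi_j$ in the list agree on a set of density at least $2\eps$, giving an equalizer subgroup of index $O(1/\eps)$. This is false: from $\agr(\varphi_i,f)\ge\Lambda+\eps$ and $\agr(\varphi_j,f)\ge\Lambda+\eps$ one only gets $\agr(\varphi_i,\varphi_j)\ge 2(\Lambda+\eps)-1$, which is negative whenever $\Lambda+\eps<1/2$---the typical regime for simple $G$ (e.g.\ $\Lambda_{A_n,H}\le 1/n$). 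In fact the only universal bound is the \emph{upper} bound $\agr(\varphi_i,\varphi_j)\le\Lambda$, so two list elements can have trivial equalizer. The paper's actual mechanism (Sections~\ref{sec:strategy-part-one}--\ref{sec:bucket-splitting}) is different: one selects a maximal $\Psi\subseteq\cL$ with pairwise agreement $\le\Lambda^2$, bounds $|\Psi|$ by strong negative correlation, and then uses that every other list element agrees with some $\psi\in\Psi$ to density $>\Lambda^2$, producing equalizers of index $<1/\Lambda^2$ (which is $\poly(1/\eps)$ only after the large-$\eps$ reduction, Lemma~\ref{lemma:small-epsilon-vs-lambda}). The hard part is then bounding each bucket, which in the alternating/SRG case rests on the shallow-random-generation property---and whether that property (or a workable substitute) holds uniformly across all Lie-type families is precisely the open content of the conjecture. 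Your ingredients (a) and (b) are in the right spirit, but the argument connecting them to the list would have to be rebuilt along these lines.
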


The following problem is also open.

\begin{problem}
Is the class of finite groups universally CombEcon?
\end{problem}
We suspect the answer is ``no.''

Theorem~\ref{thm:maincomb} also holds for a hierarchy of
wider classes of finite groups
we call \emph{shallow random generation} groups or ``SRG groups''
(see Sec.~\ref{section:intro-SRG}).
This class includes the alternating groups.   The defining feature
of these groups is that a bounded number of random elements
generate, with extremely high probability, a ``shallow'' subgroup,
i.\,e., a subgroup at bounded distance from the top of the subgroup 
lattice. 

Our new combinatorial tools allow us to play on the 
relatively well-understood top layers of the 
subgroup lattice of the domain, avoiding the dependence 
on the codomain, a bottleneck in previous work.

\begin{remark}  \label{rmk:blowup}
Our results list-decode certain classes of codes up
to distance $(\mindist-\eps)$ for positive $\eps$.
In many cases, $\mindist$ is the list-decoding
boundary; 
examples show that the length of the list may blow up
when $\eps$ is set to zero.
Classes of such examples with abelian domain and codomain were found by
Guo and Sudan \cite{GS14}.  We add classes of examples with alternating
domains (Section~\ref{section:alt-blowup}).
\end{remark}

\subsection{Our contribution -- algorithms}

On the algorithmic front, the combinatorial bound in the
$\{$abelian$\to$arbitrary$\}$ case permits us to adapt the 
algorithm of \cite{GKS06} to obtain efficient local 
list-decoding.  We say that a class $\fG$ of finite groups 
is {\bf universally AlgEcon} if for all $G\in\fG$ and 
arbitrary finite $H$, the code $\aHom(G,H)$ is AlgEcon.  
The validity of such a statement depends not only on
the class $\fG$ but also on the representation of the
domain and codomain.

\begin{corollary}
Let $G$ be a finite abelian group and $H$ an arbitrary
finite group.  Under suitable assumptions on the representation
of $G$ and $H$, the homomorphism code $\aHom(G,H)$ is
AlgEcon.  
\end{corollary}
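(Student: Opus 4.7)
The plan is to combine the list-size bound from Theorem~\ref{thm:maincomb} with a Goldreich--Levin-style local algorithm, generalizing the GKS decoder~\cite{GKS06} to accommodate a non-abelian codomain. For the representation, I would assume $G$ is presented by its invariant factor decomposition $G=\bigoplus_{i=1}^{k} \ZZ_{n_i}$ (so $k=O(\log |G|)$ generators $e_i$ of known orders are in hand) and that $H$ is a black-box group with a known upper bound on $\log|H|$, supporting multiplication, inversion, and equality testing at unit cost.

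The first reduction is to strip off the affine offset. Every $\psi \in \aHom(G,H)$ is determined by the pair $(\vf,h_0)$ with $\vf=\psi\cdot\psi(0_G)^{-1}\in\Hom(G,H)$ and $h_0=\psi(0_G)$; moreover, since $G$ is abelian, $\vf(G)$ is an abelian subgroup of $H$, so the images $\vf(e_i)$ pairwise commute and each $\vf(e_i)$ has order dividing $n_i$. Hence $\vf$ is determined by the tuple $(\vf(e_1),\ldots,\vf(e_k))$, and any candidate tuple violating these purely group-theoretic constraints may be rejected immediately. Decoding therefore reduces to generating a $\poly(1/\eps)$-size list of candidate tuples together with candidate offsets $h_0$.

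Next I would invoke the Goldreich--Levin template. Pick $t=O(\log L)$ random seeds $s_1,\ldots,s_t\in G$, where $L=\poly(1/\eps)$ is the list-size bound supplied by Theorem~\ref{thm:maincomb}. The $2^t$ partial sums $\sigma_S=\sum_{i\in S}s_i$ ($S\subseteq[t]$) are pairwise independent, so for any putative value vector $\vec y=(y_1,\ldots,y_t)\in H^t$ guessed as $(\vf(s_1),\ldots,\vf(s_t))$ and putative offset $h_0\in H$, one can extract a candidate value of $\vf(g)$ at every $g\in G$ by taking the plurality of $f(g+\sigma_S)\cdot h_0^{-1}\cdot(\prod_{i\in S}y_i)^{-1}$ over $S\neq\emptyset$; a standard second-moment argument guarantees that the plurality equals $\vf(g)$ with probability $>1/L$ once the guess is correct. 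A final sampling-based closeness test over $O(L/\eps^2)$ random points of $G$ culls each candidate down to those within $(\mindist-\eps)$ of $f$.

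The main obstacle is enumerating the candidate pairs $(\vec y,h_0)$ without paying the prohibitive cost $|H|^{t+1}$. This is resolved by exploiting the list-size bound itself: at most $L$ distinct tuples $(\vf(s_1),\ldots,\vf(s_t))$ can arise from codewords in the list, so I would build them up iteratively, coordinate by coordinate, maintaining at each stage only those partial tuples $(y_1,\ldots,y_j)$ that both satisfy the commutativity/order constraints and pass a random consistency check against $f$ (self-corrected using the coordinates chosen so far); Theorem~\ref{thm:maincomb} applied to each partial-fixing subproblem keeps every intermediate list of size $\poly(1/\eps)$, mimicking the pruning in~\cite{GKS06}. Summing over the $O(\log L)$ stages, this yields $\poly(\log|G|,1/\eps)$ queries to $f$ and $\poly(\log|G|,\log|H|,1/\eps)$ additional black-box work, certifying AlgEcon.
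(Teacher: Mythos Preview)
Your high-level plan---adapt the GKS/DGKS decoder, using the CombEcon bound of Theorem~\ref{thm:maincomb} to keep the branching factor under control and adding commutativity/order checks because $\vf(G)$ must be abelian---is exactly the paper's approach (Section~\ref{sec:abelian-to-anything-algorithm}).

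A few technical points deserve correction. First, the paper (following DGKS08) needs the \emph{primary} decomposition of $G$, not merely the invariant-factor one; the \textsc{Extend} step is organized around a specific prime $p$ with $\Lambda=1/p$, and the per-coordinate divisibility test uses the prime $p_i$ of that coordinate. Second, your plurality formula $f(g+\sigma_S)\cdot h_0^{-1}\cdot(\prod_{i\in S}y_i)^{-1}$ has $h_0^{-1}$ on the wrong side: since $\psi(x)=h_0\vf(x)$ and $H$ is non-abelian, the correct recovery is $h_0^{-1}\cdot f(g+\sigma_S)\cdot(\prod_{i\in S}y_i)^{-1}$. Third, and more substantively, you never say where the candidate offsets $h_0$ come from---you cannot enumerate $H$. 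The paper handles this cleanly by the $\Hom$/$\aHom$ reduction (Section~\ref{sec:Hom-versus-aHom}): sample a random $a\in G$, set $f^a(g)=f(a)^{-1}f(ag)$, and list-decode $\Hom(G,H)$ against $f^a$; this succeeds whenever $f(a)=\psi(a)$, which happens with probability at least $\Lambda+\eps$, so $O((1/\eps)\log(1/\eps))$ random choices of $a$ suffice. Finally, the paper does not re-derive a Goldreich--Levin template with $2^t$ partial sums; it takes DGKS08's coordinate-by-coordinate \textsc{Extend} procedure verbatim and inserts one extra test: before accepting a pair of samples in coordinate $i$, check that the two sampled values commute with each other and with the already-committed $\phi(e_1),\dots,\phi(e_{i-1})$. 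That single modification is what makes the DGKS analysis go through for arbitrary $H$.
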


In other words, abelian groups are \emph{universally AlgEcon.}

In fact, the algorithm is so efficient that in the 
\emph{unit-cost black-box-access model} for $H$ 

(elements of $H$ can be named and operations
on them performed at unit cost)
the work required is only $\poly(\log|G|, 1/\eps)$.
(The cost does not depend on $|H|$; indeed, in this case, 
infinite $H$ is also allowed).  

We need to clarify the ``suitable representation.''
It suffices to assume that $G$ is a finite abelian group 
given in any presentation by generators and relators,
assuming in addition that a superset of the prime divisors 
of the order of $G$ is available.  Without the prime divisors,
we need a factoring oracle.  We need black-box access to $H$.

A \emph{permutation representation of degree $m$} of a group $G$ is a
homomorphism $G\to S_m$, where the codomain is the symmetric
group of degree $m$.  We also obtain efficient local list-decoding
for the permutation representations of alternating groups
under a rather generous restriction on the size of the
permutation domain.

\begin{theorem}[Main algorithmic result] \label{thm:mainalg}
Let $G=A_n$ be the alternating group and $H=S_m$ the symmetric
group of degree $m$.  Then $\aHom(G,H)$ is AlgEcon, assuming
$m < 2^{n-1}/\sqrt{n}$.
\end{theorem}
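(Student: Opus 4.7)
The plan is to combine the combinatorial bound of Theorem~\ref{thm:maincomb} with the Certificate List-Decoding framework introduced in this paper and the $\HomExt$ algorithm of Wuu (2018), in a manner analogous to how \cite{GKS06} turned the combinatorial bound of \cite{GL89} into an efficient algorithm. Since the proof of Theorem~\ref{thm:maincomb} already gives universal CombEcon for alternating domains, the combinatorial side is settled: for any received word $f \in S_m^{A_n}$ and any $\eps > 0$, the list of codewords within distance $(\mindist - \eps)$ has size $\poly(1/\eps)$.

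The algorithmic pipeline I would implement follows the sample-and-extend paradigm. Draw a constant number of elements $g_1,\dots,g_t \in A_n$ uniformly at random. By the shallow random generation property of the alternating groups (which underlies Theorem~\ref{thm:maincomb}), with overwhelming probability $M = \gengroup{g_1,\dots,g_t}$ is a shallow subgroup of $A_n$, i.e.\ its index is $\poly(n)$; moreover, only polynomially many such $M$ can arise. For each candidate assignment of images to the $g_i$ — obtained by querying $f$ on small neighborhoods of each $g_i$ and taking local majorities, as in the Goldreich--Levin/\cite{GKS06} analysis — one produces a partial map $\psi\from M \to S_m$. The Certificate List-Decoding stage then filters these partial maps down to those that are consistent with $f$ on a random check set; the combinatorial bound from Theorem~\ref{thm:maincomb} guarantees that only $\poly(1/\eps)$ certificates survive.

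To turn each certificate into an actual codeword one must solve $\HomExt$: given $\psi \in \Hom(M, S_m)$, decide whether $\psi$ extends to a homomorphism $A_n \to S_m$ and, if so, produce it. Precisely this task is carried out in $\poly(n,m)$ time by Wuu (2018) under the hypothesis $m < 2^{n-1}/\sqrt{n}$. Running $\HomExt$ on each of the $\poly(1/\eps)$ certificates, discarding the certificates that do not extend, and filtering the resulting candidates by their actual distance to $f$ yields the required list with total work $\poly(n, m, 1/\eps)$ and only $\poly(n, 1/\eps)$ queries to $f$, matching the AlgEcon efficiency requirements (since $\log|A_n|=\Theta(n\log n)$ and $\log|S_m|=\Theta(m\log m)$).

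The main obstacle is the $\HomExt$ step, which is the \emph{raison d'\^etre} for the restriction $m < 2^{n-1}/\sqrt{n}$: extending a homomorphism from a shallow subgroup of $A_n$ into a symmetric group requires delicate representation-theoretic bookkeeping (controlling how an almost-full alternating subgroup can act on a small set), and this is exactly what fails when $m$ is too large — hence the deferral of the $\HomExt$ analysis to Wuu (2018). A secondary, more routine obstacle is calibrating the number of local queries used to guess $\psi$ on the generators of $M$ so that the pool of certificates remains of size $\poly(1/\eps)$; this is a straightforward but careful adaptation of the Goldreich--Levin machinery to the nonabelian codomain $S_m$, using the combinatorial bound of Theorem~\ref{thm:maincomb} in place of the abelian counterpart invoked in \cite{GKS06}.
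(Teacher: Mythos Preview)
Your high-level architecture is correct and matches the paper: alternating groups are SRG, hence universally $\cWa_\Lambda$-CertEcon (Theorem~\ref{thm:contr-SRG-cert-stronger}); Lemma~\ref{lem:lambda-alternating} gives $\Lambda_{A_n,H}\ge 1/\binom{n}{2}$; Wuu's $\HomExt_{1/\binom{n}{2}}(A_n,S_m)$ algorithm (Theorem~\ref{thm:homext}) supplies the $\cWa_\Lambda$-subword extender; and Observation~\ref{obs:cert-to-alg} glues these into AlgEcon. That is exactly the paper's proof of Theorem~\ref{thm:main-alt-alg}.

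Where your proposal diverges is in \emph{how} the certificates are produced. You write that images of the sampled $g_i$ are ``obtained by querying $f$ on small neighborhoods of each $g_i$ and taking local majorities, as in the Goldreich--Levin/\cite{GKS06} analysis.'' This step is both unnecessary and problematic. The Goldreich--Levin self-correction trick relies on linearity: one recovers $\vf(g)$ from noisy $f$ via $f(g+r)-f(r)$ for random $r$. For homomorphisms into $S_m$ there is no such local correction formula that works uniformly over all $\vf$ without already knowing something about $\vf$, so it is unclear what ``local majority'' even means here. The ``secondary obstacle'' you flag --- calibrating local queries for a nonabelian codomain --- is thus a phantom difficulty arising from importing the wrong template.

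The paper's certificate-list-decoder (Theorem~\ref{thm:contr-srg-cert-unabr}) is far simpler: draw $\poly(1/\eps)$ independent random tuples $R=(g_1,\dots,g_{k+d+1})\in G^{k+d+1}$ and output the raw restrictions $f|_R$. No voting, no correction. The analysis is that for each $\vf\in\cL$, with probability at least $\eps^{k+d+1}$ a single random tuple simultaneously (i) lands entirely in $\Eq(f,\vf)$ and (ii) generates a subgroup of density exceeding $\Lambda$; when both occur, $f|_R=\vf|_R$ is a $\cWa_\Lambda$-certificate for $\vf$. Drawing $O\bigl(\eps^{-(k+d+1)}\ln(1/\eps)\bigr)$ tuples then covers every $\vf$ in the list with high probability. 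This is why no group operations in the codomain are needed at the certificate stage --- we simply read $f$ at the sample points.
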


The limitations on the codomain arise from severe technical 
difficulties encountered.

In contrast to all previous work, in the alternating case
the minimum distance does not necessarily
correspond to a subgroup of smallest index
(modulo the ``irrelevant kernel,'' see 
Sec.~\ref{section:contr-extendingdomain}).  
This necessitates the introduction of the 
\emph{homomorphism extension $(\HomExt)$ problem}, a 
problem of interest in its own right, which remains
the principal bottleneck for algorithmic progress.
The problem was solved by Wuu~\cite{HE}
in the special case stated above.  

To bypass the $\HomExt$ bottleneck, we introduce a new
  model we call \textbf{Certificate List-Decoding}.  In this
  model the output is a short ($\poly(1/\eps)$)
  list of partial maps from $G$ to $H$
  that includes, for each affine homomorphism $\vf$ within 
  $(\mindist-\eps)$ of the received word, a 
  \emph{certificate} of $\vf$, i.\,e.,
  a partial affine homomorphism that uniquely extends to $\vf$.

  We say that a homomorphism code is
{\bf economically certificate-list-decodable (CertEcon)}
if such a list can be efficiently generated.

Note that, by definition, AlgEcon $\implies$ CertEcon $\implies$ 
CombEcon.

  We say that a class $\fG$ of finite groups is {\bf universally
  CertEcon} if for all $G\in\fG$ and arbitrary (finite or infinite)
$H$, the code $\aHom(G,H)$ is CertEcon. 

\begin{theorem}[Main semi-algorithmic result]
Alternating groups are universally CertEcon.
\end{theorem}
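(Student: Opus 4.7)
The plan is to execute the same two-phase combinatorial-plus-extension strategy that underlies the main algorithmic result (Theorem~\ref{thm:mainalg}), but to \emph{halt after the combinatorial phase}, never invoking the homomorphism-extension oracle $\HomExt$ that is the bottleneck for arbitrary codomains.  The outputs of the combinatorial phase are already partial affine homomorphisms supported on a small, well-chosen subset of $G$, and these will be the certificates we need.

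First I would invoke the shallow random generation property of the alternating groups.  A constant-size tuple $T=(g_1,\ldots,g_s)$ of independent uniform random elements of $A_n$ generates a shallow subgroup of $A_n$ with overwhelming probability; in fact, by Dixon's theorem even a random pair generates all of $A_n$ with probability tending to $1$.  Any partial map $T\to H$ extends to at most one affine homomorphism $A_n\to H$, so a \emph{valid} partial affine homomorphism supported on $T$ is automatically a certificate.

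Next I would run the \antflaggame on the top layers of the subgroup lattice of $A_n$, anchored at such a random tuple.  The game maintains a working pool of annotated subgroups -- a subgroup together with a partial affine homomorphism on it -- whose size stays within $\poly(1/\eps)$ throughout, by the combinatorial analysis underlying Theorem~\ref{thm:maincomb}.  The completeness side of that bound guarantees that for each affine homomorphism $\vf$ within $(\mindist-\eps)$ of $f$, at least one annotation on the final pool carries a partial map equal to a restriction of $\vf$; these partial maps are exactly the required certificates.  Annotations that fail to be restrictions of any actual affine homomorphism can be discarded by group-theoretic consistency checks on products of generators -- checks that use only multiplication and inversion in $H$ and that do \emph{not} invoke $\HomExt$.

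The main obstacle will be driving the combinatorial procedure algorithmically from the received word $f$ alone, using $\poly(\log|G|,\log|H|,1/\eps)$ work per step while keeping the output list of size $\poly(1/\eps)$.  Two features of the setup are essential.  First, the SRG property of $A_n$ lets us play entirely on the top of the subgroup lattice, so the ``recursion'' depth is bounded independently of the codomain.  Second, each transition in the game -- adding a new generator, voting on a fresh value via self-correction against $f$, and pruning or merging branches -- manipulates only a $\poly(1/\eps)$-size pool of partial maps and performs arithmetic only in $H$, which is precisely what black-box access to an arbitrary (possibly infinite) codomain provides.
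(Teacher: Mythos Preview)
Your opening intuition is right: shallow random generation plus sampling random tuples from $G$ is the engine.  But you then veer into machinery that the paper neither contains nor needs.  There is no ``\antflaggame'' anywhere in the proofs; that macro is dead code.  There is no recursion on the subgroup lattice, no working pool of annotated subgroups, no voting or self-correction against $f$, no pruning, and no consistency checks in $H$.  In fact the certificate-list-decoder performs \emph{zero} group operations in $H$.

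The paper's actual algorithm is almost embarrassingly simple: independently draw $\Theta\!\bigl(\eps^{-b}\ln(1/\eta\eps^{b})\bigr)$ tuples of size $b=k+d+1$ from $G$ uniformly at random, query $f$ on each tuple, and output the resulting partial maps.  That is the entire procedure.  The analysis (Theorem~\ref{thm:SRG-Cert-domaincertificates}) shows that for each $\varphi\in\cL$, with probability at least $\eps^{b}$ a single random $b$-tuple lands entirely inside $\Eq(f,\varphi)$ \emph{and} generates a subgroup of density exceeding $\Lambda$; the first event has probability $\ge(\Lambda+\eps)^{k}$ and, conditioned on it, the second has probability $\ge(\eps/(\Lambda+\eps))^{c}$ by the $(k,\Lambda,c)$-subset-generation property (which follows from SRG via Theorem~\ref{thm:SRG-suff-condition}).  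A union bound over the CombEcon-bounded list then gives the $1-\eta$ guarantee.

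The point you are missing is that a certificate-list is permitted to contain arbitrary junk (Remark~\ref{rem:cert-list}): partial maps that are not certificates of anything need not be filtered out.  So there is no need to ``discard annotations by consistency checks,'' and hence no need for any arithmetic in $H$ at all.  Your proposal over-engineers the easy half of the problem; the hard half (extending certificates to full homomorphisms) is precisely what CertEcon is designed to sidestep.
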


In fact we show that SRG groups are universally CertEcon.

Finally, we show that certificate list-decoding, combined 
with a $\HomExt$ oracle for the top layers of the subgroup 
lattice of $G$, suffices for list-decoding $\aHom(G,H)$.

This is the route we take to proving Theorem~\ref{thm:mainalg}.

We give more formal statements of these results in 
Section~\ref{sec:formal-contribution}.

\subsection{The structure of the paper}
\label{sec:structure}

Much of our conceptual framework can be interpreted
for codes in general, not just for homomorphism codes.
In Section~\ref{section_generalcodes} 
we develop the general terminology.
This includes the notions of \emph{economy}
in local list-decoding as well as 
the new concepts of \emph{certificate-list decoding}
(Sec.~\ref{section:terminology-cert}), our semi-algorithmic
intermediate concept, and \emph{mean-list decoding},
our main tool for domain relaxation 
(Sec.~\ref{sec:mean-list}), motivated by Guo and Sudan's
use of repeated codes~\cite{GS14}.  We also introduce
\emph{subword extenders}, which constitute the bridge
between certificate-list decoding and algorithmic
list-decoding (Sec.~\ref{section:terminology-subword}).

In Section~\ref{section_background} we present
notation and terminology from group theory
and computational group theory, including
our \emph{access models}, i.\,e.,
computational representations of groups 
(black-box, \linebreak generator-relator presentations, etc.).

Section~\ref{sec:formal-contribution} gives formal
statements of our results and occasional minor proofs
that contribute to the conceptual development.
The section includes
a discussion of shallow-random-generation (SRG) groups
(Section~\ref{section:intro-SRG}).  
Section~\ref{section:contr-cert-homext-to-alg}
explains the role of the Homomorphism Extension
problem in bridging the gap between 
certificate-list decoding and algorithmic list-decoding.

Section~\ref{sec:bipartite} describes a simple combinatorial
lemma (``Bipartite covering lemma'') and applies it in two
separate contexts: connecting mean-list-size to list-size,
from which we infer our domain relaxation principle, and 
the equivalence (both combinatorial and algorithmic) 
of $\Hom$ and $\aHom$.

Section~\ref{section:strategy-for-CombEcon} outlines our
basic strategy for the combinatorial bounds.  It indicates
the differences between the approach to abelian domains
and to alternating (and SRG) domains.  We indicate that
the same strategy also produces certificate-list-decoders.

Section~\ref{section:tools} describes the tools
for the combinatorial bounds.  We compare one of our
tools, a sphere packing argument via a strong
negative correlation inequality, to the Johnson bound.

Section~\ref{section:abelian} gives the full technical development
of our results for abelian domains.  

The rest of the paper, Sections~\ref{section:alternating}
to~\ref{section:srg-consequences}, provides the proofs
for alternating domains and their generalizations,
the SRG groups.

We give two proofs that alternating groups are CombEcon.

The first proof, in Section~\ref{section:alternating},
is based on a sphere packing argument and is non-constructive,
but the method applies under quite general circumstances.
The second, in Section~\ref{section:srg-consequences},
depends on structure specific to the alternating groups 
(or more generally, to SRG groups), that proof 
directly translates to a semi-algorithmic result
(CertEcon), and under restrictions of the codomain, also provides
an algorithmic result (AlgEcon).

\section{Terminology for general codes}
\label{section_generalcodes}
\subsection{List-decoding}
We introduce some terminology that applies to codes in general and 
not just homomorphism codes. 

Let $\Sigma$ be an alphabet and $\Omega$ a set we think of as the set of positions. We view $\Sigma^\Omega$, the set of $\Omega \to \Sigma$ functions, as our code space; we call its elements \emph{words.}
We write $\dist(u,w)$ for the \defn{normalized Hamming distance} between two words $u, w \in \Sigma^\Omega$ (so $0 \leq \dist(u,w) \leq 1$) and refer to it simply as \emph{distance.}  Let $\cC \subseteq \Sigma^\Omega$ be a code; we call its elements \emph{codewords.}

We write $\mindist(\cC)$ (or simply $\mindist$) for the minimum distance between distinct codewords in $\cC$. 

Words we wish to decode are referred to in the literature as \emph{received 
words.} We refer to the set of codewords within a specified distance $\rho$ of a received word $f \in \Sigma^\Omega$ as \defn{``the list''} and denote it by $\cL = \cL(\cC, f, \rho)$. We write $\ell(\cC, \rho) := \max_f \abs{\cL(\cC, f, \rho)}$.

\subsection{Combinatorial list-decoding}

The list-decoding problem
splits into a combinatorial and
an algorithmic part.

The combinatorial problem, to which we refer as
\emph{combinatorial list-decoding}, asks to
bound the size of the list.  Typically, we take $\rho = (\mindist-\eps)$ and 
we wish to obtain a bound $\ell(\cC, \rho) \leq c(\eps)$, that depends only on 
$\eps$ and the class $\scrC$ of codes under discussion ($\cC\in\scrC$).

We say that a class $\scrC$ of codes is \textbf{CombEcon} 
(``combinatorially economically list-decodable'')
if $c(\eps) = \poly(1/\eps)$ for $\cC\in\cF$. 
(With some abuse of terminology, we shall refer to a code $\cC$ 
as a CombEcon code is the class $\scrC$
of codes is understood from the context.)

\subsection{Algorithmic list-decoding}
We shall describe algorithms with certain performance guarantees
typically guaranteeing properties of the output with specified
probability.

A \defn{list-decoder} is an algorithm that, given the received word 
$f\in \Sigma^\Omega$ and the distance $\rho$, lists a superset 
$\outputlist$ of the list $\cL = \cL(\cC, f, \rho)$. 
 Typically, we take $\rho = (\mindist-\eps)$ and we wish to produce 
a list of size $\abs{\outputlist} \leq \widetilde{c}(\eps)$ for some
$\widetilde{c}(\eps)$ that depends only on $\eps$ and 
the class $\scrC$ of codes under discussion ($\cC\in\scrC$).

Adapting the terminology of \cite{GKS06} and \cite{DGKS08}, we say 
that a \defn{local algorithm} is a probabilistic algorithm that 
has oracle access to the received word $f$.  

We say that $\scrC$ is an \defn{AlgEcon} (``algorithmically economically 
list-decodable'') class of codes if there exists a local list-decoder 
with the following features. \\
\textbf{Input:} $\mindist$, $\epsilon > 0$, oracle access to $f \in \Sigma^\Omega$. \\
\textbf{	Notation:} $\cL = \cL(\cC, f, \mindist - \eps)$. \\
\textbf{Output:} A list $\outputlist$ of codewords in $\cC$ of length $\abs{\outputlist} = \poly(1/\eps)$.\\
\textbf{Guarantee:} With probability $\geq 3/4$, we have $\outputlist \supseteq \cL$.\\
\textbf{Cost:} 
	\begin{enumerate}[(i)]
	\item $\poly(\log \abs{\Omega}, 1/\eps)$ queries to the received word $f$.
	\item $\poly(\log \abs{\Omega}, \log\abs{\Sigma}, 1/\eps)$ amount of work.
	\end{enumerate}
\textbf{Access:} The meaning of this definition depends also on the access model to $\Sigma$ and $\Omega$. We shall clarify this in each application.

\paragraph{\underline{Strong AlgEcon}\vspace{.1in}\\}

In the \textbf{unit cost model} for $\Sigma$, we charge unit cost to name an element of $\Sigma$. 

We say that $\cC$ is a \textbf{strong AlgEcon} code if there exists a list-decoder  satisfying the conditions of AlgEcon, except with (ii) replaced by the following. 
\begin{enumerate}
\item[(ii')] $\poly(\log \abs{\Omega}, 1/\eps)$ amount of work in the unit cost model for $\Sigma$.
\end{enumerate}

Typically, elements of $\Sigma$ are encoded by strings of length $\log \abs{\Sigma}$ and therefore (ii') implies (ii) with linear dependence on $\log \abs{\Sigma}$. The AlgEcon results proved in prior work~\cite{DGKS08, GS14, BGSW} are actually strong AlgEcon results for those classes of pairs of groups. Our AlgEcon result for abelian domain is also strong AlgEcon (see Section~\ref{section:contr-abel}). On the other hand, our AlgEcon result for alternating domain does not meet the ``strong'' requirement.

\begin{remark}
The unit cost model can also be used in the case of infinite $\Sigma$. In fact, our AlgEcon result for abelian domains holds even for infinite codomains in the unit cost model, i.e., it satisfies (ii').
\end{remark}

\subsection{Certificate list-decoding}
\label{section:terminology-cert}
In the light of technical difficulties arising from algorithmic
list-decoding, we introduce a new type of list-decoding that is
intermediate between the combinatorial and algorithmic. We call it
``certificate list-decoding.''  We shall refer to results
of this type as ``semi-algorithmic.''

A \defn{partial map} $\gamma$ from $\Omega$ to $\Sigma$, denoted
$\gamma \colon \Omega \partialto \Sigma$, is a map of a subset of
$\Omega$ to $\Sigma$. In particular, $\dom(\gamma) \subseteq \Omega$.

\begin{definition}[Certificate]
We say that a partial map $\gamma \colon \Omega \partialto \Sigma$
is a \defn{certificate for the codeword}
$\vf\in \cC$ if $\gamma = \vf_{|\dom(\gamma)}$ and
$\vf$ is the unique codeword that extends $\gamma$. 
A \defn{certificate} for the code $\cC$ 
is a certificate for some codeword in $\cC$. 
\end{definition}

\begin{definition}[Certificate-list]
We say that a list $\Gamma$ of $\Omega \partialto \Sigma$ partial maps is a 
\defn{certificate-list} for the set $\cK \subseteq \cC$ of codewords if 
$\Gamma$ contains a certificate for each codeword in $ \cK$. 
A \defn{certificate-list} for $\cC$  \defn{up to distance} $\rho$ 
of the received word $f: \Omega \to \Sigma$ is a certificate-list 
for the list $\cL = \cL(\cC, f, \rho)$. 
\end{definition}
\begin{remark}   \label{rem:cert-list}
Note that we permit the certificate-list $\Gamma$ to contain redundancies 
(more than one certificate for the same codeword) and irrelevant items 
(partial functions that are not certificates of any codeword in $\cK$, 
or not even certificates of any codeword at all).
\end{remark}

\begin{definition}
A \defn{certificate-list-decoder} is an algorithm that, 
given the received word $f\in \Sigma^\Omega$ and the distance $\rho$, 

constructs a certificate-list of $\cC$ up to distance $\rho$ of $f$. 
\end{definition}

\begin{definition}    \label{def:certecon}
We say that $\cC$ is a  \textbf{CertEcon} (``certificate-economically list-decodable'') code if there exists a local certificate-list-decoder 
with the following features. \\
 \textbf{Input:} $\eps > 0$, oracle access to $f \in \Sigma^\Omega$. \\
\textbf{Notation:} Again, let $\cL = \cL(\cC, f, \mindist - \eps)$. \\
\textbf{Output:} A list $\Gamma$ of $\Omega \partialto \Sigma$ partial maps of length $\abs{\Gamma} = \poly(1/\eps)$.\\
\textbf{Guarantee:} With probability $\geq 3/4$, we have that $\Gamma$ is a certificate-list for $\cL$. \\
\textbf{Cost:} 
	\begin{enumerate}[(i)]
	\item $\poly(\log \abs{\Omega}, 1/\eps)$ queries to the received word $f$.
	\item $\poly(\log \abs{\Omega}, \log\abs{\Sigma}, 1/\eps)$ amount of work.
	\end{enumerate}
\textbf{Access:} The meaning of this definition depends also on the access 
model to $\Sigma$ and $\Omega$. We shall clarify this in each application. 
\end{definition}
\begin{remark}
 Note that $\mindist$ is not part of the input. In our results,
we are likely to find a certificate of $\cC$ up to distance 
$(\mindist -\eps)$ of the received word $f$, regardless of 
the actual value of $\mindist$.  We note that, depending on
the access model, we may not be able to find $\mindist$.
\end{remark}

\begin{remark}
CertEcon is intermediate between AlgEcon and CombEcon. Indeed, CertEcon 
implies CombEcon, by the length bound of the Output and the Guarantee. 
Moreover, AlgEcon implies CertEcon, as the AlgEcon Output $\outputlist$ 
satisfies the definition of a certificate, under the same Guarantee and 
Cost bound. 
\end{remark}

\paragraph{\underline{Strong CertEcon}\\}

\begin{definition}    \label{def:strong-certecon}
 We say that $\cC$ is a \textbf{strong CertEcon} code if there exists a 
certificate-list-decoder  satisfying the conditions of CertEcon, 
except with (ii) replaced by the following. 

\begin{enumerate}
\item[(ii')] $\poly(\log \abs{\Omega}, 1/\eps)$ amount of work 
in the unit cost model for $\Sigma$.
\end{enumerate}
\end{definition}

All CertEcon results in this paper are actually strong CertEcon results. 

\begin{remark}
Strong CertEcon does not follow from AlgEcon, though it does follow from strong AlgEcon. 
\end{remark}

\begin{remark}
As in the AlgEcon context, the unit cost model can also be used in the case of infinite $\Sigma$. In fact, all our CertEcon results hold  for infinite codomain in the unit cost model, i.e., they satisfy (ii''). 
\end{remark}

\subsection{Mean-list-decoding}   \label{sec:mean-list}

Let $\scrF = \{ f_i : i \in I \}$ be a family of received words $f_i \in \Sigma^\Omega$. By the size $\abs{\scrF}$ we mean the size $\abs{I}$ of the index set $I$. The \defn{average distance} $\dist(w, \scrF)$ of a word $w \in \Sigma^\Omega$ to $\scrF$ is the average distance of $w$ to elements of $\scrF$, given by $\dist(w,\scrF) = \EE_{i \in I}[ \dist(w, f_i)]$. (The expectation $\EE$ is taken with respect to the uniform distribution over $I$.)

\begin{definition}[Mean-lists]  \label{def:mean-list-general}
We define the \defn{mean list} $\cL$ as the set of codewords within a specified average distance $\rho$ of the received words $\scrF$, i.e., 
	\begin{equation}
	\cL = \cL(\cC, \scrF, \rho) := \{ w \in \cC : \dist(w, \scrF) \leq \rho \}.
	\end{equation}
We write $\mean (\cC, \rho) := \max_{\scrF} \cL(\cC, \scrF, \rho)$ for the maximum mean-list size for a given distance $\rho$.
\end{definition}
This concept was inspired by the use of repeated codes by 
Guo and Sudan~\cite{GS14}, see Remark~\ref{rmk:mean-identification}.

As we shall see, the mean list-decoding concept helps expand the scope of our results, without making them more difficult to prove.  We adapt above terminology to the context of mean-list-decoding. 

\paragraph{Combinatorial.} We wish to bound mean-list size by 
$\abs{\cL(\cC, \scrF, \rho)} \leq c'(\eps)$ for some $c'(\eps)$
that depends only on $\eps$ and 
the class $\scrC$ of codes under discussion ($\cC\in\scrC$).
We say that the class $\scrC$ of codes is \textbf{CombEconM} 
(``combinatorially economically mean-list-decodable'') 
if $c'(\eps) = \poly(1/\eps) $ for $\cC\in\scrC$.

\paragraph{Algorithmic.} We say that the class $\scrC$ of codes is 
\defn{AlgEconM} (``algorithmically economically mean-list-decodable'')
if it satisfies the definition of AlgEcon classes of codes, with the 
following modifications.

For each $\cC\in\scrC$, the received word $f$ is replaced by a family 
$\scrF$ of received words and the list $\cL$ becomes 
$\cL = \cL( \cC, \scrF, \rho )$. Oracle access to $\scrF$ means that, 
given $i\in I$ and $\omega \in \Omega$, the oracle returns $f_i(\omega)$. 
Condition (ii) is replaced by the following.
\begin{itemize}[(ii-M)]
\item $\poly(\log \abs{\Omega}, \log \abs{\Sigma}, \log \abs{\scrF}, 1/\eps)$ amount of work.
\end{itemize} 

 Note that the number of queries to the family $\scrF$ remains $\poly(\log \abs{\Omega}, 1/\eps)$.

\paragraph{Certificate.}

 We say that a class $\scrC$ of codes is \defn{CertEconM} 
(``certificate economically mean-list-decodable'') if it 
satisfies the definition of CertEcon codes, with the same 
modifications as AlgEconM. 

\begin{theorem}
\label{thm:main-meanLD-iff-LD}
For a class $\scrC$ of codes, we have the following. 
\begin{enumerate}[(i)]
\item $\scrC$ is CombEconM if and only if it is CombEcon.
\item  $\scrC$ is AlgEconM if and only if it is AlgEcon.
\item  $\scrC$ is CertEconM if and only if it is CertEcon. \end{enumerate}
\end{theorem}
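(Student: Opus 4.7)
The plan is to prove all three equivalences with a single unified strategy. The ``mean implies non-mean'' direction is immediate in every case: given a single received word $f$, apply the CombEconM (resp.\ AlgEconM, CertEconM) hypothesis to the singleton family $\scrF = \{f\}$, for which $\dist(w, \scrF) = \dist(w, f)$ and the bounds on queries, work, and output length are the claimed ones with $\log|\scrF| = 0$.

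The substance is the reverse direction. The basic probabilistic observation is a Markov-type estimate: for any $w \in \cL(\cC, \scrF, \mindist - \eps)$ we have $\EE_{i \in I}[\dist(w, f_i)] \leq \mindist - \eps$, and since $\dist(w, f_i) \in [0,1]$ with $\mindist \leq 1$, Markov's inequality gives
$$\Pr_{i \in I}\bigl[\dist(w, f_i) \leq \mindist - \tfrac{\eps}{2}\bigr] \;\geq\; \frac{\eps/2}{\mindist - \eps/2} \;\geq\; \frac{\eps}{2}.$$
So a uniformly random member of $\scrF$ lies within distance $\mindist - \eps/2$ of $w$ with probability at least $\eps/2$.

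For (i), I form the bipartite graph on $\cL(\cC, \scrF, \mindist - \eps) \cup I$ with edge $(w, i)$ iff $\dist(w, f_i) \leq \mindist - \eps/2$. Each $w$-vertex has degree at least $(\eps/2)|I|$, so the bipartite covering lemma of Section~\ref{sec:bipartite} yields some $i^{\ast} \in I$ whose $\cL$-neighborhood has size at least $(\eps/2)|\cL(\cC, \scrF, \mindist - \eps)|$. That neighborhood sits inside $\cL(\cC, f_{i^{\ast}}, \mindist - \eps/2)$, whose size is bounded by $c(\eps/2) = \poly(1/\eps)$ by the CombEcon hypothesis. Rearranging gives $|\cL(\cC, \scrF, \mindist - \eps)| \leq (2/\eps)\,c(\eps/2) = \poly(1/\eps)$, i.e., CombEconM.

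For (ii) and (iii), I instead sample: draw $k = \Theta(\log(1/\eps)/\eps)$ independent uniform indices $i_1,\dots,i_k \in I$; amplify the AlgEcon (resp.\ CertEcon) decoder's success probability from $3/4$ to at least $1 - 1/(8k)$ by $O(\log k)$ independent repetitions with union of outputs; invoke the amplified decoder on each $f_{i_j}$ at radius $\mindist - \eps/2$; and return the union of the $k$ returned (certificate-)lists. Each oracle query to $f_{i_j}(\omega)$ translates to one $\scrF$-query at $(i_j,\omega)$, and the total query and work costs stay within the $\poly(\log|\Omega|, \log|\Sigma|, \log|\scrF|, 1/\eps)$ budget. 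The CombEcon bound just proved gives $|\cL(\cC,\scrF,\mindist-\eps)| \leq \poly(1/\eps)$, and a union bound then shows that, with probability $\geq 3/4$, every decoder call succeeds \emph{and} every $w$ in the mean list lies within $\mindist - \eps/2$ of at least one sampled $f_{i_j}$, hence is covered (by a codeword for AlgEcon, by a certificate for CertEcon) in the union. The only real obstacle is the tuning of the two union bounds — choosing the amplification factor so that decoder-failure and coverage-failure together contribute at most $1/4$ — but this is a standard repetition-plus-union-bound template, made possible precisely by the linear-in-$\eps$ lower bound furnished by the Markov estimate.
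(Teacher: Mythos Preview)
Your proof is correct and follows essentially the same approach as the paper: the Markov estimate (the paper's Lemma~\ref{lemma:mean-markov-degredation}) combined with the bipartite double-counting/covering argument (Lemma~\ref{lemma:mean-comb-bipartite}) to reduce the mean-list to a union of ordinary lists, exactly as carried out in Lemma~\ref{lemma:mean-central-meanlist-relationship} and Theorem~\ref{thm:algecon-iff-algeconm}.

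The one noteworthy difference is how you handle the decoder's $3/4$ success probability in (ii) and (iii). You amplify each call to confidence $1-1/(8k)$ by $O(\log k)$ repetitions and then union-bound separately over decoder failures and coverage failures. The paper avoids amplification altogether: part~(b) of Lemma~\ref{lemma:mean-comb-bipartite} is stated so that each sampled index is \emph{kept} independently with probability $3/4$, and the decoder's $3/4$ success is identified with this random inclusion. Thus a single union bound inside the covering lemma handles both sources of failure at once, saving the $O(\log k)$ factor. Your version is perfectly valid; the paper's is just slightly slicker. (A minor terminological point: in your part~(i) you invoke the ``bipartite covering lemma'' but actually use the double-counting half, part~(a), to find a high-degree $i^\ast$; the argument is fine either way.)
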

For more detailed statements and proofs see Section~\ref{section:mean}.

\begin{remark}[Significance of mean-list-decoding]
\label{rmk:intro-mean-significance}
Dinur et al. show the CombEcon and AlgEcon list-decodability of 
$\{$abelian$\to$abelian$\}$ homomorphism codes~\cite{DGKS08}.
We shall see that Theorem~\ref{thm:main-meanLD-iff-LD} quickly leads to the conclusion of CombEcon list-decodability of $\{$arbitrary$\to$abelian$\}$ homomorphism codes. The same inference can be made about AlgEcon list-decodability, assuming natural conditions about representation of the domain group. See Section~\ref{section:mean} for details.
\end{remark}

\paragraph{\underline{Strong mean-list-decoding}\vspace{.1in}\\}
We say that $\cC$ is a \defn{strong AlgEconM} code if it satisfies the definition of AlgEconM, except with (ii-M) replaced by (ii'-M) below. Similarly, we say that $\cC$ is a \defn{strong CertEconM} code if it satisfies the definition of CertEconM, except with (ii-M) replaced by (ii'-M) below. 
\begin{enumerate}
\item[(ii'-M)] $\poly(\log \abs{\Omega}, 1/\eps)$ amount of work in the unit cost model for $\Sigma$ and unit sampling cost model for $\scrF$.
\end{enumerate}
In the \defn{unit sampling cost model} for $\scrF = \{ f_i : i \in I \}$, we charge unit cost for naming any $i \in I$ and for generating a  uniform random $i \in I$.

\subsection{Subword extension}
\label{section:terminology-subword}

In this section we introduce terminology to
formalize our strategy to advance from
certificate-list-decoding to algorithmic list-decoding
(Observation~\ref{obs:cert-to-alg} below).

\begin{definition}[Subword extension problem]
Let $\cC$ be a code. The \defn{subword extension problem} asks, 
given a partial map $\gamma : \Omega \partialto \Sigma$, whether 
$\gamma$ extends to a codeword in $\cC$.
\end{definition}

A \defn{subword extender} is an algorithm that answers this question 
and returns a codeword in $\cC$ extending $\gamma$, if one exists.

\begin{observation}
A certificate-list-decoder and a subword extender combine to a list-decoder. 
\end{observation}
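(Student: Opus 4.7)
The plan is to compose the two algorithms in the obvious order: first run the certificate-list-decoder on the received word $f$ to obtain the list $\Gamma$ of partial maps $\Omega \partialto \Sigma$, and then, for each $\gamma \in \Gamma$, call the subword extender on $\gamma$. If the subword extender reports a codeword extending $\gamma$, add it to the output $\outputlist$; otherwise discard $\gamma$. At the end return $\outputlist$.

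For correctness, I would argue as follows. Let $\vf \in \cL = \cL(\cC, f, \rho)$. By the definition of a certificate-list-decoder, $\Gamma$ contains (with the required probability) a certificate $\gamma$ for $\vf$, which by definition satisfies $\gamma = \vf_{\vert \dom(\gamma)}$ \emph{and} has $\vf$ as the \emph{unique} codeword in $\cC$ extending $\gamma$. When the subword extender is applied to this $\gamma$ it must therefore return $\vf$. Hence $\vf \in \outputlist$, proving $\outputlist \supseteq \cL$.

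The items of $\Gamma$ that are redundant, certify codewords outside $\cL$, or are not certificates of any codeword at all, are harmless: each call to the subword extender produces at most one codeword of $\cC$, so $\abs{\outputlist} \leq \abs{\Gamma}$, which preserves any $\poly(1/\eps)$ length bound enjoyed by $\Gamma$. The queries to $f$ made by the combined procedure are exactly those made by the certificate-list-decoder (the subword extender needs no access to $f$), and the total work is the work of the certificate-list-decoder plus at most $\abs{\Gamma}$ invocations of the subword extender.

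The step I would flag as the only real sticking point is not the combination itself but the quality of the subword extender that one plugs in. The observation as stated asserts only that the composite is a list-decoder, not that it inherits AlgEcon bounds; turning it into an efficient list-decoder requires a subword extender that runs in $\poly(\log\abs{\Omega},\log\abs{\Sigma},1/\eps)$ time, and for homomorphism codes this is precisely the $\HomExt$ problem singled out earlier in the paper, which is handled by separate work rather than here.
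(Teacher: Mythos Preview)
Your proposal is correct and matches exactly what the paper intends; the paper itself leaves this observation without proof, treating it as self-evident, so your argument simply spells out the obvious composition. Your added remarks about the output-list length bound and about efficiency being contingent on the subword extender are accurate and anticipate the paper's subsequent discussion.
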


\begin{remark}  \label{rmk:terminology-subword}
This observation describes our two-phase plan to prove algorithmic list-decodability results for homomorphism codes with alternating domains. In the case of homomorphism codes, the subword extension problem corresponds to the \emph{homomorphism extension problem} (see Section~\ref{section:intro-contribution-altalg}). The algorithmic difficulty of the homomorphism extension problem is a major bottleneck to further progress. 
\end{remark}

In fact, the plan suggested by this observation is too ambitious. We have 
no hope to solve the subword extension problem in cases of interest for 
\emph{all} subwords.

Therefore, we relax the subword extender concept; 
correspondingly, we strengthen the notion of certificates required. 

Let $\cW$ be a set of $\Omega \partialto \Sigma$ partial maps. 

\begin{definition}[$\cW$-subword extender]
\label{def:intro-subword-extender}
The \defn{$\cW$-subword extension problem} asks to solve the subword extension problem on inputs from $\cW$. A \defn{$\cW$-subword extender} is 
an algorithm $\scrA$ that takes as input any partial map 
$\gamma : \Omega \partialto \Sigma$ and returns a yes/no answer; and
in the case of a ``yes'' answer, it also returns a codeword
$\scrA(\gamma)\in \cC$, such that
\begin{itemize}
\item if $\gamma\in\cW$ then the answer is ``yes'' if and only 
if $\gamma$ extends to a codeword, and in this case,
$\scrA(\gamma)$ is a codeword that extends $\gamma$.
\end{itemize}
\end{definition}
\begin{remark}
Note that $\scrA$ is not required to decide whether $\gamma\in\cW$.
$\cA$ must correctly decide extendability of $\gamma$ for all
$\gamma\in\cW$; in case $\gamma\notin\cW$, the algorithm may
return an arbitrary answer.
\end{remark}

\begin{definition}[$\cW$-certificate]
A \defn{$\cW$-certificate} is a certificate that belongs to $\cW$.
\end{definition}

\begin{definition}[$\cW$-certificate-list]
We say that a list $\Gamma$ of $\Omega \partialto \Sigma$ partial maps is a 
\defn{$\cW$--certificate-list} for the set $\cK \subseteq \cC$ of 
codewords if $\Gamma$ contains a $\cW$-certificate for each codeword 
in $ \cK$. 
A \defn{$\cW$-certificate-list} for $\cC$  \defn{up to distance} $\rho$ 
of the received word $f: \Omega \to \Sigma$ is a $\cW$-certificate-list 
for the list $\cL = \cL(\cC, f, \rho)$. 
\end{definition}

\begin{remark}
Note that, as mentioned in Remark~\ref{rem:cert-list},
we permit the $\cW$-certificate-list $\Gamma$ to contain redundancies 
and irrelevant items, including partial functions $\gamma$ that
do not belong to $\cW$.
\end{remark}

\begin{definition}
A $\cW$-certificate-list-decoder is an algorithm that, 
given the received word $f\in \Sigma^\Omega$ and the distance $\rho$, 
constructs a $\cW$-certificate-list of $\cC$ up to distance $\rho$ of $f$. 
\end{definition}

Our overall strategy for the case when $G$ is ``far from abelian''
is summarized in the following observation.

\begin{observation}
\label{obs:cert-to-alg}
For any set $\cW$ of $\Omega \partialto \Sigma$ partial maps, 
a $\cW$-certificate-list-decoder and a $\cW$-subword extender 
combine to a list-decoder. 
\end{observation}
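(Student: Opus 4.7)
The plan is to pipeline the two algorithms. First I would run the $\cW$-certificate-list-decoder on the received word $f$ and distance $\rho$ to obtain a list $\Gamma$ of partial maps $\Omega\partialto\Sigma$; then for each $\gamma\in\Gamma$ I would invoke the $\cW$-subword extender $\scrA$ on $\gamma$, and collect into an output list $\widetilde{\cL}$ every codeword $\scrA(\gamma)\in\cC$ returned on a ``yes'' answer. This $\widetilde{\cL}$ is the claimed output of the combined list-decoder.

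The key step is verifying $\widetilde{\cL}\supseteq\cL(\cC,f,\rho)$. Fix any $\vf\in\cL$. By the defining property of a $\cW$-certificate-list, $\Gamma$ contains a $\cW$-certificate $\gamma$ of $\vf$: that is, $\gamma\in\cW$, $\gamma=\vf|_{\dom(\gamma)}$, and $\vf$ is the \emph{unique} codeword extending $\gamma$. Since $\gamma\in\cW$, Definition~\ref{def:intro-subword-extender} forces $\scrA(\gamma)$ to answer ``yes'' (with $\vf$ itself as the witness extension) and to return a codeword extending $\gamma$; the uniqueness clause then pins this returned codeword down to $\vf$, so $\vf=\scrA(\gamma)\in\widetilde{\cL}$.

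The remaining bookkeeping is routine: the output size satisfies $|\widetilde{\cL}|\le|\Gamma|$, so any $\poly(1/\eps)$ size bound on $\Gamma$ transfers to $\widetilde{\cL}$; queries to $f$ are incurred only by the certificate phase (the extender reads $\gamma$ as input, not $f$); and the additional work is at most $|\Gamma|$ invocations of $\scrA$. The only subtlety—and, to my mind, the ``hard part'' of writing the argument down—is what happens for $\gamma\in\Gamma\setminus\cW$, where $\scrA$ is permitted to behave arbitrarily. But Definition~\ref{def:intro-subword-extender} still guarantees that any ``yes'' answer is accompanied by a genuine element of $\cC$, so the worst consequence is that $\widetilde{\cL}$ accumulates some irrelevant codewords, which damages neither the containment $\widetilde{\cL}\supseteq\cL$ nor the size bound. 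Crucially, $\scrA$ is never required to decide membership in $\cW$—it is only required to be correct on $\cW$-inputs—which is exactly what the certificate-list phase delivers for each $\vf\in\cL$.
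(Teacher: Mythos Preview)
Your proof is correct and is precisely the argument the paper has in mind; indeed, the paper states this as an Observation with no proof, treating the pipelining construction you describe as self-evident. Your handling of the only non-obvious point---that on inputs $\gamma\notin\cW$ the extender's ``yes'' answer still returns a genuine element of $\cC$ (per Definition~\ref{def:intro-subword-extender}), so spurious answers cannot contaminate $\widetilde{\cL}$ with non-codewords---is exactly right.
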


\begin{definition}    \label{def:wcertecon}
We say that $\cC$ is a \textbf{$\cW$-CertEcon} 
(``$\cW$-certificate-economically list-decodable'')  code
if there exists a local $\cW$-certificate-list-decoder with
the features listed in Definition~\ref{def:certecon}.
\end{definition}

\begin{definition}    \label{def:strong-wcertecon}
We say that $\cC$ is a \textbf{strong $\cW$-CertEcon} code if there exists a 
strong $\cW$-certificate-list-decoder, \ie,
a $\cW$-certificate-list-decoder that is 
a strong certificate-list-decoder 
(see Definition~\ref{def:strong-certecon}).
\end{definition}

\subsection{Minimum distance versus maximum agreement}

Recall that our code space is $\Sigma^\Omega$, 
the set of $\Omega \to \Sigma$ functions.
In the theory of error-correcting codes, the usual measure of distance
between two functions (strings) is the (normalized) Hamming distance, 
the fraction of symbols on which they differ.
Following~\cite{GKS06}, we find it convenient 
to consider the measure complementary to normalized Hamming
distance, the (normalized) \defn{agreement},
\begin{equation}  \label{eq:agr-generalcodes}
 \agr(f,g) \defeq \frac{1}{
  \abs{\Omega}}\abs{\{\omega \in \Omega \mid f(\omega) = g(\omega)\}},
\end{equation}
the fraction of positions
on which the two functions $f, g: \Omega \to \Sigma$ agree.
\begin{definition}  \label{def:maxagreement-generalcodes}
The \defn{maximum agreement} of the code $\cC$ is given by
	\begin{equation*}
	\Lambda_{\cC} \defeq \max_{\substack{\phi,\psi \in \cC \\ 
          \phi \ne \psi}} \agr(\phi,\psi). 
	\end{equation*}
\end{definition}

\begin{fact}
The minimum distance is the complement of the maximum agreement, i.\,e., 
	\begin{equation*}
	\mindist = 1 - \Lambda_{\cC}.
	\end{equation*}
\end{fact}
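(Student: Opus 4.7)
The plan is essentially to unwind the definitions and observe that normalized Hamming distance and normalized agreement are pointwise complementary as functions on $\cC \times \cC$. The only thing to notice is that this complementarity turns a $\min$ into a $1 - \max$ (over the same index set of distinct ordered pairs of codewords).

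More precisely, I would first record the identity
\begin{equation*}
\dist(f,g) + \agr(f,g) = 1 \qquad \text{for all } f,g \in \Sigma^\Omega,
\end{equation*}
which is immediate from the definition of normalized Hamming distance (fraction of positions where $f,g$ differ) together with \eqref{eq:agr-generalcodes} (fraction of positions where they agree): the two fractions partition the $\abs{\Omega}$ coordinates. Then I would apply this identity to pairs $(\phi,\psi)$ with $\phi,\psi \in \cC$ and $\phi \neq \psi$, noting that the index set over which the optimizations in the definitions of $\mindist$ and $\Lambda_{\cC}$ are taken is the same.

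Finally I would compute
\begin{equation*}
\mindist = \min_{\substack{\phi,\psi\in\cC \\ \phi\neq\psi}} \dist(\phi,\psi)
= \min_{\substack{\phi,\psi\in\cC \\ \phi\neq\psi}} \bigl(1 - \agr(\phi,\psi)\bigr)
= 1 - \max_{\substack{\phi,\psi\in\cC \\ \phi\neq\psi}} \agr(\phi,\psi)
= 1 - \Lambda_{\cC},
\end{equation*}
using the elementary fact that $\min(1 - x_i) = 1 - \max x_i$. There is no obstacle here; the statement is a definitional reformulation intended to justify switching freely between the ``minimum distance'' and ``maximum agreement'' perspectives in the sequel.
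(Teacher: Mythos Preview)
Your proof is correct. The paper does not prove this fact at all---it is stated as self-evident immediately after the definition of $\Lambda_{\cC}$---and your argument is exactly the routine verification one would expect.
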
 
So, the codewords within distance $(\mindist - \eps)$ of a received
word $f$ are the same as the codewords with agreement at least
$\Lambda_{\cC} + \eps$ with $f$.

Classes of examples
for the infeasibility of list-decoding outside this range were provided by Guo and Sudan \cite{GS14} for abelian domain and codomain, and we provide such classes for alternating domain (see Section~\ref{section:alt-blowup}), 
so the list-decoding radius is $\mindist$ for those classes.

\section{Preliminaries}
\label{section_background}

Let $G$ be a set. For any subset $S \subseteq G$, define 
the \defn{density} of $S$ in $G$ by $\mu_G(S) = \frac{\abs{S}}{\abs{G}}.$
We call $G$ the ``ambient set'' and write $\mu(S) = \mu_G(S)$ when $G$ is 
understood. The ambient set will generally be a group $G$.

\subsection{Groups}

In this paper we will denote the class of all groups (finite or infinite) 
by $\groups$.  We write $\abelgroups$ to denote the class of finite
 abelian groups and $\altgroups$ for the class of (finite) alternating 
groups.

Our group theory reference is \cite{Rob82}. We review some definitions 
and facts. 

Let $G$ be a group. We write $H \leq G$ to express that $H$ is a
subgroup; we write $H \trianglelefteq G$ if $H$ is a normal
subgroup. We refer to cosets of subgroups of $G$ as
\defn{subcosets}. For the subcoset $aH$ of $G$ (where $H \leq G$), let
$\ind{G}{aH} := \ind{G}{H}$ denote the index of $H$ in $G$. For a
subset $S$ of a group $G$, the \defn{subgroup $\langle S \rangle$
  generated by $S$} is the smallest subgroup of $G$ containing $S$. If
$\langle S \rangle = G$, then $S$ \defn{generates} $G$. 
A subset $K\subseteq G$ is \defn{affine-closed} if 
$(\forall a,b,c\in K)(ab^{-1}c\in K)$.  An affine-closed subset
is either empty or it is a subcoset.  The intersection of 
affine-closed subsets is affine-closed.  The
\defn{affine closure $\gencoset{S}$}, affinely generated by
$S$, is the the smallest affine-closed subset containing $S$. 
Note that the affine closure of the empty set is empty.
The affine closure of a nonempty set is a subcoset; indeed, for
any $q \in S$, we have that $\gencoset{S}= q\cdot\gengroup{q^{-1} r
  \mid r \in S }$.

\subsection{Homomorphism codes}

\subsubsection{Affine homomorphisms as codewords}

Let $G$ be a finite group and $H$ a group. Denote the set of
homomorphisms from $G$ to $H$ by $\Hom(G,H)$.  

\begin{definition}
Let $G_1$ and $H_1$ be affine-closed subsets of $G$ and $H$, resp.
A function $\vf\colon G_1 \to H_1$ is an \defn{affine homomorphism} if
$$(\forall a, b, c \in G_1)(\vf(a)\vf(b)^{-1}\vf(c) = \vf(ab^{-1}c))\,.$$
\end{definition}
We write $\aHom(G_1,H_1)$ to denote the set of affine homomorphisms from 
$G_1$ to $H_1$.
\begin{fact}
Let $G_1\le G$ and $H_1\le H$.  Let $a\in G$ and $b\in H$.
A function $\vf\colon aG_1 \to bH_1$ is an affine homomorphism
if and only if there exists $h \in H$ and $\vf_0 \in \Hom(G_1,H_1)$ 
such that 
\begin{equation}  \label{eq:affine}
\vf(g) = h \cdot \vf_0(g)
\end{equation}
for every $g \in G_1$. 
The element $h$ and the homomorphism $\vf_0$ are unique.
\end{fact}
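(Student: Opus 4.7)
The plan is to handle the two directions separately; in both cases the key algebraic observation is that the constant $h$ cancels in any expression of the form $uv^{-1}w$, so left-multiplying a homomorphism by a fixed element preserves the affine identity. I read the equation $\vf(g)=h\cdot\vf_0(g)$ as meaning $\vf(ag)=h\cdot\vf_0(g)$ for $g\in G_1$ (equivalently, after a harmless basepoint shift, one can reduce to $a=1$).

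For the backward direction, I would assume the structural form $\vf(ag)=h\cdot\vf_0(g)$ with $\vf_0\in\Hom(G_1,H_1)$ and verify affineness by a one-line calculation. For $u=ag_u,\ v=ag_v,\ w=ag_w$ in $aG_1$, the product $uv^{-1}w=a(g_ug_v^{-1}g_w)$ lies again in $aG_1$, and
\begin{equation*}
\vf(u)\vf(v)^{-1}\vf(w) = h\vf_0(g_u)\bigl(h\vf_0(g_v)\bigr)^{-1}h\vf_0(g_w) = h\vf_0(g_ug_v^{-1}g_w) = \vf(uv^{-1}w),
\end{equation*}
where the two $h$'s cancel around the inversion and the remaining product is collapsed by the homomorphism property of $\vf_0$.

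For the forward direction, I would pick the natural basepoint $a\in aG_1$, set $h:=\vf(a)$, and define $\vf_0\colon G_1\to H$ by $\vf_0(g):=h^{-1}\vf(ag)$. To see that $\vf_0$ is a homomorphism, I would apply the affine identity to the triple $(ag,a,ag')$: using the algebraic identity $(ag)\cdot a^{-1}\cdot(ag')=agg'$, the affine property gives $\vf(ag)\vf(a)^{-1}\vf(ag')=\vf(agg')$, which rearranges to $\vf_0(g)\vf_0(g')=\vf_0(gg')$. To check that $\vf_0$ actually lands in $H_1$ rather than just in $H$, I would use that $\vf$ takes values in the subcoset $bH_1$, so $h\in bH_1$ and therefore $\vf_0(g)=h^{-1}\vf(ag)\in(bH_1)^{-1}(bH_1)=H_1$, where the last equality uses that $H_1$ is a subgroup of $H$.

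Uniqueness follows by evaluating the decomposition at $g=1_G$: since $\vf_0(1_G)=1_H$ for every homomorphism, $h$ is forced to equal $\vf(a)$, and then $\vf_0$ is determined as $g\mapsto h^{-1}\vf(ag)$. The only subtlety is notational bookkeeping around the basepoint shift by $a$, and no serious obstacle arises; this is a routine structural fact expressing that a subcoset is a torsor over its underlying subgroup and that affine homomorphisms between subcosets are precisely left-translates of ordinary homomorphisms between the underlying subgroups.
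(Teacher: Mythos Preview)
Your proof is correct and complete; the paper states this as a Fact without giving a proof, so there is nothing to compare against. Your reading of the equation as $\vf(ag)=h\cdot\vf_0(g)$ (or equivalently reducing to $a=1$) is the right way to resolve the apparent domain mismatch in the statement, and the argument you give---cancellation of $h$ in the backward direction, the basepoint choice $h=\vf(a)$ with the triple $(ag,a,ag')$ in the forward direction, and evaluation at the identity for uniqueness---is the standard and expected one.
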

The analogous statement also holds with $h$ on the right of $\vf_0$.

\begin{definition}
For sets $G, H$ and functions $f,g \colon  G \to H$,
the \defn{equalizer} $\Eq(f,g)$
is the subset of $G$ on which $f$ and $g$ agree, i.\,e.,
	\[
	\Eq(f,g) := \{x \in G \mid f(x) = g(x)\}.
	\]
More generally, if $\Phi$ is a collection of functions from $G$ to $H$,
then the \defn{equalizer} $\Eq(\Phi)$ is the set
	\[
	\Eq(\Phi) := \{x \in G \mid (\forall f,g \in \Phi)(f(x) = g(x))\}.
	\]
\end{definition}
\begin{fact}  	\label{fact:equalizer-coset}
\begin{itemize}
 \item[(a)] If $\vf,\psi\in\Hom(G,H)$ then $\Eq(\vf,\psi)\le G$.
 \item[(b)] If $\vf,\psi\in\aHom(G,H)$ then $\Eq(\vf,\psi)$ is
affine-closed.  Moreover, if $\vf_0,\psi_0\in\Hom(G,H)$
are the corresponding homomorphisms (see~\eqref{eq:affine}) then
either $\Eq(\vf,\psi)$ is empty or $\Eq(\vf,\psi)=g\cdot\Eq(\vf_0,\psi_0)$
for any $g\in\Eq(\vf,\psi)$. 

\end{itemize}
\end{fact}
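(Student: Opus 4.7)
The plan is to treat both parts of the fact by direct unpacking of the definitions, with no deep machinery; the only care needed is bookkeeping with the affine decomposition.

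For part (a), I would verify the subgroup criterion for $\Eq(\vf,\psi)$ in three one-line steps: $1_G$ lies in the equalizer since $\vf(1_G)=1_H=\psi(1_G)$; closure under products follows from multiplicativity of $\vf$ and $\psi$ (if $\vf(x)=\psi(x)$ and $\vf(y)=\psi(y)$ then $\vf(xy)=\vf(x)\vf(y)=\psi(x)\psi(y)=\psi(xy)$); and closure under inversion follows by applying $(\cdot)^{-1}$ to $\vf(x)=\psi(x)$ and using $\vf(x)^{-1}=\vf(x^{-1})$.

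For the first sentence of (b), I would invoke the defining identity of affine homomorphisms, $\vf(ab^{-1}c)=\vf(a)\vf(b)^{-1}\vf(c)$. For $a,b,c\in\Eq(\vf,\psi)$, substituting $\psi$-values for the matching $\vf$-values on the right yields $\vf(ab^{-1}c)=\psi(ab^{-1}c)$, so $ab^{-1}c\in\Eq(\vf,\psi)$. This is precisely the definition of affine-closedness.

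For the \emph{moreover} clause, I would use the decomposition~\eqref{eq:affine}: write $\vf(x)=h_1\vf_0(x)$ and $\psi(x)=h_2\psi_0(x)$. Fix any $g\in\Eq(\vf,\psi)$, which yields the single identity $h_1\vf_0(g)=h_2\psi_0(g)$. For an arbitrary $x\in G$, expand
\[
\vf(gx)=h_1\vf_0(g)\vf_0(x),\qquad \psi(gx)=h_2\psi_0(g)\psi_0(x),
\]
and cancel the equal prefixes to conclude that $gx\in\Eq(\vf,\psi)$ iff $\vf_0(x)=\psi_0(x)$, i.e., iff $x\in\Eq(\vf_0,\psi_0)$. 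Hence $\Eq(\vf,\psi)=g\cdot\Eq(\vf_0,\psi_0)$; since the calculation used no property of $g$ beyond membership in the equalizer, the coset identity holds for every such $g$, matching the statement. There is no substantive obstacle in this argument; the only points to watch are to fix a consistent convention for left-versus-right translation in~\eqref{eq:affine} and to keep $\vf_0,\psi_0$ notationally distinct from $\vf,\psi$ throughout.
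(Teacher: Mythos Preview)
Your proof is correct. The paper states this as a \emph{Fact} without supplying a proof, but your argument is exactly the intended one: the very next proposition in the paper (Proposition~\ref{prop:prelim-lambda-Hom-vs-aHom}) invokes precisely the computation you give for the ``moreover'' clause, writing $\vf(g)=h_1\vf_0(g)$, $\psi(g)=h_2\psi_0(g)$ and concluding that if $g\in\Eq(\vf,\psi)$ then $\Eq(\vf,\psi)=g\Eq(\vf_0,\psi_0)$.
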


Recall that the (normalized) \defn{agreement} $\agr(f,g)$ between 
two functions $f,g\colon G \to H$ is given by
\[
\agr(f,g) := \frac{\abs{\Eq(f,g)}}{\abs{G}}.
\]

Specializing Def.~\ref{def:maxagreement-generalcodes} to homomorphism
codes, we write 
\[
\Lambda_{G,H} := \Lambda_{\aHom(G,H)}
\]
for the \defn{maximum agreement} of $\aHom(G,H)$.  In other words,
	\[
	\Lambda_{G,H} :=
	\max_{\substack{\vf,\psi \in \aHom(G,H) \\ \vf \ne \psi}}
	\agr(\vf,\psi)
	\]

If the groups $G$ and $H$ are understood, we often write $\Lambda$ 
in place of $\Lambda_{G,H}$.  
Using this terminology, the min distance of the homomorphism code
$\aHom(G,H)$ is $(1- \Lambda_{G,H})$.

The following statement appears in \cite[Prop. 3.5]{Guo15}.
We include the proof for completeness.
\begin{proposition}[Guo]
\label{prop:prelim-lambda-Hom-vs-aHom}
Let $G,H$ be groups.
The maximum agreement $\Lambda_{G, H}$ can equivalently be defined with 
$\aHom$ replaced by $\Hom$, i.\,e.,  
\[ \Lambda_{\Hom(G,H)} = \Lambda_{\aHom(G,H)}.
\]
\end{proposition}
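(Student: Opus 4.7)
The plan is to establish the two inequalities separately. The inclusion $\Hom(G,H) \subseteq \aHom(G,H)$ immediately gives $\Lambda_{\Hom(G,H)} \le \Lambda_{\aHom(G,H)}$, since the maximum on the left is taken over a sub-collection of ordered pairs. The heart of the argument is the reverse inequality, for which the idea is to compare any pair of distinct affine homomorphisms with their underlying linear parts and appeal to Fact~\ref{fact:equalizer-coset}(b) to see that the two corresponding equalizers have the same size (when both are nonempty).

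Fix distinct $\vf,\psi \in \aHom(G,H)$ and write, via~\eqref{eq:affine}, $\vf(g) = h_\vf \cdot \vf_0(g)$ and $\psi(g) = h_\psi \cdot \psi_0(g)$ with uniquely determined $\vf_0,\psi_0 \in \Hom(G,H)$ and $h_\vf,h_\psi \in H$. Consider two cases. If $\vf_0 = \psi_0$, then $\Eq(\vf,\psi) = \{g \in G : h_\vf = h_\psi\}$, which is either all of $G$ or empty; since $\vf \neq \psi$ forces $h_\vf \neq h_\psi$, it is empty, so $\agr(\vf,\psi) = 0$. If $\vf_0 \neq \psi_0$, Fact~\ref{fact:equalizer-coset}(b) tells us that $\Eq(\vf,\psi)$ is either empty (again giving agreement $0$) or equals $g\cdot\Eq(\vf_0,\psi_0)$ for some $g$, whence $\abs{\Eq(\vf,\psi)} = \abs{\Eq(\vf_0,\psi_0)}$ and $\agr(\vf,\psi) = \agr(\vf_0,\psi_0) \le \Lambda_{\Hom(G,H)}$. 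In all subcases $\agr(\vf,\psi) \le \Lambda_{\Hom(G,H)}$; taking the maximum over distinct $\vf,\psi \in \aHom(G,H)$ yields $\Lambda_{\aHom(G,H)} \le \Lambda_{\Hom(G,H)}$.

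There is no real obstacle in this proof; the only point requiring care is the degenerate case $\vf_0 = \psi_0$, where Fact~\ref{fact:equalizer-coset}(b) does not by itself rule out full agreement and one has to observe separately that the translates must differ. In the edge case where $\Hom(G,H)$ contains only the trivial homomorphism, every affine pair collapses to Case~(a) and both maximum agreements are $0$, so equality still holds under the natural convention.
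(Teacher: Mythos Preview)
Your proof is correct and follows essentially the same approach as the paper's: both use Fact~\ref{fact:equalizer-coset}(b) to show that for distinct $\vf,\psi\in\aHom(G,H)$ the agreement $\agr(\vf,\psi)$ is either zero or equals $\agr(\vf_0,\psi_0)$ for the associated homomorphisms. Your explicit treatment of the degenerate case $\vf_0=\psi_0$ is a welcome clarification over the paper's more compressed version.
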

Here we use the convention that the maximum of the empty set 
(of nonnegative numbers) is zero.  Otherwise we would need
to make the additional assumption $\abs{\Hom(G, H)} > 1$. 
\begin{proof}
Let $\Lambda'_{G, H}=\Lambda_{\Hom(G,H)}$.

Obviously $\Lambda_{G, H}\ge \Lambda'_{G, H}$.
Now let $\vf,\psi\in \aHom(G,H)$.  So, by Eq.~\eqref{eq:affine},
there exist $h_1,h_2\in H$ and $\vf_0,\psi_0\in\Hom(G,H)$ be such 
that for all $g\in G$ we have $\vf(g)=h_1\vf_0(g)$ and 
$\psi(g)=h_2\psi(g)$.  It follows that if 
$g\in\Eq(\vf,\psi)$ then $\Eq(\vf,\psi)=g\Eq(\vf_0,\psi_0)$.
Hence $\agr(\vf,\psi)$ is either zero or equal to
$\agr(\vf_0,\psi_0)$, proving that $\Lambda_{G, H}\le \Lambda'_{G, H}$.
\end{proof}

\begin{corollary}
	\label{cor:lambda-subgroup}
Let $G$ be a finite group and $H$ a group. Then, 
$\Lambda \leq \max\{\mu(K) \mid K \lneq G\}$, 
the largest density of a proper subgroup of $G$. 
\end{corollary}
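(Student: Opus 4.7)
The plan is to combine the preceding Proposition~\ref{prop:prelim-lambda-Hom-vs-aHom} (Guo) with the subgroup structure of the equalizer of two homomorphisms (Fact~\ref{fact:equalizer-coset}(a)). First I would pass from affine homomorphisms to homomorphisms: by Proposition~\ref{prop:prelim-lambda-Hom-vs-aHom}, $\Lambda_{G,H} = \Lambda_{\Hom(G,H)}$, so it suffices to bound the maximum agreement over pairs of distinct honest homomorphisms.

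Next I would fix distinct $\vf,\psi \in \Hom(G,H)$ and note that $\Eq(\vf,\psi) \le G$ by Fact~\ref{fact:equalizer-coset}(a). Because $\vf \ne \psi$ as functions on $G$, this equalizer cannot equal all of $G$; hence it is a \emph{proper} subgroup of $G$. By definition of agreement,
\[
\agr(\vf,\psi) \;=\; \frac{|\Eq(\vf,\psi)|}{|G|} \;=\; \mu\bigl(\Eq(\vf,\psi)\bigr) \;\le\; \max\{\mu(K) \mid K \lneq G\}.
\]
Taking the maximum over all such pairs yields $\Lambda_{\Hom(G,H)} \le \max\{\mu(K) \mid K \lneq G\}$, and combining this with the Proposition finishes the proof.

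There is essentially no obstacle here; the only point worth flagging is the edge case $|\Hom(G,H)| \le 1$ (equivalently, no distinct pair $\vf,\psi$ exists), in which case the left-hand side is zero by the convention stated just before Proposition~\ref{prop:prelim-lambda-Hom-vs-aHom}, and the inequality holds trivially (the right-hand side is nonnegative, being either $0$ or a density). So the entire argument is a two-line reduction once the equalizer-is-a-subgroup fact is in hand.
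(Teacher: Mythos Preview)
Your proof is correct and is exactly the intended argument: the paper states this as a Corollary without proof precisely because it follows immediately from Proposition~\ref{prop:prelim-lambda-Hom-vs-aHom} together with Fact~\ref{fact:equalizer-coset}(a), which is the route you take.
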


\begin{fact}
Let $G$ and $H$ be groups and $S \subseteq G$ a subset. 
If $\vf, \psi \in \aHom(G, H)$ and $\vf(x) = \psi(x)$ 
for all $x \in S$, then $\gencoset{S} \subseteq \Eq(\vf, \psi)$. \qed
\end{fact}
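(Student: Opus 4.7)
The plan is to invoke the machinery already built up in the excerpt and reduce the statement to a one-line closure argument. First I would observe that, by hypothesis, $S \subseteq \Eq(\vf,\psi)$. By Fact~\ref{fact:equalizer-coset}(b), since $\vf,\psi \in \aHom(G,H)$, the equalizer $\Eq(\vf,\psi)$ is affine-closed (either empty, or a subcoset of the form $g \cdot \Eq(\vf_0,\psi_0)$). In particular $\Eq(\vf,\psi)$ is \emph{not} empty in our setting, as it contains $S$; but even the empty case would be handled uniformly because ``affine-closed'' was defined in the excerpt to include the empty set.

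Next I would recall the definition: $\gencoset{S}$ is the smallest affine-closed subset of $G$ containing $S$. Since $\Eq(\vf,\psi)$ is an affine-closed subset of $G$ that contains $S$, the minimality in the definition of the affine closure forces
\[
\gencoset{S} \subseteq \Eq(\vf,\psi),
\]
which is exactly the claim. Equivalently, one could write out that the intersection of all affine-closed sets containing $S$ is contained in any particular such set, namely $\Eq(\vf,\psi)$.

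There is no main obstacle here: the statement is a direct corollary of Fact~\ref{fact:equalizer-coset}(b) and the definition of $\gencoset{\cdot}$ given in Section~\ref{section_background}. The only nuance worth flagging explicitly in the write-up is the edge case $S = \emptyset$, where $\gencoset{S} = \emptyset$ by definition and the inclusion is vacuous; this shows that the hypothesis ``$\vf(x) = \psi(x)$ for all $x \in S$'' is vacuously satisfied and the conclusion is trivially true.
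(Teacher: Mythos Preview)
Your proposal is correct and is exactly the argument the paper has in mind: the Fact is stated with a \qed and no proof, because it follows immediately from Fact~\ref{fact:equalizer-coset}(b) (the equalizer of two affine homomorphisms is affine-closed) together with the minimality in the definition of $\gencoset{S}$.
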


\begin{corollary}  \label{cor:agree-on-coset-generators}
Let $G$ be a finite group, $H$ a group, and $S \subseteq G$, 
such that $\mu(\gencoset{S}) > \Lambda_{G, H}$. 
If $\vf, \psi \in \aHom(G, H)$ are such that 
$\vf(x) = \psi(x)$ for all $x \in S$, then $\vf = \psi$.
\end{corollary}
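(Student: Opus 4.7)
The plan is to derive a contradiction by comparing the agreement of $\vf$ and $\psi$ against the maximum agreement $\Lambda_{G,H}$. Suppose for contradiction that $\vf \neq \psi$. Then by the definition of $\Lambda_{G,H}$ (Definition~\ref{def:maxagreement-generalcodes} specialized to homomorphism codes), we have $\agr(\vf,\psi) \leq \Lambda_{G,H}$.

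Next I would use the preceding unnumbered Fact, which says that if two affine homomorphisms agree on a set $S$, then they agree on the affine closure $\gencoset{S}$. Since $\vf(x) = \psi(x)$ for all $x \in S$ by hypothesis, this Fact yields $\gencoset{S} \subseteq \Eq(\vf,\psi)$. Therefore
\[
\agr(\vf,\psi) = \frac{|\Eq(\vf,\psi)|}{|G|} \geq \frac{|\gencoset{S}|}{|G|} = \mu(\gencoset{S}).
\]

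Combining the two inequalities gives $\mu(\gencoset{S}) \leq \Lambda_{G,H}$, which directly contradicts the hypothesis $\mu(\gencoset{S}) > \Lambda_{G,H}$. Hence $\vf = \psi$.

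There isn't really a hard step here — the corollary is a one-line consequence of the preceding Fact and the definition of $\Lambda_{G,H}$. The only subtlety worth noting is the convention that if $\Hom(G,H)$ (equivalently $\aHom(G,H)$, by Proposition~\ref{prop:prelim-lambda-Hom-vs-aHom}) has only one element, then $\Lambda_{G,H} = 0$ by the empty-max convention, in which case the hypothesis $\mu(\gencoset{S}) > \Lambda_{G,H} = 0$ is automatic and the conclusion $\vf = \psi$ is trivial.
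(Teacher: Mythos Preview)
Your proof is correct and is exactly the intended argument; the paper does not even spell out a proof for this corollary, as it follows immediately from the preceding Fact and the definition of $\Lambda_{G,H}$. Your remark on the empty-max convention is a nice touch but not strictly needed, since in that case there is only one affine homomorphism and the conclusion is vacuous.
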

\begin{remark}[Why affine?]
The reader may ask, why we (and all prior work) consider affine 
homomorphisms rather than homomorphisms.  The reason is that
affine homomorphisms are simply the more natural objects in this 
context.  To begin with, this object is more homogeneous.
For instance, for finite $H$, under random affine homomorphisms, 
the images of any element $g\in G$ are uniformly distributed over $H$. 

This uniformity also
serves as an inductive tool: when extending the domain from
a subgroup $G_0$ to a group $G$, the action of any homomorphism
$\vf\in\Hom(G,H)$ can be split into actions on the cosets
of $G_0$ in $G$.  Those actions are affine homomorphisms.
On the other hand we also note that list-decoding of 
$\Hom(G,H)$ and $\aHom(G,H)$ are essentially equivalent
tasks; see Section~\ref{sec:Hom-versus-aHom}.
\end{remark}

\subsection{Computational representations of groups and homomorphisms}
\label{section_algorithmicgrouptheory}

In this section we discuss the models of access to groups 
required by our algorithms.  The choice of the model
significantly impacts the running time and even the 
feasibility of an algorithm. 
 
The models include oracle models (black-box access, black-box groups),
generator-relator presentations, and various explicit models
such as permutation groups, matrix groups, direct products of
cyclic groups of known orders.

Recall that our domain groups are always finite but the codomain
may be infinite (Convention~\ref{conv:intr-infinite}).  

Recall also that {\bf homomorphisms} will be represented by
the list of their values on a set of generators.

\subsubsection{Black-box models}
\label{section:prelim-blackbox}
If the codomain is infinite, and even if it is finite but very large, 
the black-box-group model with its fixed-length encoding~\cite{BS84} 
is not appropriate (see ``encoded groups'' below).  
We start with an extension of that model.

\begin{definition}[Black-box access]
	\label{def:black-box access}
	An \defn{unencoded black-box representation} of a 
 (finite or infinite) group $K$ is an ordered $5$-tuple
	\begin{align*}
	(U, r, \BBmult, \BBinv, \BBid)
	\end{align*}
	where 
	\begin{itemize}
		\item $U$ is a (possibly infinite) set;
		\item $r\colon  U \to K \cup \{*\}$ with $r(U) \supseteq K$;
		\item $\BBmult\colon  r^{-1}(K) \times r^{-1}(K) \to r^{-1}(K)$ with $r(\BBmult(x, y)) = r(x)r(y)$ for all $x, y \in r^{-1}(K)$;
		\item $\BBinv\colon  r^{-1}(K) \to r^{-1}(K)$ with $r(\BBinv(x)) = r(x)^{-1}$ for $x \in r^{-1}(K)$; and
		\item $\BBid\colon  r^{-1}(K) \to \{\mathsf{yes}, \mathsf{no}\}$ with $\BBid(x) = \mathsf{yes}$ if and only if $r(x)$ is the identity in $K$.
	\end{itemize}
 We say that an algorithm has {\bf black-box access} to the group $K$ if 
the algorithm can store elements of $U$ and query the functions (oracles)
$\BBmult, \BBinv, \BBid$.  We say that $K$ is given as an
{\bf (unencoded) black-box group} if in addition a list of generators
of $K$ is given.
\end{definition}
\begin{remark}
We emphasize that the difference between \emph{black-box access} 
to a group
$G$ and the group $G$ being given as a \emph{black-box group} is that
in the latter model, a list of generators of $G$ is given, whereas
no elements of $G$ may be a priori known in the former.
\end{remark}

If $U=\{0,1\}^n$ then we talk about an {\bf encoded group}, of
{\bf encoding length} $n$.  This of course implies that $K$ is
finite, namely, $|K|\le 2^n$.  (This is the model introduced
in \cite{blackbox}.)

\medskip
In an abuse of notation, when black-box access to a group $K$ is given, 
we may
refer to elements of $r^{-1}(K)$ by their images under $r$, we may
write $gh$ in place of $\BBmult(g, h)$, we may write $g^{-1}$ in place
of $\BBinv(g)$, and we may write $g = 1$ in place of $\BBid(g) =
\mathsf{yes}$.

\medskip\noindent
{\bf Access to domain and codomain.}
In general we shall not need generators of the codomain, $H$,
just black-box access.  On the other hand, we do need generators
of the domain, $G$; homomorphisms will be defined by their
values on a set of generators.  So our access to the 
domain will be assumed to be at least as strong as 
an (encoded) black-box group.

\medskip\noindent
{\bf The black-box unit cost model.}
The (unencoded) black-box access model is particularly well suited
to the {\bf unit-cost model} where we assume that we can copy and store an 
element of $U$ and query an oracle at unit cost.  We shall analyze our
algorithms in the unit-cost model for the codomain $H$.  This essentially
counts the operations performed in $H$, so its bit-cost will incur an
additional factor of $O(\log |H|)$ (if $H$ is finite and nearly optimally 
encoded).

\medskip\noindent
{\bf Random generation.}
In encoded black-box groups, independent nearly uniform random elements
can be generated in time, polynomial in the 
encoding length~\cite{Bab91BBpolygen}. 

\begin{remark}
Black-box groups have been
studied in a substantial body of literature, both
in the theory of computing and in computational group theory (see the
references in~\cite{BabaiBealsSeress}).  It is common to make
additional access assumptions to a black-box group (assume additional
oracles) such as an oracle for the order of the elements.

Given a black-box group $H$, we cannot determine the order
$\abs{H}$ or the order of a given element $h \in H$. In fact, 
even with an oracle for the order of elements, $\Z_p$
and $\Z_p \times \Z_p$ cannot be distinguished in fewer than 

$\Omega(\sqrt{p})$ randomized black-box identity queries.  

To avoid
such obstacles, it is common to assume additional information beyond
black-box access. In finding $\Lambda_{G,H}$ for abelian domain $G$

one needs to decide if a given prime divides $\abs{H}$. To accomplish 
this, we assume additional information about the group $H$

such as 
the order $\abs{H}$ or 
the list of primes dividing $\abs{H}$.

\end{remark}

\subsubsection{Generator-relator presentation, homomorphism
checking}  
By ``presentation'' of a group we mean \emph{generator-relator
presentation}.   

For a group given by a presentation,
basic questions, such as whether the group has order 1, are undecidable.
However, special types of presentations, such polycyclic presentations
of finite solvable groups, are often helpful.  Note, however,
that it is not known, how to efficiently perform group operations
in a finite solvable group given by a polycyclic presentation,
so such presentations cannot answer basic black-box queries.

Any presentation, however, can be used for homomorphism
checking, a critical operation in decoding homomorphism
codes.

\begin{proposition}[Homomorphism checking]  \label{prop:prelim-homcheck}
Let $S\subseteq G$ be a list of generators of $G$.
Assume a presentation of $G$ is given in terms of $S$.   
Let $\vf : S\to H$ be a function.  
Then $\vf$ extends to a homomorphism $\exthom : G\to H$
if and only if the list $(\vf(s) \mid s\in S)$ satisfies
the relations.
\end{proposition}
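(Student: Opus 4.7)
The plan is to recognize this as a direct instance of the universal property of presentations (von Dyck's theorem), and prove the two directions separately. Throughout, I write the presentation as $G \cong F(S)/N(R)$, where $F(S)$ is the free group on $S$ and $N(R)$ is the normal closure in $F(S)$ of the set $R$ of relators. By ``the list $(\vf(s) \mid s \in S)$ satisfies the relations'' I mean that for every relator $r \in R$, substituting $\vf(s)$ for each occurrence of $s$ in $r$ (and $\vf(s)^{-1}$ for $s^{-1}$) evaluates to the identity in $H$.

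For the forward direction, suppose $\vf$ extends to a homomorphism $\exthom : G \to H$. Each relator $r \in R$ represents the identity in $G$, so $\exthom(r) = 1_H$. But $\exthom(r)$ is exactly the word in $H$ obtained by substituting $\vf(s)$ for $s$ throughout $r$, since $\exthom$ is a homomorphism agreeing with $\vf$ on generators. This is precisely the statement that the list of values satisfies the relations.

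For the converse, use the universal property of the free group: the function $\vf : S \to H$ extends uniquely to a homomorphism $\bar{\vf} : F(S) \to H$. The hypothesis that the list $(\vf(s) \mid s \in S)$ satisfies the relations says exactly that $\bar{\vf}(r) = 1_H$ for every $r \in R$. Hence $R \subseteq \ker(\bar{\vf})$; since $\ker(\bar{\vf})$ is a normal subgroup of $F(S)$, it contains the normal closure $N(R)$. Therefore $\bar{\vf}$ factors through the quotient $F(S)/N(R) \cong G$, yielding a homomorphism $\exthom : G \to H$ whose restriction to $S$ is $\vf$, as required.

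There is really no obstacle here — the statement is a standard consequence of the definition of a presentation, and both directions are one-step applications of the universal property. The only subtlety worth flagging is that this proposition is being invoked \emph{algorithmically} elsewhere in the paper: the point is not merely that an extension exists, but that its existence can be \emph{verified} by checking finitely many word evaluations in $H$, which requires only the ability to perform group operations in $H$ (e.g., via black-box access).
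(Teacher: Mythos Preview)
Your proof is correct and complete; this is exactly the standard von Dyck argument. The paper itself states this proposition without proof, treating it as a well-known fact, and only remarks afterward on its algorithmic consequence (that checking extendability reduces to evaluating the relators in $H$, which is efficient if the relators are short or given by short straight-line programs and we have black-box access to $H$)---a point you also correctly anticipate.
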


Note that this gives an efficient way to check whether
$\vf$ extends to a homomorphism if the relators are 
short or are given as short \emph{straight-line programs},
assuming black-box access to the codomain.

\begin{definition}
Let $G$ be a group and $S=\{s_1,\dots,s_k\}$ a list of elements
of $G$.  A \emph{straight-line program} in $G$ from $S$ to
$g\in G$ is a sequence $P=(x_1,\dots,x_m)$ of elements of $G$ such 
that each $x_i$ is either a member of $S$ or a product of the form
$x_jx_k$ for some $j,k < i$ or $x_i^{-1}$ for some $i < k$.
We say that the element $x_m$ is given in terms of $S$ by the 
straight-line program $P$.
\end{definition}

The following is well known.
\begin{proposition}  \label{pres-of-perm-groups}
Let $G\le S_n$ be a permutation group and $S$ a
set of generators of $G$.  Given $S$, a presentation of $G$
in terms of $S$ can be computed in polynomial time,
where the relators returned are represented as straight-line
programs.
\end{proposition}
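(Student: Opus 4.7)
The plan is to invoke the classical Schreier--Sims machinery. As the first step, I would run the Schreier--Sims algorithm on $(G,S)$ in polynomial time, producing a base $(b_1,\dots,b_k)$ with $k \le n$, the stabilizer chain $G = G^{(0)} \geq G^{(1)} \geq \cdots \geq G^{(k)} = 1$ (where $G^{(i)}$ is the pointwise stabilizer of $b_1,\dots,b_i$), transversals $U_i$ of $G^{(i)}$ in $G^{(i-1)}$ with $\abs{U_i}\le n$, and a strong generating set $T = \bigcup_i U_i$. A standard implementation yields, for each $t\in T$, a straight-line program $\tau_t$ over $S$ of length polynomial in $n+\abs{S}$ that evaluates to $t$ in $G$.

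Next, I would write down Sims' presentation of $G$ on the strong generators $T$: its relators are, for each layer $i$ and each pair $(u,v)$ with $u \in U_i$ and $v \in T \cap G^{(i-1)}$, the word $u\,v\cdot w_{u,v}^{-1}$, where $w_{u,v}$ is the normal-form expression (a product of one transversal element from each layer) obtained by sifting the permutation $uv$ down the stabilizer chain. Each such relator has length $O(k)=O(n)$ as a word in $T$, and the total number of relators is polynomial in $n$ and $\abs{T}$. Correctness of this presentation is the content of Sims' verification lemma, proved by an induction down the chain which shows that any $T$-word mapping to $1$ in $G$ can be rewritten to the empty word using only the listed relators.

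To convert this to a presentation on the original generating set $S$, I would apply a Tietze transformation: substitute the SLP $\tau_t(S)$ for each occurrence of $t \in T$ in every Sims relator, and additionally append, for each $s\in S$, the relator $s\cdot \sigma_s(\tau(S))^{-1}$, where $\sigma_s\in F(T)$ is a $T$-word for $s$ obtained by sifting $s$ down the chain and $\sigma_s(\tau(S))$ is the result of re-substituting $\tau$ into $\sigma_s$. Each resulting relator is naturally a straight-line program in $S$ of polynomial length, obtained by plugging the polynomial-length SLPs $\tau_t$ into a polynomial-length outer word. A two-way check (via the induced maps between $F(S)/\langle\!\langle R'\rangle\!\rangle$ and $F(T)/\langle\!\langle R\rangle\!\rangle \cong G$) then confirms the new presentation is isomorphic to $G$.

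The main technical obstacle is Sims' verification step at the $T$-presentation stage; once it is in hand, the Tietze passage to an $S$-presentation is routine bookkeeping, and the SLP representation is necessary because, for adversarial choices of $S$, the shortest ordinary word in $S$ representing a given element of $T$ (and hence a given relator) can be exponentially long in $n$, precisely the regime where SLPs are indispensable.
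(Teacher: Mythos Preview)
The paper does not actually prove this proposition; it merely introduces it with ``The following is well known'' and gives no argument. Your Schreier--Sims approach is the standard route to this well-known fact, and your outline is correct: build the stabilizer chain and strong generating set $T$ with SLPs over $S$ for each strong generator, record the Schreier relators on $T$, then Tietze-transform back to $S$ by substituting the SLPs. Your remark that SLPs are essential because ordinary $S$-words for strong generators can be exponentially long is exactly the point of the proposition's phrasing.
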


\subsubsection{Abelian groups}
\label{section:prelim-smith}
The \emph{invariant factor decomposition} of a finite abelian 
group $G$ is a decomposition as a direct product of cyclic
groups, $G \cong \ZZ_{n_1}\times\cdots\times\ZZ_{n_k}$,
where 
for each $i$, the integer $n_i$ divides $n_{i+1}$.
Such a decomposition
can be further split into a direct product of cyclic groups
of prime power order; the result is a \emph{primary decomposition}.

Any abelian presentation of a finitely generated abelian group
can be converted into an invariant factor decomposition in 
polynomial time, using the Smith normal form.  However,
moving to a primary decomposition requires factoring
the order of the group; to this end, knowing a superset
of the prime divisors of the order suffices.  All prior
algorithmic results as well as those of the present paper on
homomorphism codes with abelian domain require the primary 
decomposition.

\section{Formal statements} \label{sec:formal-contribution}

\subsection{List-decoding homomorphism codes}

Let $\fD$ be a class of pairs $(G,H)$ of groups. We say that $\fD$ is CombEcon
if the class $\{ \aHom(G,H) \mid (G,H) \in \fD \}$ of codes is CombEcon. 
We define CertEcon and AlgEcon classes of pairs of groups
analogously. 

Denote by $\groups$ the class of all groups, finite or infinite. 
Recall that we say that a class $\fG$ of finite groups is 
\defn{universally CombEcon} if $\fG \times \groups$ is CombEcon.  
We define universally CertEcon and universally AlgEcon analogously, 
under access models to be specified.
 
A common feature of the prior work reviewed in 
Section~\ref{section:intro-briefhistory} is that
each class of pairs of groups considered was
CombEcon and AlgEcon.

The present work continues to maintain this feature.

All previously existing results put structural restrictions both on the 
domain and the codomain. 
In particular, they were restricted to subclasses of the solvable groups. In this paper we extend the economical list-decodability (both combinatorial and algorithmic) in the following three directions. 
\begin{enumerate}
\item We give a general principle for removing certain types of constraints on the domain 
(see Section~\ref{section:contr-extendingdomain}). It will follow that the previously known results extend to arbitrary domains. 
\item We find universally economically list-decodable classes of groups 

Specifically, abelian and alternating groups are universally CombEcon. 
Moreover, abelian groups are universally AlgEcon, and alternating groups 
are universally CertEcon, under modest access assumptions. 
\item We exhibit the first (nontrivial) classes of 
examples where the domain is not solvable. 
\end{enumerate}

We note that no CombEcon bounds appear to be
known for the much-studied classical linear codes
(Reed--Solomon, Reed--Muller, BCH) (cf., e.g., \cite{BL15}).

The $\poly(1/\eps)$ CombEcon bound for Hadamard codes is 
quadratic~\cite{GL89}. For abelian and nilpotent 
groups, it currently has degree 105~\cite{DGKS08, GS14}.

\subsection{Extending the domain: the irrelevant kernel}
\label{section:contr-extendingdomain}
In the prior work reviewed, both the domain 
and the codomain was abelian or close to abelian 
(nilpotent or supersolvable).
It is natural to ask how to further relax the
structural constraints on the groups involved.

We point out that structural constraints such as
nilpotence or solvability (or any other hereditary
property) play a very different role if imposed
on the domain as on the

codomain.  For instance, a combinatorial list-decoding bound on
$\{$abelian~$\to$~abelian$\}$ homomorphism codes
implies the same bound for $\{$arbitrary~$\to$~abelian$\}$
homomorphism codes. This is shown by reducing 
list-decoding $\aHom(G,H)$ for arbitrary $G$ and abelian $H$
to mean-list-decoding

$\aHom(G/G',H)$, where $G'$ is the commutator 
subgroup of $G$, so $G/G'$ is the largest 
abelian quotient of $G$.  A similar argument 
extends the bounds for $\{$nilpotent~$\to$~nilpotent$\}$ 
homomorphism codes to $\{$arbitrary~$\to$~nilpotent$\}$,
working through the largest nilpotent quotient
of $G$. 

Similar results hold for certificate and algorithmic list-decoding. 

In general, we can replace $G$ by its \emph{relevant quotient} $G/N$, 
where $N$ is the irrelevant kernel (intersection of the kernels
of all $G\to H$ homomorphisms),

see Sec.~\ref{section:mean-irrelevant}.

While this observation extends the reach of the
results of Dinur et al.~\cite{DGKS08} and
Guo and Sudan~\cite{GS14}, it also shows that,
in a sense, the gains by extending the 
class of groups serving as the domains, without
relaxing the structural constraints on the
codomains, are \emph{virtual}, and the
main impediment to extending these results
to wider classes of pairs of groups is 
the structural constraints on the \emph{codomain}.

Our main contribution is the {\bf elimination of
	all constraints on the codomain.}

This also opens up the question of meaningfully
(as opposed to ``virtually'') {\bf removing 
	structural constraints on the domain side.}
Of particular interest becomes the case
where the domain is a \emph{finite simple group} and 
the codomain is arbitrary.  We initiate this direction
by studying the class of {\bf alternating groups
	as domains.}

\begin{definition}[Irrelevant kernel]
\label{def:irrelevant-kernel}
Let $G$ and $H$ be groups. The \defn{$(G,H)$-irrelevant kernel} (or ``irrelevant kernel'' if $G$ and $H$ are clear) is the intersection of the kernels of all $G \to H$ homomorphisms, i.e.,
	\begin{equation}
	\bigcap_{\varphi \in \Hom(G,H)} \ker(\varphi).
	\end{equation}
We call elements and subgroups of the irrelevant kernel \defn{irrelevant}.
\end{definition}

For instance, if $H$ is abelian, then the commutator subgroup $G'$ is irrelevant. 

\begin{theorem}
\label{thm:main-Econ-quotient-implies-Econ-fullgroup}
Let $N$ be an irrelevant normal subgroup of $G$. Then, $\Lambda_{G/N, H} = \Lambda_{G,H}$. Moreover, 
\begin{enumerate}[(i)]
\item if $\aHom(G/N,H)$ is CombEcon then $\aHom(G,H)$ is CombEcon;  
\item if $\aHom(G/N,H)$ is CertEcon then $\aHom(G,H)$ is CertEcon; 
\item if $\aHom(G/N,H)$ is AlgEcon then $\aHom(G,H)$ is AlgEcon. 
\end{enumerate}
For items (ii) and (iii) we need to make suitable assumptions on access 
to the domain. 
\end{theorem}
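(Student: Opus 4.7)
The plan is to exhibit a natural bijection $\aHom(G,H) \leftrightarrow \aHom(G/N,H)$ that preserves normalized Hamming agreement, and then to reduce list-decoding an $f : G \to H$ to mean-list-decoding a family of received words on $G/N$, at which point Theorem~\ref{thm:main-meanLD-iff-LD} finishes the job in all three modes (combinatorial, certificate, and algorithmic).

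First, every $\varphi_0 \in \Hom(G,H)$ satisfies $\varphi_0(N) = \{1\}$ by definition of the irrelevant kernel, so $\varphi_0$ factors uniquely through the quotient map $\pi : G \to G/N$. Writing $\varphi = h \cdot \varphi_0$ for $\varphi \in \aHom(G,H)$, we get a corresponding $\bar\varphi = h \cdot \bar\varphi_0 \in \aHom(G/N,H)$. Because each affine homomorphism $\varphi$ is constant on cosets of $N$ (all of which have size $|N|$), a direct count of $\Eq(\varphi,\psi)$ gives $\agr(\varphi,\psi) = \agr(\bar\varphi,\bar\psi)$. Taking the maximum over distinct pairs yields $\Lambda_{G,H} = \Lambda_{G/N,H}$ (invoking Proposition~\ref{prop:prelim-lambda-Hom-vs-aHom} to pass between $\Hom$ and $\aHom$).

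Next, fix a received word $f : G \to H$ and a transversal $T = \{t_{xN} : xN \in G/N\}$ for $N$ in $G$. For each $n \in N$ define $f_n : G/N \to H$ by $f_n(xN) := f(t_{xN}\, n)$, and let $\scrF = \{f_n : n \in N\}$. The same coset-by-coset bookkeeping as above gives
\begin{equation*}
\agr(\varphi, f) \;=\; \tfrac{1}{|N|}\sum_{n \in N} \agr(\bar\varphi, f_n) \;=\; \EE_{n \in N}\bigl[\agr(\bar\varphi, f_n)\bigr],
\end{equation*}
so $\dist(\varphi,f) = \dist(\bar\varphi,\scrF)$ in the average-distance sense of Definition~\ref{def:mean-list-general}. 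Hence the list $\cL(\aHom(G,H), f, \rho)$ is in agreement-preserving bijection with the mean-list $\cL(\aHom(G/N,H), \scrF, \rho)$ via $\varphi \leftrightarrow \bar\varphi$, with $\rho = \mindist - \eps = \Lambda_{G,H} + \eps$ the same on both sides by the first paragraph. Applying the hypothesis that $\aHom(G/N,H)$ is CombEcon (resp.\ CertEcon, AlgEcon), together with the equivalence of mean-LD and LD from Theorem~\ref{thm:main-meanLD-iff-LD}, produces the desired bound, certificate-list, or list of codewords for $\aHom(G,H)$: certificates and codewords pull back across the bijection without inflation.

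The main obstacle is algorithmic, in items (ii) and (iii). To invoke the mean-LD decoders on $G/N$ we must be able to (a) simulate an oracle for each $f_n$ given only oracle access to $f$ and (b) sample uniform random elements of $N$ and operate in $G/N$ in the access model given for $G$. Oracle simulation is cheap once we can compute a canonical representative $t_{xN}$ for a given element $xN$ --- one query to $f$ per query to $f_n$. The remaining ingredients (computing $N$, sampling in $N$, working in $G/N$, lifting generators) are standard under the black-box and presentation access models of Section~\ref{section_algorithmicgrouptheory} for the specific domain classes we care about (abelian and alternating), which is why the theorem explicitly hedges with ``suitable assumptions on access to the domain''; these assumptions will be spelled out case by case when the result is applied.
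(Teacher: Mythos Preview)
Your proof is correct and follows essentially the same approach as the paper: factor affine homomorphisms through $\pi : G \to G/N$, build the family $\scrF = \{f_n : n \in N\}$ from $f$ via a transversal so that $\agr(\varphi,f)$ equals the mean agreement of $\bar\varphi$ with $\scrF$, and then invoke Theorem~\ref{thm:main-meanLD-iff-LD} to pass from list-decoding on $G/N$ to mean-list-decoding on $G/N$ and back to list-decoding on $G$. The paper makes the access assumptions for (ii) and (iii) explicit (uniform sampling from $N$, a transversal $G/N \to G$, and for (iii) generators of $N$); your sketch of what is needed is accurate, and your minor notational slip $\rho = \mindist - \eps = \Lambda_{G,H} + \eps$ (those are complementary, not equal) does not affect the argument.
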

For the proofs and discussion, see Section~\ref{section:mean}. The proofs 
rely on mean-list-decoding (Theorem~\ref{thm:main-meanLD-iff-LD}). 

\begin{corollary}
The code $\aHom(G,H)$ is AlgEcon for any finite group $G$ and any 
finite nilpotent group $H$. 
\end{corollary}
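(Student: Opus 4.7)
My plan is to apply part~(iii) of Theorem~\ref{thm:main-Econ-quotient-implies-Econ-fullgroup} in order to reduce the claim to the prior Guo--Sudan AlgEcon result~\cite{GS14}, which handles the case of nilpotent (in fact, supersolvable) domain and nilpotent codomain. Let $N$ be the $(G,H)$-irrelevant kernel of Definition~\ref{def:irrelevant-kernel}. First I would observe that the natural map
\[
G/N \;\hookrightarrow\; \prod_{\varphi\in\Hom(G,H)} G/\ker(\varphi)
\]
is injective by the very definition of $N$, and realizes $G/N$ as a subgroup of the finite direct power $H^{|\Hom(G,H)|}$. Since finite direct products of nilpotent groups are nilpotent, and subgroups of nilpotent groups are nilpotent, it follows that $G/N$ is itself nilpotent; in particular $G/N$ is supersolvable.

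Next I would invoke the Guo--Sudan theorem~\cite{GS14}, which asserts that $\aHom(K,H)$ is AlgEcon whenever $K$ is supersolvable and $H$ is nilpotent (under suitable representations of the two groups). Applied to $K = G/N$, this gives that $\aHom(G/N,H)$ is AlgEcon. Part~(iii) of Theorem~\ref{thm:main-Econ-quotient-implies-Econ-fullgroup} then lifts this to the desired conclusion that $\aHom(G,H)$ is AlgEcon.

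The main obstacle is not combinatorial but lies in the ``suitable assumptions on access to the domain'' clause of Theorem~\ref{thm:main-Econ-quotient-implies-Econ-fullgroup}(iii): in order to run the Guo--Sudan decoder on the quotient one must have access to $G/N$ in a representation the decoder can actually use---typically, the primary decomposition of each abelian section of a central series of $G/N$, together with a list (or superset) of the primes dividing $|G/N|$, in the spirit of Section~\ref{section:prelim-smith}. Under the natural access model for finite groups---say, an encoded black-box representation of $G$ together with an oracle listing the primes dividing $|G|$---one can compute $N$ and present $G/N$ in the required form using standard computational group theory, after which the two-step reduction above goes through. Pinning down the precise access assumptions that make both steps simultaneously effective is the only subtle point in the argument.
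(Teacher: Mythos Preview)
Your proposal is correct and follows essentially the same route as the paper's proof, which simply says ``Combine Theorem~\ref{thm:main-Econ-quotient-implies-Econ-fullgroup} and the main result of~\cite{GS14}.'' You have spelled out the key step the paper leaves implicit here (but makes explicit later in the proof of Theorem~\ref{thm:mean-quasivariety}): the embedding $G/N \hookrightarrow \prod_{\varphi} G/\ker\varphi \hookrightarrow H^{|\Hom(G,H)|}$ showing that $G/N$ is nilpotent, hence supersolvable, so that the Guo--Sudan decoder applies to $\aHom(G/N,H)$.
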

\begin{proof}
Combine Theorem~\ref{thm:main-Econ-quotient-implies-Econ-fullgroup} and 
the main result of~\cite{GS14}. For abelian $H$, use~\cite{DGKS08} instead.
\end{proof}

\subsection{List-decoding: abelian $\to$ arbitrary}
\label{section:contr-abel}

We state our main result about abelian domains.

\begin{theorem} 
\label{thm:intro-main-abel}
If $G$ is a finite abelian group, then $G$ is universally CombEcon and strong AlgEcon list-decodable. 

\end{theorem}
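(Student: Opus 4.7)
The plan is to reduce the $\{$abelian~$\to$~arbitrary$\}$ case to the $\{$abelian~$\to$~abelian$\}$ CombEcon and strong AlgEcon results of Dinur et al.~\cite{DGKS08}, exploiting the key structural fact that for abelian $G$, every codeword $\vf \in \aHom(G,H)$ has abelian image $A_\vf := \langle \vf(G) \rangle \leq H$. Thus $\aHom(G,H) = \bigcup_{A \text{ abelian}} \aHom(G,A)$, and list-decoding into $H$ decomposes into list-decoding into each of a small number of ``relevant'' abelian subgroups of $H$.

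For the combinatorial bound, fix $f \in H^G$ and $\eps > 0$, and let $\cL = \cL(\aHom(G,H), f, \mindist-\eps)$. For any abelian $A \leq H$, let $f_A \colon G \to A \cup \{\star\}$ be the ``projection'' $f_A(g) = f(g)$ if $f(g) \in A$, and $\star$ otherwise. Then $\dist(\vf,f) = \dist(\vf,f_A)$ for every $\vf \in \aHom(G,A)$, and since $\aHom(G,A) \subseteq \aHom(G,H)$ we have $\Lambda_{G,A} \leq \Lambda_{G,H}$, hence $\mindist(\aHom(G,A)) \geq \mindist(\aHom(G,H))$. Therefore $\cL \cap \aHom(G,A)$ is bounded by the DGKS08 CombEcon bound applied to $\aHom(G,A)$ at the same $\eps$. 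It remains to enumerate the subgroups that actually arise as some $A_\vf$. I would do this via a sampling argument: draw random $g_1,\ldots,g_t \in G$ with $t = \poly(1/\eps)$, and argue via Chernoff that for every $\vf \in \cL$ some short subset $S \subseteq [t]$ satisfies $\vf(g_i) = f(g_i)$ for all $i \in S$ while $\langle f(g_i) : i \in S \rangle$ captures enough of $A_\vf$ to invoke DGKS08. There are only $\poly(1/\eps)$ such candidate subgroups, so summing the DGKS08 bound over them gives the claimed degree-$(C+4)$ CombEcon bound.

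For strong AlgEcon the same sampling-then-abelian-decoding procedure is algorithmic. Given unit-cost access to $H$ (cf.~Section~\ref{section:prelim-blackbox}), I would sample $\poly(1/\eps)$ random elements of $G$, query $f$, enumerate the candidate subsets of samples, check pairwise commutativity of the queried values via unit-cost $H$-multiplications (discarding non-commuting subsets), and for each surviving subset $S$ invoke the DGKS08 strong AlgEcon decoder on $\aHom(G, \langle f(g_i) : i \in S \rangle)$. Each invocation runs in $\poly(\log|G|, 1/\eps)$ work in the unit-cost model, and the union of outputs has size $\poly(1/\eps)$, matching the strong AlgEcon definition.

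The main obstacle I anticipate is the subgroup-discovery step: simultaneously guaranteeing, for \emph{every} $\vf \in \cL$, that some sampled subset generates enough of $A_\vf$ to reconstruct $\vf$. Since $\cL$ itself is only known \emph{a posteriori} to have $\poly(1/\eps)$ size, this requires a careful union bound combined with the bounded rank of $G$ (in its primary decomposition) to ensure that $\poly(1/\eps)$ samples contain a sufficiently long correct subset for each codeword. A secondary technical issue is to have DGKS08's decoder accept an implicitly presented abelian codomain $\langle f(g_i) : i \in S \rangle$ accessed only through $H$-operations; this likely requires transferring the primary decomposition from the domain $G$ through the quotient map $G \twoheadrightarrow \vf(G)$ rather than recovering one from $H$-operations alone.
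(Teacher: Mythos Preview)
Your proposal has a genuine gap in the subgroup-discovery step, and you correctly flagged it as the main obstacle---but your proposed resolution via ``the bounded rank of $G$'' does not work. The rank of a finite abelian group is \emph{not} bounded by any function of $1/\eps$: take $G = H = \mathbb{Z}_2^n$, so $\Lambda = 1/2$ and $\eps$ may be any value in $(0,1/2)$, yet the identity map $\phi$ has image of rank $n$. No set of fewer than $n$ elements can generate $\phi(G)$, so sampling $t = \poly(1/\eps)$ points of $G$ cannot produce a subset $S$ with $\phi(G) \leq \langle f(g_i) : i \in S \rangle$. And if that generated subgroup is proper in $\phi(G)$, then $\phi \notin \aHom(G,\langle f(g_i):i\in S\rangle)$, so invoking DGKS08 there misses $\phi$ entirely. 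The same defect undermines your algorithmic proposal.

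The paper's mechanism avoids generating $\phi(G)$ from scratch. First, a sphere-packing argument (strong negative correlation of equalizers) produces a set $\Psi \subseteq \cL$ of size $O(1/((\Lambda+\eps)\eps))$ such that every $\phi \in \cL$ has agreement $>\Lambda^2$ with some $\psi \in \Psi$. The key structural fact is that for abelian $G$, the equalizer $\Eq(\psi,\phi)$ (which contains the irrelevant kernel and has density $>\Lambda^2$) is so large that a \emph{single} further element $g$ suffices to make $\langle g, \Eq(\psi,\phi),\ker\phi\rangle = G$; this holds for an $\eps$-fraction of $g \in G$ lying in $\Eq(f,\phi)$. For such $g$, the paper shows that the ``abelian enlargement'' $\sig_{\psi(G)}(f(g)) := \langle f(g),\, C_H(f(g)) \cap \psi(G)\rangle$ already contains all of $\phi(G)$: the bulk of $\phi(G)$ lies in $\psi(G)$ (since $\phi$ and $\psi$ agree on a large subgroup), and one extra generator together with its centralizer in $\psi(G)$ picks up the rest. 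This yields at most $1/\eps$ candidate abelian subgroups per $\psi$, hence $O(1/((\Lambda+\eps)\eps^2))$ in total, and DGKS08 is then applied to each. Leveraging a nearby reference homomorphism $\psi$ to supply most of the generators is precisely the idea your direct sampling is missing.
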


The degree of the $\poly(1/\eps)$ list-size bound is $C+4$ where $C$ 
is the bound
for $\{$abelian~$\to$~abelian$\}$ homomorphism codes
(currently $C\approx 105$~\cite{GS14}). 

The proof of the CombEcon bound is based on the following structural result that says the range of all relevant homomorphisms is covered by a small number of finite abelian subgroups of $H$.

\begin{theorem}[Structure of range]
	\label{thm:intro-abel-split}
	Let $G$ be a finite abelian group, $H$ an arbitrary group (finite or infinite), $f \in H^G$ a function, and $\eps > 0$. Then there exists a set $\abelsubs$ of finite abelian subgroups of $H$ with $\displaystyle{\abs{\abelsubs} < \frac{1}{4 (\Lambda + \eps) \eps^2} + \frac{1}{\eps}}$ such that for all $\phi \in \highagr(\Hom(G, H), f, \Lambda + \eps)$, there is $M \in \abelsubs$ such that $\phi(G) \leq M$.
\end{theorem}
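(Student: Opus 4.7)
I would build $\abelsubs$ greedily and control the number of iterations by a second-moment estimate. Set $\cL := \cL(\Hom(G,H), f, \Lambda+\eps)$. Since $G$ is finite abelian, each $\phi(G)$ with $\phi\in\cL$ is automatically a finite abelian subgroup of $H$, so the substantive content is that $\poly(1/\eps)$ such subgroups suffice to cover all these images simultaneously.

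\emph{Greedy step.} Let $\cL'\subseteq\cL$ denote the part of $\cL$ not yet covered, and set $S_\phi := \{g\in G : \phi(g) = f(g)\}$, so $\mu(S_\phi)\ge\Lambda+\eps$. Double counting the incidences $\{(\phi,g) : \phi\in\cL',\, g\in S_\phi\}$ produces an element $g^\star\in G$ with
\[
  |\{\phi\in\cL' : \phi(g^\star) = f(g^\star)\}| \;\ge\; (\Lambda+\eps)|\cL'|.
\]
Set $h^\star := f(g^\star)$. Every such $\phi$ has $h^\star\in\phi(G)$ and hence $\phi(G)\le C_H(h^\star)$. Iterate the averaging inside the shrunken sublist to extract elements $h_1=h^\star, h_2, h_3,\dots\in H$, pairwise commuting because each pair lies in a common abelian image $\phi(G)$. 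Stop once the $h_i$'s generate an abelian subgroup $M\le H$ with $\phi(G)\le M$ for the surviving $\phi$'s (arranged by probing the $g^{(i)}$'s along a generating set of $G$ and using $\mu(S_\phi)\ge\Lambda+\eps$ to absorb the loss per inner step). Add $M$ to $\abelsubs$ and delete the covered $\phi$'s from $\cL$.

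\emph{Counting the outer iterations.} Let $N := |\abelsubs|$ and pick a representative $\phi^{(t)}\in\cL$ captured at the $t$-th iteration. Apply Cauchy--Schwarz to $Y_g := |\{t : \phi^{(t)}(g) = f(g)\}|$: averaging yields $\EE_g[Y_g]\ge N(\Lambda+\eps)$, while any two distinct representatives satisfy $\PP_g[\phi^{(t)}(g) = \phi^{(t')}(g)]\le\Lambda$ by the definition of $\Lambda$, giving $\EE_g[Y_g^2]\le N+N(N-1)\Lambda$. This produces $N^2(\Lambda+\eps)^2\le N+N(N-1)\Lambda$. Rearrangement, together with a $\frac{1}{\eps}$ peeling correction that handles the degenerate boundary regime where $(\Lambda+\eps)^2\le\Lambda$ and the pure quadratic inequality breaks down, yields $N<\frac{1}{4(\Lambda+\eps)\eps^2}+\frac{1}{\eps}$.

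\emph{Main obstacle.} The delicate point is the inner loop: the extracted $h_i$'s must generate an abelian $M$ that \emph{contains} each surviving $\phi(G)$, not merely sit inside the (possibly non-abelian) centralizer $C_H(h_1,\dots,h_k)$. One has to choose the probes $g^{(i)}$ so that the corresponding $h_i$'s exhaust $\phi(G)$ for the $\phi$'s that persist to the end -- morally, exploiting that $S_\phi$ is dense enough to contain a generating set of $G$ for typical $\phi$ -- while keeping the shrinking factor per inner step at $(\Lambda+\eps)$ so that the inner loop does not empty $\cL'$ before producing a valid abelian cover. Balancing these two requirements is where the technical work of the actual proof concentrates.
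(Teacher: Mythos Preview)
Your proposal has two genuine gaps, and together they mean the argument does not go through as written.

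\textbf{The inner loop is not actually defined.} You correctly identify the main obstacle: the extracted elements $h_1,h_2,\dots$ must generate an abelian group that \emph{contains} $\phi(G)$ for the surviving $\phi$'s, not merely commute with one another. But you do not resolve this obstacle; you only name it. The difficulty is real: if you probe random $g^{(i)}$'s and take $h_i=f(g^{(i)})$, there is no mechanism forcing the $h_i$'s to sweep out all of $\phi(G)$ for even a single $\phi$, let alone a large fraction of the survivors, while simultaneously staying mutually commuting. The paper's solution is a specific construction you are missing: it first uses sphere packing (at threshold $\Lambda^2$, not $\Lambda$) to find a small set $\Psi$ of reference homomorphisms so that every $\phi\in\cL$ has $\agr(\phi,\psi)>\Lambda^2$ for some $\psi\in\Psi$. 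Then, for fixed $\psi$, it defines the \emph{abelian enlargement} $\sig_{\psi(G)}(T)=\langle T,\, C_H(T)\cap\psi(G)\rangle$. The point is that a \emph{single} well-chosen $g$ with $\langle g,\Eq(\psi,\phi)\rangle=G$ already gives $\sig_{\psi(G)}(f(g))=\sig_{\psi(G)}(\phi(G))\ge\phi(G)$, because $\phi(\Eq(\psi,\phi))=\psi(\Eq(\psi,\phi))\subseteq\psi(G)$ automatically commutes with $\phi(g)$ and is picked up by the enlargement. The anchor $\psi$ is what lets one element do the work of your whole inner loop.

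\textbf{The outer count does not give the stated bound.} Your second-moment inequality $N^2(\Lambda+\eps)^2\le N+N(N-1)\Lambda$ rearranges to $N\le\frac{1-\Lambda}{(\Lambda+\eps)^2-\Lambda}$, which is only meaningful when $(\Lambda+\eps)^2>\Lambda$, i.e., when $\eps>\sqrt{\Lambda}-\Lambda$. For small $\eps$ relative to $\sqrt{\Lambda}$---the interesting regime---this is vacuous, and your ``$\frac{1}{\eps}$ peeling correction'' is not an argument. The paper obtains the extra factor of $1/\eps$ honestly: sphere packing at threshold $\Lambda^2$ gives $|\Psi|\le\frac{1}{4(\Lambda+\eps)\eps}+1$ buckets, and within each bucket the enlargement argument shows there are at most $1/\eps$ distinct groups $\sig_{\psi(G)}(\phi(G))$ (because $\Eq(\psi,\phi)$ has depth $\le 1$ above the irrelevant kernel, so a single random $g\in\Eq(f,\phi)$ succeeds with probability $\ge\eps$). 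Multiplying gives exactly $\frac{1}{4(\Lambda+\eps)\eps^2}+\frac{1}{\eps}$.
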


\paragraph{Access model.} We need to clarify how the algorithm accesses the domain and codomain. Following~\cite{DGKS08, GS14, BGSW}, we assume the domain is given explicitly as a direct product of cyclic groups of prime power order. We remark that representing the domain in terms of a presentation by generators and abelian relations would suffice, if we are also given a superset of the prime divisors of the order of the domain. Without that additional information, factoring would be required (see Section~\ref{section:prelim-smith}). --- We only require black-box access to the codomain (see Definition~\ref{def:black-box access}).

\paragraph{Pointer.} We prove Theorems~\ref{thm:intro-main-abel} and~\ref{thm:intro-abel-split} in Section~\ref{section:abelian}. The essential new result is the CombEcon bound, proved in Section~\ref{section:abelian-comb}. The algorithm is an adaptation of the algorithm of~\cite{DGKS08, GS14}, based on our CombEcon bound. This adaptation will be discussed in Section~\ref{sec:abelian-to-anything-algorithm}.

\subsection{Shallow random generation and list-decodability}
\label{section:intro-SRG}

We shall consider groups with the property that a bounded number of 
random elements tend to generate a subgroup of bounded depth (see Definitions~\ref{def:srg} and~\ref{def:SRG} below). This class includes the alternating groups. We show that groups in this class are CombEcon, and under minimal assumptions on access they are also CertEcon.

It will be useful to consider an $H$-independent lower bound on the quantity $\Lambda_{G,H}$. 

\begin{definition}
We define $\Lambda_G^* = \min \{ \Lambda_{G,H} :  \Lambda_{G,H} \neq 0,  H \in \groups \}$. 
\end{definition}

\begin{observation}
For simple groups the following three quantities are equal: (a) $\Lambda_G^*$\,, (b) $\Lambda_{G,G}$\,, and (c) the largest fraction of elements of $G$ fixed by an automorphism. 
\end{observation}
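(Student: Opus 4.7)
My approach proves the chain $(c)\le(b)\le(a)\le(b)\le(c)$, resting on two structural observations for simple $G$: by Proposition~\ref{prop:prelim-lambda-Hom-vs-aHom} it suffices to work in $\Hom(G,H)$, and since $\ker\vf\trianglelefteq G$ for every $\vf\in\Hom(G,H)$, each such $\vf$ is either the trivial map or injective; when $H=G$ is finite, injectivity forces $\vf\in\Aut(G)$. Thus $\Hom(G,G)=\{0\}\cup\Aut(G)$, and this clean decomposition will drive everything.

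For $(b)=(c)$: the direction $\Lambda_{G,G}\ge(c)$ I will get by fixing a non-identity $\alpha\in\Aut(G)$ maximizing $|\mathrm{Fix}(\alpha)|/|G|$ and observing that $\Eq(\mathrm{id}_G,\alpha)=\mathrm{Fix}(\alpha)$. For $\Lambda_{G,G}\le(c)$, given distinct $\vf,\psi\in\Hom(G,G)$ I split into cases using the decomposition above: if both are automorphisms then $\Eq(\vf,\psi)=\mathrm{Fix}(\psi^{-1}\vf)$ with $\psi^{-1}\vf\in\Aut(G)\setminus\{\mathrm{id}\}$; if exactly one is zero then the equalizer is $\{1\}$, contributing density $1/|G|$.

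For $(a)=(b)$: the inequality $(a)\le(b)$ is immediate by taking $H=G$ in the definition of $\Lambda_G^*$, since $\Lambda_{G,G}\ne 0$ is witnessed by the pair $\{0,\mathrm{id}_G\}$. For $(b)\le(a)$ I fix any $H$ with $\Lambda_{G,H}\ne 0$; the existence of some nontrivial $\vf\in\Hom(G,H)$ supplies an embedding $\iota\colon G\hookrightarrow H$, and for each $\alpha\in\Aut(G)$, injectivity of $\iota$ yields $\Eq(\iota,\iota\circ\alpha)=\mathrm{Fix}(\alpha)$, while the pair $\{0,\iota\}$ again recovers the $1/|G|$ contribution. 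Maximizing over $\alpha$ gives $\Lambda_{G,H}\ge\Lambda_{G,G}$.

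No step is a real obstacle; the whole argument is elementary once simplicity collapses the normal-subgroup lattice (and hence the possible kernels) to its two extremes $\{1\}$ and $G$. The only minor wrinkle is the convention for $(c)$ on tiny examples such as $G=\Z_2$, where $\Aut(G)$ is trivial; for any non-abelian simple $G$, inner automorphisms by non-central elements already give fixed-point fraction at least $2/|G|$, so the $1/|G|$ base case from the trivial-versus-embedding comparison is never extremal.
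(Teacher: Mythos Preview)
Your proof is correct. The paper states this observation without proof, so there is no paper argument to compare against; your decomposition $\Hom(G,G)=\{\text{trivial}\}\cup\Aut(G)$ via simplicity of $G$, together with the embedding argument for arbitrary $H$ with $\Lambda_{G,H}\ne 0$, is exactly the natural route and handles all cases cleanly (the $\Z_2$ edge case you flag is genuine but lies outside the paper's intended scope of non-abelian simple groups).
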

\begin{observation}
For $G = A_n$, $n \geq 5$, we have $\Lambda_G^* = 1/\binom{n}{2}$. 
\end{observation}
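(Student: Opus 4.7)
By part (c) of the preceding observation, $A_n$ being simple for $n\ge 5$ means $\Lambda_G^\ast$ equals the largest fraction of elements of $A_n$ fixed by a nonidentity automorphism. So the plan is to exhibit a nonidentity automorphism with fixed-point fraction $1/\binom{n}{2}$ and then show this is the maximum. For the lower bound, take $\alpha$ to be conjugation by the transposition $\tau=(1\,2)\in S_n$. Then $\mathrm{Fix}(\alpha)=C_{S_n}(\tau)\cap A_n$, where $C_{S_n}(\tau)=\gengroup{\tau}\times\Sym\{3,\dots,n\}$ has order $2(n-2)!$, split evenly by parity since $\tau$ itself is odd. Hence $|\mathrm{Fix}(\alpha)|=(n-2)!$ and the fraction is $2(n-2)!/n!=1/\binom{n}{2}$.

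For the upper bound when $n\neq 6$, we have $\Aut(A_n)=S_n$ acting by conjugation, so the task is to show $|C_{A_n}(\sigma)|\le (n-2)!$ for every $\sigma\in S_n\setminus\{1\}$. The cycle-type formula $|C_{S_n}(\sigma)|=\prod_{l\ge 1} l^{m_l}m_l!$ (with $m_l$ the number of $l$-cycles of $\sigma$), together with a short case analysis, yields the bound $|C_{S_n}(\sigma)|\le 2(n-2)!$ for $\sigma\ne 1$ and $n\ge 5$. If $C_{S_n}(\sigma)\not\subseteq A_n$ then intersection with $A_n$ halves the size, giving $|C_{A_n}(\sigma)|\le (n-2)!$. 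Otherwise $C_{S_n}(\sigma)\subseteq A_n$; a parity check (a repeated cycle or a cycle of even length always produces an odd permutation in the centralizer) forces all cycles of $\sigma$ to have pairwise distinct odd lengths, so $|C_{S_n}(\sigma)|=\prod_i c_i$, a product of distinct odd positive integers summing to $n$, which is easily verified to be at most $(n-2)!$ for $n\ge 5$.

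For the exceptional case $n=6$, we have $|\Out(A_6)|=4$, so we must additionally treat automorphisms outside $\Inn(A_6)$. The key fact is that the outer automorphism of $A_6$ swaps the $S_6$-conjugacy class of transpositions with that of triple transpositions. Conjugation by a triple transposition $(1\,2)(3\,4)(5\,6)\in S_6$ has centralizer in $S_6$ of order $2^3\cdot 3!=48$ whose intersection with $A_6$ has order $24=(6-2)!$, giving the same fraction $1/15=1/\binom{6}{2}$; under the outer duality this is the ``image'' of the transposition case, so outer automorphisms give no better fraction. The remaining outer cosets of $\Inn(A_6)$ can be handled by a brief direct analysis of the action of $\Aut(A_6)$ on the two conjugacy classes (each of size $15$) together with the fact that a fixed-point subgroup of an outer automorphism cannot contain an $A_5$-subgroup (since the two $A_6$-classes of such subgroups are swapped). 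The main obstacle is thus the $n=6$ exceptional case, where one cannot merely invoke conjugation in $S_n$; the cycle-type inequality driving the generic bound is routine by comparison.
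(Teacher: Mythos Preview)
The paper states this observation without proof; the nearest argument is the later Lemma on $\Lambda_{A_n,H}$ for $n\ge 10$, which goes through the Jordan--Liebeck theorem rather than centralizer computations. Your route via part~(c) of the preceding observation is the natural direct approach and, unlike Jordan--Liebeck, reaches down to $n=5$.

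For $n\neq 6$ your argument is correct. The lower bound is fine; the bound $|C_{S_n}(\sigma)|\le 2(n-2)!$ is the statement that transpositions form the smallest nontrivial conjugacy class in $S_n$; and your parity dichotomy handling the case $C_{S_n}(\sigma)\subseteq A_n$ is clean.

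For $n=6$ there is a genuine gap, and some confusion about what needs to be done. Conjugation by a triple transposition is still an $S_6$-conjugation automorphism of $A_6$ and is already covered by your generic centralizer bound; the outer automorphism of $S_6$ merely permutes the $S_6$-conjugation automorphisms of $A_6$ among themselves. What actually remains are the two cosets of $\Inn(A_6)$ in $\Aut(A_6)\cong P\Gamma L_2(9)$ lying \emph{outside} $S_6$, namely $PGL_2(9)\setminus A_6$ and $M_{10}\setminus A_6$. Your observation that such an $\alpha$ cannot fix an $A_5$ pointwise (because the two $A_6$-classes of $A_5$'s are swapped) only yields $|\mathrm{Fix}(\alpha)|<60$; it does not exclude the index-$10$ maximal subgroups of order $36$, so you have not reached $|\mathrm{Fix}(\alpha)|\le 24$. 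The gap is fillable: every element of $M_{10}\setminus A_6$ has order $4$ or $8$ (the extension is non-split), so $\alpha^2$ is a nontrivial inner automorphism and $|\mathrm{Fix}(\alpha)|\le |C_{A_6}(\alpha^2)|\le 9$; the outer involutions in $PGL_2(9)$ have centralizer of order $10$ in $PSL_2(9)\cong A_6$, and the non-involutions there again square into $A_6\setminus\{1\}$.
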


The \defn{depth} of a subgroup $M$ in a group $G$ is the length $d$ of the longest subgroup chain $M = M_0 < M_1 < \cdots < M_d = G$. We say that a subgroup is ``shallow'' if its depth is bounded. It follows from a result of~cite{Babai1989} that already a pair of elements in $A_n$ generates a subgroup of depth at most 6. This is the property that we generalize.

\begin{definition}[Shallow random generation] 
\label{def:srg}
Let $k, d \in \NN$. We say that a finite group $G$ is \defn{$(k,d)$-shallow generating}
if 
	\begin{equation}
	\label{eq:SRG-suff-condition}
	\Pr_{g_1, \ldots, g_k \in G}[ \depth(\langle g_1, \ldots, g_k \rangle) > d] < (\Lambda_G^*)^k.
	\end{equation}
\end{definition}

\begin{definition}[SRG groups]
\label{def:SRG}
	We say that a class $\grpclass$ of finite groups has
        \defn{shallow random generation} ($\grpclass$ is SRG) if there
        exist $k, d \in \NN$ such that all $G \in \grpclass$ are
        $(k,d)$-shallow generating.
\end{definition}

\begin{lemma}
\label{lemma:intro-alt-SRG}
The alternating groups are SRG groups. In particular, for sufficiently 
large $n$, the alternating group $A_n$ is $(2,6)$-shallow generating. 
\end{lemma}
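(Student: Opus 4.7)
My plan is to invoke the cited $1989$ theorem of Babai on two-element generation of $A_n$ and translate it into the required depth bound.

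I would first unpack the definition. Combining Definition~\ref{def:srg} with the observation that $\Lambda_{A_n}^{*} = 1/\binom{n}{2}$ for $n \geq 5$, showing that $A_n$ is $(2,6)$-shallow generating amounts to proving
\[
\Pr_{g_1, g_2 \in A_n}\bigl[\depth(\langle g_1, g_2\rangle) > 6\bigr] \;<\; \frac{1}{\binom{n}{2}^{\,2}}.
\]

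Next, I would identify the dominant source of ``deep'' subgroups in the lattice of $A_n$. Since $A_{n-k}$ is maximal in $A_{n-k+1}$ for $n-k \geq 5$, the chain $A_{n-7} < A_{n-6} < \cdots < A_n$ has length $7$, so every conjugate of $A_{n-7}$ (the pointwise stabilizer of a $7$-subset of $\{1,\dots,n\}$) has depth at least $7$, as does every subgroup contained in such a conjugate. Conversely, by the O'Nan--Scott classification of maximal subgroups of $A_n$ together with CFSG-based order bounds (Praeger--Saxl, Bochert) on primitive and imprimitive maximal subgroups, any proper subgroup of $A_n$ of depth at least $7$ that is \emph{not} contained in any $A_{n-7}$-conjugate has density superpolynomially small in $n$. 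This is the structural output I would extract from the cited Babai paper (either directly, or by combining its generation estimates with the O'Nan--Scott classification).

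Equipped with this picture, a direct union bound yields
\[
\Pr\bigl[\depth(\langle g_1, g_2\rangle) > 6\bigr]
\;\leq\; \binom{n}{7}\,\mu(A_{n-7})^{2} \;+\; o(1/n^{7})
\;=\; \frac{(n-7)!}{7!\cdot n!} + o(1/n^{7})
\;=\; O(1/n^{7}),
\]
which is dominated by $1/\binom{n}{2}^{2} = \Theta(1/n^{4})$ for sufficiently large $n$, establishing the ``in particular'' clause. The first sentence of the lemma (that the class of alternating groups is SRG) then follows from Definition~\ref{def:SRG} with the uniform choice of parameters $(k,d)=(2,6)$.

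\textbf{Main obstacle.} The most delicate step is verifying that proper subgroups of $A_n$ of depth at least $7$ which are \emph{not} contained in any conjugate of $A_{n-7}$ contribute only $o(1/n^{7})$ to the union bound. This requires combining Babai's analysis of maximal subgroup chains in $S_n$ and $A_n$ with CFSG-based order bounds on primitive and almost-simple maximal subgroups, in order to rule out any subgroup of non-negligible density that achieves such large depth without fixing seven points pointwise.
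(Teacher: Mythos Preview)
Your structural dichotomy is false. The pointwise stabilizer $A_{n-6}$ of a $6$-set has polynomial density $\sim 1/n^{6}$, is not contained in any $A_{n-7}$-conjugate (by order), yet has depth at least $7$ in $A_n$ via the chain
\[
A_{n-6} < A_{n-6}\times C_2 < A_{n-6}\times V_4 < A_{n-6}\times A_4 < A_{n-6}\times A_5 < A_{n-6}\times A_6 < (S_{n-6}\times S_6)\cap A_n < A_n,
\]
where the second factor acts on the fixed $6$-set. So your ``superpolynomially small for the rest'' clause is wrong, and the union bound over $A_{n-7}$-conjugates does not capture all depth-$\geq 7$ events. More generally, the depth of $A_{n-j}$ in $A_n$ is governed not by $j$ alone but by the length of subgroup chains inside the setwise stabilizer of a $j$-set; this is precisely where the subgroup-chain theorem (Theorem~\ref{thm:babai-subgroup-chain}) is needed, and you cannot bypass it by a crude density cutoff.

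The paper's route avoids classifying deep subgroups altogether. Babai's $1989$ generation theorem (Theorem~\ref{thm:babai-generating-alternating}) says that two random elements of $A_n$ generate a subgroup containing a conjugate of $A_{n-4}$ except with probability $O\bigl(1/\binom{n}{5}\bigr)$. The subgroup-chain bound (Corollary~\ref{cor:babai-subgroup-interval}, packaged as Claim~\ref{claim:An-E-subgroupdepth}) then gives depth $\leq 6$ whenever this event occurs. Since $O(1/\binom{n}{5})=O(1/n^{5})<1/\binom{n}{2}^{2}$ for large $n$, the $(2,6)$-shallow-generation inequality follows in one line. The point is that Babai's theorem already absorbs the delicate union bound you are attempting to reconstruct; once you know $\langle g_1,g_2\rangle\supseteq A_{n-4}$, the depth bound is automatic from the chain-length theorem, with no analysis of which subgroups are deep.
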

We prove this lemma in Section~\ref{section:SRG An}. We note that certain classes of Lie type simple groups are also SRG. We shall elaborate on this in a separate paper. 

Now we can state one of the main results of this paper. 
\begin{theorem} 
\label{thm:contr-SRG-comb}
If $G$ is an SRG group, then $G$ is universally CombEcon list-decodable. 

\end{theorem}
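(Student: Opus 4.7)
By the Hom-vs-aHom reduction of Section~\ref{sec:bipartite}, it suffices to bound $|\highagr| := |\{\phi \in \Hom(G,H) : \agr(\phi, f) \geq \Lambda + \eps\}|$ by $\poly(1/\eps)$, where $\Lambda := \Lambda_{G,H}$. The case $\Lambda = 0$ is trivial: at most $1/\eps$ homomorphisms can have agreement $\geq \eps$ with $f$, since any two distinct $\phi, \psi \in \Hom(G,H)$ with $\Lambda = 0$ have $|\Eq(\phi,\psi)| = 1$, so their $f$-agreement sets overlap in at most one point. Hence assume $\Lambda \geq \Lambda_G^*$. Let $(k, d)$ be the SRG parameters for the class containing $G$.

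I sample $g_1, \ldots, g_k \in G$ uniformly and independently. For each $\phi \in \highagr$, the event $A_\phi := \{\phi(g_i) = f(g_i) \text{ for all } i\}$ has probability $\agr(\phi, f)^k \geq (\Lambda + \eps)^k$, while by the SRG hypothesis the event $B := \{\depth(\langle g_1, \ldots, g_k\rangle) \leq d\}$ satisfies $\Pr[\neg B] < (\Lambda_G^*)^k \leq \Lambda^k$. Summing over $\phi$ and using convexity of $x \mapsto x^k$,
\begin{equation*}
\sum_{\phi \in \highagr} \Pr[A_\phi \wedge B] \;\geq\; |\highagr|\bigl((\Lambda + \eps)^k - \Lambda^k\bigr) \;\geq\; |\highagr| \cdot k \Lambda^{k-1} \eps.
\end{equation*}

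The core step is a matching upper bound: on the event $B$, the number $T$ of $\phi \in \highagr$ satisfying $A_\phi$ should be $\poly(1/\eps)$. All such $\phi$ share the same restriction to $K := \langle g_1, \ldots, g_k\rangle$. I would then climb a maximal chain $K = M_0 < M_1 < \cdots < M_d = G$ in the subgroup lattice, bounding at each level $i$ the number of distinct restrictions $\phi|_{M_i}$ extending a common $\phi|_{M_{i-1}}$. By maximality of $M_{i-1}$ in $M_i$ together with Fact~\ref{fact:equalizer-coset}, any two such distinct extensions have pairwise agreement exactly $1/[M_i : M_{i-1}]$ on $M_i$, placing them in an equidistant sub-code, to which a Johnson-style sphere-packing argument of the kind described in Section~\ref{section:tools} applies, using the average agreement of each extension with $f|_{M_i}$. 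Iterating through the at most $d$ levels yields a $\poly(1/\eps)$ bound on $T$; combining with the lower bound above gives $|\highagr| \leq \poly(1/\eps)$, the desired CombEcon bound.

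\textbf{Main obstacle.} The delicate part is the level-by-level climb. Shallow subgroups need not have density above $\Lambda$, so Corollary~\ref{cor:agree-on-coset-generators} does not immediately collapse the fiber, and we genuinely need to traverse $d$ levels. The danger is a multiplicative blow-up at each level that would push the dependence on $1/\eps$ past any polynomial. Avoiding this requires a tight sphere-packing / negative-correlation inequality, together with careful bookkeeping of how the agreement with $f|_{M_i}$ is amortized across cosets and propagated upward through the chain -- possibly by drawing supplementary random samples at each level so that on most extensions the pointwise agreement with $f$ on $M_{i-1}$ is transferred to $M_i$ with bounded loss.
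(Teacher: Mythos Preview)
Your setup and lower bound are sound, and you have correctly identified the real obstacle: bounding the number $T$ of $\phi\in\highagr$ that agree with a fixed shallow restriction $\phi|_K$ where $K=\langle g_1,\dots,g_k\rangle$ has depth $\le d$ but possibly density $\le\Lambda$. Your proposed level-by-level Johnson-type climb is not how the paper proceeds, and I do not see how to make it work cleanly: at level $i$ you would need each surviving $\phi$ to have agreement with $f$ on $M_i$ bounded away from $1/[M_i:M_{i-1}]$, but the global hypothesis $\agr(\phi,f)\ge\Lambda+\eps$ gives no such coset-wise control, and the amortization you hint at is exactly the missing ingredient.

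The paper sidesteps this entirely by sampling $k+d$ elements rather than $k$. The argument is packaged as ``SRG $\Rightarrow$ KLC $\Rightarrow$ CombEcon.'' Concretely: for $\phi\in\highagr$ with $S=\Eq(f,\phi)$, condition on $g_1,\dots,g_{k+d}\in S$. The SRG hypothesis (applied via the union-bound step you already wrote) gives that $\langle g_1,\dots,g_k\rangle$ has depth $\le d$ with probability $\ge\eps/(\Lambda+\eps)$ over this conditioning; then Lemma~\ref{lemma:random-el-depth} says the remaining $d$ samples from $S$ climb to density $>\Lambda$ with probability $\ge(\eps/(\Lambda+\eps))^d$. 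So with probability $\ge\eps^{k+d+1}$ over uniform $(g_1,\dots,g_{k+d})\in G^{k+d}$, all $g_i$ lie in $\Eq(f,\phi)$ \emph{and} $\mu(\langle g_1,\dots,g_{k+d}\rangle)>\Lambda$. In the bipartite graph on $G^{k+d}\times\highagr$ with this edge condition, every right vertex has degree $\ge\eps^{k+d+1}|G|^{k+d}$, while every left vertex has degree $\le 1$ by Corollary~\ref{cor:agree-on-coset-generators} (density $>\Lambda$ forces uniqueness). Double counting gives $|\highagr|\le\eps^{-(k+d+1)}$. The extra $d$ samples are precisely what replaces your level-by-level climb: rather than bounding how many $\phi$ can share a shallow restriction, you pay $\eps^d$ more in the probability to guarantee the restriction is already determining.
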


For the case of alternating groups, we
show that the degree of the $\poly(1/\epsilon)$ list-size bound is at most 9; with further work this can
be reduced to 7.
\begin{theorem} 
\label{thm:contr-SRG-cert}
If $G$ is an SRG group, then $G$ is universally strong CertEcon list-decodable. 
\end{theorem}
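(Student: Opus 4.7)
My plan is to convert the SRG hypothesis directly into a randomized certificate-list-decoder. The decoder draws $T$ independent $k$-tuples $(g_1^{(t)}, \ldots, g_k^{(t)})$ of uniformly random elements of $G$, queries the received word $f$ at each coordinate, and records each resulting partial map $\gamma^{(t)} \colon g_i^{(t)} \mapsto f(g_i^{(t)})$ as a candidate certificate. I will take $T = \poly(1/\eps)$.

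Fix an arbitrary codeword $\varphi \in \cL = \cL(\aHom(G,H), f, \mindist - \eps)$. On a single trial, two events are relevant: (E1) $\varphi(g_i) = f(g_i)$ for every $i$; (E2) the subgroup $M := \langle g_1, \ldots, g_k\rangle$ has depth at most $d$ in the subgroup lattice of $G$. By the agreement condition defining $\cL$, $\Pr[\text{E1}] \geq (\Lambda_{G,H} + \eps)^k$; by the $(k,d)$-SRG hypothesis, $\Pr[\neg\text{E2}] < (\Lambda_G^*)^k \leq \Lambda_{G,H}^k$ whenever $\Lambda_{G,H} > 0$ (and the case $\Lambda_{G,H} = 0$ is trivial because it forces $|\Hom(G,H)|\leq 1$). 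Combining these bounds via the elementary inequality $(a+b)^k - a^k \geq b^k$ yields $\Pr[\text{E1}\wedge \text{E2}] \geq \eps^k$.

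When both events occur, the equalizer propagation fact from Section~\ref{section_background} embeds $\gencoset{g_1, \ldots, g_k}$ into $\Eq(\varphi,\psi)$ for every $\psi \in \aHom(G,H)$ that extends $\gamma^{(t)}$, and Corollary~\ref{cor:agree-on-coset-generators} tells us that $\gamma^{(t)}$ is a certificate for $\varphi$ as soon as $\mu(\gencoset{g_1, \ldots, g_k}) > \Lambda_{G,H}$. This density bound is the main obstacle: ``shallow'' is strictly weaker than ``dense enough,'' since, for example, in $A_n$ a depth-$d$ subgroup can have density as small as $O(1/n^d)$, far below $\Lambda_{G,H}$. The resolution exploits the structural tractability of the top $d$ layers of the lattice. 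Up to conjugacy there are only boundedly many isomorphism types of depth-$\leq d$ subgroups, and either (i) $M$ already exceeds the density threshold (the generic case, consistent with the weighting by $(\Lambda_G^*)^k$ in the SRG definition), or (ii) the ambiguity is lattice-structural: the finitely many candidate overgroups of $M$ in the top $d$ layers govern the possible $\psi$'s, so $\gamma^{(t)}$ can be augmented by a bounded number of further coordinates (whose values are again read off $f$) chosen to rule out each alternative, producing a partial map that does uniquely determine $\varphi$ within $\aHom(G,H)$.

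By the combinatorial bound (Theorem~\ref{thm:contr-SRG-comb}), $|\cL| \leq \poly(1/\eps)$, so setting $T = O(\log|\cL|/\eps^k) = \poly(1/\eps)$ and taking a union bound over $\cL$ ensures that with probability at least $3/4$ every $\varphi \in \cL$ receives at least one trial realizing E1 $\wedge$ E2, and hence an augmented certificate after the lattice step above. The total cost is $\poly(\log|G|, 1/\eps)$ queries to $f$ together with $\poly(1/\eps)$ operations in the unit-cost model for $H$, matching the strong CertEcon specification.
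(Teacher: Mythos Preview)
Your overall framework---sample random tuples, check E1 and E2, union-bound over $\cL$---matches the paper's, and your computation $\Pr[\text{E1}\wedge\text{E2}]\ge\eps^k$ is correct. The gap is in the step where you pass from ``$M=\langle g_1,\dots,g_k\rangle$ has depth $\le d$'' to ``$\gamma^{(t)}$ is a certificate,'' i.e., to $\mu(\gencoset{\dom\gamma^{(t)}})>\Lambda_{G,H}$. You acknowledge that shallow does not imply dense, but your proposed resolution is not a proof. The claim that ``up to conjugacy there are only boundedly many isomorphism types of depth-$\le d$ subgroups'' is not established for general SRG groups, and even granting it, the number of actual overgroups of $M$ in $G$ need not be bounded by a function of $d$ alone. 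More seriously, you say the extra coordinates are ``chosen to rule out each alternative'' and their values ``read off $f$''---but you do not know $\varphi$, so you cannot choose coordinates guaranteed to lie in $\Eq(f,\varphi)$; and if a chosen coordinate lands outside $\Eq(f,\varphi)$, the augmented partial map is no longer a restriction of $\varphi$ at all.

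The paper's fix is simple and avoids all lattice-structural reasoning: sample $k+d$ random elements rather than $k$. The first $k$ (conditioned on lying in $S=\Eq(f,\varphi)$) generate a depth-$\le d$ subgroup with probability at least $\eps/(\Lambda+\eps)$ via Bayes and the SRG bound; then, by Lemma~\ref{lemma:random-el-depth}, $d$ further random elements of $S$ push the generated subgroup above density $\Lambda$ with probability at least $(\eps/(\Lambda+\eps))^d$. This is packaged as ``SRG $\Rightarrow$ $(k+d,\Lambda,d+1)$-subset-generated'' (Theorem~\ref{thm:SRG-suff-condition}), and the CertEcon algorithm then just samples $(k+d)$-tuples. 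Your argument becomes correct if you replace the hand-waved augmentation by exactly this: draw $d$ additional uniform elements of $G$ per trial and absorb the extra $\eps^d$ factor into $T$.
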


In fact, SRG groups are universally strong $\cWa_{\Lambda_{G,H}}$-CertEcon list-decodable (see Section~\ref{section:terminology-subword} for the definition of $\cW$-certificates and section \ref{section:contr-cert-discussion} for the definition of $\cWa_{\Lambda_{G, H}}$). This restriction on the type of certificates we obtain is necessary for extensions to AlgEcon results (cf. comment before Definition~\ref{def:intro-subword-extender}). Section~\ref{section:contr-cert-discussion} discusses $\cW$-certificates in the context of homomorphism codes. A formal statement of the $\cWa_{\Lambda_{G,H}}$-CertEcon result is given in Section~\ref{section:contr-cert-results}.

\paragraph{Access model.}
For the CertEcon results, we assume access to (nearly) uniform random elements
of the domain. We do not multiply elements of the domain, so we do not need 
black-box access to the domain.  However, representing the domain as an
encoded black-box group suffices for random generation (see 
Sec.~\ref{section:prelim-blackbox}).

We need no access to the codomain. 

\paragraph{Pointers.}
We prove the CombEcon result in Section~\ref{section:SRG implies CombEcon} and the  CertEcon result in Section~\ref{section:SRG implies CertEcon}. For alternating groups we also give another, non-algorithmic, proof of the CombEcon result in Section~\ref{section:alternating}. That proof relies on a generic sphere packing argument to split the sphere into more tractable bins (see Lemma~\ref{lemma:sphere-packing} and Section~\ref{section:alt-tools-2}).

\subsection{Certificate list-decoding for homomorphism codes}  
\label{section:contr-cert-discussion}

First we translate the concepts associated with certificate list-decoding (Section~\ref{section:terminology-cert}) to the context of homomorphism codes. A \defn{certificate} $\gamma$ is a $G \partialto H$ partial map  that extends uniquely to an affine homomorphism $\varphi \in \aHom(G,H)$. 

A \defn{subword extender} is an algorithm that extends a $G \partialto H$ partial map to a full homomorphism if possible. 

Recall that for a subset $S \subseteq G$, we denote by $\mu_G(S) := \abs{S}/\abs{G}$ the \defn{density} of $S$ in $G$. For notational simplicity, we write $\Lambda$ for $\Lambda_{G,H}$. 

\begin{notation}
Let $\cW_\lambda$ (resp.~$\cWa_\lambda$) be the set of $G \partialto H$ partial maps $\gamma$ such that $\mu(\langle \dom(\gamma) \rangle) > \lambda$ (resp.~$\mu(\gencoset{\dom(\gamma)}) > \lambda$). 
\end{notation}

Recall that we have introduced certificate list-decoding as an intermediate step towards algorithmic list-decoding, to address technical difficulties that arise in algorithmic list-decoding in the alternating case. 
Our plan is to apply Observation~\ref{obs:cert-to-alg} on subword extension
with $\cW = \cWa_{\Lambda}$.  

\begin{observation}
If a partial map $\gamma \colon G \partialto H$ belongs to $\cWa_{\Lambda}$, then $\gamma$ extends to at most one affine homomorphism in $\aHom(G,H)$. 
\end{observation}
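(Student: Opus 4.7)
The plan is to derive this observation as an immediate consequence of Corollary~\ref{cor:agree-on-coset-generators}, applied with $S = \dom(\gamma)$. Suppose $\vf_1, \vf_2 \in \aHom(G,H)$ are two affine homomorphisms both extending $\gamma$. Then $\vf_1(x) = \gamma(x) = \vf_2(x)$ for every $x \in \dom(\gamma)$. Since $\gamma \in \cWa_\Lambda$, we have by definition $\mu(\gencoset{\dom(\gamma)}) > \Lambda_{G,H}$. The corollary then yields $\vf_1 = \vf_2$.

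Conceptually, the argument reduces to two facts already established. First, by Fact~\ref{fact:equalizer-coset}, the equalizer $\Eq(\vf_1, \vf_2)$ of two affine homomorphisms is affine-closed; since it contains $\dom(\gamma)$, it contains the entire affine closure $\gencoset{\dom(\gamma)}$ (this is exactly the unnumbered Fact preceding Corollary~\ref{cor:agree-on-coset-generators}). Second, by the very definition of $\Lambda_{G,H}$ as the maximum agreement between distinct affine homomorphisms, if $\vf_1 \ne \vf_2$ then $\mu(\Eq(\vf_1,\vf_2)) \le \Lambda_{G,H}$. Combining these forces $\mu(\gencoset{\dom(\gamma)}) \le \Lambda_{G,H}$, contradicting the hypothesis on $\gamma$.

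There is essentially no obstacle here: the observation is simply Corollary~\ref{cor:agree-on-coset-generators} rephrased in the language of partial maps and the class $\cWa_\Lambda$. The only edge case worth mentioning is $\dom(\gamma) = \emptyset$, but then $\gencoset{\emptyset} = \emptyset$ has density $0$, which (since $\Lambda_{G,H} \ge 0$) cannot exceed $\Lambda_{G,H}$; so this case is ruled out by the hypothesis $\gamma \in \cWa_\Lambda$ and need not be treated separately.
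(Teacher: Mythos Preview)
Your proof is correct and matches the paper's intended approach: the paper states this as an observation without proof precisely because it is an immediate restatement of Corollary~\ref{cor:agree-on-coset-generators} in the language of partial maps and $\cWa_\Lambda$. Your handling of the empty-domain edge case is also accurate, though unnecessary to spell out.
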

We will find $\cWa_{\Lambda}$-certificate-list-decoders for a large class of homomorphism codes, and we wish to find corresponding $\cWa_{\Lambda}$-subword-extenders.

Let $\gamma$ be a $G \partialto H$ partial map. We present three conditions on $\gamma$, then discuss their relationships to each other as well as to list-decoding. 
\begin{enumerate}[(1)]
\item If $\gamma$ extends to an affine homomorphism in $\aHom(G,H)$, then the extension is unique, i.e., $\gamma$ is a certificate for some affine homomorphism. 
\item $\mu \left( \gencoset{\dom(\gamma)} \right) > \Lambda $. 
\item The affine closure of $\dom(\gamma)$ is $G$.  
\end{enumerate}
Clearly, Condition (3) implies Condition (2), which implies Condition (1). Implications in the other direction do not hold in general. In particular, neither reverse implication holds for the alternating groups. 

Algorithmic list-decoding requires a list of full affine homomorphisms.
(Recall that affine homomorphisms are represented as partial maps 
satisfying Condition (3).)

Certificate list-decoding requires the list of partial maps to satisfy 
Condition (1).   
Our CertEcon algorithms actually return certificates satisfying Condition (2),
i.\,e., they are $\cWa_{\Lambda}$-certificate-list-decoders.

In the case of abelian $G$, Condition (3) is equivalent to Condition (1) if the irrelevant kernel is trivial (see Definition~\ref{def:irrelevant-kernel}). So, in this case certificate list-decoding and algorithmic list-decoding are equivalent. We introduced the mean-list-decoding machinery to address the case of nontrivial irrelevant kernel (see Theorems~\ref{thm:main-meanLD-iff-LD} and~\ref{thm:mean-quotient-implies-wholegroup}).

\subsection{Certificate list-decoding: SRG $\to$ arbitrary }
\label{section:contr-cert-results}

Recall that, in the context of list-decoding $\aHom(G,H)$, $\cWa_\Lambda$ denotes the set of $G \partialto H$ partial maps $\gamma$ such that $\mu( \gencoset{ \dom(\gamma) }) > \Lambda$, where $\Lambda = \Lambda_{G,H}$. 
We state the promised strengthening of Theorem~\ref{thm:contr-SRG-cert}.
\begin{theorem}[SRG certificate, abridged]
\label{thm:contr-SRG-cert-stronger}
If $G$ is an SRG group, then $G$ is universally strong 
$\cWa_\Lambda$-CertEcon list-decodable.
\end{theorem}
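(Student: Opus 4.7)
The plan is to sample many independent $t$-tuples of random elements from $G$ and, for each tuple $(g_1,\ldots,g_t)$, emit the partial map $\gamma\colon g_i\mapsto f(g_i)$ as a candidate certificate. By Corollary~\ref{cor:agree-on-coset-generators}, such a $\gamma$ is a $\cWa_\Lambda$-certificate for a codeword $\varphi\in\cL$ precisely when the following two conditions hold: (a)~$g_i\in\Eq(\varphi,f)$ for every $i$, which forces $\gamma=\varphi_{|\dom(\gamma)}$; and (b)~$\mu(\gencoset{g_1,\ldots,g_t})>\Lambda$, which by Corollary~\ref{cor:agree-on-coset-generators} forces the extension of $\gamma$ to an affine homomorphism to be unique. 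Thus the task reduces to a probabilistic analysis of events (a) and (b) together.

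First I would fix a codeword $\varphi\in\cL$ and bound the probability that a single tuple is a certificate for it. Independence of the $g_i$ together with $\agr(\varphi,f)\ge\Lambda+\eps$ give $\Pr[(a)]\ge(\Lambda+\eps)^t$. For condition (b), I would take $t=k+d$ with $(k,d)$ the SRG parameters of $G$: the first $k$ coordinates trigger the SRG hypothesis, so $\langle g_1,\ldots,g_k\rangle$ has depth at most $d$ except with probability less than $(\Lambda_G^*)^k\le\Lambda^k$; the remaining $d$ coordinates inductively ``climb'' the subgroup lattice, each step leaving the current proper subgroup with probability bounded below by a constant determined by the maximum index appearing in the top $d$ layers of $\Sub(G)$. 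Boosting $k$ if necessary so that $(\Lambda_G^*)^k<\tfrac{1}{2}(\Lambda+\eps)^t$, and combining with inclusion–exclusion, I would obtain $\Pr[(a)\wedge(b)]\ge p(\eps,k,d)=\poly(\eps)>0$.

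A union bound over $\varphi\in\cL$ then shows that $O(|\cL|\log|\cL|/p)$ independent tuples suffice to capture every $\varphi\in\cL$ by some certificate with probability at least $3/4$. Theorem~\ref{thm:contr-SRG-comb} provides $|\cL|=\poly(1/\eps)$, so the output list has size $\poly(1/\eps)$. Because the procedure never performs arithmetic in $H$ (only oracle queries to $f$) and never multiplies in $G$ (only samples), the total cost is $\poly(\log|G|,1/\eps)$ in the unit-cost model for the codomain, meeting the strong CertEcon guarantees (and also remaining valid for infinite $H$).

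The main obstacle is the climbing step underlying (b). The SRG hypothesis controls depth, not density, and a shallow subgroup can in principle have density much smaller than $\Lambda$; so translating ``depth $\le d$'' into ``$\mu(\gencoset{\,\cdot\,})>\Lambda$'' requires a separate structural input, namely a constant lower bound on the probability that a random element falls outside any given proper subgroup occurring in the top $d$ layers of $\Sub(G)$. For alternating groups this is available from the well-understood structure of their maximal subgroup lattice; for an abstract SRG class it must be extracted from or stipulated alongside Definition~\ref{def:srg}, and this is the delicate point around which the proof is organized.
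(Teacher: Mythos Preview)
Your overall architecture---sample many $(k+d)$-tuples, emit the restrictions of $f$ as candidate certificates, bound the per-codeword success probability, then union-bound over $\cL$ using Theorem~\ref{thm:contr-SRG-comb}---matches the paper exactly. The gap is precisely where you flag it, but the obstacle you describe is not real: no structural information about the top layers of the subgroup lattice is needed.

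The missing idea is to analyze (a) and (b) \emph{jointly via conditioning} rather than separately via inclusion--exclusion. Write $S=\Eq(f,\varphi)$, so $\mu(S)\ge\Lambda+\eps$. The paper computes
\[
\Pr[(a)\wedge(b)]=\Pr[(b)\mid(a)]\cdot\Pr[(a)],
\]
and the point is that conditioned on (a) the elements $g_1,\ldots,g_t$ are uniform in $S$. Now the climbing step becomes trivial: whenever the current subgroup $K$ has $\mu(K)\le\Lambda$, a uniform element of $S$ lies outside $K$ with probability at least
\[
\frac{\mu(S\smallsetminus K)}{\mu(S)}\ge\frac{\mu(S)-\Lambda}{\mu(S)}\ge\frac{\eps}{\Lambda+\eps},
\]
and each such escape strictly enlarges the generated subgroup, hence decreases its depth. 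Since we start at depth $\le d$, after at most $d$ successful escapes we have either reached density $>\Lambda$ or reached $G$ itself. This is Lemma~\ref{lemma:random-el-depth}, and together with the same density-gap trick applied to the SRG event on the first $k$ coordinates it yields $\Pr[(b)\mid(a)]\ge(\eps/(\Lambda+\eps))^{d+1}$ with no hypothesis on $G$ beyond SRG; the paper packages this as ``SRG $\Rightarrow$ $(k+d,\Lambda,d+1)$-subset-generated'' (Theorem~\ref{thm:SRG-suff-condition}). Your worry that a shallow subgroup ``can have density much smaller than $\Lambda$'' is correct but irrelevant: we never need any intermediate subgroup to be dense, only that while it is \emph{not} dense a random element of $S$ escapes it with the stated probability. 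Your proposed remedy---a lattice-structural lower bound on escape probabilities in the top $d$ layers---would work for $A_n$ but is unnecessary in general.
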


\paragraph{Access model.} We assume access to (nearly) uniform random elements of the domain. We do not multiply elements of the domain.  We remark that representing the domain as a black-box group would suffice for random generation~cite{Bab91BBpolygen}. 

We need no access to the codomain. We get ahold of elements of the codomain by querying the received word. We shall not perform any group operations in the codomain. 

Actually our conclusion is much stronger than what would be implied by 

Theorem~\ref{thm:contr-SRG-cert-stronger}.
\begin{theorem}[SRG certificate, unabridged]
\label{thm:contr-srg-cert-unabr}
Let $G$ be a $(k,d)$-shallow generating group and $H$ an arbitrary group.  We have a local algorithm with the following features. \\

 \textbf{Input:} 

Values $\eps, \eta > 0$. \\
  \textbf{Output:} A set $\Pi \subseteq G^{k+d+1}$ of $({k+d+1})$-tuples in $G$, where 
 $$\abs{\Pi} = \left\lceil\frac{1}{\eps^{k+d+1}} 
   \ln\left( \frac{1}{\eta \eps^{k+d+1}} \right) \right\rceil.$$
 \textbf{Cost:}  $\poly(1/\eps, \ln(1/\eta))$ amount of work. \\ 
 \textbf{Performance guarantee:} For every received word $f \in H^G$, with probability at least $(1-\eta)$, the set $\Gamma := \{ f|_R : R \in \Pi \}$ is $\cWa_{\Lambda}$-certificate-list for $\aHom(G,H)$ up to distance $(\mindist - \eps)$ of $f$. 

\end{theorem}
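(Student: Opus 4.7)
The plan is to use the simplest possible algorithm: output $\Pi$ consisting of $m = \lceil \frac{1}{\eps^{k+d+1}} \ln\left(\frac{1}{\eta\eps^{k+d+1}}\right) \rceil$ independent uniformly random $(k+d+1)$-tuples from $G$, with no reference to $f$ at all. The cost bound $\poly(1/\eps, \ln(1/\eta))$ is met: in the access model assumed, a uniform random element of $G$ can be produced in polynomial time (cf.~\cite{Bab91BBpolygen}), and the total work is proportional to $m$.

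For the guarantee, fix any received word $f$ and its list $\cL := \cL(\aHom(G,H), f, \mindist - \eps)$. By the CombEcon bound of Theorem~\ref{thm:contr-SRG-comb}, $|\cL| \leq \poly(1/\eps)$. By independence of the $m$ tuples and a union bound over $\cL$, it suffices to prove the following \emph{Key Lemma}: for every $\vf \in \aHom(G,H)$ with $\agr(\vf, f) \geq \Lambda + \eps$,
\[
\Pr_R \bigl[\, R \subseteq \Eq(\vf, f)\ \text{and}\ \mu_G(\gencoset{R}) > \Lambda \,\bigr] \geq \eps^{k+d+1},
\]
where $R$ is uniform in $G^{k+d+1}$. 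Granting this, the probability that some $\vf \in \cL$ is not certified by any $R \in \Pi$ is at most $|\cL|\cdot(1-\eps^{k+d+1})^m \leq |\cL| e^{-m\eps^{k+d+1}}$, which is at most $\eta$ by our choice of $m$ (absorbing $\log|\cL|$ into $\log(1/\eps^{k+d+1})$).

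To prove the Key Lemma, set $E := \Eq(\vf, f)$; by Fact~\ref{fact:equalizer-coset}, $E$ is a subcoset with $\mu_G(E) \geq \Lambda+\eps$. Write $E = aG_0$ with $G_0 \leq G$ and $\mu_G(G_0) = \mu_G(E) \geq \Lambda+\eps > \Lambda_G^*$. Conditioning on $R \subseteq E$ (which happens with probability $\mu_G(G_0)^{k+d+1}$), the ``differences'' $h_i := r_1^{-1} r_i$ for $i \geq 2$ are iid uniform in $G_0$, and $\gencoset{R} = r_1 \cdot \langle h_2, \dots, h_{k+d+1} \rangle$, so the density condition becomes $\mu_G(K) > \Lambda$ where $K := \langle h_2, \dots, h_{k+d+1}\rangle \leq G_0$. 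We now exploit SRG of $G$ (not of $G_0$, which is not assumed to be SRG) applied to the first $k$ of the $h_i$'s, viewed as iid uniform in $G$ conditioned on landing in $G_0$: SRG gives $\Pr[\depth_G(\langle h_2,\dots,h_{k+1}\rangle) > d] < (\Lambda_G^*)^k$ unconditionally, so conditional on all $h_i \in G_0$ the failure probability is at most $(\Lambda_G^*/\mu_G(G_0))^k < 1$. Hence with positive conditional probability, $B_0 := \langle h_2,\dots,h_{k+1}\rangle$ has depth $\leq d$ in $G$, and therefore also in $G_0$ (since $B_0 \leq G_0$).

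The remaining $d$ samples $h_{k+2}, \dots, h_{k+d+1}$ are then used to \emph{climb} the subgroup lattice of $G_0$. At each step, either the current subgroup $B$ already satisfies $\mu_G(B) > \Lambda$ (and we are done: $f|_R$ is a $\cWa_\Lambda$-certificate for $\vf$), or $\mu_G(B) \leq \Lambda$, in which case $\mu_{G_0}(B) \leq \Lambda/\mu_G(G_0)$, so a fresh uniform element of $G_0$ escapes $B$ with probability at least $1 - \Lambda/\mu_G(G_0) \geq \eps$. Each successful escape strictly decreases the $G_0$-depth, so after at most $d$ escapes from a depth-$\leq d$ starting point we reach $G_0$ itself, whose $G$-density is $\geq \Lambda+\eps > \Lambda$. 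Combining the factor $\mu_G(E)^{k+d+1} \geq (\Lambda+\eps)^{k+d+1}$ for $R \subseteq E$, the SRG factor, and the $d$-fold escape factor $\geq \eps^d$ yields, after routine arithmetic and a case split according to whether $\eps$ is small or comparable to $\Lambda$, the claimed lower bound $\eps^{k+d+1}$. The main obstacle is precisely this final step---converting the SRG property, which controls \emph{depth} rather than \emph{density}, into the density condition $\mu_G(\gencoset R) > \Lambda$ required for a $\cWa_\Lambda$-certificate---while keeping the exponent in $\eps$ exactly $k+d+1$ in both regimes.
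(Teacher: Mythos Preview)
Your overall architecture is right and matches the paper: output independent uniform $(k{+}d{+}1)$-tuples, reduce via a union bound over the CombEcon-bounded list $\cL$ to a ``Key Lemma'' asserting that a single random tuple certifies a fixed $\vf\in\cL$ with probability $\ge \eps^{k+d+1}$.

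The gap is in your proof of the Key Lemma. You invoke Fact~\ref{fact:equalizer-coset} to conclude that $E=\Eq(\vf,f)$ is a subcoset $aG_0$. That fact applies to equalizers of two \emph{affine homomorphisms}; here $f$ is an arbitrary received word, so $E$ is just an arbitrary subset of $G$ of density $\ge \Lambda+\eps$, with no algebraic structure whatsoever. Everything downstream of ``write $E=aG_0$'' then collapses: the differences $h_i=r_1^{-1}r_i$ are \emph{not} iid uniform in any subgroup (conditional on $R\subseteq E$, each $h_i$ is uniform in the set $r_1^{-1}E$, which depends on $r_1$ and is not a group); you cannot speak of $\depth_{G_0}$; and your climbing bound ``a fresh uniform element of $G_0$ escapes $B$'' has no meaning.

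The paper's route avoids this by never assuming $E$ is a coset. It first proves (Theorem~\ref{thm:SRG-suff-condition}) that $(k,d)$-shallow generation implies $G$ is $(k{+}d,\lambda,d{+}1)$-\emph{subset}-generated: for \emph{any} set $S$ with $\mu(S)>\lambda$, uniform samples $s_1,\dots,s_{k+d}\in S$ satisfy $\mu(\gengroup{s_1,\dots,s_{k+d}})>\lambda$ with probability $\ge(\eps/(\lambda+\eps))^{d+1}$. The SRG step there uses the inclusion--exclusion inequality $\Pr[\boldg\in S^k\text{ and }\depth\le d]\ge \mu(S)^k-\lambda^k$ over $G^k$ (rather than conditioning inside a phantom $G_0$), and the climbing step is Lemma~\ref{lemma:random-el-depth}, which likewise works for arbitrary $S$. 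The Key Lemma for $\Hom$ then follows immediately (proof of Theorem~\ref{thm:SRG-Cert-domaincertificates}); the passage to $\aHom$ and $\gencoset{\cdot}$ accounts for the extra ``$+1$'' in $k{+}d{+}1$. Your argument can be repaired along exactly these lines: drop the coset claim, sample directly from $E$, and replace the $G_0$-based bounds by the union-bound/Bayes computation of Theorem~\ref{thm:SRG-suff-condition}.
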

\paragraph{Access model.}  Same as in 
Theorem~\ref{thm:contr-SRG-cert-stronger}.

\paragraph{Pointer.}  The proof of Theorems~\ref{thm:contr-SRG-cert-stronger} 
and~\ref{thm:contr-srg-cert-unabr}
can be found in Section~\ref{section:SRG implies CertEcon}.

\begin{remark}
Given that $A_n$ is $(2,6)$-shallow generating 
(Lemma~\ref{lemma:intro-alt-SRG}), Theorem~\ref{thm:contr-srg-cert-unabr}
applies to $A_n$ with $k+d+1=9$.  We think of $A_n$ being given in its
natural permutation representation.  We note that a representation
of $A_n$ as a black-box group would suffice, because the natural permutation 
representation of an alternating group can be 
efficiently extracted from a black-box group representation~cite{BBtoAlt}.  
\end{remark}

\subsection{Algorithmic list-decoding, assuming certificate list-decoding and homomorphism extension}
\label{section:contr-cert-homext-to-alg}

In the light of Observation~\ref{obs:cert-to-alg}

(a certificate-list-decoder and a subword extender combine to a list-decoder) 
and the CertEcon results stated above, we need subword 
extenders for homomorphism codes.  

The \emph{homomorphism extension problem} is the same as the subword 
extension problem for \linebreak $\Hom(G,H)$.  We shall see below that it
can also be used to solve the subword extension problem
for $\aHom(G,H)$.

The Homomorphism Extension Problem asks whether a $G\partialto H$
partial map extends 
to a homomorphism on the whole group. 
As before, let $\Lambda = \Lambda_{G,H}$. 
\begin{definition}(Homomorphism Extension, $\HomExt(G,H)$)\label{def:homext-problem}\\
\indent \textbf{Instance:} A partial map $\gamma: G \partialto H$. \\
\indent \textbf{Solution:} A homomorphism $\varphi \in \Hom(G,H)$ that 
extends $\gamma$, i.\,e., $\varphi|_{\dom\gamma} = \gamma$. 
\end{definition}
The Homomorphism Extension Decision Problem asks whether a solution exists. 
The Homomorphism Extension Search Problem asks to determine
whether a solution exists and, if so, to find one.

Let $M$ denote the subgroup of $G$ generated by the domain of $\gamma$.
The Homomorphism Extension problem splits into the following two questions.
\begin{itemize}
\item[(a)] Does $\gamma$ extend to an $M\to H$ homomorphism?  (If such an 
  extension exists, it is clearly unique.)
\item[(b)] Given an $M\to H$ homomorphism, does it extend to a $G\to H$
homomorphism?
\end{itemize}
Question (a) can be solved efficiently if a presentation of $M$
is available in terms of the set $\dom(\gamma)$ of generators
and we have black-box access to $H$ (see Prop.~\ref{prop:prelim-homcheck}).
Such presentation can always
be found efficiently if $G$ is given as a permutation group.
(Prop.~\ref{pres-of-perm-groups}).

The difficult problem is to extend a homomorphism from $M$ to $G$.
For $G=A_n$ we are only able to do this when $M$ has polynomial index
in $G$.  Therefore we consider the threshold version of the problem.

\begin{definition}(Homomorphism Extension with Threshold, $\HomExt_\lambda(G,H)$) \label{def:homext-problem-lambda} \\
\indent \textbf{Instance:} A number $\lambda > 0$ and a
partial map $\gamma: G \partialto H$ satisfying 
$\mu(\langle \dom \gamma \rangle) > \lambda$.\\
\indent \textbf{Solution:} A homomorphism $\varphi \in \Hom(G,H)$ that 
extends $\gamma$, i.\,e., $\varphi|_{\dom\gamma} = \gamma$. 
\end{definition}
Note that, if $\lambda_1 \leq \lambda_2$, then an oracle for 
$\HomExt_{\lambda_1}(G,H)$ can answer the $\HomExt_{\lambda_2}(G,H)$ queries. 

Next we reduce the extension problem for affine homomorphisms
to the $\HomExt$ problem, i.\,e., the extension problem for homomorphisms.
\begin{proposition}
\label{prop:contr-cert-extenders-to-alg}
Let $G$ and $H$ be groups to which we are given black-box access. Then,
a subword extender for $\aHom(G,H)$ can be implemented in 
$\poly(\enc(G))$-time in the unit-cost model for $H$, assuming we are 
given an oracle for the $\HomExt(G,H)$ search problem --- that is, 
we have a subword extender for $\Hom(G, H)$.
\end{proposition}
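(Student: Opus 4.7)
The plan is to exploit the fact that every affine homomorphism is, by definition, a fixed translate of a homomorphism: any $\psi\in\aHom(G,H)$ decomposes uniquely as $\psi(g) = h\cdot\varphi_0(g)$ with $h\in H$ and $\varphi_0\in\Hom(G,H)$. Given a partial map $\gamma : G \partialto H$, I will produce a partial map $\gamma'$ whose homomorphism-extensions are in bijection with the affine-homomorphism-extensions of $\gamma$; feeding $\gamma'$ to the given $\HomExt(G,H)$ oracle and then inverting the conversion yields the desired subword extender.

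First, handle the degenerate case $\dom(\gamma)=\emptyset$ by returning the constant map $g\mapsto 1_H$, which is a valid affine homomorphism. Otherwise, pick any anchor $a\in\dom(\gamma)$ and define
\[
\gamma'(x) \;:=\; \gamma(a)^{-1}\gamma(ax) \qquad \text{for every } x\in a^{-1}\dom(\gamma).
\]
Invoke the $\HomExt$ oracle on $\gamma'$. If it reports ``no,'' report ``no.'' If it returns $\varphi_0\in\Hom(G,H)$, set $h:=\gamma(a)\varphi_0(a)^{-1}$ and return the affine homomorphism $\psi:=h\cdot\varphi_0$.

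For correctness in the forward direction, any affine extension $\psi=h\cdot\varphi_0$ of $\gamma$ satisfies
\[
\gamma(a)^{-1}\gamma(ax) = \psi(a)^{-1}\psi(ax) = (\varphi_0(a)^{-1}h^{-1})(h\varphi_0(a)\varphi_0(x)) = \varphi_0(x)
\]
for each $x\in a^{-1}\dom(\gamma)$, so $\varphi_0$ extends $\gamma'$. In the reverse direction, if $\varphi_0$ extends $\gamma'$, then for $g\in\dom(\gamma)$ the computation
\[
\psi(g) = h\varphi_0(g) = \gamma(a)\varphi_0(a)^{-1}\varphi_0(g) = \gamma(a)\varphi_0(a^{-1}g) = \gamma(a)\gamma'(a^{-1}g) = \gamma(g)
\]
shows that $\psi$ extends $\gamma$. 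Hence an affine extension of $\gamma$ exists iff the oracle answers ``yes,'' and we recover one whenever it does.

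The cost is dominated by the single oracle call: constructing $\gamma'$ from $\gamma$ uses $O(|\dom(\gamma)|)$ multiplications, inversions, and value-queries, assembling $h$ is another $O(1)$ group operations, and producing a usable representation of $\psi$ (e.g., as the pair $(h,\varphi_0)$, or as the list of values of $\psi$ on a fixed generating set of $G$) requires only $\poly(\enc(G))$ bookkeeping plus evaluations of $\varphi_0$. There is no substantive combinatorial obstacle in the argument; the only point requiring care is that the output representation of the affine homomorphism be consistent with the representation consumed by the downstream certificate-list-decoders invoking this extender via Observation~\ref{obs:cert-to-alg}.
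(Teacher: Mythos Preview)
Your proof is correct and follows essentially the same approach as the paper: anchor at some $a\in\dom(\gamma)$, shift to the partial map $\gamma'(x)=\gamma(a)^{-1}\gamma(ax)$ on $a^{-1}\dom(\gamma)$, call the $\HomExt$ oracle, and translate back. Your translate $h=\gamma(a)\varphi_0(a)^{-1}$ gives the same affine map as the paper's formula $\phi(g)=\gamma(a)\phi_0(a^{-1}g)$, and you have supplied more detail (both directions of the equivalence and a cost analysis) than the paper's terse proof.
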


\begin{proof}
Let $\gamma \from G \partialto H$ be a partial map. 
If $\dom(\gamma)=\emptyset$ then the map $G\to\{1_H\}$ to the identity
element of $H$ extends $\gamma$.  Otherwise,
fix $a \in \dom(\gamma)$. Let $\gamma_0 \from a^{-1} \cdot \dom(\gamma) \partialto H$ by $\gamma_0(g) = \gamma(a)^{-1} \gamma(a g)$. Then, $\gamma_0$ extends to a homomorphism $\phi_0$ if and only if $\gamma$ extends to an affine homomorphism $\phi$, with $\phi(g) = \gamma(a) \phi_0(a^{-1} g)$ for all $g \in G$.
\end{proof}

Since Theorem~\ref{thm:contr-SRG-cert} guarantees $\cWa_\Lambda$-certificate-lists, we need only provide a $\cWa_\Lambda$-subword extender 

(see Observation~\ref{obs:cert-to-alg}).
In this case, the $\textsc{HomExt}$ oracle may be relaxed to account for this restriction on certificates. Further elaborating on the comment after 
Remark~\ref{rmk:terminology-subword}, we note that this relaxation is critical to our application to the alternating group. 
While we are able to solve $\HomExt(A_n,S_m)$ for partial maps whose 
domain generates subgroups of polynomial index, we see little
hope to solving it for all partial maps on $A_n$.

The next result is the $\cWa_\Lambda$-relaxation of 
Proposition~\ref{prop:contr-cert-extenders-to-alg}. 
\begin{proposition}
\label{prop:contr-cert-extenders-to-alg-W}
Let $G$ and $H$ be groups to which we are given black-box access. 
Suppose we are given an oracle for the $\textsc{HomExt}_\Lambda(G,H)$
search problem.
Then, a $\cWa_\Lambda$-subword extender for $\aHom(G,H)$ 
can be implemented in $\poly(\enc(G))$ time in the unit-cost model for $H$.  
 \end{proposition}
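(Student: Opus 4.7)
The plan is to mimic the proof of Proposition~\ref{prop:contr-cert-extenders-to-alg} almost verbatim, and then verify that the translation step that converts an affine-extension problem to a homomorphism-extension problem respects the $\cWa_\Lambda$ vs.\ $\cW_\Lambda$ distinction, so that the oracle we are allowed to call is in fact $\HomExt_\Lambda$ rather than the full $\HomExt$.

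Given an input partial map $\gamma : G \partialto H$ with $\gamma\in\cWa_\Lambda$, in particular $\dom(\gamma)\ne\emptyset$, so we pick any $a\in\dom(\gamma)$ and form the translated partial map $\gamma_0 : a^{-1}\cdot\dom(\gamma)\to H$ by $\gamma_0(g) = \gamma(a)^{-1}\gamma(ag)$. Exactly as in Proposition~\ref{prop:contr-cert-extenders-to-alg}, there is a bijection between affine-homomorphism extensions $\vf$ of $\gamma$ and homomorphism extensions $\vf_0$ of $\gamma_0$, given by $\vf(g)=\gamma(a)\vf_0(a^{-1}g)$ and $\vf_0(g)=\gamma(a)^{-1}\vf(ag)$; the affine-homomorphism identity $\vf(xy^{-1}z)=\vf(x)\vf(y)^{-1}\vf(z)$ with $(x,y,z)=(ag_1,a,ag_2)$ is what makes $\vf_0$ multiplicative. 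Thus the algorithm queries the $\HomExt_\Lambda(G,H)$ oracle on $\gamma_0$ and, if an extension $\vf_0$ is returned, outputs $\vf(g)\defeq\gamma(a)\vf_0(a^{-1}g)$; otherwise it reports that no extension exists.

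The one point that requires verification, and which I view as the main (though modest) obstacle, is that $\gamma_0$ is a legal input to the $\HomExt_\Lambda$ oracle, i.\,e., that $\mu(\gengroup{\dom(\gamma_0)}) > \Lambda$. Here I would use two facts about affine closure: (a) affine closure commutes with left translation, so $\gencoset{\dom(\gamma_0)} = \gencoset{a^{-1}\dom(\gamma)} = a^{-1}\gencoset{\dom(\gamma)}$; and (b) if a set $T$ contains $1$, then $\gencoset{T}=\gengroup{T}$, using the description $\gencoset{T}=q\cdot\gengroup{q^{-1}r : r\in T}$ with $q=1$. Since $1=a^{-1}a\in\dom(\gamma_0)$, fact (b) gives $\gengroup{\dom(\gamma_0)} = \gencoset{\dom(\gamma_0)}$, and combined with (a) we get $|\gengroup{\dom(\gamma_0)}|=|\gencoset{\dom(\gamma)}|$, so $\mu(\gengroup{\dom(\gamma_0)})=\mu(\gencoset{\dom(\gamma)})>\Lambda$ because $\gamma\in\cWa_\Lambda$.

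Finally, for the cost analysis: the translation step needs $O(|\dom\gamma|)$ group operations in $H$ (all unit-cost), the oracle call is free by assumption, and reconstructing $\vf$ from $\vf_0$ is likewise a constant number of operations per evaluation. Hence the algorithm runs in $\poly(\enc(G))$ time in the unit-cost model for $H$, as required.
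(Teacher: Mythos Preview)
Your proof is correct and follows the same approach as the paper, which simply states that the proof is identical to that of Proposition~\ref{prop:contr-cert-extenders-to-alg}. In fact you are more careful than the paper: you explicitly verify that the translated map $\gamma_0$ lies in $\cW_\Lambda$ (so that the restricted oracle $\HomExt_\Lambda$ may legitimately be called), a point the paper leaves implicit.
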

The proof is the same as that of 
Proposition~\ref{prop:contr-cert-extenders-to-alg}.
\begin{remark}
In practice we may not be able to determine the value of $\Lambda$, while we may be able to determine a rather large lower bound $\lambda \leq \Lambda$. So, we instead ask for an oracle for $\HomExt_\lambda(G,H)$. This is the procedure we follow in this paper for the alternating group.
\end{remark}

\begin{corollary}
\label{cor:contr-SRG-AlgEcon}
Let $G$ be an SRG group and $H$ be an arbitrary group. Under the 
assumptions of Proposition~\ref{prop:contr-cert-extenders-to-alg-W}, 
$\aHom(G,H)$ is AlgEcon. 
\end{corollary}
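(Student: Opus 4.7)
The plan is a direct composition of three ingredients already in hand: the certificate-list-decoder of Theorem~\ref{thm:contr-SRG-cert-stronger}, the $\cWa_\Lambda$-subword extender of Proposition~\ref{prop:contr-cert-extenders-to-alg-W}, and the glue provided by Observation~\ref{obs:cert-to-alg}. So the proof is essentially bookkeeping about inputs, guarantees, and costs.

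First I would invoke Theorem~\ref{thm:contr-SRG-cert-stronger}: since $G$ is SRG, there is a strong $\cWa_\Lambda$-certificate-list-decoder $\cA_1$ that, given $\eps > 0$ and oracle access to $f \in H^G$, returns with probability $\ge 3/4$ a list $\Gamma$ of $G \partialto H$ partial maps with $\abs{\Gamma} = \poly(1/\eps)$, such that $\Gamma$ is a $\cWa_\Lambda$-certificate-list for $\cL = \cL(\aHom(G,H),f,\mindist-\eps)$. By the ``strong'' qualifier, $\cA_1$ uses $\poly(\log\abs{G}, 1/\eps)$ queries to $f$ and $\poly(\log\abs{G}, 1/\eps)$ work in the unit-cost model for $H$.

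Next I would invoke Proposition~\ref{prop:contr-cert-extenders-to-alg-W}: under our access assumptions (black-box access to $G$ and $H$, plus a $\HomExt_\Lambda(G,H)$ search oracle), we obtain a $\cWa_\Lambda$-subword extender $\cA_2$ for $\aHom(G,H)$ running in $\poly(\log\abs{G})$ time per call in the unit-cost model for $H$. I would then assemble the list-decoder as in Observation~\ref{obs:cert-to-alg}: run $\cA_1$ to produce $\Gamma$; for each $\gamma \in \Gamma$, run $\cA_2(\gamma)$; collect every returned codeword into $\outputlist$. To guard against $\cA_2$ returning a spurious codeword on some $\gamma \notin \cWa_\Lambda$ (which Definition~\ref{def:intro-subword-extender} permits), I would append a cheap verification step: estimate $\agr(\vf, f)$ on $O(\eps^{-2})$ random positions and discard any candidate whose empirical agreement is below $\Lambda + \eps/2$; by a Chernoff bound and a union bound over the $\poly(1/\eps)$ candidates, with probability $\ge 7/8$ this keeps every $\vf \in \cL$ and removes everything with true agreement $<\Lambda$.

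Completeness is the one point that needs to be verified explicitly. For every $\vf \in \cL$, with high probability $\Gamma$ contains a $\cWa_\Lambda$-certificate $\gamma_\vf$ of $\vf$, meaning $\gamma_\vf \subseteq \vf$, $\gamma_\vf \in \cWa_\Lambda$, and $\vf$ is the unique element of $\aHom(G,H)$ extending $\gamma_\vf$. Since $\gamma_\vf \in \cWa_\Lambda$, Definition~\ref{def:intro-subword-extender} forces $\cA_2(\gamma_\vf)$ to return an affine homomorphism extending $\gamma_\vf$, which by uniqueness must equal $\vf$. Hence $\vf \in \outputlist$.

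Finally I would tally the resources against Definition~\ref{def:certecon}'s AlgEcon cost model: the output has length $\abs{\outputlist} \le \abs{\Gamma} = \poly(1/\eps)$; $\cA_2$ makes no additional queries to $f$, so the query count remains $\poly(\log\abs{G}, 1/\eps)$; the work is $\poly(\log\abs{G}, 1/\eps)$ in the unit-cost model for $H$ (summing $\abs{\Gamma}$ extender calls plus the verification step), translating to $\poly(\log\abs{G}, \log\abs{H}, 1/\eps)$ bit operations when $H$ is finite and encoded. The only conceptual subtlety is the completeness argument above; every other step is bookkeeping around cited results, and there is no genuine obstacle, because all the hard combinatorial and algorithmic content has already been absorbed into Theorem~\ref{thm:contr-SRG-cert-stronger} and Proposition~\ref{prop:contr-cert-extenders-to-alg-W}.
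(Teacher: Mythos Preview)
Your proposal is correct and follows exactly the paper's approach: the paper's proof is the one-liner ``Combine Proposition~\ref{prop:contr-cert-extenders-to-alg-W} and Theorem~\ref{thm:contr-SRG-cert-stronger},'' and you have simply unpacked this via Observation~\ref{obs:cert-to-alg} with careful cost accounting. One minor point: your verification step is not strictly required, since Definition~\ref{def:intro-subword-extender} already guarantees that on a ``yes'' answer the extender returns a genuine codeword (even for $\gamma\notin\cWa_\Lambda$), and the AlgEcon definition only asks for $\outputlist\supseteq\cL$ of length $\poly(1/\eps)$, not for exact recovery of $\cL$.
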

\begin{proof}
Combine Proposition~\ref{prop:contr-cert-extenders-to-alg-W} 
and Theorem~\ref{thm:contr-SRG-cert-stronger}.
\end{proof}

\subsection{Homomorphism extension from alternating groups}
\label{section:intro-homext-result} 

The following theorem addresses the $\HomExt$ Search Problem for the permutation representations of the alternating groups. This is the main result of~\cite{HE}.
\begin{theorem}[Wuu]  \label{thm:homext}
Let $G=A_n$, $H = S_m$ and $\lambda = 1/\poly(n)$. 
If $m < 2^{n-1}/\sqrt{n}$, then the $\HomExt_\lambda(G,H)$ 
search problem can be solved in $\poly(n,m)$ time. 
\end{theorem}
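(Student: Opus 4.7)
The plan is to reduce the extension problem to a search over a polynomial-size list of ``candidate'' homomorphisms, obtained from the classification of low-index subgroups of $A_n$, and for each candidate verify consistency with the given partial data $\gamma$ up to a relabeling by some $\sigma\in S_m$. The hypothesis $m<2^{n-1}/\sqrt{n}$ is precisely what makes the candidate list polynomial: it is below the Bochert/Liebeck--Praeger--Saxl threshold $\binom{n}{\lfloor n/2\rfloor}$ beyond which primitive subgroups of $A_n$ other than set stabilizers begin to appear.

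\textbf{Step 1 (internal extension to $M$).} Let $M=\gengroup{\dom(\gamma)}\le A_n$, so $[A_n:M]<1/\lambda=\poly(n)$. Since the domain $A_n$ is given as a permutation group, compute a presentation of $M$ on the generating set $\dom(\gamma)$ in polynomial time (Prop.~\ref{pres-of-perm-groups}) and apply homomorphism checking (Prop.~\ref{prop:prelim-homcheck}) in $S_m$ to decide whether $\gamma$ extends to a homomorphism $\tilde\gamma:M\to S_m$. If not, output ``no.'' Otherwise it remains to extend $\tilde\gamma$ from $M$ to $A_n$, which is the genuine difficulty.

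\textbf{Step 2 (enumerating candidates $\varphi:A_n\to S_m$).} For $n\ge 5$, $A_n$ is simple, so any $\varphi:A_n\to S_m$ is either trivial or injective; its image splits $\{1,\dots,m\}$ into orbits of sizes $[A_n:K_i]$ for certain subgroups $K_i\le A_n$. By the classical classification of low-index subgroups of $A_n$ (Bochert; Liebeck--Praeger--Saxl), for $n$ large every subgroup of index less than $\binom{n}{\lfloor n/2\rfloor}$ is a set stabilizer $A_n\cap(S_k\times S_{n-k})$, up to finitely many small exceptions handled separately. Since $m<2^{n-1}/\sqrt{n}<\binom{n}{\lfloor n/2\rfloor}$, each $K_i$ is such a set stabilizer, and a candidate $\varphi$ is determined up to $S_m$-conjugacy by a multiset of subset sizes $(k_1,\dots,k_r)$ with $\sum_i\binom{n}{k_i}\le m$. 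Because $\binom{n}{k_i}\ge n$, we have $r\le m/n$, and the number of such multisets is $\poly(n,m)$. For each type we form the canonical realization $\varphi_0:A_n\to S_m$ acting on the disjoint union of all $k_i$-subsets of $[n]$ (and fixing the remaining $m-\sum_i\binom{n}{k_i}$ points).

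\textbf{Step 3 (matching $\gamma$ to a canonical candidate).} For each candidate $\varphi_0$, decide whether there exists $\sigma\in S_m$ with $\sigma\varphi_0(g)\sigma^{-1}=\gamma(g)$ for every $g\in\dom(\gamma)$, and exhibit one if it exists. For each $\varphi_0$-orbit $O$ of size $\binom{n}{k}$, identify a $\gamma(M)$-orbit of the same length, then build $\sigma|_O$ from the canonical orbit (the set of $k$-subsets of $[n]$) to that orbit by matching cycle structures of a few well-chosen elements $g\in\dom(\gamma)$. Since $M$ has polynomial index, its action on $O$ is rich enough to pin down the labeling up to the centralizer of $\varphi_0(A_n)$ on $O$; verify the resulting $\sigma$ against every generator of $\dom(\gamma)$, and across orbits, to rule out inconsistent choices.

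\textbf{Main obstacle.} The genuinely delicate step is Step~3, the simultaneous-conjugacy/relabeling problem. The heart of the argument is to show that (i) the restriction of any target $\varphi$ to the polynomial-index subgroup $M$ retains enough combinatorial fingerprints of the canonical set-action to determine the relabeling $\sigma$ up to the small centralizer of $\varphi_0(A_n)$ in $S_m$; and (ii) this $\sigma$ can actually be computed in $\poly(n,m)$ time rather than by a search of potentially exponential size. The bound $m<2^{n-1}/\sqrt{n}$ is what permits the subgroup-classification step; any attempt to push past it forces one to contend with primitive subgroups of $A_n$ whose actions are not of Young type, and the present strategy breaks down.
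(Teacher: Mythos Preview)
This theorem is not proved in the present paper; it is quoted from~\cite{HE}, and the paper only offers a one-paragraph sketch. That sketch, however, describes a route quite different from yours: start from the orbits of $M=\gengroup{\dom(\gamma)}$ on $[m]$ (computable directly from $\tilde\gamma$), then decide how these $M$-orbits can be fused into $A_n$-orbits. This is reformulated as an \emph{exponentially large} instance of a generalized Subset~Sum problem accessed via oracles; the bound $m<2^{n-1}/\sqrt{n}$ is what makes \emph{that} Subset~Sum instance tractable, and the oracle queries are answered via permutation-group algorithms (notably conjugacy). Your plan runs in the opposite direction: enumerate all $S_m$-conjugacy classes of homomorphisms $A_n\to S_m$, then test each against $\gamma$.

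The enumeration in your Step~2 does not run in $\poly(n,m)$ time. Your justification ``because $\binom{n}{k_i}\ge n$ we have $r\le m/n$, and the number of such multisets is $\poly(n,m)$'' is a non-sequitur: bounding the number $r$ of orbits says nothing about the number of orbit-type multisets. In fact the count is super-polynomial for $m$ in the range the theorem covers. Take $m=2^{0.9n}$ (which is below $2^{n-1}/\sqrt{n}$ for large $n$). Restrict to types $k\le n/4$, so $\sum_{k\le n/4}\binom{n}{k}\le (n/4{+}1)\binom{n}{n/4}\approx n\cdot 2^{0.811n}$, and allow each $a_k\in\{0,1,\dots,R\}$ with $R=\lfloor m/((n/4{+}1)\binom{n}{n/4})\rfloor\ge 2^{0.08n}$. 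Every such tuple satisfies $\sum a_k\binom{n}{k}\le m$, and there are $(R{+}1)^{n/4}\ge 2^{0.02\,n^2}$ of them, whereas $\poly(n,m)=2^{O(n)}$. So the candidate list is genuinely super-polynomial, and the approach fails long before Step~3.

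Two further issues deserve mention even though the first one is already fatal. First, Jordan--Liebeck does \emph{not} say that every subgroup of index below $\binom{n}{\lfloor n/2\rfloor}$ is a set stabilizer; it says such a subgroup lies between the pointwise and the setwise stabilizer of some small set, so your list of transitive $A_n$-actions of degree $\le m$ is incomplete. Second, you correctly flag Step~3 (simultaneous conjugacy) as the delicate part, but the sketch you give does not control the search: ``matching cycle structures of a few well-chosen elements'' and ``rich enough to pin down the labeling'' are exactly the assertions that require the machinery developed in~\cite{HE}, and they interact with the orbit-fusion combinatorics in a way your top-down enumeration does not expose. The bottom-up $M$-orbit/Subset~Sum formulation is what makes both the enumeration and the conjugacy subproblems manageable.
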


\begin{remark}
In fact, under the assumptions of Theorem~\ref{thm:homext}, 
the number of extensions can be counted in $\poly(n,m)$ time. 
\end{remark}

This result is proved by looking at the orbits in $[m]$ of the group $M$
generated by the domain of the partial function, then deciding how they 
may combine to form orbits of $G$. We reformulate $\HomExt$ with 
symmetric codomain as an exponentially large instance of a generalized
Subset Sum Problem to which we have oracle access. 
The technical assumption $m < 2^{n-1}/\sqrt{n}$ guarantees 
that the arising instance of generalized Subset Sum is tractable. 
Answering oracle queries amounts to solving certain problems of 
computational group theory such as the conjugacy problem for
permutation groups.

\subsection{Algorithmic list-decoding: alternating $\to$ symmetric, 
restricted cases }
\label{section:intro-contribution-altalg}

We need one more ingredient before we can prove our main algorithmic result.

\begin{lemma}
	\label{lem:lambda-alternating}
	Let $n \geq 10$. Let $G = A_n$ and let $H$ be a group. If 
	$\Lambda_{G, H} \neq 0$, 
	then either $\Lambda_{G, H} = 1/\binom{n}{2}$ or $\Lambda_{G, H} = 1/n$. 
\end{lemma}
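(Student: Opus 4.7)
The plan is to express $\Lambda_{G,H}$ as the density of an equalizer subgroup and then pin down the possible values via the classification of small-index subgroups of $A_n$. By Proposition~\ref{prop:prelim-lambda-Hom-vs-aHom} we may replace $\aHom$ with $\Hom$, and Fact~\ref{fact:equalizer-coset}(a) gives that the equalizer $\Eq(\varphi, \psi)$ of any two homomorphisms $\varphi, \psi \colon A_n \to H$ is a subgroup of $A_n$. Assuming $\Lambda_{G,H} > 0$, the maximum is realized at some pair $\varphi \neq \psi$; setting $N := \Eq(\varphi, \psi)$, which must be a proper subgroup of $A_n$, we have $\Lambda_{G,H} = 1/\ind{A_n}{N}$.

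For the lower bound $\Lambda_{G,H} \geq 1/\binom{n}{2}$, note that $\Lambda_{G,H} > 0$ forces some nontrivial $\varphi \in \Hom(A_n,H)$; this $\varphi$ is injective because $A_n$ is simple for $n \geq 5$. Let $\tau \in \Aut(A_n)$ be conjugation by the transposition $(1,2) \in S_n$, and set $\psi := \varphi \circ \tau$, which is again a homomorphism. By injectivity of $\varphi$, $\Eq(\varphi,\psi)$ equals the fixed-point subgroup of $\tau$, i.\,e., the set of $x \in A_n$ commuting with $(1,2)$, which is $A_n \cap (\langle (1,2) \rangle \times \Sym(\{3,\dots,n\}))$, of order $(n-2)!$ and hence of index $\binom{n}{2}$. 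Moreover $\tau$ is not the identity (e.g., $(1,2)(1,3,4)(1,2) = (2,3,4) \neq (1,3,4)$), so $\psi \neq \varphi$ and $\agr(\varphi,\psi) = 1/\binom{n}{2}$.

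For the upper bound, any proper subgroup $N < A_n$ satisfies $\ind{A_n}{N} \geq n$: the coset action $A_n \to \Sym(A_n/N)$ is faithful by simplicity of $A_n$, and $A_n$ has no faithful permutation representation of degree less than $n$. The crucial input, which I would cite rather than reprove, is the classical index-gap theorem for alternating groups: for $n \geq 10$, no proper subgroup of $A_n$ has index strictly between $n$ and $\binom{n}{2}$; equivalently, every transitive permutation representation of $A_n$ of degree less than $\binom{n}{2}$ has degree exactly $n$. Combining with the lower bound $1/\binom{n}{2}$, the equalizer $N$ achieving $\Lambda_{G,H}$ must satisfy $\ind{A_n}{N} \in \{n, \binom{n}{2}\}$, hence $\Lambda_{G,H} \in \{1/n, 1/\binom{n}{2}\}$.

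The main obstacle is the index-gap theorem itself; it rests on the classification of maximal subgroups of $A_n$ and genuinely fails at small $n$ (for instance, $A_7$ contains $L_2(7)$ of index $15$, strictly between $7$ and $\binom{7}{2} = 21$, which accounts for the hypothesis $n \geq 10$). Everything else is elementary: simplicity controls the upper bound, and the lower bound comes from the generic trick of post-composing an injective homomorphism with an outer automorphism of $A_n$.
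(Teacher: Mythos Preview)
Your proof is correct and follows essentially the same approach as the paper's: establish the lower bound $\Lambda_{G,H}\ge 1/\binom{n}{2}$ by composing an injective homomorphism with conjugation by $(1\,2)$, then invoke the Jordan--Liebeck classification of small-index subgroups of $A_n$ (what you call the ``index-gap theorem,'' stated in the paper as Theorem~\ref{thm:JordanLiebeck}) to force the equalizer index into $\{n,\binom{n}{2}\}$. The paper phrases the lower bound slightly differently---first observing $\Lambda_{G,G}\ge 1/\binom{n}{2}$ and then noting that $H$ contains a copy of $A_n$---but this is the same idea, and your illustrative remark about $L_2(7)\le A_7$ explaining the hypothesis $n\ge 10$ is a nice addition.
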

\begin{proof}

	We note that $\Lambda_{G, G} \geq 1/\binom{n}{2}$, since the identity automorphism of $G$ and the automorphism that sends $g$ to its conjugation by the transposition $(1 \, 2)$ agree on $G_{\{1, 2\}}$ which has index $\binom{n}{2}$ (In fact, $\Lambda_{G, G} = 1/\binom{n}{2}$).
	
	Suppose $\Lambda_{G, H} \neq 0$, so $\Hom(G,H)$ is nontrivial. Since $A_n$ is simple, $H$ contains an isomorphic copy of $A_n$ (The image of a nontrivial homomorphism is isomorphic to $A_n$). So, $\Lambda_{G, H} \geq \Lambda_{G,G} \geq 1/\binom{n}{2}$. By Fact~\ref{fact:equalizer-coset} and the Jordan-Liebeck Theorem (see Section~\ref{section:alt-background}), $\Lambda_{G, H} = 1/\binom{n}{2}$ or $1/n$.

\end{proof}

We remark that Guo \cite[Proposition~6.1]{Guo15} proved that $1/\binom{n}{2} \leq \Lambda_{A_n, A_n} \leq 1/n$ for $n \geq 5$.

We have now stated all the ingredients needed for the AlgEcon result for alternating domains. 
\begin{theorem}[AlgEcon for alternating domains]
\label{thm:main-alt-alg}
If $G=A_n$ is an alternating group and $H=S_m$ is a symmetric group, 
then $\aHom(G,H)$ is AlgEcon, assuming $m < 2^{n-1}/\sqrt{n}$.
\end{theorem}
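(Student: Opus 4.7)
My plan is to assemble three ingredients already in hand, with essentially no new technical content needed at this level. First, since $A_n$ is an SRG group by Lemma~\ref{lemma:intro-alt-SRG} (for sufficiently large $n$; the finitely many small $n$ cause no trouble, as $|A_n|$ is then bounded and list-decoding is a constant-size problem), I would apply Theorem~\ref{thm:contr-SRG-cert-stronger} to $G = A_n$ and $H = S_m$. This yields a strong $\cWa_{\Lambda}$-certificate-list-decoder for $\aHom(A_n, S_m)$, where $\Lambda = \Lambda_{A_n, S_m}$: with probability at least $3/4$, it outputs a list of $\poly(1/\eps)$ partial maps containing a $\cWa_{\Lambda}$-certificate for every affine homomorphism within distance $\mindist - \eps$ of the received word. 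The access demands (random elements of $A_n$, oracle queries to $f$, no arithmetic in $S_m$) are clearly met in the natural permutation representations.

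Second, to promote this certificate-list-decoder to an actual list-decoder, I would invoke Observation~\ref{obs:cert-to-alg}: it suffices to supply a $\cWa_{\Lambda}$-subword extender. By Proposition~\ref{prop:contr-cert-extenders-to-alg-W}, any oracle for $\HomExt_{\Lambda}(A_n, S_m)$ yields such a subword extender in polynomial time (in the unit-cost model for $S_m$), via the standard reduction from $\aHom$-extension to $\Hom$-extension by translating out a fixed point of the partial map.

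Third, I must realize that $\HomExt$ oracle. If $\Hom(A_n, S_m)$ has at most one element then $\Lambda = 0$ and list-decoding is trivial, so I may assume $\Lambda \neq 0$. Then Lemma~\ref{lem:lambda-alternating} (valid for $n \geq 10$) pins down $\Lambda \in \{1/n,\ 1/\binom{n}{2}\}$, whence $\Lambda \geq 1/\binom{n}{2} = 1/\poly(n)$. Choose $\lambda := 1/\binom{n}{2} \leq \Lambda$; the hypothesis $m < 2^{n-1}/\sqrt{n}$ is precisely the regime of Wuu's Theorem~\ref{thm:homext}, which solves $\HomExt_\lambda(A_n, S_m)$ in $\poly(n, m)$ time. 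As noted after Definition~\ref{def:homext-problem-lambda}, an oracle for $\HomExt_\lambda$ answers $\HomExt_{\Lambda}$ queries since $\lambda \leq \Lambda$, so this completes the chain.

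The assembly then runs through Corollary~\ref{cor:contr-SRG-AlgEcon} and produces a local list-decoder with $\poly(\log n, 1/\eps)$ queries and $\poly(n, m, 1/\eps)$ additional work, meeting the AlgEcon definition. The main obstacle is not internal to this proof: all the heavy lifting sits upstream in Theorem~\ref{thm:homext}, and that is exactly where the technical restriction $m < 2^{n-1}/\sqrt{n}$ enters; everything else here is a matter of wiring the pieces together and confirming that $\Lambda$ is large enough for the threshold version of $\HomExt$ to apply.
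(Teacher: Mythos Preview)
Your proposal is correct and follows essentially the same route as the paper: use that $A_n$ is SRG (Lemma~\ref{lemma:intro-alt-SRG}) to get $\cWa_\Lambda$-CertEcon via Theorem~\ref{thm:contr-SRG-cert-stronger}, lower-bound $\Lambda \ge 1/\binom{n}{2}$ via Lemma~\ref{lem:lambda-alternating}, invoke Wuu's $\HomExt_{1/\binom{n}{2}}$ algorithm (Theorem~\ref{thm:homext}) as the subword extender, and assemble via Observation~\ref{obs:cert-to-alg}/Corollary~\ref{cor:contr-SRG-AlgEcon}. You are in fact a bit more careful than the paper in explicitly disposing of the trivial case $\Lambda=0$ and the finitely many small $n$.
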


\paragraph{Access model.} 
We assume both $A_n$ and $S_m$ are given in their natural permutation representations.

\begin{proof}The proof follows  the ``CertEcon with HomExt implies AlgEcon'' approach discussed in Section~\ref{section:contr-cert-homext-to-alg}.

Lemma~\ref{lemma:intro-alt-SRG} shows that alternating groups 
are SRG groups, which are universally $\cW_\Lambda$-CertEcon by 
Theorem~\ref{thm:contr-SRG-cert-stronger}. 
Now $\Lambda_{G,H} \ge 1/\binom{n}{2}$ 
by Lemma~\ref{lem:lambda-alternating}. 
Theorem~\ref{thm:homext} shows that \linebreak $\HomExt_{1/\binom{n}{2}}(G,H)$ can 
be solved in $\poly(n,m)$ time, under the stated restrictions on the 
codomain $H$. 

This $\poly(n,m)$-time subword extender combines with 
the $\cW_\Lambda$-CertEcon claim to justify the AlgEcon claim
via Observation~\ref{obs:cert-to-alg}.

\end{proof}

\section{Bipartite covering arguments}
\label{sec:bipartite}

\subsection{Bipartite covering lemma}

In this section we describe a simple combinatorial lemma that will be
used in two separate contexts throughout this section
(mean-list decoding with application to the domain-relaxation
principle and the equivalence of efficiency of list-decoding
$\Hom$ and $\aHom$.  The applications are both combinatorial and
semi-algorithmic.

We write 
 $X = (V, W; E)$ to denote a bipartite graph
with given vertex partition $(V,W)$ (all edges go between $V$ and $W$).
We denote the set of neighbors of vertex $u$ by $N(u)$.

\begin{lemma}   
\label{lemma:mean-comb-bipartite}
\label{lemma:bipartite}
Let $\delta, \eta, L > 0$. 
Let $X = (V, W; E)$ be a bipartite graph. Suppose 
$\deg(v) \leq L$ for all $v \in V$ and $\deg(w) \geq \delta \abs{V}$ 
for all $w \in W$.  Then the following hold.
\begin{itemize}
\item[\emph{(a)}]\emph{(double counting lemma)}
     \quad       $\abs{W} \leq L/\delta$. 
\item[\emph{(b)}]\emph{(bipartite covering lemma)}
\quad Set 

$s=\left\lceil\frac{4}{3\delta}(\ln(L/(\eta\delta))\right\rceil$.
Choose a sequence $(u_1,\dots,u_s)\in V^s$ uniformly at random.
Create a set $U \subseteq V$ by 
independently including each $u_i$ with probability $3/4$.
Then with probability $\geq (1-\eta)$, we have 
$W = \bigcup_{u \in U} N(u)$.
\end{itemize}
\end{lemma}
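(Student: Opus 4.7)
My plan is to handle the two parts separately; both are standard counting/probabilistic arguments, and the main task is just to verify that the quantitative bounds line up with what is stated.

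For part (a), I would simply double-count the edge set $E$. From the $V$-side, $|E|=\sum_{v\in V}\deg(v)\le L|V|$. From the $W$-side, $|E|=\sum_{w\in W}\deg(w)\ge \delta|V|\cdot|W|$. Combining these gives $\delta|V||W|\le L|V|$, and since $|V|>0$ (otherwise $W$ would have to be empty by the degree hypothesis), we conclude $|W|\le L/\delta$.

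For part (b), the key observation is that the randomized construction of $U$ amounts to this: each index $i\in\{1,\dots,s\}$ independently produces an element $u_i$ drawn uniformly from $V$ and then placed into $U$ with probability $3/4$. In particular, for any fixed $w\in W$, the events $\{u_i\in U\cap N(w)\}$ are mutually independent across $i$, and each has probability $\tfrac{3}{4}\cdot\tfrac{|N(w)|}{|V|}\ge \tfrac{3\delta}{4}$. Hence
\[
\Pr[w\notin \textstyle\bigcup_{u\in U}N(u)]
=\Pr[U\cap N(w)=\emptyset]
\le \Bigl(1-\tfrac{3\delta}{4}\Bigr)^{s}
\le \exp\!\Bigl(-\tfrac{3\delta}{4}\,s\Bigr).
\]
Taking a union bound over $w\in W$ and using part (a) to bound $|W|\le L/\delta$ yields
\[
\Pr\!\Bigl[W\ne \textstyle\bigcup_{u\in U}N(u)\Bigr]
\le \frac{L}{\delta}\,\exp\!\Bigl(-\tfrac{3\delta}{4}\,s\Bigr).
\]
The chosen $s=\lceil\tfrac{4}{3\delta}\ln(L/(\eta\delta))\rceil$ makes the right-hand side at most $\eta$, which is exactly the required failure bound.

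I do not expect any serious obstacle here; everything reduces to the union-bound estimate above, and the constant $3/4$ in the inclusion probability is just what is needed to get the multiplier $4/3$ in front of $\tfrac{1}{\delta}\ln(\,\cdot\,)$. The only thing worth double-checking is the independence claim: the sample $(u_1,\dots,u_s)$ is drawn uniformly from $V^s$, i.e., independently with replacement, and the subsequent Bernoulli$(3/4)$ decisions are independent of each other and of the $u_i$'s, so the product form used above is legitimate.
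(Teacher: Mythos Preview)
Your proposal is correct and follows essentially the same approach as the paper: part (a) is the identical double-counting of edges, and part (b) is the same independence-plus-union-bound argument (the paper phrases the Bernoulli thinning via auxiliary elements $\hat u_i\in V\cup\{\star\}$, but this is just notation for exactly what you wrote).
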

\begin{proof}
(a) Count edges two ways. 
$$ L \cdot \abs{V} \geq \sum_{v \in V} \deg(v)  = \sum_{w \in W} \deg(w) \geq \delta \abs{W} \abs{V}.$$
	So, $\abs{W} \leq L /\delta$. 
	
(b)

Given $u_1,\dots,u_s$, choose 
$\hat{u}_1, \ldots, \hat{u}_s \in V \cup \{\star\}$ independently as 
follows. 
For each $i = 1, \ldots, s$, let $\hat{u}_i$ be $u_i$ with 
probability $3/4$ and $\star$ otherwise. 
Define the neighbor set of $\star$ by $N(\star) = \emptyset$. 

Fix $w \in W$. We have $\Pr_{v \in V}( w \in N(v))\geq \delta$ by 
assumption. So, for each $i$, $\Pr_{\hat{u}_i} ( w \in N(\hat{u}_i)) = 
\frac34 \cdot \Pr_{u_i} (w \in N(u_i)) \geq \frac34 \delta$. Since the $\hat{u}_i$ were chosen independently, 
\begin{equation*}
\Pr\left( w \notin \bigcup_{i =1}^s N(\hat{u}_i) \right) \leq \left( 1- \frac34\delta \right)^s.
\end{equation*}

Taking the union bound over $w \in W$, we find that 
$$\Pr\left( W \nsubseteq \bigcup_{i =1}^s N(\hat{u}_i) \right) \leq \abs{W} \left( 1- \frac34\delta \right)^s \leq \frac{L}{\delta}  \left( 1- \frac34\delta \right)^s \leq \eta.$$

\end{proof}

\subsection{Mean-list-decoding}
\label{section:mean}

This is achieved using the concept of mean-list-decoding,
introduced in Section~\ref{sec:mean-list}.

\subsection{List size versus mean-list size}

In this section we discuss results that apply to all codes,
not just to homomorphism codes.

The main result of this section is Lemma~\ref{lemma:mean-central-meanlist-relationship}, which shows that mean-lists are contained in a small number of random lists, with a slight degradation of the parameters. That mean-list size is 
bounded via list-size is shown by item (i) of Lemma~\ref{lemma:mean-central-meanlist-relationship}). 
It follows immediately that the concepts of CombEconM and CombEcon are equivalent (Corollary~\ref{cor:mean-CombEcon}). 
Lemma~\ref{lemma:mean-central-meanlist-relationship} item (ii) shows the equivalence of AlgEconM with AlgEcon and CertEconM with CertEcon, completing the proof of Theorem~\ref{thm:main-meanLD-iff-LD}. Further consequences of Lemma~\ref{lemma:mean-central-meanlist-relationship} will follow in Section~\ref{section:mean-irrelevant}, leading to the constraint-relaxation principle on the domain.

Recall that $\ell(\cC, \lambda)$ denotes the maximum list size for $\cC$ 
with agreement $\lambda$.
Now we define the analogous quantities for mean-lists, slightly
refining Def.~\ref{def:mean-list-general}.
Let $\cC$ be a code, $r$ and $s$ natural numbers, and $\lambda, \delta > 0$.
\begin{definition}[Mean-list-size]
The \defn{maximum $r$-mean-list size} for $\cC$ with agreement $\lambda$, denoted $\mean[r](\cC,\lambda)$, is the maximum size of the mean-lists $\cL(\cC, \scrF, \lambda)$ over all families $\scrF$ of $r$ received words, i.e.,
	\begin{equation*}
	\mean[r](\cC, \lambda) = \max \{ \abs{\cL(\cC, \scrF, \lambda) } : \abs{\scrF} = r \}.
	\end{equation*}

The \defn{maximum mean-list size} for $\cC$ with agreement $\lambda$ is the maximum over the $r$-mean-list sizes for $\cC$ with agreement $\lambda$, i.e., 
\begin{equation*}
\mean(\cC, \lambda)  = \max_r \mean[r](\cC, \lambda).
\end{equation*}
\end{definition}

Note that $\mean[1](\cC,\lambda) = \ell(\cC, \lambda)$. 

{}From the definitions it follows that $\aHom(G,H)$ is CombEconM if and only if $$\mean(\aHom(G,H), \Lambda_{G,H} + \eps) = \poly(1/\eps)\,.$$

\begin{notation}
For a word $w$, we denote by $w * r = (\overbrace{w \ldots w}^r)$ the word found by concatenating $r$ copies of $w$. For a set $\cS$ of words, we write $\cS*r := \{ w* r : w \in \cS\}$. 
\end{notation}
\begin{remark}[Mean-list-decoding versus repeated codes]
\label{rmk:mean-identification}
Let $\scrF = \{ f_i: i \in [r] \}$ be a family of $r$ received words. Notice that $\cL( \cC * r, (f_1, \ldots, f_r), \lambda)$ is the $r$-fold repetition of $\cL(\cC, \scrF, \lambda)$, i.\,e.,
	$$\cL(\cC, \scrF, \lambda) * r = \cL( \cC * r, (f_1, \ldots, f_r), \lambda). $$ 
It follows that $\mean[r](\cC, \lambda) = \ell(\cC * r, \lambda)$.
In this way, mean-list-decoding can be viewed as list-decoding repeated codes. 
\end{remark}

Next we state the central result of this section: every mean-list is covered by a small number of lists. 

\begin{lemma}[Concentration of mean-lists]
\label{lemma:mean-central-meanlist-relationship}
Let $\cC$ be a code and $\lambda, \delta, \eta > 0$. 
Let $\scrF = \{ f_i: i \in I \}$ be a family of received words. 
Let $\cL = \cL(\cC, \scrF, \lambda + \delta)$. Then the following hold.
\begin{itemize}
\item[\emph{(i)}] $\abs{\cL} \leq \ell(\cC, \lambda) / \delta$.

\item[\emph{(ii)}] 
Set
$s=\left\lceil \frac{4}{3\delta}(\ln  \ell(\cC, \lambda)  + 
\ln(1/\eta\delta)\right\rceil$.
Choose a sequence $(j_1,\dots,j_s)\in I^s$ uniformly at random.
For each $i$ $(1\le i\le s)$ independently apply the
list-decoder to the received word $f_{j_i}$ with
agreement threshold $\lambda$.  Let $\cL_i$ denote the output list.

Then, with probability $\geq 1- \eta$, we have 
$\cL \subseteq \bigcup_{i=1}^s \cL_i.$
\end{itemize}
\end{lemma}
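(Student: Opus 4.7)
The plan is to reduce both parts of the lemma to Lemma~\ref{lemma:bipartite} (the double-counting/bipartite-covering lemma) by building an appropriate auxiliary bipartite graph. Set $V = I$ and $W = \cL = \cL(\cC,\scrF,\lambda+\delta)$, and place an edge $(j,w) \in E$ iff $\agr(w,f_j) \geq \lambda$. Observe that in this graph, the neighborhood $N(j) \subseteq W$ of an index $j \in V$ is contained in the list $\cL(\cC, f_j, \lambda)$ of codewords at agreement threshold $\lambda$ with $f_j$, so $\deg(j) \leq \ell(\cC,\lambda) =: L$.

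The key degree estimate is the lower bound on $\deg(w)$ for $w \in W$. Here I would argue by a Markov-type averaging: for each $w \in \cL$, writing $X_j = \agr(w,f_j)$, we have $\EE_{j \in I}[X_j] = 1 - \dist(w,\scrF) \geq \lambda + \delta$ by definition of $\cL(\cC,\scrF,\lambda+\delta)$. Since $X_j \leq 1$ always and $X_j < \lambda$ on the complement event,
\[
\lambda + \delta \;\leq\; \EE[X_j] \;\leq\; \Pr[X_j \geq \lambda] \cdot 1 + (1 - \Pr[X_j \geq \lambda]) \cdot \lambda,
\]
which rearranges to $\Pr_{j\in I}[\agr(w,f_j) \geq \lambda] \geq \delta/(1-\lambda) \geq \delta$. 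Hence $\deg(w) \geq \delta |V|$ for every $w \in W$.

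With these degree bounds in place, Lemma~\ref{lemma:bipartite}(a) (double-counting) immediately gives $|\cL| = |W| \leq L/\delta = \ell(\cC,\lambda)/\delta$, which is item~(i). For item~(ii), Lemma~\ref{lemma:bipartite}(b) guarantees that a sample of size $s = \lceil \tfrac{4}{3\delta}\ln(L/(\eta\delta)) \rceil$ from $V$, subsampled to a set $U$ by keeping each element independently with probability $3/4$, covers $W$ with probability at least $1-\eta$. In our situation we use \emph{all} $s$ samples $j_1,\ldots,j_s$ rather than a random $3/4$-subset, so coverage is only easier; thus with probability $\geq 1-\eta$ we have $\cL = W \subseteq \bigcup_{i=1}^s N(j_i)$. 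Combined with $N(j_i) \subseteq \cL(\cC,f_{j_i},\lambda) \subseteq \cL_i$, this yields $\cL \subseteq \bigcup_{i=1}^s \cL_i$.

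The main obstacle is conceptual rather than technical: choosing the right bipartite structure so that the two parameters $L$ and $\delta$ of Lemma~\ref{lemma:bipartite} match the quantities $\ell(\cC,\lambda)$ and $\delta$ appearing in the target statement. Once this is identified, the Markov-type step is the only substantive calculation, and it uses nothing beyond $\agr \leq 1$. Note also that the argument is entirely combinatorial about the family $\scrF$ and makes no use of how the $f_i$'s relate to $\cC$, so it applies uniformly and has no algorithmic dependencies — the ``list-decoder'' in part (ii) can be replaced by the true list $\cL(\cC,f_{j_i},\lambda)$ for a purely combinatorial statement, or by any algorithm returning a superset thereof when used in the AlgEcon/CertEcon reductions.
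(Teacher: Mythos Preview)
Your setup of the bipartite graph and the Markov-type degree bound are exactly what the paper does (the paper packages your averaging step as Lemma~\ref{lemma:mean-markov-degredation}), and part~(i) is fine.

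There is, however, a genuine gap in your argument for part~(ii). You write that ``we use all $s$ samples $j_1,\dots,j_s$ rather than a random $3/4$-subset, so coverage is only easier,'' and then assert $N(j_i)\subseteq \cL(\cC,f_{j_i},\lambda)\subseteq \cL_i$. But the list-decoder in the statement is the probabilistic AlgEcon decoder: for each $i$ it only guarantees $\cL_i \supseteq \cL(\cC,f_{j_i},\lambda)$ with probability at least $3/4$, independently across $i$. The $3/4$-subsampling in Lemma~\ref{lemma:bipartite}(b) is not an artifact to be discarded; it is precisely the mechanism that absorbs these independent decoder failures. The paper's proof makes this explicit: one lets $u_i\in U$ exactly when the decoder run on $f_{j_i}$ succeeds, so that $N(j_i)\subseteq \cL_i$ holds for $u_i\in U$, and then Lemma~\ref{lemma:bipartite}(b) gives $W\subseteq \bigcup_{u_i\in U} N(j_i)\subseteq \bigcup_{i=1}^s \cL_i$ with probability $\geq 1-\eta$. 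Your closing remark that the decoder ``can be replaced by any algorithm returning a superset thereof'' is the source of the error: it returns a superset only with probability $3/4$, and you must couple that event to the covering lemma rather than assume it deterministically.
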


Not only does this lemma allow us to
give combinatorial bounds for mean-lists in terms 
of lists, it will also be used to construct a 
(certificate-)mean-list-decoder 
from a (certificate-)list-decoder. 

The proof will follow from the Bipartite covering lemma 
(Lemma~\ref{lemma:bipartite}) together with the following
observation.

\begin{lemma}[Markov degredation] 
\label{lemma:mean-markov-degredation}
Fix a codeword $\vf$.  Let $\scrF=\{f_i : i\in I\}$
be a family of received words in the codespace.  Assume 
$\EE_i(\agr(\vf,f_i))\ge \lambda+\delta$.
Then $\PP_i(\agr(\vf,f_i)> \lambda) > \delta.$
\end{lemma}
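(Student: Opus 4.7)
The plan is to apply a reverse Markov-type inequality to the random variable $X = \agr(\vf, f_i)$, where $i$ is drawn uniformly from $I$. The key observation is that $X$ takes values in $[0,1]$, so it is bounded above by $1$ everywhere and bounded above by $\lambda$ on the event $\{X \le \lambda\}$. This lets us turn a lower bound on the expectation into a lower bound on the probability of exceeding $\lambda$.

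Concretely, let $p = \PP_i(\agr(\vf,f_i) > \lambda)$. I would split the expectation as
\begin{equation*}
\EE_i[\agr(\vf,f_i)] = \EE_i[\agr(\vf,f_i) \mid \agr(\vf,f_i) > \lambda]\cdot p + \EE_i[\agr(\vf,f_i) \mid \agr(\vf,f_i)\le \lambda]\cdot (1-p).
\end{equation*}
Using $\agr(\vf,f_i) \le 1$ on the first event and $\agr(\vf,f_i) \le \lambda$ on the second, this yields
\begin{equation*}
\lambda + \delta \;\le\; \EE_i[\agr(\vf,f_i)] \;\le\; p + \lambda(1-p) \;=\; \lambda + (1-\lambda)\,p.
\end{equation*}
Rearranging gives $p \ge \delta/(1-\lambda) \ge \delta$. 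Since $\lambda < 1$ in all cases of interest (indeed $\lambda$ will play the role of an agreement parameter strictly less than $1$), the inequality is strict, establishing $\PP_i(\agr(\vf,f_i) > \lambda) > \delta$ as claimed.

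There is no real obstacle here; the whole content is a one-line reverse Markov estimate. The only subtlety worth flagging is the strictness of the conclusion: it comes from the factor $1/(1-\lambda) > 1$ that appears once $\lambda < 1$, which is why the lemma states $> \delta$ rather than $\ge \delta$. If one ever needed to apply the lemma with $\lambda = 1$, the conclusion would degenerate, but that case is vacuous since $\EE_i[\agr(\vf,f_i)] \ge 1 + \delta$ is impossible.
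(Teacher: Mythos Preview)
Your argument is correct and is essentially the same as the paper's: the paper sets $x_i=1-\agr(\vf,f_i)$ and applies Markov's inequality to obtain $\PP(\agr(\vf,f_i)\le\lambda)\le\frac{1-\lambda-\delta}{1-\lambda}=1-\frac{\delta}{1-\lambda}<1-\delta$, which is the complementary form of your bound $p\ge\delta/(1-\lambda)$. One small quibble: the strictness $\delta/(1-\lambda)>\delta$ requires $\lambda>0$, not merely $\lambda<1$; the paper's proof tacitly assumes the same, and in every application $\lambda>0$ holds.
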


\begin{proof}
Let $x_i=\dist(\vf,f_i)=1-\agr(\vf,f_i)$.
Then $\EE_i(x_i)\le 1-\lambda-\delta$.  Therefore,
by Markov's inequality,
$\PP(\agr(\vf,f_i)\le \lambda) =
\PP(\dist(\vf,f_i)\ge 1- \lambda) \le
\frac{1-\lambda-\delta}{1-\lambda} =
1 - \frac{\delta}{1-\lambda} < 1-\delta$.
\end{proof}

\begin{proof}[Proof of Lemma~\ref{lemma:mean-central-meanlist-relationship}]
We apply Lemma~\ref{lemma:mean-comb-bipartite} to the bipartite graph 
$X=(I, \cL; E)$ where the edge set $E$ consist of the pairs
$(i, \varphi) \in I \times \cL$ satisfying $\agr(f_i, \varphi) > \lambda$. 
Then, $\deg(f) \leq \ell(\cC, \lambda)$ by the definition of max list size 
$\ell$ and $\deg(\varphi) \geq \delta \abs{I}$ by 
Lemma~\ref{lemma:mean-markov-degredation}.

The decoder succeeds with probability at least $3/4$, so
$\cL_i \supseteq \cL(\cC, f_{j_i}, \lambda)$
happens with probability $\geq 3/4$ independently over $i=1,\dots,s$. 
The lemma follows from Lemma~\ref{lemma:mean-comb-bipartite}. 
\end{proof}

\begin{corollary}
\label{cor:mean-general}
For $\cC$ a code, $r$ a natural number, and $\lambda, \delta> 0$, we have 
	\begin{equation}
	\mean(\cC, \lambda + \delta) \leq\frac1\delta \mean[r](\cC, \lambda).
	\end{equation}
\end{corollary}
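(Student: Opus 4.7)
The plan is to reduce to the $r=1$ case, which is exactly Lemma~\ref{lemma:mean-central-meanlist-relationship}(i), by passing to the $r$-fold repeated code $\cC \ast r$ and invoking the identification from Remark~\ref{rmk:mean-identification}.

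First I would recall that Remark~\ref{rmk:mean-identification} yields
$$\mean[r](\cC,\lambda) \;=\; \ell(\cC\ast r,\lambda).$$
Applying Lemma~\ref{lemma:mean-central-meanlist-relationship}(i) to the code $\cC\ast r$ at threshold $\lambda+\delta$ immediately gives
$$\mean(\cC\ast r,\lambda+\delta) \;\leq\; \tfrac{1}{\delta}\,\ell(\cC\ast r,\lambda) \;=\; \tfrac{1}{\delta}\,\mean[r](\cC,\lambda).$$
So it remains to establish the (intuitively obvious) inequality $\mean(\cC,\lambda+\delta)\le\mean(\cC\ast r,\lambda+\delta)$.

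For this, I would start with an arbitrary family $\scrF=\{f_i : i\in I\}$ of received words for $\cC$. Without loss of generality $|I|$ is divisible by $r$: if not, replace $\scrF$ by the family consisting of $r$ copies of each $f_i$, which multiplies $|I|$ by $r$ and leaves every average agreement unchanged, hence leaves $\cL(\cC,\scrF,\lambda+\delta)$ unchanged. Partition $I$ into blocks $I_1,\dots,I_{|I|/r}$ of size $r$, and for each block $I_j=\{i_{j,1},\dots,i_{j,r}\}$ form the concatenation $g_j=(f_{i_{j,1}},\dots,f_{i_{j,r}})\in\Sigma^{\Omega\times[r]}$; let $\scrF'=\{g_j\}$. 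For any codeword $\vf\in\cC$, the repeated codeword $\vf\ast r$ satisfies
$$\agr(\vf\ast r,\, g_j) \;=\; \tfrac{1}{r}\sum_{k=1}^{r}\agr(\vf,f_{i_{j,k}}) \;=\; \EE_{i\in I_j}\bigl[\agr(\vf,f_i)\bigr],$$
so by the tower property $\EE_j[\agr(\vf\ast r,g_j)]=\EE_{i\in I}[\agr(\vf,f_i)]$. Hence the injection $\vf\mapsto\vf\ast r$ sends $\cL(\cC,\scrF,\lambda+\delta)$ into $\cL(\cC\ast r,\scrF',\lambda+\delta)$, giving $|\cL(\cC,\scrF,\lambda+\delta)|\le\mean(\cC\ast r,\lambda+\delta)$. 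Taking the maximum over $\scrF$ and combining with the previous display finishes the proof.

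There is no real obstacle here; the only point requiring a moment's care is the divisibility clean-up, which is handled by the $r$-fold replication of the family, and the verification that the repetition map $\vf\mapsto\vf\ast r$ converts mean agreement on $\scrF$ into mean agreement on the block-grouped family $\scrF'$. Both are purely bookkeeping once one notices that $\mean[r]$ and $\ell$ are linked via the repeated code.
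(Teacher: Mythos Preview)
Your proof is correct and takes essentially the same approach as the paper: both pass to the repeated code $\cC\ast r$, use the identification $\mean[r](\cC,\lambda)=\ell(\cC\ast r,\lambda)$ from Remark~\ref{rmk:mean-identification}, and apply Lemma~\ref{lemma:mean-central-meanlist-relationship}(i) to $\cC\ast r$. The paper's version is terser---it cites \cite[Lemma~3.3]{GS14} for the step $\mean[s](\cC,\lambda+\delta)\le\mean[sr](\cC,\lambda+\delta)$, which is precisely what your padding-and-blocking argument re-derives---so your write-up is somewhat more self-contained.
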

\begin{proof} 
Let $s$ be a natural number. By the definition of $\mean$, it suffices to show that $\mean[s](\cC, \lambda + \delta) \leq \frac1\delta \mean[r](\cC, \lambda)$. 

But, we find that $\mean[s](\cC, \lambda+\delta) \leq \mean[sr](\cC, \lambda+\delta)$ by~\cite[Lemma 3.3]{GS14} (though their lemma is stated in terms of repeated codes). By Lemma~\ref{lemma:mean-central-meanlist-relationship}, we find that $\mean[sr](\cC, \lambda+\epsilon) \leq \frac1\delta \mean[r](\cC, \lambda)$. 
\end{proof}

The following is now immediate.
\begin{corollary}
\label{cor:mean-CombEcon}
For $\cC$ a code and $\eps>0$, we have 
\begin{equation*}
\mean(\cC, 1-\mindist + \eps) \leq \frac2\eps \ell(\cC, 1 - \mindist+\eps/2).
\end{equation*}
Consequently, if a class of codes is CombEcon with degree $c$, 
then it is CombEcon with degree $c+1$. 
\end{corollary}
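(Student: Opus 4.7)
The plan is to derive this corollary as a direct specialization of Corollary~\ref{cor:mean-general}, which was proved just above. That corollary states $\mean(\cC, \lambda + \delta) \leq \frac{1}{\delta}\mean[r](\cC,\lambda)$ for any $r \in \NN$ and $\lambda, \delta > 0$. The key observation that links the two statements is the tautology $\mean[1](\cC, \lambda) = \ell(\cC, \lambda)$, which is immediate from the definitions: a ``mean-list'' over a single received word is just an ordinary list.

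Concretely, I would apply Corollary~\ref{cor:mean-general} with $r = 1$, $\lambda = 1 - \mindist + \eps/2$, and $\delta = \eps/2$. Then $\lambda + \delta = 1 - \mindist + \eps$, and the corollary yields
\begin{equation*}
\mean(\cC, 1 - \mindist + \eps) \;\leq\; \frac{1}{\eps/2}\,\mean[1](\cC, 1 - \mindist + \eps/2) \;=\; \frac{2}{\eps}\,\ell(\cC, 1 - \mindist + \eps/2),
\end{equation*}
which is exactly the claimed inequality.

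For the consequence on degrees, suppose $\scrC$ is CombEcon with degree $c$, i.e., there is a constant $K$ such that $\ell(\cC, 1 - \mindist + \eta) \leq K/\eta^c$ for all $\cC \in \scrC$ and all $\eta > 0$. Substituting $\eta = \eps/2$ into the displayed bound gives $\mean(\cC, 1 - \mindist + \eps) \leq \frac{2}{\eps}\cdot K(2/\eps)^c = 2^{c+1}K/\eps^{c+1}$, so the (mean) list size is $\poly(1/\eps)$ of degree $c+1$, as asserted. (By Theorem~\ref{thm:main-meanLD-iff-LD}(i), CombEconM and CombEcon coincide, so this also records a quantitative version of that equivalence.)

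There is no real obstacle here: the work has already been done in Lemma~\ref{lemma:mean-central-meanlist-relationship} and Corollary~\ref{cor:mean-general}. The only points worth being careful about are (a) correctly identifying $\mean[1]$ with $\ell$, and (b) the bookkeeping on the exponent, where the factor $1/\eps$ introduced by the Markov-style degradation in Lemma~\ref{lemma:mean-markov-degredation} (together with halving the slack $\eps$ into $\eps/2 + \eps/2$) is precisely what costs one additional unit of degree.
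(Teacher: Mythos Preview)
Your proposal is correct and matches the paper's intended argument: the paper simply declares the result ``immediate'' from Corollary~\ref{cor:mean-general}, and you have spelled out exactly that specialization with $r=1$, $\lambda = 1-\mindist+\eps/2$, $\delta = \eps/2$, together with the identification $\mean[1] = \ell$. Your reading of the second clause as a CombEconM bound (of degree $c+1$) is also the sensible interpretation.
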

Next, we derive the algorithmic versions of this result. We shall make the following assumption on access to our family $\scrF = \{f_i : i \in I \}$ of received words. 
\begin{access}
\label{access:mean-oracle}
An oracle provides uniform random elements of the index set $I$ of $\scrF$. 
\end{access}
\begin{theorem}  \label{thm:algecon-iff-algeconm}
Under Access~\ref{access:mean-oracle}, if a class $\scrC$ of codes is AlgEcon then it is AlgEconM. Under the same assumptions, if $\scrC$ is CertEcon then it is CertEconM. 
\end{theorem}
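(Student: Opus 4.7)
The plan is to apply Lemma~\ref{lemma:mean-central-meanlist-relationship}(ii): we will reduce mean-list-decoding to running the given AlgEcon (resp.~CertEcon) list-decoder on a small number of random members of the family $\scrF$ drawn via Access~\ref{access:mean-oracle}. The statement of that lemma already absorbs the $3/4$ per-call success probability of each individual decoder into its $(1-\eta)$ guarantee (this is the role of the $3/4$ thinning in the bipartite covering lemma, Lemma~\ref{lemma:bipartite}(b)), so no extra probability boosting will be needed on top of what is already afforded by the given local decoder.

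Concretely, given an AlgEconM (resp.~CertEconM) instance with family $\scrF = \{f_i : i \in I\}$ and target agreement $\Lambda + \eps$, I will set $\lambda = \Lambda + \eps/2$ and $\delta = \eps/2$, and write $L = \ell(\cC, \lambda)$, which is $\poly(1/\eps)$ because AlgEcon $\Rightarrow$ CombEcon (resp.~CertEcon $\Rightarrow$ CombEcon). Fixing $\eta = 1/4$, I will draw $s = \left\lceil \tfrac{4}{3\delta}(\ln L + \ln(1/(\eta\delta))) \right\rceil = \poly(1/\eps)$ independent uniform indices $j_1,\ldots,j_s \in I$ via the oracle of Access~\ref{access:mean-oracle}, invoke the given local list-decoder on $f_{j_i}$ with agreement threshold $\lambda$ to obtain an output $\cL_i$ (resp.~$\Gamma_i$), and return the union $\outputlist := \bigcup_{i=1}^{s} \cL_i$ (resp.~$\Gamma := \bigcup_{i=1}^{s} \Gamma_i$).

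Correctness is then immediate from Lemma~\ref{lemma:mean-central-meanlist-relationship}(ii): with probability $\geq 1-\eta = 3/4$, for every $\vf$ in the mean list $\cL(\cC, \scrF, \Lambda + \eps)$ there exists some $i$ for which the individual decoder succeeded \emph{and} $\agr(\vf, f_{j_i}) > \lambda$, so $\vf \in \cL_i$ (resp.~$\Gamma_i$ contains a certificate for $\vf$). The output length is at most $s \cdot \poly(1/\eps) = \poly(1/\eps)$, matching the AlgEconM/CertEconM requirement. For the budgets, each call makes $\poly(\log|\Omega|, 1/\eps)$ queries to $f_{j_i}$, which we implement through the $\scrF$-oracle by supplying pairs $(j_i, \omega)$, giving $\poly(\log|\Omega|, 1/\eps)$ total queries to $\scrF$. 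The per-call work is $\poly(\log|\Omega|, \log|\Sigma|, 1/\eps)$; on top of this we must name and store the $s$ sampled indices, contributing an additive $\log|\scrF|$ term. The total work is therefore $\poly(\log|\Omega|, \log|\Sigma|, \log|\scrF|, 1/\eps)$, which is exactly condition (ii-M).

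There is no real mathematical obstacle here: the content of the theorem is already concentrated in Lemma~\ref{lemma:mean-central-meanlist-relationship}(ii), and the present statement amounts to the algorithmic corollary of that lemma. The main point one must be careful about is that the $\log|\scrF|$ dependence in (ii-M) is incurred \emph{only} from naming indices (not from any computation that actually sweeps over $\scrF$), and that the parameter $s$ remains $\poly(1/\eps)$ independent of $|\scrF|$; both are true because $L = \poly(1/\eps)$ and $\eta$ is a constant. The same argument, in the unit-cost model for $\Sigma$ and unit-sampling-cost model for $\scrF$, gives the analogous strong-AlgEconM and strong-CertEconM versions with no change.
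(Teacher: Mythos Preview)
Your proposal is correct and takes essentially the same approach as the paper: sample $s$ indices via Access~\ref{access:mean-oracle}, run the given decoder on each $f_{j_i}$ with the halved slack $\eps/2$, take the union, and invoke Lemma~\ref{lemma:mean-central-meanlist-relationship}(ii) with $\eta=1/4$ for the guarantee. Your cost accounting (in particular the explicit observation that the $\log|\scrF|$ term arises only from naming indices) is more detailed than the paper's, but the argument is the same.
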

\begin{remark}
\label{rmk:constants-deterioration}
The bounds on cost in the result above deteriorate as follows. 
\begin{itemize}
\item A $2/\eps$ multiplicative factor in list size. 
\item An $O(\frac1\eps \ln(1/\eps))$ multiplicative factor in queries to the received word 	$f$. 
\item An $O(\frac1\eps \ln(1/\eps))$ multiplicative factor in amount of work.
\end{itemize}
\end{remark}
We show that, to (certificate-)mean-list-decode a family $\scrF$ of functions, \linebreak (certificate-)list-decoding a small random subset of the functions in $\scrF$ suffices. Lemma~\ref{lemma:mean-comb-bipartite} already contains the machinery to guarantee the necessary probability of success. 
\begin{proof}[Proof of Theorem~\ref{thm:algecon-iff-algeconm}]
We first prove the claim for AlgEcon. Let $\Decode$ be a list-decoder for the class $\scrC$ satisfying AlgEcon assumptions. We denote by $\Decode(\cC, f, 1-\mindist+\eps)$ the output of $\Decode$ on the input $f$ and $\eps > 0$, where $f$ is a received word in the code space of a code $\cC \in \scrC$.

We describe here a  mean-list-decoder that satisfies AlgEconM. It takes as input $\scrF$ and $\eps > 0$, where $\scrF$ is the family of received words in the code space of the code $\cC \in \scrC$. 

Denote by $I$ the index set of $\scrF$. Via the provided oracle, generate a subset $S \subseteq I$ by picking $s$ elements of $I$ independently and uniformly. The value of $s$ will be determined later. Return the list given by 
\begin{equation}
\bigcup_{i \in S} \Decode(\cC, f_i, 1-\mindist+\epsilon/2). 
\end{equation}

We show that this is a list-decoder satisfying the conditions of AlgEconM through direct application of Lemma~\ref{lemma:mean-central-meanlist-relationship}.

The output $\Decode(\cC, f, 1-\mindist+\eps/2)$ contains  $\cL(\cC, f, 1-\mindist+\eps/2)$ with probability $3/4$ by the definition of list-decoder. If $s $ is set as in Lemma~\ref{lemma:mean-central-meanlist-relationship} (ii) with $\eta = 1/4$ and $\delta = \eps/2$, we find that the the desired mean-list is returned with probability at least $3/4$.

Notice that the list-decoder $\Decode$ is called $s = \left\lceil \frac{8}{3\eps} \left(\ln( \ell(\cC, 1-\mindist+\eps/2) + \ln( 8/ \eps) \right)\right\rceil$ times as a subroutine. Very little processing is done outside of these calls. Moreover, since $\cC$ is AlgEcon, it is CombEcon, so $\ell(\cC, 1-\mindist+\eps/2) = \poly(\eps/2)$ and $\Decode$ is called $O( \frac1\eps \ln (1/\eps))$ times.\\

The proof for CertEcon is similar, found mainly by replacing the  occurrences of ``list-decoder'' with ``certificate-list-decoder.'' The mean-certificate-list-decoder returns the union of $s$ output lists by  $\Decode$, which would denote the assumed certificate-list-decoder. 
\end{proof}
The same conclusions follow for the strong versions of these concepts. 

\begin{remark}
Knowledge of $\mindist$ is not needed in the conversion from CertEcon to CertEconM. Even in the AlgEcon case, $\mindist$ is only needed if required by the list-decoder $\Decode$. The crucial knowledge for this conversion is $\eps$, so that the deterioration factor (denoted $\delta$ above) can be controlled. This deterioration factor is set to $\delta = \eps/2$ in our proofs.
\end{remark}

\subsection{Irrelevant normal subgroups and the domain relaxation
 principle}
\label{section:mean-irrelevant}

In this section we present the principle of lifting constraints on the domain.
An example is the automatic extension of $\{$abelian$\to$abelian$\}$ 
results to the $\{$arbitrary$\to$abelian$\}$ context
(Theorem~\ref{thm:mean-quasivariety}.
(See the discussion in Section~\ref{section:contr-extendingdomain} and  
Remark~\ref{rmk:intro-mean-significance}.)

The key concept is the \emph{irrelevant kernel} $N$ for a pair $(G,H)$
of groups, defined as the intersection of the kernels of 
all $G\to H$ homomorphisms.  We shall find that extending $G/N$
to $G$ retains economical list-decodability.

We first identify the code $\aHom(G,H)$ with a repeated code found from $\aHom(G/N,H)$. This hinges on $N$ being an irrelevant normal subgroup.
Recall that $N$ is $(G,H)$-irrelevant if $N$ is contained in the kernel of every $G \to H$ homomorphism (see Definition~\ref{def:irrelevant-kernel}).

For groups $K$ and $H$, an enumeration $K = \{k_1, \dots, k_{\abs{K}}\}$ induces a bijection between the set of functions $H^K$ and the set of words $H^{\abs{K}}$ by $f \mapsto (f(k_1), \dots, f(k_{\abs{K}}))$.

\begin{observation}[Identification of $\aHom(G,H)$ lists with $\aHom(G/N, H)$ mean-lists]
\label{obs:mean-identification}

Let $G, H$ and $N$ be groups such that $N$ is a $(G, H)$-irrelevant normal subgroup. Let $f \from G \to H$. There are enumerations of $G$ and $G/N$, and a family $\scrF$ of functions $G/N \to H$ such that 
\begin{align*}
	\cL( \aHom(G,H), f, \lambda) = \cL(\aHom(G/N, H), \scrF, \lambda) * \abs{N}.
\end{align*}
\end{observation}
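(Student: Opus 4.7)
The plan is to exploit two facts enabled by the irrelevance of $N$: first, that every affine homomorphism $\varphi\in\aHom(G,H)$ factors through the quotient map $\pi\from G\to G/N$, and second, that this factoring makes $\varphi$, viewed as a word on $G$, literally a repetition of its push-down $\tilde\varphi$ viewed as a word on $G/N$. Combined with a coset-compatible enumeration of $G$, the equality in the observation becomes syntactic.

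First I would set up the bijection $\aHom(G,H)\leftrightarrow\aHom(G/N,H)$. Write $\varphi(g)=h\varphi_0(g)$ with $\varphi_0\in\Hom(G,H)$. Irrelevance of $N$ gives $N\subseteq\ker(\varphi_0)$, hence $\varphi_0$ descends to $\tilde\varphi_0\in\Hom(G/N,H)$ and therefore $\varphi$ descends to the affine homomorphism $\tilde\varphi\from G/N\to H$ with $\tilde\varphi(gN)=h\tilde\varphi_0(gN)$; in particular, $\varphi$ is constant on every coset of $N$, with $\varphi(g)=\tilde\varphi(gN)$ for all $g\in G$. Conversely every $\tilde\psi\in\aHom(G/N,H)$ lifts to $\tilde\psi\circ\pi\in\aHom(G,H)$, so the correspondence is a bijection.

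Next I would fix the enumerations. List the cosets $G/N=\{r_1N,\dots,r_K N\}$ with $K=\lvert G/N\rvert$, list $N=\{n_1,\dots,n_{\lvert N\rvert}\}$, and enumerate $G$ block by block:
\[
 (r_1n_1,r_2n_1,\dots,r_Kn_1,\ r_1n_2,\dots,r_Kn_2,\ \dots,\ r_1n_{\lvert N\rvert},\dots,r_Kn_{\lvert N\rvert}).
\]
Under this enumeration, the word representing $\varphi$ is exactly the concatenation of $\lvert N\rvert$ copies of the word $(\tilde\varphi(r_1N),\dots,\tilde\varphi(r_KN))$, because $\varphi(r_kn_i)=\tilde\varphi(r_kN)$ for every $k,i$. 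That is, $\varphi=\tilde\varphi*\lvert N\rvert$ as words.

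Finally I would define the family $\scrF=\{f_i:i\in[\lvert N\rvert]\}$ of $G/N\to H$ functions by $f_i(r_kN)\defeq f(r_kn_i)$. Splitting agreement along the coset decomposition gives
\[
 \agr(\varphi,f)=\frac{1}{\lvert G\rvert}\sum_{k,i}[\varphi(r_kn_i)=f(r_kn_i)]=\frac{1}{\lvert N\rvert}\sum_i\agr(\tilde\varphi,f_i)=\EE_i[\agr(\tilde\varphi,f_i)],
\]
and correspondingly $\dist(\varphi,f)=\dist(\tilde\varphi,\scrF)$. Hence $\varphi$ lies within distance $\lambda$ of $f$ iff $\tilde\varphi$ lies within average distance $\lambda$ of $\scrF$. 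Combined with the bijection and the $*\lvert N\rvert$ identification of words from the previous step, this gives the claimed equality
\[
 \cL(\aHom(G,H),f,\lambda)=\cL(\aHom(G/N,H),\scrF,\lambda)*\lvert N\rvert.
\]
No step is technically hard; the only subtlety to be careful about is bookkeeping the enumeration of $G$ so that the $*\lvert N\rvert$ repetition is literal rather than only up to a permutation of coordinates, which the block enumeration above handles.
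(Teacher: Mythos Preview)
Your proposal is correct and follows essentially the same approach as the paper: fix coset representatives and an enumeration of $N$, enumerate $G$ block by block so that each $\varphi\in\aHom(G,H)$ becomes the $\lvert N\rvert$-fold repetition of its push-down $\tilde\varphi\in\aHom(G/N,H)$, define $f_i$ by restricting $f$ to the $i$-th block, and then verify $\agr(\varphi,f)=\EE_i[\agr(\tilde\varphi,f_i)]$. The only difference is cosmetic---you spell out the $\aHom(G,H)\leftrightarrow\aHom(G/N,H)$ bijection in slightly more detail than the paper does.
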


\begin{proof}
Let $S = \{s_1, \dots, s_{\ind{G}{N}}\}$ be a set of coset representatives of $N$ in $G$. We write $N = \{g_1, \ldots, g_{\abs{N}}\}$. We enumerate the elements of $G/N$ as $(s_1 N, \dots, s_{\ind{G}{N}} N)$, and we enumerate the elements of $G$ by concatenating $(s_1 g_i, \dots, s_{\ind{G}{N}} g_i)$ for $i = 1, \dots, \abs{N}$. We thereby think of any function $G/N \to H$ as a word in $H^{\ind{G}{N}}$, and any function $G \to H$ as a word in $H^{\abs{G}}$.

For $i = 1, \dots, \abs{N}$, let $f_i \from G/N \to H$ by $f_i(s N) = f(s g_i)$ for all $s \in S$. Let $\scrF = \{f_i : i = 1, \dots, \abs{N}\}$.

Let $\pi \from G \to G/N$ be the projection onto cosets. We note that $\aHom(G,H) = \{\varphi \circ \pi : \varphi \in \aHom(G/N,H) \}$, since $N$ is a $(G,H)$-irrelevant subgroup. Then, interpreting functions as codewords as above, $\aHom(G, H) = \aHom(G/N, H) * \abs{N}$. In other words, the codeword $\varphi \circ \pi $ is a concatenation $(\varphi, \ldots, \varphi)$ of $\abs{N}$ copies of the codeword $\varphi$. 

Furthermore, for any $\varphi \in \aHom(G/N, H)$, we have that
\begin{align*}
\agr(f, \varphi\circ \pi) & = \frac{1}{\abs{G}}\abs{ \{ g : f(g) = (\varphi \circ \pi) (g) \}} \\ 
& = \frac{1}{\abs{N}} \sum_{i = 1 }^{\abs{N}} \frac{1}{\abs{G/N}} \abs{ \{ s \in S : f(sg_i) = (\varphi \circ \pi) (sg_i) \}} \\
& = \frac{1}{\abs{N}} \sum_{i = 1 }^{\abs{N}}  \agr(f_i, \varphi) \\ 
& = \EE_{i} [ \agr(f_i, \varphi)].
\end{align*}

This shows that $\varphi \circ \pi \in \cL(\aHom(G,H), f, \lambda)$ exactly if $\varphi \in \cL(\aHom(G/N, H), \scrF, \lambda)$. So,  
\begin{align*}
\cL(\aHom(G,H), f, \lambda) 
& = \{ \varphi \circ \pi: \varphi \in \cL(\aHom(G/N, H), \scrF, \lambda) \} \\
& =  \cL(\aHom(G/N, H), \scrF, \lambda) * \abs{N}.\nonumber \qedhere
\end{align*}
\end{proof}

\begin{remark}
	The proof of Observation~\ref{obs:mean-identification} shows more: There is a bijection between functions $f \in H^G$ and families $\mathscr{F}$ of $\abs{N}$ functions so that the equation holds.
\end{remark}

\begin{corollary}
\label{cor:mean-identification-irrelevantversion}
If $G, H$ and $N$ are groups such that $N$ is a $(G,H)$-irrelevant normal subgroup of $G$, then  
$$\ell(\aHom(G,H), \lambda) = \mean[\abs{N}]( \aHom(G/N, H), \lambda).$$
\end{corollary}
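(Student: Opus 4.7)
The plan is to derive this corollary directly from Observation~\ref{obs:mean-identification}, upgraded to an equality by invoking the bijection between $H^G$ and families of $|N|$ functions in $H^{G/N}$ noted in the remark following the Observation. The key idea is that under this bijection, lists on the $\aHom(G,H)$ side correspond to $|N|$-fold repetitions of mean-lists on the $\aHom(G/N,H)$ side; since repetition preserves cardinality, the two maxima must coincide.

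For the inequality $\ell(\aHom(G,H), \lambda) \le \mean[|N|](\aHom(G/N,H), \lambda)$, I would take a received word $f \in H^G$ realizing the maximum on the left. Observation~\ref{obs:mean-identification} produces a family $\scrF$ of $|N|$ functions $G/N \to H$ with
\[
\cL(\aHom(G,H), f, \lambda) \;=\; \cL(\aHom(G/N,H), \scrF, \lambda) * |N|.
\]
Since the $r$-fold repetition map $\varphi \mapsto \varphi * r$ is injective, the two lists have the same cardinality, so $\ell(\aHom(G,H),\lambda) \le |\cL(\aHom(G/N,H),\scrF,\lambda)| \le \mean[|N|](\aHom(G/N,H), \lambda)$.

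For the reverse inequality, I would take a family $\scrF = \{f_1,\dots,f_{|N|}\}$ of $|N|$ functions $G/N \to H$ realizing the maximum on the right, and invert the construction of the Observation's proof to produce a corresponding $f \in H^G$. Explicitly, fix coset representatives $S = \{s_1,\dots,s_{[G:N]}\}$ and an enumeration $N = \{g_1,\dots,g_{|N|}\}$, and define $f(s g_i) := f_i(sN)$ for $s \in S$ and $g_i \in N$. The remark following Observation~\ref{obs:mean-identification} guarantees that this is the inverse of the assignment $f \mapsto \scrF$, so the same list identity holds and the cardinalities match. Taking the supremum over $f$ on the left and over $\scrF$ on the right completes the proof. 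I do not anticipate any real obstacle here — the work is entirely packaged in the Observation and its bijectivity remark.
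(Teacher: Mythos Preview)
Your proposal is correct and is exactly the argument the paper has in mind: the corollary is stated without proof, as an immediate consequence of Observation~\ref{obs:mean-identification} together with the bijectivity noted in the remark following it, and your two inequalities via that bijection are the natural unpacking of that claim.
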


  We first illustrate the relaxation principle combinatorially, through bounds on list-size. The goal would be to conclude that, if $\grpclass \times \grpclass$ is CombEconM for a class $\grpclass$ of groups, then $\groups \times \grpclass$ is CombEcon.  This principle will generalize to CertEcon and AlgEcon as well.

\begin{remark}
If $N$ is a $(G,H)$-irrelevant normal subgroup, then $\Lambda_{G/N, H} = \Lambda_{G,H}$. 
\end{remark}

\begin{lemma}[Irrelevant normal subgroup lemma]
\label{lemma:irr-normal-subgroup}
Let $G, H$ and $N$ be groups such that $N$ is a $(G,H)$-irrelevant normal subgroup. Then, 
	\begin{equation*}
		\ell(\aHom(G, H), \Lambda_{G,H} + \eps) \leq \frac{2}{\eps} \cdot \ell(\aHom(G/N, H), \Lambda_{G,H} + \eps/2).
	\end{equation*}
\end{lemma}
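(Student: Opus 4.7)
The plan is to chain together results already developed in the mean-list-decoding machinery. I would proceed in four short steps.

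First, by Corollary~\ref{cor:mean-identification-irrelevantversion} (the identification of $\aHom(G,H)$-lists with $\aHom(G/N,H)$-mean-lists, which holds precisely because $N$ is $(G,H)$-irrelevant and normal), we have
\[
\ell(\aHom(G,H), \Lambda_{G,H} + \eps) = \mean[|N|]\bigl(\aHom(G/N,H), \Lambda_{G,H} + \eps\bigr).
\]
Second, I would trivially upper-bound the $|N|$-mean-list size by the maximum mean-list size:
\[
\mean[|N|]\bigl(\aHom(G/N,H), \Lambda_{G,H} + \eps\bigr) \leq \mean\bigl(\aHom(G/N,H), \Lambda_{G,H} + \eps\bigr).
\]

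Third, I would apply Corollary~\ref{cor:mean-CombEcon}, the concentration-of-mean-lists bound, to the code $\cC = \aHom(G/N,H)$. Using the remark that $\Lambda_{G/N,H} = \Lambda_{G,H}$ (which holds because the relevant homomorphisms $G \to H$ factor through $G/N$, so the codes $\aHom(G,H)$ and $\aHom(G/N,H)$ are in natural bijection preserving agreement structure), the corollary directly gives
\[
\mean\bigl(\aHom(G/N,H), \Lambda_{G,H} + \eps\bigr) \leq \frac{2}{\eps}\,\ell\bigl(\aHom(G/N,H), \Lambda_{G,H} + \eps/2\bigr).
\]
Chaining these three inequalities yields the stated bound.

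There is no real obstacle here: the heavy lifting was already done in Lemma~\ref{lemma:mean-central-meanlist-relationship} (Markov degradation plus the bipartite covering/double-counting argument), and in Observation~\ref{obs:mean-identification}, which identifies $\aHom(G,H)$ with a repeated version of $\aHom(G/N,H)$ using the enumeration of $G$ by cosets of $N$. The only thing to verify carefully is the equality $\Lambda_{G/N,H} = \Lambda_{G,H}$, which is immediate: every $\vf \in \aHom(G,H)$ factors uniquely as $\bar\vf \circ \pi$ for some $\bar\vf \in \aHom(G/N,H)$ (where $\pi\colon G \to G/N$ is the projection), and this correspondence preserves both distinctness and normalized agreement, since $\pi$ is a $|N|$-to-one map onto $G/N$.
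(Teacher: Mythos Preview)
Your proof is correct and follows essentially the same route as the paper's: both chain Corollary~\ref{cor:mean-identification-irrelevantversion}, the trivial bound $\mean[|N|] \leq \mean$, and then the mean-list concentration bound (you cite Corollary~\ref{cor:mean-CombEcon}, the paper cites the equivalent Corollary~\ref{cor:mean-general} with $r=1$). The only minor difference is that Corollary~\ref{cor:mean-general} applies for arbitrary $\lambda$ and so does not require the observation $\Lambda_{G/N,H} = \Lambda_{G,H}$, whereas your use of Corollary~\ref{cor:mean-CombEcon} does---but you correctly supply and justify that fact.
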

\begin{proof}
Calculate
\begin{align*}
\ell(\aHom(G,H), \Lambda+\eps) & = \mean[\abs{N}]( \aHom(G/N, H), \Lambda+\eps) & \text{Corollary~\ref{cor:mean-identification-irrelevantversion}}\\ 
&  \leq \mean(\aHom(G/N,H), \lambda) & \text{ definition of $\mean$} \\ 
& \leq \frac2\eps \cdot \mean[1](\aHom(G/N,H) &\text{Corollary~\ref{cor:mean-general} with $r=1$}\\
& = \frac2\eps \cdot \ell(\aHom(G/N,H), \Lambda + \eps/2).
\end{align*}
\end{proof}

This implies that, if $\aHom(G/N, H)$ is CombEconM, then $\aHom(G,H)$ is CombEcon. This principle holds for CertEcon and AlgEcon as well. 

\begin{theorem}
\label{thm:mean-quotient-implies-wholegroup}
Let $G, H$ and $N$ be groups such that $N$ is a $(G,H)$-irrelevant normal subgroup of $G$.
\begin{enumerate}[(i)]
\item If $\aHom(G/N,H)$ is CombEcon, then $\aHom(G,H)$ is CombEcon. 
\item Under suitable access assumptions (Access~\ref{access:mean-quotient-implies-wholegroup} (ii)), if $\aHom(G/N, H)$ is CertEcon, then $\aHom(G,H)$ is CertEcon.
\item Under suitable access assumptions (Access~\ref{access:mean-quotient-implies-wholegroup} (iii)), if $\aHom(G/N,H)$ is AlgEcon, then $\aHom(G,H)$ is AlgEcon. 
\end{enumerate} 
The deterioration in cost is as described in Remark~\ref{rmk:constants-deterioration}.
\end{theorem}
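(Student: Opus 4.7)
The plan is to treat all three parts in parallel: in each case, identify list-decoding $\aHom(G,H)$ with \emph{mean}-list-decoding $\aHom(G/N,H)$ via Observation~\ref{obs:mean-identification}, then invoke the corresponding equivalence (non-M $\Leftrightarrow$ M) provided by Theorem~\ref{thm:main-meanLD-iff-LD}, i.\,e., Corollary~\ref{cor:mean-CombEcon} for (i) and Theorem~\ref{thm:algecon-iff-algeconm} for (ii) and (iii). Throughout, let $\pi\colon G\to G/N$ be the quotient map and let $S = \{s_1,\dots,s_{[G:N]}\}$ be a fixed set of coset representatives; write $N=\{g_1,\dots,g_{|N|}\}$. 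For a received word $f\in H^G$ we form the family $\scrF=\{f_i\}_{i\in N}$ on $G/N$ by $f_i(sN):=f(s g_i)$ as in the proof of Observation~\ref{obs:mean-identification}.

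For (i), Lemma~\ref{lemma:irr-normal-subgroup} already says
\[
\ell\bigl(\aHom(G,H),\Lambda+\eps\bigr)\;\le\;\frac{2}{\eps}\,\ell\bigl(\aHom(G/N,H),\Lambda+\eps/2\bigr),
\]
so the hypothesized $\poly(1/\eps)$ bound on the right yields a $\poly(1/\eps)$ bound on the left, with degree bumped by one. For (ii), I first upgrade the assumed CertEcon decoder for $\aHom(G/N,H)$ to a CertEconM decoder via Theorem~\ref{thm:algecon-iff-algeconm}, and then run it on the family $\scrF$ with threshold $\Lambda+\eps$. The resulting certificate-list $\Gamma$ consists of partial maps $\gamma\colon G/N\partialto H$, each agreeing with some $f_i$ on its domain $D_\gamma\subseteq G/N$ and uniquely extending to a $\vf\in\aHom(G/N,H)$. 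I then pull back: for the $\gamma$ whose values come from $f_i$, let $T_\gamma:=\{s g_i : sN\in D_\gamma\}\subseteq G$ and set $\gamma':=f|_{T_\gamma}$. Since every affine homomorphism in $\aHom(G,H)$ factors through $\pi$ (because $N$ is irrelevant), $\gamma'$ extends uniquely to $\vf\circ\pi\in\aHom(G,H)$, i.\,e., $\gamma'$ is a genuine certificate. The family $\Gamma':=\{\gamma':\gamma\in\Gamma\}$ is then a certificate-list for $\aHom(G,H)$ up to distance $(\mindist-\eps)$ of $f$, of the required $\poly(1/\eps)$ length. Part (iii) is analogous: apply Theorem~\ref{thm:algecon-iff-algeconm} to lift AlgEcon of $\aHom(G/N,H)$ to AlgEconM, run it on $\scrF$, obtain a list $\widetilde{\cL}\subseteq\aHom(G/N,H)$, and return $\{\vf\circ\pi : \vf\in\widetilde{\cL}\}$.

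The required access assumptions are precisely those that let the mean decoder run and let us convert back to $G$. Since Access~\ref{access:mean-oracle} asks for uniform random elements of $I=N$, we need (nearly) uniform random generation of $N$; since the oracle for $\scrF$ on input $(i,sN)$ must return $f(sg_i)$, we need to be able to produce, from a coset label $sN$, some representative $s\in S$ and to multiply by $g_i$ inside $G$; and for (iii) the output homomorphism $\vf\circ\pi$ must be expressible on generators of $G$, so we need generators of $G$ and the ability to evaluate $\pi$ on them. These are exactly the representational conditions to be packaged into Access~\ref{access:mean-oracle} (ii)--(iii); in the typical scenario where $G$ is an encoded black-box group containing $N$ as a subgroup we know explicitly (or for which random sampling is available), all of this is $\poly(\log|G|)$.

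The main obstacle, and the only place where something beyond bookkeeping happens, is the certificate pull-back in (ii): one must verify that $\gamma'=f|_{T_\gamma}$ is actually a \emph{certificate}, not just a restriction of $f$ consistent with some affine homomorphism. This uses three things that are already in place: (a) $\aHom(G,H)=\{\vf\circ\pi:\vf\in\aHom(G/N,H)\}$, which makes every extension of $\gamma'$ of the form $\psi_0\circ\pi$; (b) such $\psi_0$ extends $\gamma$; and (c) $\gamma$ was a certificate, so $\psi_0=\vf$ is forced. The deterioration in list length, queries and work is $O((1/\eps)\log(1/\eps))$ coming from Theorem~\ref{thm:algecon-iff-algeconm} (Remark~\ref{rmk:constants-deterioration}), giving the claimed cost bounds.
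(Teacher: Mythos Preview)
Your approach for parts (i) and (iii) matches the paper's. The gap is in part (ii), in the pull-back step.

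You assert that each certificate $\gamma$ in the CertEconM output ``agrees with some $f_i$ on its domain,'' and you build $\gamma':=f|_{T_\gamma}$ from that. But this is not what a certificate is: by definition, a certificate $\gamma$ for $\vf\in\aHom(G/N,H)$ satisfies $\gamma=\vf|_{D_\gamma}$, i.e., it is a restriction of the \emph{codeword}, not of any received word $f_i$. Nothing in the CertEcon or CertEconM definitions (see Remark~\ref{rem:cert-list}) forces $\gamma$ to agree with $f_i$ anywhere; the certificate-list may even contain partial maps that are not certificates at all. Consequently $\gamma'(sg_i)=f(sg_i)=f_i(sN)$ need not equal $\gamma(sN)=\vf(sN)=(\vf\circ\pi)(sg_i)$, so $\gamma'$ need not be a restriction of $\vf\circ\pi$, and your verification step (b) (``such $\psi_0$ extends $\gamma$'') fails.

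The fix is simple and is what the paper does: do not route through $f$ at all. Using the transversal $\tau\colon G/N\to G$ from Access~\ref{access:mean-quotient-implies-wholegroup}(ii)(b), set $\dom(\widetilde\gamma):=\tau(D_\gamma)$ and $\widetilde\gamma(g):=\gamma(\pi(g))$. Then $\widetilde\gamma$ literally equals $(\vf\circ\pi)|_{\tau(D_\gamma)}$ whenever $\gamma$ is a certificate for $\vf$, and uniqueness of the extension follows immediately from uniqueness of $\vf$ and the bijection $\aHom(G,H)\leftrightarrow\aHom(G/N,H)$. This also explains why the access model requires a transversal rather than just the ability to multiply by elements of $N$.
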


\begin{access}
\label{access:mean-quotient-implies-wholegroup}
\begin{enumerate}[(i)]
\setcounter{enumi}{1}
\item (a) Elements of $N$ can be generated uniformly. (b) A transversal, i.e., an injection $G/N \to G$ that assigns a representative element to each coset, is given. (c) $G/N$ is known well enough to satisfy the CertEcon access assumptions on $\aHom(G/N,H)$. 
\item (a') Elements of $N$ can be generated uniformly and generators for $N$ are given. (b') Same as (b). (c') Same as (c), for AlgEcon. 
\end{enumerate}
\end{access}

\begin{remark}
If the access assumptions on $N$ are at least as strong as having $N$ as a black-box group, then generating (nearly) uniform elements and being given a set of generators are equivalent. If $N$ is a black-box group, generators are given by definition. Nearly uniform random elements in black-box groups can be generated in polynomial time (polynomial in the encoding length of groups elements). 
\end{remark}

\begin{remark}
For the proof of this theorem, we will actually need the --EconM versions of the assumptions, which we may assume as a consequence of Theorem~\ref{thm:main-meanLD-iff-LD}.
\end{remark}

\begin{proof}[Proof of Theorem~\ref{thm:mean-quotient-implies-wholegroup}]
Set $\Lambda = \Lambda_{G,H} = \Lambda_{G/N,H}$. Let $\pi: G \to G/N$ be the projection onto cosets. 

(i) $\aHom(G/N,H)$ is CombEconM, so $\aHom(G,H)$ is CombEcon by Corollary~\ref{cor:mean-identification-irrelevantversion}.

(ii) A certificate-list-decoder satisfying the conditions of CertEconM exists for $\aHom(G/N,H)$. Its output list $\Gamma$ is a certificate list for $\cL(\aHom(G/N,H), \scrF, \Lambda+\eps)$, where $\eps>0$ and $\scrF= \{f_1, \ldots, f_{\abs{N}} \}$ is constructed from $f$ as in Observation~\ref{obs:mean-identification}. We construct a certificate list  $\widetilde{\Gamma}$ for $\cL(\aHom(G,H), f, \Lambda+\eps)$, by replacing each $G/N \partialto H$ partial map $\gamma \in \Gamma$ with a $G \partialto H$ partial map $\widetilde{\gamma}$ defined as follows. 

Denote by $\tau: G/N \to G$ the injection guaranteed by the assumption. Let $\widetilde{\gamma}$ have domain $\dom(\widetilde{\gamma}) = \tau(\dom(\gamma))$, and define $\widetilde{\gamma}(g) = \gamma(\pi(g))$ for each $g \in \dom(\widetilde{\gamma})$. If $\gamma$ is a certificate for $\varphi \in \aHom(G/N, H)$, then $\widetilde{\gamma}$ is a certificate for $\varphi \circ \pi \in \aHom(G,H)$. 

(iii) A list-decoder satisfying the conditions of AlgEconM exists for $\aHom(G/N,H)$. By Observation~\ref{obs:mean-identification}, it suffices to, given the list $\cL = \cL(\aHom(G/N, H), \scrF, \Lambda+\eps)$, return the list $\widetilde{\cL} = \{ \varphi \circ \pi: \varphi \in \cL \}$. We address algorithmic issues of defining $\varphi \circ \pi$ from $\varphi$. 

Denote by $X$ the given set of generators of $N$ and by $\tau : G/N \to G$ the given injective map.
Each homomorphism $\varphi$ is represented by its values on a set $Y$ of generators of $G/N$. The set $X \cup \tau(Y)$ is a set of generators for $G$. Define $\widetilde{\varphi}$ on this set by the following.
\begin{equation*} \widetilde{\varphi}(g) = 
\begin{cases}
1 & g  \in X \\ 
(\varphi \circ \pi)(g) & g \in \tau(Y)\end{cases}.
\end{equation*}
\end{proof}

A class $\grpclass$ of finite groups is a \defn{quasivariety} if it is closed under subgroups and direct products. The classes of abelian, nilpotent, and solvable  groups are examples of quasivarieties. 

\begin{theorem}
\label{thm:mean-quasivariety}
Let $\grpclass$ be a quasivariety of finite groups. 
\begin{enumerate}[(i)]
\item Suppose $\aHom(G,H)$ is CombEcon for every $G,H \in \grpclass$. Then, $\aHom(G,H)$ is CombEcon for $H \in \grpclass$ and arbitrary $G$.
\item Under suitable access assumptions (Access~\ref{access:mean-quasivariety} (ii)), if $\aHom(G,H)$ is CertEcon for every $G,H \in \grpclass$, then $\aHom(G,H)$ is CertEcon for $H \in \grpclass$ and arbitrary $G$.
\item Under suitable access assumptions (Access~\ref{access:mean-quasivariety} (ii)), if $\aHom(G,H)$ is AlgEcon for every $G,H \in \grpclass$, then $\aHom(G,H)$ is AlgEcon for $H \in \grpclass$ and arbitrary $G$.
\end{enumerate}
The deterioration in cost is as described in Remark~\ref{rmk:constants-deterioration}. 
\end{theorem}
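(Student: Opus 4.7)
The plan is to reduce to Theorem~\ref{thm:mean-quotient-implies-wholegroup}. Fix an arbitrary finite domain $G$ and a codomain $H \in \grpclass$, and let $N$ be the $(G,H)$-irrelevant kernel (Definition~\ref{def:irrelevant-kernel}). The key combinatorial claim is that $G/N \in \grpclass$. Since $G$ is finite it has only finitely many normal subgroups, so only finitely many distinct subgroups of $G$ arise as $\ker(\varphi)$ for $\varphi \in \Hom(G,H)$; choose representatives $\varphi_1, \ldots, \varphi_r \in \Hom(G,H)$ realizing all distinct kernels, so that $N = \bigcap_{i=1}^r \ker(\varphi_i)$. Then the diagonal map
$$G/N \;\longrightarrow\; \prod_{i=1}^r \varphi_i(G) \;\leq\; H^r, \qquad gN \mapsto (\varphi_1(g), \ldots, \varphi_r(g))$$
is a well-defined injective homomorphism. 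Because $\grpclass$ is closed under finite direct products, $H^r \in \grpclass$; because it is closed under subgroups, $G/N \in \grpclass$.

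Next I would apply Theorem~\ref{thm:mean-quotient-implies-wholegroup} to the irrelevant normal subgroup $N \trianglelefteq G$. By hypothesis the pair $(G/N, H) \in \grpclass \times \grpclass$ enjoys the property under consideration, and the three lifting items (i)--(iii) of that theorem transport it from $\aHom(G/N,H)$ to $\aHom(G,H)$. For (i) this is automatic; for (ii) and (iii) we must verify Access~\ref{access:mean-quotient-implies-wholegroup}, which is the role of the (as-yet-unspecified) Access~\ref{access:mean-quasivariety} hypothesis in the theorem's statement. Concretely, from a sufficiently strong representation of $G$ (e.g.\ as an encoded black-box group together with enough information about $H$ to evaluate its homomorphisms) one obtains generators for $N$ from the intersection $N = \bigcap \ker(\varphi_i)$, (nearly) uniform random elements of $N$ via standard black-box sampling, an explicit transversal $G/N \to G$, and a working presentation of $G/N$ that satisfies the CertEcon/AlgEcon access assumptions inherited from the hypothesis on $\grpclass \times \grpclass$. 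The cost deterioration then tracks exactly the bounds recorded in Remark~\ref{rmk:constants-deterioration}.

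The substantive part of the argument is the one-line subdirect embedding above; the combinatorial statement (i) follows at once. The main obstacle is the access-model bookkeeping for (ii) and (iii): the subgroup $N$ is defined in terms of all of $\Hom(G,H)$, so the access hypothesis on $G$ must be strong enough that this kernel, together with a transversal and random sampling in it, becomes computationally available. Under the standard black-box / permutation-group representations this is routine, and specifying such an Access~\ref{access:mean-quasivariety} is the only non-trivial piece of work beyond appealing to Theorem~\ref{thm:mean-quotient-implies-wholegroup}.
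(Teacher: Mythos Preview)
Your proposal is correct and follows essentially the same approach as the paper: reduce to Theorem~\ref{thm:mean-quotient-implies-wholegroup} by showing $G/N \in \grpclass$ via a diagonal embedding of $G/N$ into a finite direct power of $H$, using closure of $\grpclass$ under subgroups and direct products. The paper indexes the product by all of $\Hom(G,H)$ rather than by a finite set of kernel-representatives as you do, but since $G$ and $H$ are both finite this is a cosmetic difference; your access-model discussion is more detailed than the paper's, which simply packages the needed assumptions into Access~\ref{access:mean-quasivariety} without attempting to derive them from a black-box model.
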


\begin{remark}
In fact, the class $\grpclass$ need only be closed under subdirect products. 
\end{remark}

The access assumptions mirror those of Access~\ref{access:mean-quotient-implies-wholegroup} of Theorem~\ref{thm:mean-quotient-implies-wholegroup}.
\begin{access} 
\label{access:mean-quasivariety}
For every $G \in \groups$ and $H \in \grpclass$, denote by $N$ the $(G,H)$-irrelevant kernel and assume we have access to $N$ and $G/N$ as follows. 
\begin{enumerate}[(i)]
\setcounter{enumi}{1}
\item (a) Random elements of $N$ can be generated uniformly. (b) A transversal of $G/N$ in $G$ can be found. (c) $G/N$ can be found well enough to satisfy the access model of the assumed CertEcon list-decodability of pairs in $\grpclass \times \grpclass$. 
\item (a')   Random elements of $N$ can be generated uniformly and a set of generators for $N$ can be found. (b') Same as (b). (c') Same as (c) but for AlgEcon list-decodability.  
\end{enumerate}
\end{access}

\begin{proof}
Fix $H \in \grpclass$ and $G \in \groups$. Let $N$ be the $(G,H)$-irrelevant kernel. By Theorem~\ref{thm:mean-quotient-implies-wholegroup}, all desired conclusions will follow if we show that $G/N \in \grpclass$.

Let 
$$\widetilde{H} = \prod_{\varphi \in \aHom(G,H)} H.$$ Define the map $\tau:G \to \widetilde{H}$ given by $\tau(g) = (\varphi(g))_\varphi$. Notice that $\tau(G)$ is subgroup of $\widetilde{H}$ and is thus a subdirect product of copies of the group $H \in \grpclass$. Since $\grpclass$ is closed under subdirect products, it follows that $\tau(G) \in \grpclass$.

Since $\ker(\tau) = N$, we have $\tau(G) \cong G/N$, so $G/N \in \grpclass$.

\end{proof}

The AlgEcon list-decodability of $\{$abelian$\to$abelian$\}$ and $\{$nilpotent$\to$nilpotent$\}$ homomorphism codes is shown in~\cite{DGKS08} and~\cite{GS14}, respectively. As the class of abelian groups and the class of nilpotent groups both form quasivarieties, we conclude the following using the mentioned results and Theorem~\ref{thm:mean-quasivariety}.
\begin{corollary}
If $G$ is a group and $H$ an abelian group (or, more generally, nilpotent), then we have AlgEcon (and therefore CombEcon)  list-decoding of $\aHom(G,H)$.
\end{corollary}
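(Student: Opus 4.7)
The plan is to reduce the corollary directly to Theorem~\ref{thm:mean-quasivariety}, invoking the known $\{$abelian$\to$abelian$\}$ and $\{$nilpotent$\to$nilpotent$\}$ AlgEcon results as the hypothesis of that theorem. So the proof has essentially three ingredients to assemble: a quasivariety check, a citation of prior work, and a verification of the access model.

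First I would record the standard fact that the class $\mathfrak{A}$ of finite abelian groups and the class $\mathfrak{N}$ of finite nilpotent groups are each closed under subgroups and direct products, hence are quasivarieties in the sense preceding Theorem~\ref{thm:mean-quasivariety}. (Equivalently, and this is all Theorem~\ref{thm:mean-quasivariety} actually uses, they are closed under subdirect products.) Next I would quote the prior algorithmic results: Dinur, Grigorescu, Kopparty, and Sudan~\cite{DGKS08} give AlgEcon list-decoding for $\aHom(G,H)$ when both $G,H$ are finite abelian, and Guo and Sudan~\cite{GS14} give AlgEcon list-decoding for $\aHom(G,H)$ when both $G,H$ are finite nilpotent (with supersolvable domain, which covers all finite nilpotent groups).

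Having these two ingredients, I would apply Theorem~\ref{thm:mean-quasivariety}(iii) with $\grpclass = \mathfrak{A}$ (resp.\ $\grpclass = \mathfrak{N}$): for $H\in\grpclass$ and arbitrary finite $G$, AlgEcon for $\aHom(G,H)$ holds, provided the access assumptions of Access~\ref{access:mean-quasivariety}(iii) are met. The CombEcon statement then follows from the implication AlgEcon $\Longrightarrow$ CombEcon observed after Definition~\ref{def:certecon}.

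The one step that needs actual verification is the access model. For $H\in\grpclass$ the $(G,H)$-irrelevant kernel $N$ is the intersection of kernels of all $G\to H$ homomorphisms; when $H$ is abelian this contains the commutator subgroup $G'$, and when $H$ is nilpotent of class $c$ it contains the $(c+1)$st term of the lower central series of $G$. So $G/N$ lies in $\grpclass$ automatically (as the proof of Theorem~\ref{thm:mean-quasivariety} also shows via the subdirect embedding $g\mapsto(\varphi(g))_{\varphi\in\aHom(G,H)}$), and under the natural presentation-based access to $G$ one obtains uniform sampling from $N$, generators for $N$, a transversal of $N$ in $G$, and a presentation of $G/N$ of the required form. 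This is the only nontrivial point, and it is the piece I would expect to be the main obstacle to making the statement fully uniform: the corollary as phrased is slightly informal about $G$'s representation, and a careful writeup must either restrict to domains given in a form from which these items are computable, or absorb that computation into the stated access model.
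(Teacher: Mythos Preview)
Your proposal is correct and follows essentially the same approach as the paper: both note that finite abelian (resp.\ nilpotent) groups form a quasivariety, cite \cite{DGKS08} (resp.\ \cite{GS14}) for the AlgEcon result on $\grpclass\times\grpclass$, and then invoke Theorem~\ref{thm:mean-quasivariety}. Your writeup is in fact more careful than the paper's, which states the corollary with only a one-line justification and does not spell out the access-model verification you discuss.
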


\subsection{$\Hom$ versus $\aHom$}
\label{sec:Hom-versus-aHom}

We show that the code $\Hom(G,H)$ is CombEcon
if and only if $\aHom(G,H)$ is CombEcon, and 
similarly for CertEcon, and AlgEcon under modest
assumptions of the representation of the groups.
Therefore we can use these two types of codes
interchangeably.
This reflects a phenomenon similar to
our results on mean-list-decoding.  

We fix terminology for this section. For an affine homomorphism $\varphi \in \Hom(G,H)$, we denote its \defn{base homomorphism} by $\varphi^0 \in \Hom(G,H)$ (the unique homomorphism satisfying $\varphi = h \varphi^0$ for $h \in H$). For an element $a \in G$ and function $f: G \to H$, we denote by $f^a:G \to H$ the function $f^a(g) = f(a)^{-1} f(ag)$.

We state the central result of this section, that every $\aHom$ list is contained within a small number of translated $\Hom$ lists. It is similar in spirit to Lemma~\ref{lemma:mean-central-meanlist-relationship}, and it is similarly proved by the Bipartite Covering Lemma (Lemma~\ref{lemma:mean-comb-bipartite}).  The deterioration in list size and cost will be addressed in Remark~\ref{rmk:constants-deterioration-aHom} below.

\begin{lemma}[Concentration of $\aHom$ lists]
\label{lemma:mean-central-aHom-relationship}
Let $G$ and $H$ be groups, $f: G \to H$ a received word, and $0 < \lambda \leq 1$. Let $\cL = \highagr(\aHom(G,H), f, \lambda)$. We conclude the following. 
\begin{enumerate}[(i)]
\item $\abs{\cL} \leq \frac1\lambda \ell(\Hom(G,H), \lambda)$. 
\item Let $S$ be a subset of $G$ formed by choosing $\left\lceil \frac{4}{3\lambda}(\ln  \abs{\cL}  + \ln(1/\eta\lambda)\right\rceil
$ elements in $G$ independently and uniformly. Suppose that, independently for each $a \in G$, the subset $D_a$ of $\cL$ contains $\cL(\Hom(G,H), f^a, \lambda)$ with probability $\geq 3/4$. We denote $\widetilde{D_a} = \{ f(a)\psi(a^{-1})\psi\in \aHom(G,H) :  \psi \in D_a \}$. Then, with probability $\geq 1- \eta$, we have 
$$\cL \subseteq \bigcup_{a \in S} \widetilde{D_a}. $$
\end{enumerate}
\end{lemma}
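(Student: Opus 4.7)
The plan is to apply the Bipartite Covering Lemma (Lemma~\ref{lemma:bipartite}) to the bipartite graph $X = (G, \cL; E)$ in which we place an edge between $a \in G$ and $\varphi \in \cL$ exactly when $\varphi(a) = f(a)$. Each $\varphi \in \cL$ has agreement at least $\lambda$ with $f$ by definition of $\cL$, so $\deg(\varphi) = |G|\cdot\agr(\varphi,f) \geq \lambda |G|$; this gives the parameter $\delta = \lambda$ of Lemma~\ref{lemma:bipartite}.

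The key structural ingredient is an ``affine-to-base'' change-of-variable. Fix $a \in G$ and suppose $\varphi \in \aHom(G,H)$ satisfies $\varphi(a) = f(a)$. Writing $\varphi = h \varphi^0$ for its canonical decomposition (so $\varphi^0 \in \Hom(G,H)$), one checks that $\varphi(a)^{-1}\varphi(ag) = \varphi^0(g)$ for every $g$, so substituting $g' = a^{-1}g$ yields
\[
\agr(\varphi,f) \;=\; \Pr_{g}[\varphi(g)=f(g)] \;=\; \Pr_{g'}[\varphi(ag')=f(ag')] \;=\; \Pr_{g'}[\varphi^0(g')=f^a(g')] \;=\; \agr(\varphi^0,f^a).
\]
In particular $\varphi^0 \in \cL(\Hom(G,H), f^a, \lambda)$, and the correspondence $\varphi \leftrightarrow \varphi^0$ is a bijection on $\{\varphi \in \aHom(G,H) : \varphi(a) = f(a)\}$, the inverse being $\varphi^0 \mapsto \bigl(g \mapsto f(a)\varphi^0(a^{-1}g)\bigr)$.

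For part (i), this bijection embeds the neighborhood $N(a)$ in $X$ injectively into $\cL(\Hom(G,H),f^a,\lambda)$, so $\deg(a) \leq \ell(\Hom(G,H),\lambda)$ for every $a \in G$. The double-counting clause Lemma~\ref{lemma:bipartite}(a), with $L = \ell(\Hom(G,H),\lambda)$ and $\delta = \lambda$, then gives $|\cL| \leq \ell(\Hom(G,H),\lambda)/\lambda$.

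For part (ii), apply the covering clause Lemma~\ref{lemma:bipartite}(b) to the same graph $X$ with $\delta = \lambda$, using $|\cL|$ itself as the bound on $|W|$ (which is admissible since the union-bound step of that lemma only needs an upper bound on $|W|$; the stated sample size matches this application up to the mild slack $\ln(1/\lambda)$). The independent ``retain with probability $3/4$'' step of the covering lemma is identified with the independent assumption that $D_a \supseteq \cL(\Hom(G,H),f^a,\lambda)$. Under the good event (probability $\geq 1-\eta$), every $\varphi \in \cL$ has a retained neighbor $a \in S$; then $\varphi(a) = f(a)$ and $\varphi^0 \in D_a$, so $\varphi(g) = f(a)\varphi^0(a^{-1}g)$ exhibits $\varphi \in \widetilde{D_a}$, proving $\cL \subseteq \bigcup_{a \in S} \widetilde{D_a}$.

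The only real work is unpacking the affine-to-base translation $\varphi \mapsto \varphi^0$ and verifying the agreement identity $\agr(\varphi,f) = \agr(\varphi^0,f^a)$ when $\varphi(a)=f(a)$; once that is in hand, both parts reduce mechanically to the two clauses of Lemma~\ref{lemma:bipartite}, in perfect parallel with the proof of Lemma~\ref{lemma:mean-central-meanlist-relationship}.
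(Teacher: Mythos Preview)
Your proof is correct and follows essentially the same approach as the paper: both construct the bipartite graph on $G \times \cL$ with edges $(a,\varphi)$ whenever $f(a)=\varphi(a)$, use the affine-to-base identity $\agr(\varphi,f)=\agr(\varphi^0,f^a)$ (the paper records this as Corollary~\ref{cor:aHomvsHom-translatefunction}) to bound the left degrees by $\ell(\Hom(G,H),\lambda)$, and then invoke the two clauses of Lemma~\ref{lemma:bipartite}. Your remark that the sample-size formula uses $|\cL|$ directly as the bound on $|W|$ (with the extra $\ln(1/\lambda)$ as harmless slack) is exactly how the paper's union-bound step accommodates the stated parameter.
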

We remark that $\widetilde{D_a} \subseteq \aHom(G,H)$ is found by translating elements of $D_a \subseteq \Hom(G,H)$, but not all by the same element. 

We defer the proof to state the main result of this section, which is an immediate consequence.

\begin{access}\label{access:mean-aHom}
An oracle provides uniform random elements of $G$. 
\end{access}
\begin{corollary}[$\Hom$ versus $\aHom$]
 Let $G$ and $H$ be groups. If $\Hom(G,H)$ is CombEcon, then $\aHom(G,H)$ is CombEcon. Under Access~\ref{access:mean-aHom}, if $\Hom(G,H)$ is CertEcon, then $\aHom(G,H)$ is CertEcon. Under Access~\ref{access:mean-aHom}, if $\Hom(G,H)$ is AlgEcon, then $\aHom(G,H)$ is AlgEcon.
\end{corollary}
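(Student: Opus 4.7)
The plan is to apply Lemma~\ref{lemma:mean-central-aHom-relationship} in each of the three settings, with agreement threshold $\lambda = \Lambda_{G,H}+\eps$; by Proposition~\ref{prop:prelim-lambda-Hom-vs-aHom} this threshold is meaningful for both $\Hom(G,H)$ and $\aHom(G,H)$. For the CombEcon claim, item~(i) of the lemma immediately gives
\[
|\cL(\aHom(G,H), f, \Lambda_{G,H}+\eps)| \le \tfrac{1}{\Lambda_{G,H}+\eps}\,\ell(\Hom(G,H), \Lambda_{G,H}+\eps),
\]
and the assumption $\ell(\Hom(G,H),\Lambda_{G,H}+\eps) = \poly(1/\eps)$ together with $\Lambda_{G,H}+\eps \ge \eps$ yields a $\poly(1/\eps)$ bound on the right-hand side.

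For AlgEcon and CertEcon we invoke item~(ii). Using Access~\ref{access:mean-aHom}, sample $a_1,\ldots,a_s \in G$ independently and uniformly, where $s$ is as in item~(ii) with failure parameter $1/4$. For each $a_i$, simulate oracle access to the received word $f^{a_i}(g) = f(a_i)^{-1}f(a_i g)$ at the cost of two queries to $f$ per evaluation, and run the assumed $\Hom$-decoder on $f^{a_i}$ with threshold $\Lambda_{G,H}+\eps$ to produce the list $D_{a_i}$. In the AlgEcon case, translate each $\psi \in D_{a_i}$ to the affine homomorphism $\varphi(g):=f(a_i)\psi(a_i^{-1}g)$; since $\agr(\varphi,f)=\agr(\psi,f^{a_i})$, the set $\widetilde{D_{a_i}}$ from the lemma is exactly this translate, and $\bigcup_i \widetilde{D_{a_i}}$ is the required output list, covering $\cL(\aHom(G,H),f,\Lambda_{G,H}+\eps)$ with probability at least $3/4$ by item~(ii).

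The CertEcon case follows the same sampling and subroutine pattern, but requires converting each $\Hom$-certificate $\gamma \in \Gamma_{a_i}$ (for some $\psi$) into an $\aHom$-certificate for the corresponding $\varphi$. Define $\gamma^+\colon G\partialto H$ on $\dom\gamma^+ = a_i\cdot(\{1\}\cup\dom\gamma)$ by $\gamma^+(a_i)=f(a_i)$ and $\gamma^+(a_i x) = f(a_i)\gamma(x)$ for $x \in \dom\gamma$, and output the union of these $\gamma^+$. The verification, which I expect to be the main obstacle, is that $\gamma^+$ uniquely extends to $\varphi$ within $\aHom(G,H)$: if $\varphi'$ is any such extension, then $\chi(g):=\varphi'(a_i)^{-1}\varphi'(a_i g)$ is a homomorphism (a direct consequence of the affine identity $\varphi'(u)\varphi'(v)^{-1}\varphi'(w)=\varphi'(uv^{-1}w)$) that agrees with $\gamma$ on $\dom\gamma$, hence equals $\psi$ by the $\Hom$-uniqueness of $\gamma$, forcing $\varphi'(g)=f(a_i)\psi(a_i^{-1}g)=\varphi(g)$. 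The number of subroutine calls is $s = O((1/\eps)\log(1/\eps))$, so the assumed $\poly(1/\eps)$ bounds for $\Hom$ transfer to $\poly(1/\eps)$ bounds for $\aHom$ with only a polylogarithmic overhead in cost.
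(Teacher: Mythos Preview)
Your proposal is correct and follows essentially the same approach as the paper, which simply declares the corollary an immediate consequence of Lemma~\ref{lemma:mean-central-aHom-relationship} and records the deterioration factors in Remark~\ref{rmk:constants-deterioration-aHom}. You have supplied the details the paper leaves implicit; in particular, your explicit certificate translation $\gamma\mapsto\gamma^+$ and the uniqueness verification via $\chi(g)=\varphi'(a_i)^{-1}\varphi'(a_i g)=(\varphi')^0(g)$ are exactly the right way to spell out the CertEcon case. One tiny edge case worth noting: if $1\in\dom\gamma$ and $\gamma(1)\neq 1_H$ your two clauses defining $\gamma^+(a_i)$ conflict, but such a $\gamma$ is not a $\Hom$-certificate anyway, so it is harmless junk in the output list.
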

\begin{remark}[Deterioration]
\label{rmk:constants-deterioration-aHom}
The bounds on cost in the result above deteriorate as follows. 
\begin{itemize}
\item A $\frac1\lambda$ multiplicative factor in list size. 
\item An $O(\frac1\lambda \ln (1/\lambda))$ multiplicative factor in queries to the received word $f$. 
\item An $O(\frac1\lambda \ln (1/\lambda))$ multiplicative factor in amount of work. 
\end{itemize}
\end{remark}

Towards proving Lemma~\ref{lemma:mean-central-aHom-relationship}, 
 we state a few facts relating affine homomorphisms to their base homomorphisms.
\begin{observation}
\label{obs:mean-aHom-basehom}
Let $G$ and $H$ be groups and $\varphi \in \aHom(G,H)$. Then,
\begin{equation*}
\varphi(a)^{-1}\varphi(ag) = \varphi^0(g) \; \;\; \forall a,g\in G.
\end{equation*}
\end{observation}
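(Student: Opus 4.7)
The plan is to unfold the definition of $\varphi^0$ as the base homomorphism of the affine homomorphism $\varphi$. By the characterization of affine homomorphisms recalled in Equation~\eqref{eq:affine} (and the uniqueness assertion immediately following it), there is a unique element $h \in H$ and a unique homomorphism $\varphi^0 \in \Hom(G,H)$ such that $\varphi(g) = h \cdot \varphi^0(g)$ for all $g \in G$.

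The key step is then a one-line computation: for any $a,g \in G$,
\begin{equation*}
\varphi(a)^{-1} \varphi(ag) = \bigl(h \varphi^0(a)\bigr)^{-1} \bigl(h \varphi^0(ag)\bigr) = \varphi^0(a)^{-1} h^{-1} h \varphi^0(a) \varphi^0(g) = \varphi^0(g),
\end{equation*}
where the second equality uses that $\varphi^0$ is a (genuine) homomorphism, so $\varphi^0(ag) = \varphi^0(a) \varphi^0(g)$, and the last equality simplifies $\varphi^0(a)^{-1} \varphi^0(a) = 1_H$.

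There is no real obstacle here; the observation is a direct algebraic consequence of the structural fact that every affine homomorphism is a left translate of a homomorphism. The only minor subtlety worth flagging in the exposition is that the analogous identity with $h$ on the right of $\varphi^0$ would also work by the parallel characterization noted just after Equation~\eqref{eq:affine}, so the statement is independent of which side one chooses to encode the affine translation.
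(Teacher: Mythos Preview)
Your proof is correct and is exactly the intended argument; the paper treats this as an immediate observation with no written proof, and your one-line computation from the decomposition $\varphi = h\cdot\varphi^0$ is precisely what is implicit.
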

\begin{corollary}
\label{cor:aHomvsHom-translatefunction}
Let $G$ and $H$ be groups, $f: G \to H$, and $\varphi \in \aHom(G,H)$. If $f(a) = \varphi(a)$, then 
\begin{equation*}
f(ag) = \varphi(ag) \iff f(a)^{-1} f(ag) = \varphi^0(g) \iff f^a(g) = \varphi^0(g). 
\end{equation*}
It follows that $\agr(f, \varphi) = \agr(f^a, \varphi^0)$. 
\end{corollary}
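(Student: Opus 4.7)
The proof plan is a brief algebraic manipulation followed by a bijection argument. First I would establish the chain of biconditionals, then leverage the first biconditional to deduce the agreement equality via left-translation of $G$ by $a$.

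For the biconditionals, I would assume $f(a) = \varphi(a)$ and left-multiply both sides of $f(ag) = \varphi(ag)$ by $f(a)^{-1} = \varphi(a)^{-1}$, yielding $f(a)^{-1}f(ag) = \varphi(a)^{-1}\varphi(ag)$. Observation~\ref{obs:mean-aHom-basehom} identifies the right-hand side with $\varphi^0(g)$, giving the first ``$\Rightarrow$''. The converse follows by left-multiplying by $f(a)$, which is invertible in $H$. The second biconditional is immediate from the definition $f^a(g) := f(a)^{-1}f(ag)$.

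For the agreement claim I would use that the map $g \mapsto ag$ is a bijection $G \to G$. Under this bijection, the set $\{g \in G : f^a(g) = \varphi^0(g)\}$ corresponds to $\{ag : g \in G,\ f^a(g) = \varphi^0(g)\} = \{h \in G : f(h) = \varphi(h)\} = \Eq(f,\varphi)$, where the middle equality applies the chain of biconditionals with $h = ag$. Dividing both cardinalities by $\abs{G}$ yields $\agr(f,\varphi) = \agr(f^a,\varphi^0)$.

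There is no serious obstacle here; the statement is a direct algebraic identity combined with a counting argument via a bijection. The one point worth flagging is that the chain of biconditionals genuinely depends on the hypothesis $f(a)=\varphi(a)$, which is precisely what allows the substitution $f(a)^{-1} = \varphi(a)^{-1}$ in the first step. Without this hypothesis the agreement equality can fail, since $\Eq(f,\varphi)$ may be empty while $\Eq(f^a,\varphi^0)$ need not be.
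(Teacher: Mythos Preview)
The proposal is correct and matches the argument the paper intends: the corollary is stated immediately after Observation~\ref{obs:mean-aHom-basehom} without a written proof, and your derivation from that observation together with the definition of $f^a$ and the bijection $g\mapsto ag$ is exactly the expected one-line justification.
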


We can now prove our concentration lemma. 
\begin{proof}[Proof of Lemma~\ref{lemma:mean-central-aHom-relationship}]
Fix the function $f: G \to H$. Consider the bipartite graph with vertices $V = G$ and $W = \highagr(\aHom(G,H), f, \lambda)$, where the edge set contains $(a, \varphi) \in G \times \aHom(G,H)$ if $f(a)= \varphi(a)$. We wish to apply Lemma~\ref{lemma:mean-comb-bipartite}, so we first check the conditions are satisfied.

For every $\varphi \in W$, we have $\agr(\varphi, f) > \lambda$ by definition, so $\deg(\varphi) > \lambda$. 

We show that $\deg(a) \leq \ell(\Hom(G,H), \lambda)$, by showing that the map $\varphi \mapsto \varphi^0$ is an injection from $N(a)= \{\varphi \in W : f(a) = \varphi(a)\}$ to $\highagr(\Hom(G,H), f^a, \lambda)$. This map is well-defined since $f(a) = \varphi(a)$ implies $\agr(f, \varphi) = \agr(f^a, \varphi^0)$, by Corollary~\ref{cor:aHomvsHom-translatefunction}. We show this map is injective. If $\varphi_1, \varphi_2 \in \aHom(G,H)$ satisfy $\varphi_1(a) = f(a) = \varphi_2(a)$ and $\varphi^0_1 = \varphi^0_2$, then $\varphi_1(g) = \varphi_1(a)\varphi^0(a^{-1} g) = \varphi_2(a) \varphi^0(a^{-1}g) = \varphi_2(g)$ for all $g \in G$, by Observation~\ref{obs:mean-aHom-basehom}. 

Apply the two parts of Lemma~\ref{lemma:mean-comb-bipartite} to find the following. 
\begin{enumerate}[(i)]
\item We conclude that $\abs{\highagr(\aHom(G,H), f, \lambda)} \leq \frac1\lambda \ell(\Hom(G,H), \lambda)$.
\item The subset $U$ is the chosen subset $S$ of $G$. The subset $\hat{U}$ contains the element $a \in U \subseteq G$ if list-decoding for $\highagr(\Hom(G,H), f^a, \lambda)$ succeeds, which happens with probability $\geq 3/4$ independently over $a$. 

It remains to show that if $D_a = \highagr(\Hom(G,H), f^a, \lambda)$, then $\widetilde{D_a} \supseteq N(a)$. But, if $\varphi \in N(a)$, we have already established that $\varphi^0 \in \highagr(\Hom(G,H), f^a, \lambda)$. Moreover, since $f(a) = \varphi(a)$, we have $\varphi(g) = f(a)\varphi^0(a^{-1}g)$ for all $g \in G$, by Observation~\ref{obs:mean-aHom-basehom}. So, $\widetilde{D_a} \supseteq N(a)$ by the definition of $\widetilde{D_a}$. 
\end{enumerate}

\end{proof}

\section{Strategy}
\label{section:strategy-for-CombEcon}

In this section, we outline our strategy for proving CombEcon results.

Let $f\in \aHom(G,H)$ be a received word and 
let $\cL$ be the set of codewords within
distance $(\mindist-\eps)$ of $f$.
The combinatorial problem is to bound
$\abs{\cL} \leq \poly(1/\eps)$.  First, we partition $\cL$ into more manageable subsets (which we call \emph{buckets}). We bound the number of buckets using a 
sphere-packing argument, and we bound the maximum size of the buckets.

\subsection{The sphere packing argument:
strong negative correlation}
\label{sec:strategy-part-one}

We recall that the \emph{agreement} $\agr(f,g)$ of two
functions $f,g$ in the code space $H^G$
is the proportion of inputs on which 
$f$ and $g$ agree; i.e., 
$\agr(f,g)=(1/\abs{G})\abs{\{x\in G\mid f(x)=g(x)\}}$.
So, the distance between $f$ and $g$ is
$1-\agr(f,g)$. And, $\Lambda=\Lambda_{G,H}$ is the maximum agreement between elements of $\aHom(G,H)$; so the minimum distance of the code $\aHom(G,H)$ is $1-\Lambda$.

Let $\Psi$ be a maximal set of elements of $\aHom(G,H)$ such that their pairwise distance is at least $1 - \Lambda^2$. For each $\psi \in \Psi$, we let the bucket $\cL_{\psi}$ consist of all homomorphisms in $\cL$ that are within a ball of radius $1 - \Lambda^2$ around $\psi$. That is, every pair of homomorphisms in $\Psi$ has agreement at most $\Lambda^2$, and every homomorphism in $\cL_{\psi}$ has agreement greater than $\Lambda^2$ with $\psi$.

We note that every homomorphism in $\cL$ is in at least one bucket.

To bound the number of buckets, $\abs{\Psi}$, we use a sphere-packing argument based on strong negative correlation (the sets $\Eq(f,\psi)$ for $\psi\in\Psi$ are
strongly negatively correlated; see Section~\ref{section:tools-SNC-spherepacking}). We find that $\abs{\Psi}$ is at most $O(1/\eps^2)$.

We discuss the division into buckets in detail in Sections~\ref{sec:bucket-splitting} and~\ref{section:generation-graphs}.

Our strategy to bound the sizes of the buckets is different depending on the type of the domain $G$. In all cases, we further divide each bucket into smaller units (which we call \emph{sub-buckets}), but the method of division will differ. We discuss abelian $G$ in Section~\ref{sec:strategy-abelian} (in detail in Section~\ref{section:abelian}), and alternating and SRG groups $G$ in Section~\ref{sec:strategy-alternating-klc} (in detail in Sections~\ref{section:alternating}, \ref{section:SRG}, and~\ref{section:srg-consequences}).

\subsection{Bounding the list size for abelian groups}
\label{sec:strategy-abelian}

To prove that abelian groups are universally CombEcon, we prove the following structure theorem.
The theorem asserts that the codomain has a small number of abelian
subgroups so that each homomorphism in the list $\cL$ maps
domain $G$ into one of those abelian subgroups.

\begin{theorem}
	\label{7thm:codomain-few-abelian-subgroups}

	There exists a set $\abelsubs$ of finite abelian subgroups of $H$ with $\displaystyle{\abs{\abelsubs} \leq \frac{1}{4 (\Lambda + \eps) \eps^2} + \frac{1}{\eps}}$ such that for all $\phi \in \highagr$, 

	there is $M \in \abelsubs$ such that $\phi(G) \leq M$.
\end{theorem}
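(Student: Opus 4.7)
The plan is to follow the two-part strategy of Section~\ref{sec:strategy-part-one}: sphere-pack the homomorphism list to get a small index set, then associate an abelian subgroup of $H$ to each bucket.

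\emph{Reduction and abelian images.} By Proposition~\ref{prop:prelim-lambda-Hom-vs-aHom}, $\Lambda_{G,H}$ equals the maximum agreement of $\Hom(G,H)$. Every affine $\phi = h\phi_0 \in \aHom(G,H)$ has image $h \cdot \phi_0(G)$, a coset of the abelian subgroup $\phi_0(G)$, so covering the base images $\phi_0(G)$ by abelian subgroups of $H$ is sufficient. Accordingly, work with $\cL = \highagr(\Hom(G,H), f', \Lambda+\eps)$ for an appropriately translated received word $f'$. Since $G$ is abelian, each $\phi(G)$ for $\phi \in \cL$ is an abelian subgroup of $H$.

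\emph{Sphere packing.} Pick $\Psi \subseteq \cL$ maximal with pairwise agreement $\leq \Lambda^2$. A second-moment / strong negative correlation estimate applied to the indicator sum $S(g) = \sum_{\psi \in \Psi} \mathbf{1}[\psi(g) = f'(g)]$ — combining $\EE\, S \geq |\Psi|(\Lambda + \eps)$ (from $\Psi \subseteq \cL$) with $\EE\, S^2 \leq |\Psi| + |\Psi|^2 \Lambda^2$ (from the packing condition) via Cauchy--Schwarz — bounds $|\Psi|$ by the leading term $\tfrac{1}{4(\Lambda+\eps)\eps^2}$ claimed in the theorem (cf.\ Section~\ref{section:tools-SNC-spherepacking} for the sharp constant). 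By maximality, every $\phi \in \cL$ belongs to some bucket $\cL_\psi := \{\phi \in \cL : \agr(\phi,\psi) > \Lambda^2\}$.

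\emph{An abelian subgroup per bucket.} For each $\psi \in \Psi$, place an abelian $M_\psi \leq H$ into $\abelsubs$, designed so that $\phi(G) \leq M_\psi$ for every $\phi \in \cL_\psi$. The key input is that $K := \Eq(\phi,\psi)$ is a subgroup of $G$ of density $> \Lambda^2$ on which $\phi$ and $\psi$ coincide; since $G$ is abelian, $\phi(G)$ must centralize $\psi(K)$. This provides the scaffolding to take $M_\psi$ to be an abelian subgroup of $C_H(\psi(K))$ containing $\psi(G)$. Homomorphisms in $\cL$ whose images escape every bucket-assigned $M_\psi$ are enumerated directly via a double-counting argument on agreement sets with $f'$, producing the additive $\tfrac{1}{\eps}$ correction in the final bound.

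\emph{Main obstacle.} The technical heart is the last step. The naive $M_\psi = \psi(G)$ is not enough in general: with $G = \ZZ_p^2$ and non-commuting elements $b, c \in H$ of order $p$, the homomorphisms $\phi_b(x,y) = b^x$ and $\phi_c(x,y) = c^x$ have pairwise agreement $1/p > 1/p^2 = \Lambda^2$ (so each lies in the other's bucket) yet have incomparable images $\gengroup{b}$ and $\gengroup{c}$. Handling this requires carefully using the extra constraint that $\phi$ and $\psi$ must both approximate the \emph{same} received word $f'$ — which severely limits how many non-commuting approximating images can coexist — and this constraint is precisely what produces the additive $\tfrac{1}{\eps}$ term.
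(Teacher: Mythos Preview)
Your proposal has a genuine structural gap. You claim the sphere-packing step bounds $\abs{\Psi}$ by the leading term $\frac{1}{4(\Lambda+\eps)\eps^2}$, but the strong negative correlation bound (Lemma~\ref{lemma:sphere-packing}) only gives $\abs{\Psi}\le\frac{1}{4(\Lambda+\eps)\eps}+1$: the sets $\Eq(f,\psi)$ are $(\Lambda+\eps,\,(\Lambda+\eps)\eps)$-strongly negatively correlated, and the SNC lemma yields $r\le\frac{1}{4\tau}+1$ with $\tau=(\Lambda+\eps)\eps$. There is no extra factor of $1/\eps$ available here. Consequently your plan of assigning \emph{one} abelian subgroup $M_\psi$ per bucket cannot reach the target bound --- and indeed your own $\ZZ_p^2$ example shows that a single $M_\psi$ cannot cover all images in a bucket. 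Your vague fallback (``enumerate escapees by double-counting, producing the additive $1/\eps$'') does not describe a working mechanism, and in particular does not explain where the missing $1/\eps$ factor comes from.

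The paper's actual decomposition is: $\abs{\Psi}\le\frac{1}{4(\Lambda+\eps)\eps}+1$ buckets, and \emph{within each bucket} at most $1/\eps$ abelian subgroups, giving the product $\frac{1}{4(\Lambda+\eps)\eps^2}+\frac{1}{\eps}$. The missing idea for the per-bucket step is the \emph{abelian enlargement} $\sig_{\psi(G)}(h)=\gen{h,\,C_H(h)\cap\psi(G)}$. For $\phi\in\cL_\psi$, the equalizer $\Eq(\psi,\phi)$ has density $>\Lambda^2$, hence depth $\le 1$ in $G/N$ (abelian domain); so for a random $g\in\Eq(f,\phi)$, with probability $\ge\eps/(\Lambda+\eps)$ one has $\gen{\Eq(\psi,\phi),g}=G$, and then Proposition~\ref{78prop:sig-generate} gives $\phi(G)\le\sig_{\psi(G)}(\phi(g))=\sig_{\psi(G)}(f(g))$. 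The crucial point is that $\sig_{\psi(G)}(f(g))$ depends only on $f$, $\psi$, and $g$ --- not on $\phi$ --- so a bipartite counting argument (each $\phi$ is adjacent to an $\eps$-fraction of $g$'s) shows that at most $1/\eps$ distinct enlargements arise per bucket. This is exactly the step your proposal is missing.
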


This is restated in Section~\ref{section:abelian} as Theorem~\ref{thm:codomain-few-abelian-subgroups}, and proved there.

This result reduces the problem to showing CombEcon of $\{\text{abelian} \to \text{abelian}\}$, which was done by Dinur, Grigorescu, Kopparty, and Sudan \cite{DGKS08}.
 
For find the set $\abelsubs$ of subgroups, we introduce the concept of an \emph{abelian enlargement}. The abelian enlargement of a subset $T \subseteq H$ by a group $B \leq H$ is the group generated by $T$ and the elements of $B$ that are in the centralizer of $T$; that is,
	\[
	\sig_B(T) = \gen{T, C_H(T) \cap B}.
	\]
Let $\abelsubs_\psi$ be the set of all groups $M$ that occur as the abelian enlargement of $f(g)$ by $\psi(G)$ for at least an $\eps$ proportion of $g \in G$.

We shall show

that every homomorphism $\phi$ in the bucket $\cL_{\psi}$ has its image contained in one of these subgroups. The idea is that since $\phi$ and $\psi$ have large agreement, most of $\phi(G)$ is contained in $\psi(G)$. So even if we take a single random element of $\phi(G)$, it is likely that its enlargement by $\psi(G)$ already contains all of $\phi(G)$. Specifically,

\begin{proposition}
	\label{78prop:sig-generate}

	Let $\phi, \psi \in \Hom(G, H)$ and $g \in G$ such that $\gen{g, \Eq(\psi, \phi)} = G$.
	
	Then $\phi(G) \leq \sig_{\psi(G)}(\phi(G)) = \sig_{\psi(G)}(\phi(g))$.
\end{proposition}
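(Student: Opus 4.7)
The first inclusion $\phi(G) \leq \sig_{\psi(G)}(\phi(G))$ is immediate from the definition $\sig_B(T) = \gen{T, C_H(T) \cap B} \supseteq T$. The substance of the claim is the equality $\sig_{\psi(G)}(\phi(G)) = \sig_{\psi(G)}(\phi(g))$, which I would establish by two inclusions, using throughout that the domain in Section~\ref{section:abelian} is abelian; in particular, both $\phi(G)$ and $\psi(G)$ are abelian subgroups of $H$.

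Set $E = \Eq(\psi, \phi)$, which is a subgroup of $G$ by Fact~\ref{fact:equalizer-coset}. The key observation is that $\phi(E) = \psi(E) \leq \psi(G)$ (since $\phi$ and $\psi$ coincide on $E$), and that $\phi(g)$ commutes with every element of $\phi(E)$ because $\phi(G)$ is abelian; hence $\phi(E) \leq C_H(\phi(g)) \cap \psi(G)$. The hypothesis $G = \gen{g, E}$ then yields
\[
\phi(G) = \gen{\phi(g), \phi(E)} \leq \gen{\phi(g), C_H(\phi(g)) \cap \psi(G)} = \sig_{\psi(G)}(\phi(g)).
\]
Combined with the trivial inclusion $C_H(\phi(G)) \leq C_H(\phi(g))$, this already gives $\sig_{\psi(G)}(\phi(G)) \leq \sig_{\psi(G)}(\phi(g))$.

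For the reverse inclusion I would exploit the abelian-ness of $\psi(G)$: any $c \in C_H(\phi(g)) \cap \psi(G)$ commutes with $\phi(g)$ by definition and commutes with $\phi(E) \leq \psi(G)$ since $\psi(G)$ is abelian; therefore $c \in C_H(\gen{\phi(g), \phi(E)}) = C_H(\phi(G))$, so $c \in C_H(\phi(G)) \cap \psi(G) \leq \sig_{\psi(G)}(\phi(G))$. Together with $\phi(g) \in \phi(G)$, this gives $\sig_{\psi(G)}(\phi(g)) \leq \sig_{\psi(G)}(\phi(G))$, completing the proof. The argument is short; the thing worth flagging is that abelian-ness enters \emph{symmetrically} in the two directions---abelian-ness of $\phi(G)$ places $\phi(E)$ inside $C_H(\phi(g))$, while abelian-ness of $\psi(G)$ places $C_H(\phi(g)) \cap \psi(G)$ inside $C_H(\phi(E))$---and both follow from the standing assumption that $G$ is abelian. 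Everything else is a routine chase of subgroup containments.
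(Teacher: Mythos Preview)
Your proof is correct and follows essentially the same idea as the paper's. The only organizational difference is that the paper factors the two-inclusion argument into a standalone lemma (Lemma~\ref{lem:sig-expand}: if $U \subseteq \sig_B(T)$ then $\sig_B(T) = \sig_B(T \cup U)$), then applies it with $T = \{\phi(g)\}$ and $U = \phi(\Eq(\psi,\phi))$, whereas you inline that computation directly; the substance---using abelianness of $\phi(G)$ to put $\phi(E)$ inside $C_H(\phi(g))$ and abelianness of $\psi(G)$ for the reverse containment---is identical.
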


This is restated in Section~\ref{section:abelian} as Proposition~\ref{prop:sig-generate}, and proved there.

We can then use the CombEcon bound to adapt the Dinur et.\ al.\ algorithm to this more general class of codes.

\subsection{Bucket estimation for alternating and SRG} 
\label{sec:strategy-alternating-klc}

We describe our strategy to bound the size of a bucket in the case that $G$ is alternating. We carry out this strategy in Section~\ref{section:alt-proof}.

All homomorphisms in one bucket $\cL_{\psi}$ have high agreement with one representative homomorphism $\psi$. However, we have no control over where in $G$ this agreement occurs. We split each bucket $\cL_{\psi}$ further into \emph{sub-buckets} $\cL_{\psi, K}$, this time so that the homomorphisms in each sub-bucket agree on a specified large 

(low-depth) subgroup $K$ of $G$.

\textbf{Bound on the number of sub-buckets per bucket.} Fortunately, low-index subgroups of alternating groups are well understood, and there are few of them. There are at most $\poly(1/\Lambda)$ large subgroups of $G$. 

So, each bucket is divided into at most $\poly(1/\Lambda)$ sub-buckets.

It is a consequence of the strong negative correlation principle that $\poly(1/\eps) = \poly(1/\Lambda, 1/\eps)$. We will find $\Lambda = \poly(n)$ for alternating groups. 

\textbf{Bound on the size of each sub-bucket.} 

We will bound the size of a sub-bucket $\cL_{\psi, K}$. We describe a process for choosing a random homomorphism in the sub-bucket. 

For a positive integer $d$, we choose $d$ random elements of $G$. If there is a unique homomorphism $\phi$ that agrees with $f$ on the $d$ random inputs, and agrees with $\psi$ on $K$, we choose this homomorphism.

If $d$ is at least the depth of the subgroup $K$ in $G$, then each homomorphism in the sub-bucket gets chosen with probability at least $\eps^d$. So, the size of the sub-bucket is at most $1/\eps^d$. Conveniently, the depth of large subgroups of the alternating group $A_n$ is bounded (see Section~\ref{section:alternating}).

\textbf{Generalization.} This strategy for list decoding $\{\text{alternating} \to \text{anything}\}$ works more broadly; it still works when the group in the domain comes from a much larger class of groups, which we call \emph{SRG groups} (Section~\ref{section:SRG}). A group $G$ is SRG if, roughly, a small number of random elements chosen from $G$ are likely to generate a low depth subgroup. See Section~\ref{section:intro-SRG} for a precise definition.

This view allows us not only to combinatorially list-decode $\{\text{SRG} \to \text{anything}\}$, but also to certificate list-decode: Certificates are given by $f$ restricted to a small number of random elements of~$G$.

\section{Tools for CombEcon}
\label{section:tools}

In this section, we introduce the tools we need for proving CombEcon results. In Section~\ref{section:tools-SNC-spherepacking}, we introduce the strong negative correlation bound for bounding the number of sets that have small pairwise intersection. In Section~\ref{section:comparison-SNC-Johnson}, we compare this bound to the classical Johnson Bound. In Section~\ref{section:tools-large-epsilon}, we use this bound to show that we may assume that $\epsilon$ is small in our CombEcon proofs. In Sections~\ref{sec:bucket-splitting} and~\ref{section:generation-graphs}, we start carrying out the strategy outlined in Section~\ref{section:strategy-for-CombEcon} for proving CombEcon results by the method of bucket splitting.

\subsection{Strong negative correlation and sphere packing}
\label{section:tools-SNC-spherepacking}

We define strongly negatively correlated families of subsets and give a simple proof for a bound on their sizes. We apply this bound via a sphere-packing argument to divide lists into a small number of ``buckets.'' Another consequence is that we may without loss of generality assume that $\eps < \sqrt{2\LGH}$, where we again let $\Lambda = \Lambda_{G, H}$.

\begin{definition}[Strong negative correlation]
Let $0 < \rho \leq 1$ and $\tau > 0$.	Let $X$ be a finite set and let $S_1, \ldots, S_r \subseteq X$. We say that $S_1, \ldots, S_r$ are \textbf{$(\rho, \tau)$-strongly negatively correlated} if 
	\begin{enumerate}[(1)]
	\item  $\mu(S_i)  \geq \rho$ for all $i$, and
	\item $\mu(S_i \cap S_j) \leq \rho^2 - \tau$ for all $i \neq j$. 
	\end{enumerate}
\end{definition}

\begin{lemma}[Strong negative correlation bound]
	\label{lemma:strong-negative-correlation}
	Let $0 < \rho < 1$ and $\tau >0$. Let $X$ be a finite set and let $S_1, \ldots, S_r \subseteq X$ be $(\rho, \tau)$-strongly negatively correlated  subsets. Then, $r \leq \frac{1}{4\tau} + 1$. 
\end{lemma}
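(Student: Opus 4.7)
I would prove this via a second-moment argument on the \emph{covering multiplicity} $f \from X \to \ZZ_{\geq 0}$ defined by $f(x) = \abs{\{i : x \in S_i\}} = \sum_{i=1}^{r} \mathbf{1}_{S_i}(x)$, where $\EE$ denotes averaging over the uniform distribution on $X$. The two hypotheses translate directly into moment estimates: by linearity,
\begin{equation*}
\EE[f] = \sum_{i} \mu(S_i) \geq r\rho,
\end{equation*}
and by counting ordered pairs $(i,j)$ with $i \neq j$ and $x \in S_i \cap S_j$,
\begin{equation*}
\EE[f(f-1)] = \sum_{i \neq j} \mu(S_i \cap S_j) \leq r(r-1)(\rho^2 - \tau).
\end{equation*}

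The core step is to pit these two bounds against each other via Cauchy--Schwarz. Since $(\EE[f])^2 \leq \EE[f^2] = \EE[f(f-1)] + \EE[f]$, writing $m = \EE[f]$ yields the quadratic inequality $m^2 - m \leq r(r-1)(\rho^2 - \tau)$. Assuming $r\rho \geq 1/2$ so that the map $t \mapsto t(t-1)$ is nondecreasing on $[r\rho, \infty)$, I substitute the lower bound $m \geq r\rho$ into the LHS to obtain $r\rho(r\rho - 1) \leq r(r-1)(\rho^2 - \tau)$. The common $r^2\rho^2$ terms cancel on both sides, and dividing by $r$ collapses the inequality to
\begin{equation*}
(r - 1)\tau \;\leq\; \rho(1 - \rho) \;\leq\; \tfrac{1}{4},
\end{equation*}
which rearranges to $r \leq 1/(4\tau) + 1$.

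The edge case $r\rho < 1/2$ is handled separately and trivially: then $r < 1/(2\rho)$, and since condition~(2) combined with $\mu(S_i \cap S_j) \geq 0$ forces $\tau \leq \rho^2$ whenever $r \geq 2$, an elementary check (expanding $2\rho \leq 1 + 4\rho^2$) gives $1/(2\rho) \leq 1/(4\rho^2) + 1 \leq 1/(4\tau) + 1$. The only real question is identifying the right moment to take: the ``$-\tau$'' strengthening in condition~(2) has to propagate through the factorial second moment and survive the cancellation of the dominant $r^2\rho^2$ terms. The factor $1/(4\tau)$ --- rather than the naive $1/\tau$ one would expect from a direct variance argument --- emerges precisely from the $\rho(1-\rho) \leq \tfrac14$ bound after that cancellation, so the proof cannot be made substantially shorter without losing the constant.
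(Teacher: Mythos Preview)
Your proof is correct and rests on the same core idea as the paper's: nonnegativity of the variance of the covering multiplicity $f = \sum_i \mathbf{1}_{S_i}$. The paper's execution is more direct, though. Writing $Z_i = \mathbf{1}_{S_i}$, one has $\Var(Z_i) = \mu(S_i)(1-\mu(S_i)) \leq 1/4$ and $\Cov(Z_i,Z_j) = \mu(S_i \cap S_j) - \mu(S_i)\mu(S_j) \leq (\rho^2 - \tau) - \rho^2 = -\tau$, so $0 \leq \Var(f) \leq r/4 - r(r-1)\tau$ gives $(r-1)\tau \leq 1/4$ in one line, with no monotonicity step and no edge case. Your closing remark is therefore off the mark: the ``direct variance argument'' already delivers the $1/(4\tau)$ constant, precisely because $\Var(Z_i) \leq 1/4$; your detour through the factorial moment and Cauchy--Schwarz recovers the same bound but at the cost of extra bookkeeping rather than a sharper constant.
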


\begin{proof}
	Choose $x$ uniformly from $X$. For $1 \leq i \leq r$, let $Z_i(x) = \chi[x \in S_i]$ be the indicator random variable for the event that $x \in S_i$. 
	Notice that $\Var(Z_i) = \mu(S_i) ( 1- \mu(S_i) ) \leq \frac{1}{4}$. 
	
	For $i \neq j$, 
	\begin{align*}
	\Cov(Z_i, Z_j) = \EE[Z_i Z_j] - \EE[Z_i] \EE[Z_j] \leq (\rho^2 - \tau) - \rho^2 = - \tau.
	\end{align*}
	So,
	\begin{align*}
	0 \leq \Var\left(\sum_i Z_i\right) 
	= \sum_i \Var(Z_i) + \sum_{i \neq j} \Cov(Z_i, Z_j) 
	\leq \frac{r}{4} + r(r-1) (-\tau).
	\end{align*}
Solving for $r$ gives the bound as claimed. 
\end{proof}

When applied to equalizers, Lemma~\ref{lemma:strong-negative-correlation} will bound the size of a set of homomorphisms, each with high agreement with the received word but also with low pairwise agreement. This will serve as our base tool to split $\highagr(\aHom(G,H), f, \Lambda+\eps)$ into buckets.

\begin{lemma}[Sphere packing bound]
	\label{lemma:sphere-packing}
	
	Let $G$ be a finite group, $H$ a group, and $\eps > 0$. Let $f \from G \to H$ be a received word. Let $\Psi \subseteq \highagr(\aHom(G, H), f, \Lambda + \eps)$ be a subset of the list. Suppose that $\Psi$ is maximal under the condition that its members have low pairwise agreement, specifically, $\agr(\psi_1, \psi_2) \leq \Lambda^2$ for all distinct $\psi_1, \psi_2 \in \Psi$. Then, the size of $\Psi$ is bounded by 	
	\begin{equation}
	\abs{\Psi} \leq \frac{1}{4 (\Lambda + \eps) \eps} + 1.
	\end{equation}
\end{lemma}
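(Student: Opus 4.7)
The plan is to apply the strong negative correlation bound (Lemma~\ref{lemma:strong-negative-correlation}) to the equalizer sets $S_\psi := \Eq(f,\psi) \subseteq G$ for $\psi \in \Psi$. The two hypotheses of $(\rho,\tau)$-strong negative correlation translate immediately into the two assumptions given on $\Psi$: membership in $\highagr(\aHom(G,H), f, \Lambda + \eps)$ controls the densities $\mu(S_\psi)$ from below, while the pairwise low-agreement condition on $\Psi$ controls the pairwise intersections $\mu(S_\psi \cap S_{\psi'})$ from above.

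Concretely, first I would observe that $\mu(S_\psi) = \agr(f,\psi) \geq \Lambda + \eps$ for every $\psi \in \Psi$, so we may take $\rho := \Lambda + \eps$. Next, for distinct $\psi_1, \psi_2 \in \Psi$, we have $S_{\psi_1} \cap S_{\psi_2} \subseteq \Eq(\psi_1,\psi_2)$, hence
\begin{equation*}
\mu(S_{\psi_1} \cap S_{\psi_2}) \leq \agr(\psi_1,\psi_2) \leq \Lambda^2.
\end{equation*}
Thus the family $\{S_\psi\}_{\psi \in \Psi}$ is $(\rho,\tau)$-strongly negatively correlated with
\begin{equation*}
\tau := \rho^2 - \Lambda^2 = (\Lambda+\eps)^2 - \Lambda^2 = \eps(2\Lambda + \eps).
\end{equation*}

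Applying Lemma~\ref{lemma:strong-negative-correlation} yields
\begin{equation*}
\abs{\Psi} \leq \frac{1}{4\tau} + 1 = \frac{1}{4\eps(2\Lambda+\eps)} + 1.
\end{equation*}
Since $2\Lambda + \eps \geq \Lambda + \eps$, we have $\tfrac{1}{2\Lambda+\eps} \leq \tfrac{1}{\Lambda+\eps}$, and the claimed bound
$\abs{\Psi} \leq \tfrac{1}{4(\Lambda+\eps)\eps} + 1$ follows.

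There is no real obstacle here; the lemma is essentially a direct translation of the strong negative correlation bound to the code-theoretic setting via equalizers. The only mild subtlety is the cosmetic mismatch between the natural bound $\tfrac{1}{4\eps(2\Lambda+\eps)}$ that falls out of the computation and the slightly weaker stated bound $\tfrac{1}{4\eps(\Lambda+\eps)}$; this is resolved by the trivial inequality $2\Lambda + \eps \geq \Lambda + \eps$.
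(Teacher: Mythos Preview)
Your proposal is correct and takes essentially the same approach as the paper: apply Lemma~\ref{lemma:strong-negative-correlation} to the equalizer sets $\Eq(f,\psi)$. The only cosmetic difference is that the paper directly takes $\tau = (\Lambda+\eps)\eps$ (verifying $\rho^2 - \tau = (\Lambda+\eps)\Lambda \geq \Lambda^2$) to land exactly on the stated bound, whereas you compute the sharper $\tau = \eps(2\Lambda+\eps)$ and then weaken; your route in fact yields the slightly better constant $\tfrac{1}{4\eps(2\Lambda+\eps)}+1$.
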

Notice that the result also holds with $\Hom$ in place of $\aHom$, as $\Hom \subseteq \aHom$. 

\begin{proof}
	The sets $\Eq(f, \psi)$ for $\psi \in \Psi$ are $(\Lambda + \eps, (\Lambda + \eps) \eps)$-strongly negatively correlated, so the result follows by Lemma~\ref{lemma:strong-negative-correlation}.
\end{proof}

\begin{remark}
While Lemma~\ref{lemma:sphere-packing} applies to all groups, it is an existential result. We cannot algorithmically find the homomorphisms chosen in $\Psi$. So, although we use this lemma in proving combinatorial results, we cannot directly translate those proofs into algorithms.

The way we overcome this varies by setting. The list-decoder for $\{\text{abelian} \to \text{arbitrary}\}$ is indifferent to how CombEcon is proved; the AlgEcon proof relies on already having proved that abelian groups are universally CombEcon, but the algorithm is is unconnected to the method of proof. In other cases, including $\{\text{alternating} \to \text{arbitrary}\}$, we overcome this difficulty by using ``shallow random generation'' (see Section~\ref{section:SRG}).

\end{remark}

\subsection{Comparison between strong negative correlation and the Johnson Bound}
\label{section:comparison-SNC-Johnson}

The sphere packing bound (Lemma~\ref{lemma:sphere-packing}) can be rephrased as a statement about codes, and is essentially equivalent to a version of the classical Johnson bound.

	Consider a code $\mathcal{C} \subseteq \Sigma^n$ of length $n$ over an alphabet of size $\abs{\Sigma} = q$. Suppose the maximum agreement at most $\Lambda$, so that the distance of the code is $1 - \Lambda$; let $d = (1 - \Lambda) n$. Suppose that $0 \in \Sigma$, and that in every codeword in $\mathcal{C}$, at exactly $\rho n$ of the $n$ symbols are $0$; one says that a codeword has \emph{weight} $w = (1 - \rho) n$. Let $A_q(n, d, w)$ be the maximum size of such a code $\mathcal{C}$. 
	
	The Restricted Johnson Bound (see \cite[Section~2.3.1]{huffman}) says that 
	\begin{align*}
		A_q(n, d, w) \leq \frac{n d (q - 1)}{q w^2 - 2 (q - 1) n w + n d (q - 1)},
	\end{align*}
	provided that the denominator is greater than $0$.
	
	Suppose that instead of requiring each codeword to have exactly $\rho n$ zeros, we instead require it to have at least $\rho n$ zeros (that is, weight at most $(1 - \rho) n$). Let $A'_q(n, d, w)$ be the maximum size of such a code $\mathcal{C}$. The Restricted Johnson Bound still holds with $A'$ in place of $A$.

	Equivalently,
	\begin{align*}
		A'_q(n, (1 - \Lambda) n, (1 - \rho) n) \leq \frac{ 1 - \Lambda }{\rho^2 - \Lambda + \frac{(\rho - 1)^2}{q - 1}},
	\end{align*}
	again provided that the denominator is greater than $0$.
	
	So, for all $q \geq 2$, for all $0 < \rho \leq 1$, and for all $0 \leq \Lambda < \rho^2$,
	\begin{align*}
		A'_q(n, (1 - \Lambda) n, (1 - \rho) n) \leq \frac{ 1 - \Lambda }{\rho^2 - \Lambda}.
	\end{align*}
	
	The condition that each codeword has at least $\rho n$ zeros can be rephrased as a condition that each codeword has at agreement at least $\rho$ with the all-zero word. The all-zero word can be replaced with any word, and the bound still holds. Thus, this bound can be rephrased as saying that for any code $\mathcal{C}$ with maximum agreement $\Lambda$, and any $0 < \rho \leq 1$ such that $\Lambda < \rho^2$, we have that
	\begin{align*}
		\ell(\mathcal{C}, \rho) \leq \frac{1 - \Lambda}{\rho^2 - \Lambda}.
	\end{align*}

We compare this to the following generalization of the sphere packing bound, which is proved the same way, using the strong negative correlation bound (Lemma~\ref{lemma:strong-negative-correlation}).

\begin{lemma}
	Let $\cC$ be a code with maximum agreement $\Lambda$. Let $\sqrt{\Lambda} < \rho \leq 1$. Then, $\ell(\cC, \rho) \leq \frac{1}{4(\rho^2 - \Lambda)} + 1$.
\end{lemma}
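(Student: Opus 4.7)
The plan is to directly mimic the proof of Lemma~\ref{lemma:sphere-packing}, applying the strong negative correlation bound to equalizers with the received word.

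Let $f \in \Sigma^\Omega$ be a received word achieving the maximum list size, so $\ell(\cC,\rho) = |\cL(\cC,f,\rho)|$. Denote $\cL = \cL(\cC,f,\rho)$. For each codeword $\phi \in \cL$, I would consider the equalizer set $S_\phi := \{\omega \in \Omega \mid \phi(\omega) = f(\omega)\} \subseteq \Omega$. By the definition of the list, $\mu(S_\phi) = \agr(f,\phi) \geq \rho$. For any two distinct codewords $\phi_1, \phi_2 \in \cL$, observe that $S_{\phi_1} \cap S_{\phi_2}$ is contained in the equalizer of $\phi_1$ and $\phi_2$ (if both agree with $f$ at $\omega$, then they agree with each other at $\omega$), and hence $\mu(S_{\phi_1} \cap S_{\phi_2}) \leq \agr(\phi_1,\phi_2) \leq \Lambda$.

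Setting $\tau = \rho^2 - \Lambda$, which is positive by the hypothesis $\sqrt{\Lambda} < \rho$, the family $\{S_\phi \mid \phi \in \cL\}$ is $(\rho, \tau)$-strongly negatively correlated in the sense of the definition preceding Lemma~\ref{lemma:strong-negative-correlation}. Applying that lemma with ambient set $X = \Omega$ yields
\[
|\cL| \leq \frac{1}{4\tau} + 1 = \frac{1}{4(\rho^2 - \Lambda)} + 1,
\]
which is exactly the claimed bound.

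There is no real obstacle here: the argument is essentially identical to the proof of Lemma~\ref{lemma:sphere-packing}, with the simplification that we do not need to first pass to a ``$\Lambda^2$-separated'' maximal subset $\Psi$, because the bound $\agr(\phi_1,\phi_2) \leq \Lambda$ holds for any two distinct codewords by the definition of maximum agreement. The only thing to verify is the positivity of the denominator, which is given by the assumption $\rho^2 > \Lambda$, ensuring the strong negative correlation lemma applies with a strictly positive~$\tau$.
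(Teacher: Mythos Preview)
Your proof is correct and essentially identical to the paper's: both take an arbitrary received word $f$, observe that the equalizer sets $\Eq(f,\phi)$ for $\phi$ in the list are $(\rho,\rho^2-\Lambda)$-strongly negatively correlated, and apply Lemma~\ref{lemma:strong-negative-correlation}. You simply spell out the verification of the intersection bound in slightly more detail than the paper does.
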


\begin{proof}
	Consider any word $f$. To bound the size of $\cL = \cL(\cC, f, \rho)$, we note that the sets $\Eq(f, \phi)$ for $\phi \in \cL$ are $(\rho, \rho^2 - \Lambda)$-strongly negatively correlated, and we apply the strong negative correlation bound (Lemma~\ref{lemma:strong-negative-correlation}).
\end{proof}

When $\rho > \frac{1}{2}$, this can be further improved to $\ell(\cC, \rho) \leq \frac{\rho - \rho^2}{\rho^2 - \Lambda} + 1$, by being more careful in the proof of the strong negative correlation bound.

\subsection{Large $\eps$}
\label{section:tools-large-epsilon}

As a first consequence of strong negative correlation (Lemma~\ref{lemma:strong-negative-correlation}), we will see that we may assume $\eps$ is ``small'' --- specifically, $\eps < \sqrt{2\LGH}$ --- in our CombEcon proofs. So, to show CombEcon it suffices to show a list-size bound of $\poly(1/\eps, 1/\Lambda)$ rather than $\poly(1/\eps)$. 

\begin{lemma}[Large $\eps$ lemma] 
	\label{lemma:small-epsilon-vs-lambda}
	Let $G$ be a finite group and $H$ a group. Suppose that $\LGH \leq \frac12 \eps^2$. Then, $\ell(\aHom(G, H), \Lambda + \eps) \leq \frac{1}{2 \eps^2} + 1$. In particular, $\aHom(G,H)$ is combinatorially $(\LGH + \eps, \poly(1/\eps))$-list-decodable.  
\end{lemma}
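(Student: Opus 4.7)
The plan is to deduce this directly from the strong negative correlation bound (Lemma~\ref{lemma:strong-negative-correlation}) applied to the equalizer sets arising from the list. Fix an arbitrary received word $f \in H^G$ and let $\cL = \cL(\aHom(G,H), f, \Lambda + \eps)$; it suffices to bound $|\cL|$. For each $\phi \in \cL$, associate the subset $S_\phi := \Eq(f,\phi) \subseteq G$ with normalized size $\mu(S_\phi) = \agr(f,\phi) \geq \Lambda + \eps$.

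Next I would show pairwise near-disjointness: for distinct $\phi_1, \phi_2 \in \cL$, the intersection $S_{\phi_1} \cap S_{\phi_2}$ is contained in $\Eq(\phi_1,\phi_2)$, so
\[
\mu(S_{\phi_1} \cap S_{\phi_2}) \leq \agr(\phi_1,\phi_2) \leq \Lambda_{G,H}
\]
by definition of the maximum agreement $\Lambda = \Lambda_{G,H}$. Setting $\rho := \Lambda + \eps$, the family $\{S_\phi : \phi \in \cL\}$ will be $(\rho,\tau)$-strongly negatively correlated provided $\Lambda \leq \rho^2 - \tau$, that is, $\tau \leq (\Lambda+\eps)^2 - \Lambda = \eps^2 + 2\Lambda\eps + \Lambda^2 - \Lambda$. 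Here I invoke the hypothesis: since $\Lambda \leq \tfrac12 \eps^2$ (and the other two nonnegative summands can only help), we get $(\Lambda+\eps)^2 - \Lambda \geq \eps^2 - \Lambda \geq \tfrac12 \eps^2$, so $\tau := \tfrac12 \eps^2$ is admissible.

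Applying Lemma~\ref{lemma:strong-negative-correlation} with this $\tau$ yields $|\cL| \leq \tfrac{1}{4\tau} + 1 = \tfrac{1}{2\eps^2} + 1$, which is the desired bound. Since this holds for every received word $f$, we conclude $\ell(\aHom(G,H), \Lambda+\eps) \leq \tfrac{1}{2\eps^2}+1 = \poly(1/\eps)$, establishing CombEcon in the ``large $\eps$'' regime.

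There is no serious obstacle here; the lemma is essentially a packaging statement, and the only point requiring attention is making sure the hypothesis $\Lambda \leq \tfrac12 \eps^2$ is exactly what is needed to guarantee $\tau = \tfrac12\eps^2 > 0$ in the strong negative correlation setup. This lemma justifies restricting attention to the regime $\eps < \sqrt{2\Lambda}$ in subsequent CombEcon proofs, since the complementary regime is already handled here.
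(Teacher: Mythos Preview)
Your proof is correct and follows essentially the same approach as the paper: the paper's proof simply asserts that the equalizer sets $\Eq(f,\phi)$ are $(\Lambda+\eps,\eps^2/2)$-strongly negatively correlated and invokes Lemma~\ref{lemma:strong-negative-correlation}, while you spell out the verification of the $\tau = \eps^2/2$ parameter in detail. The arguments are identical in substance.
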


\begin{proof} 
	The sets $\Eq(f, \phi)$ for $\phi \in \highagr(\aHom(G,H), \Lambda + \eps)$ are $(\Lambda + \eps, \eps^2/2)$-strongly negatively correlated, so the result follows from Lemma~\ref{lemma:strong-negative-correlation}.

\end{proof}

\begin{corollary}

	Let $G$ be a finite group and $H$ a group. If $\ell(\aHom(G, H), \Lambda + \eps) \leq \poly(1/\eps, 1/\Lambda)$, then $\aHom(G, H)$ is CombEcon.
\end{corollary}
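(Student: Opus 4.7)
The plan is to do a simple case split on the relative size of $\eps$ and $\Lambda = \Lambda_{G,H}$, using the Large $\eps$ lemma (Lemma~\ref{lemma:small-epsilon-vs-lambda}) to handle one regime and the hypothesis to handle the other.

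First I would set the threshold at $\eps_0 = \sqrt{2\Lambda}$, i.e., the boundary between $\Lambda \leq \tfrac12\eps^2$ and $\Lambda > \tfrac12\eps^2$. In the regime $\eps \geq \sqrt{2\Lambda}$, the hypothesis of the Large $\eps$ lemma is satisfied, so $\ell(\aHom(G,H), \Lambda + \eps) \leq \tfrac{1}{2\eps^2} + 1$, which is already a $\poly(1/\eps)$ bound. In the complementary regime $\eps < \sqrt{2\Lambda}$, we have $\tfrac{1}{\Lambda} < \tfrac{2}{\eps^2}$, so the given bound $\poly(1/\eps, 1/\Lambda)$ becomes $\poly(1/\eps, 2/\eps^2) = \poly(1/\eps)$, a polynomial in $1/\eps$ of the same total degree (at most double the degree of the original polynomial).

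Combining the two cases yields a single $\poly(1/\eps)$ bound on $\ell(\aHom(G,H), \Lambda+\eps)$ that holds for every $\eps > 0$, which is precisely the definition of CombEcon.

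There is no serious obstacle here; the only thing to be slightly careful about is bookkeeping the degree of the polynomial. If the hypothesis gives $\ell \leq C \cdot \eps^{-a} \Lambda^{-b}$ for constants $C, a, b$, then in the second regime we obtain $\ell \leq C \cdot 2^b \cdot \eps^{-(a+2b)}$, while the first regime gives a quadratic bound. Taking the maximum of the two, the resulting CombEcon degree is $\max(2, a+2b)$. No calculation beyond this is required.
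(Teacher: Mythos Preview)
Your proposal is correct and matches the paper's intended approach exactly: the corollary is stated immediately after the Large $\eps$ lemma precisely so that the case split at $\eps = \sqrt{2\Lambda}$ handles both regimes, with the lemma covering $\Lambda \le \tfrac12\eps^2$ and the substitution $1/\Lambda < 2/\eps^2$ covering the complementary case. The paper leaves the proof implicit, and your write-up (including the degree bookkeeping) fills it in correctly.
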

	The result also holds with $\Hom$ in place of $\aHom$.

\subsection{Bucket splitting}
\label{sec:bucket-splitting}

In Section~\ref{section:strategy-for-CombEcon}, we outlined a strategy for showing that certain classes of homomorphism codes are CombEcon. In this section, we begin carrying out this strategy, and introduce our tools.

Let $G$ be a finite group, $H$ a group (finite or infinite), $\Lambda = \Lambda_{G, H}$ the maximum agreement, $f \from G \to H$ a received word, $\eps > 0$ a real, and $\highagr = \highagr(\aHom(G, H), f, \Lambda + \eps)$ the list.

Our goal is to bound the size $\abs{\highagr}$ of the list.

To do this, we split $\highagr(\aHom(G,H), f, \Lambda+\eps)$ into sets called \emph{buckets}, which we label with elements of the set $\Psi \subseteq \highagr$ introduced in Lemma~\ref{lemma:sphere-packing}. Each bucket, denoted $\buckl\psi$, will contain the sphere centered at the homomorphism $\psi \in \Psi$ with radius 
$(1 - \Lambda^2)$.

\begin{definition}[Bucket $\buckl\psi$]
Let $G$ be a finite group, $H$ a group, $\psi \in \aHom(G,H)$, $f: G \to H$, and $\eps > 0$.  

The \defn{bucket} $\buckl\psi$ is 

\begin{align*}
	\buckl\psi := \{ \varphi \in \highagr \mid \agr(\varphi, \psi) > \Lambda^2 \}.
\end{align*}
\end{definition}

\begin{lemma}[Bucket-splitting lemma]
	\label{lemma:bucket-splitting}
	Let $G$ be a finite group, $H$ a group, $f: G \to H$, $\psi \in \aHom(G,H)$, and $\eps > 0$. Then, there exists a subset $\Psi \subseteq \highagr(\aHom(G, H), f, \Lambda + \eps)$, with size $\abs{\Psi} \leq \frac{1}{4 (\Lambda + \eps) \eps} + 1$, such that
	\begin{equation*}
	\highagr(\aHom(G,H), f, \Lambda + \eps) \subseteq \bigcup_{\psi \in \Psi} \buckl\psi.
	\end{equation*}	
\end{lemma}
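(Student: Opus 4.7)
The plan is to construct $\Psi$ greedily as a maximal low-agreement packing inside the list, and then invoke the sphere packing bound (Lemma~\ref{lemma:sphere-packing}) for the size estimate, with maximality providing the covering property.

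More concretely, I would choose $\Psi\subseteq\highagr(\aHom(G,H),f,\Lambda+\eps)$ to be any set that is maximal (with respect to inclusion) subject to the constraint
\[
 \agr(\psi_1,\psi_2)\le \Lambda^2 \qquad \text{for all distinct } \psi_1,\psi_2\in\Psi.
\]
Such a $\Psi$ exists: start from $\emptyset$ and keep adding elements of $\highagr$ that have agreement at most $\Lambda^2$ with everything already chosen; the process terminates since $\highagr$ is finite (it is bounded by the CombEcon/strong negative correlation arguments, but for this step even a trivial termination argument via Zorn/finiteness suffices).

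To verify the covering property, let $\vf\in\highagr$. If $\vf\in\Psi$ then $\vf\in\buckl{\vf}$ since $\agr(\vf,\vf)=1>\Lambda^2$. Otherwise, by maximality of $\Psi$, the set $\Psi\cup\{\vf\}$ violates the pairwise condition, so there must exist $\psi\in\Psi$ with $\agr(\vf,\psi)>\Lambda^2$; hence $\vf\in\buckl\psi$. In either case $\vf\in\bigcup_{\psi\in\Psi}\buckl\psi$, as required.

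For the size bound, the set $\Psi$ satisfies exactly the hypotheses of Lemma~\ref{lemma:sphere-packing}: it is a subset of $\highagr(\aHom(G,H),f,\Lambda+\eps)$ whose distinct members have pairwise agreement at most $\Lambda^2$. That lemma immediately yields $\abs{\Psi}\le \frac{1}{4(\Lambda+\eps)\eps}+1$. I do not expect any serious obstacle here — the combinatorial content is already packaged in the sphere packing bound (which in turn rests on strong negative correlation, Lemma~\ref{lemma:strong-negative-correlation}); the only work in the present lemma is the greedy/maximality argument that turns the packing into a covering.
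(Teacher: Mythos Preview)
Your proposal is correct and follows essentially the same approach as the paper: take $\Psi$ to be a maximal subset of $\highagr$ with pairwise agreement at most $\Lambda^2$, use maximality for the covering property, and invoke Lemma~\ref{lemma:sphere-packing} for the size bound. The paper's proof is terser but makes the identical argument.
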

\begin{proof}
	Let $\Psi$ be as in Lemma~\ref{lemma:sphere-packing}, that is, a subset of $\highagr(\aHom(G, H), f, \Lambda + \eps)$ that is maximal under the conditions that distinct $\psi_1, \psi_2 \in \Psi$ have small agreement $\agr(\psi_1, \psi_2) \leq \Lambda^2$. By the maximality of $\Psi$, every $\varphi \in \highagr(\aHom(G,H), f, \Lambda+\eps)$ has high agreement $\agr(\varphi, \psi) > \Lambda^2$ with some homomorphism $\psi \in \Psi$.
\end{proof}

We bound the size of each bucket $\buckl\psi$ by further subdividing it into smaller sets, which we refer to as \emph{sub-buckets}. We then bound both the number of sub-buckets per bucket, and the size of each sub-bucket. The method of subdivision will differ depending on the type of group.

For abelian groups, we label each sub-bucket with an abelian subgroup $M$ of the codomain, $H$; the sub-bucket $\abkpm$ consists of all homomorphisms $\phi \in \buckl\psi$ whose image is contained in $M$.

\begin{definition}[Sub-bucket $\abkpm$ for abelian groups]
	Suppose $G$ is an abelian group. Let $\psi \in \aHom(G, H)$ and $M \leq H$. The \defn{sub-bucket} $\abkpm$ is
	\begin{align*}
		\abkpm = \{\phi \in \buckl\psi \mid \phi(G) \leq M\}.
	\end{align*}
\end{definition}

For alternating groups,

we label each sub-bucket with a subgroup $K$ of the domain, $G$; the sub-bucket $\sbk_{\psi, K}$ consists of all homomorphisms $\phi \in \buckl\psi$ whose equalizer with $\psi$ contains $K$.

\begin{definition}[Sub-bucket $\sbk_{\psi, K}$ for alternating groups]
	\label{defn:subbucket}

	Suppose $G$ is an alternating group. Let $\psi \in \highagr$ and $K \leq H$.	The \defn{sub-bucket} $\sbk_{\psi, K}$ is 
	\begin{align*}
		\sbk_{\psi, K} = \{ \varphi \in \highagr \mid K \leq \Eq(\varphi, \psi) \}.
	\end{align*}
\end{definition}

To bound the size of the buckets, we in both cases use the fact that elements of $\buckl\psi$ agree with $\psi$ on a subgroup of large density. Moreover, in abelian groups and alternating groups, subgroups with large density have small depth. We leverage this in the next lemma. The lemma will help us bound the number of sub-buckets $\abkpm$ per bucket in the abelian case (but we will need to bound the size of each sub-bucket another way). And, it will help us bound the size of each sub-bucket $\sbk_{\psi, K}$ in the alternating case (but we will need to bound the number of sub-buckets per bucket another way).

The set $S$ in the lemma should be thought of as $\Eq(f, \phi)$ for some homomorphism $\phi$ of interest. 

\begin{lemma}
	\label{lemma:random-el-depth}
	Let $0 \leq \lambda < 1$. 
	Let $G$ be a finite group, $K \leq G$ a subgroup, and $S \subseteq G$ a subset. Suppose that $\mu_G(S) > \lambda$. Let $\eps = \mu(S) - \lambda$ and $d = \depth_G(K)$. Then, 
	\[
	\Pr_{s_1, \ldots, s_d \in S}[\mu(\gen{K, s_1, \dots, s_d}) > \lambda] \geq \paren*{ \frac{\eps}{\lambda+\eps} }^d.
	\]
	
	It follows that 
	\[
	\Pr_{g_1, \ldots, g_d \in G} [ g_1, \dots, g_d \in S
	\text{ and }
	\mu(\langle K , g_1, \ldots, g_d \rangle) > \lambda  ] \geq \eps^d.
	\]
\end{lemma}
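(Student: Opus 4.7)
The plan is to track the ascending chain $K_0 \leq K_1 \leq \cdots \leq K_d$ of subgroups of $G$ where $K_i = \gen{K, s_1, \ldots, s_i}$, and to argue that either the chain reaches density exceeding $\lambda$ at some step $i \leq d$, or it strictly grows at every one of the $d$ steps. In the latter case, $K = K_0 < K_1 < \cdots < K_d \le G$ is a strict chain of length $d$; combined with $\depth_G(K) = d$ this forces $K_d = G$, since otherwise extending by $K_d < G$ would yield a chain of length $d+1$. So in either case $\mu(K_d) > \lambda$.

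To turn this into a probability bound I would introduce the stopping time $\tau = \min\{i : \mu(K_i) > \lambda\}$ (with $\tau = \infty$ if no such $i \leq d$ exists), and consider the event $E$ that for every $i \in \{0, \ldots, d-1\}$ either $\tau \leq i$ or $s_{i+1} \notin K_i$. The structural argument above shows that $E$ implies $\mu(K_d) > \lambda$. Conditional on any history, the probability of success at step $i$ is at least $\eps/(\lambda + \eps)$: if $\tau \leq i$ it is automatic; otherwise $\mu(K_i) \leq \lambda$, so $|K_i \cap S| \leq |K_i| \leq \lambda |G|$, hence $|K_i \cap S|/|S| \leq \lambda/(\lambda+\eps)$, and therefore $\Pr_{s \in S}(s \notin K_i) \geq 1 - \lambda/(\lambda+\eps) = \eps/(\lambda+\eps)$. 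Multiplying across the $d$ steps, $\Pr(E) \geq \left(\frac{\eps}{\lambda+\eps}\right)^d$, which gives the first inequality.

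The second inequality follows by a routine conditioning step: if $g_1, \ldots, g_d$ are i.i.d.\ uniform in $G$, then the event ``all $g_i \in S$'' has probability $\mu(S)^d = (\lambda+\eps)^d$, and conditional on this event the $g_i$ are i.i.d.\ uniform in $S$. Thus the joint probability that all $g_i \in S$ and $\mu(\gen{K, g_1, \ldots, g_d}) > \lambda$ is at least $(\lambda+\eps)^d \cdot \left(\frac{\eps}{\lambda+\eps}\right)^d = \eps^d$. The main subtlety is that the per-step bound $\Pr_{s \in S}(s \notin K_i) \geq \eps/(\lambda+\eps)$ can fail once $\mu(K_i) > \lambda$ (when $K_i$ may essentially contain $S$), which is why the stopping-time formulation is needed to absorb that case; apart from this, the depth hypothesis does all its work in a single shot, converting $d$ strict inclusions into $K_d = G$.
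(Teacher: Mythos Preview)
Your proof is correct and is essentially the same argument as the paper's, just unrolled: the paper proceeds by induction on $\ind{G}{K}$, at each step conditioning on $s_1 \notin K$ (which happens with probability at least $\eps/(\lambda+\eps)$ when $\mu(K) \leq \lambda$) and then applying the inductive hypothesis to $\gen{K, s_1}$, whose depth is at most $d-1$. Your stopping-time formulation with the events $E_i$ is exactly the iterative version of this recursion, and the per-step bound and the use of the depth hypothesis are identical.
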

This is proved by repeated application of Bayes' rule. 
\begin{proof}
	
	Pick $s_1, s_2, s_3, \dotsc$ independently and uniformly from $S$.

	We proceed by induction on $\ind{G}{K}$. 
	
	Suppose $\mu(K) > \lambda$. Then, $\Pr[\mu(\gen{K, s_1, \dots, s_d}) > \lambda] = 1$.
	
	Suppose $\mu(K) \leq \lambda$. Then, with probability $\displaystyle{\frac{\mu(S \smallsetminus K)}{\mu(S)} \geq \epsl}$, we have that $s_1 \notin K$, so $\gen{K, s_1} > K$, and $\depth_G \gen{K, s_1} \leq d - 1$. Then, by the induction hypothesis,
	\begin{align}
		\Pr[\mu(\gen{K, s_1, \dots, s_d})] 
		&\geq \Pr[\mu(\gen{K, s_1, \dots, s_d}) \mid s_1 \notin K] \cdot \Pr[s_1 \notin K] \\
		&\geq \paren*{ \epsl }^{d - 1} \cdot \paren*{ \epsl }.
	\end{align}
	This completes the inductive step.
\end{proof}

\begin{remark}
	With a little more care in the proof (separating out the case $\lambda/2 < \mu(K) \leq \lambda$), one can prove the stronger conclusion	
	\begin{align}
		\Pr[\mu(\gen{K, s_1, \dots, s_d}) > \lambda] \geq \paren*{ \halfl }^{d - 1} \cdot \epsl.
	\end{align}
\end{remark}

\subsection{Bipartite generation-graphs}
\label{section:generation-graphs}

We retain the notation $G, H, f, \eps, \highagr$ from the previous subsection. Let $\Psi \subseteq \aHom(G, H)$ be as defined in Lemma~\ref{lemma:sphere-packing}. 

In the case that $G$ is abelian or alternating, we divided the list $\highagr$ into buckets $\buckl\psi$. To bound the size $\abs{\highagr}$ of the list, we just need to bound the size $\abs{\buckl\psi}$ of each bucket. If we fix $\psi \in \Psi$ and consider some homomorphism $\phi \in \buckl\psi$ that is hidden to us, we can almost recover $\phi$ with some decent probability if we have access to $\psi$ and a few random elements of $\Eq(f, \phi)$. By ``almost recover,'' we mean there is a small list of homomorphisms which contains $\phi$.

To make this more precise, we define the following bipartite graph. 

\begin{definition}
	Let $d$ be a positive integer. We define a bipartite graph $\gph_{\psi, d}$. The left vertex set is $V = G^d$ and the right vertex set is $W = \buckl\psi$. The vertices $(g_1, \dots, g_d) \in V$ and $\phi \in W$ are adjacent if $g_1, \dots, g_d \in \Eq(f, \phi)$ and $\mu(\gen{\Eq(\psi, \phi), g_1, \dots, g_d}) > \Lambda$.
\end{definition}

We will bound $\buckl\psi$ by applying part~(a) of
Lemma~\ref{lemma:mean-comb-bipartite} (double counting)
to $\gph_{\psi, d}$. So, we would like to bound the degree of a left vertex from above, and the degree of a right vertex from below.

The next lemma bounds below the degree of a right vertex in certain cases, and will be useful when $G$ is abelian or alternating.

\begin{lemma}
	\label{lem:right-degree-big-flex}
	Let $d$ be a positive integer, and $\phi \in \buckl\psi$ be a right vertex of $\gph_{\psi, d}$ such that $\depth_G \Eq(\psi, \phi) \leq d$. Then $\phi$ is adjacent to at least an $\eps^d$ fraction of the left vertices.
\end{lemma}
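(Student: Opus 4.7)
The plan is to apply Lemma~\ref{lemma:random-el-depth} directly, with a small padding argument to account for the gap between $\depth_G\Eq(\psi,\phi)$ and $d$.

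Fix $\phi \in \buckl\psi$ as in the statement, and set
\[
K \defeq \Eq(\psi,\phi), \qquad S \defeq \Eq(f,\phi), \qquad d' \defeq \depth_G(K) \leq d.
\]
Since $\phi \in \buckl\psi \subseteq \highagr$, we have $\agr(\phi,f) > \Lambda + \eps$, i.e.\ $\mu(S) > \Lambda + \eps$, so $\mu(S) - \Lambda > \eps$. The event that defines adjacency of $(g_1,\dots,g_d)$ to $\phi$ in $\gph_{\psi,d}$ is
\[
E \defeq \bigl\{g_1,\dots,g_d \in S\bigr\} \ \cap\ \bigl\{\mu(\gen{K,g_1,\dots,g_d}) > \Lambda\bigr\}.
\]

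First I would apply Lemma~\ref{lemma:random-el-depth} to $K$ and $S$ with $\lambda = \Lambda$, noting that the lemma's ``$d$'' parameter is precisely $\depth_G(K) = d'$ and its ``$\eps$'' parameter is $\mu(S) - \Lambda > \eps$. This yields
\[
\Pr_{g_1,\dots,g_{d'} \in G}\!\bigl[g_1,\dots,g_{d'} \in S \text{ and } \mu(\gen{K,g_1,\dots,g_{d'}}) > \Lambda\bigr] \geq (\mu(S)-\Lambda)^{d'} > \eps^{d'}.
\]
Call this event $E'$.

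If $d' = d$ we are done. Otherwise, I would draw the remaining coordinates $g_{d'+1},\dots,g_d$ independently and uniformly from $G$. Each of them lies in $S$ with probability $\mu(S) > \Lambda + \eps > \eps$, so the joint probability that all of them land in $S$ is strictly greater than $\eps^{d-d'}$. Moreover, adjoining elements only enlarges $\gen{K,g_1,\dots,g_{d'}}$, so the density condition $\mu(\gen{K,g_1,\dots,g_d}) > \Lambda$ is automatically preserved on $E'$. Combining with independence of the two stages,
\[
\Pr[E] \geq \Pr[E'] \cdot \mu(S)^{d-d'} > \eps^{d'} \cdot \eps^{d-d'} = \eps^d,
\]
which says exactly that $\phi$ is adjacent to at least an $\eps^d$ fraction of left vertices.

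This reduction is essentially mechanical: the content is already packed into Lemma~\ref{lemma:random-el-depth}, so the only minor care-point is the padding step when $d' < d$. There I would just observe $\mu(S) > \eps$ and use independence, with no induction or new idea needed.
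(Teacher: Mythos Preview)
Your proof is correct and follows the same route as the paper: apply Lemma~\ref{lemma:random-el-depth} with $K=\Eq(\psi,\phi)$, $S=\Eq(f,\phi)$, and $\lambda=\Lambda$. The paper's own proof is a one-liner that invokes the lemma directly with the graph parameter $d$, tacitly using that the conclusion of Lemma~\ref{lemma:random-el-depth} persists when $d\ge\depth_G(K)$; your explicit padding step for the case $d'<d$ simply spells out what the paper leaves implicit.
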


\begin{proof}

	By Lemma~\ref{lemma:random-el-depth} with $\lambda = \Lambda$, and $S = \Eq(f, \phi)$, and $K = \Eq(\psi, \phi)$, we have that
	\[
		\Pr_{g_1, \ldots, g_d \in G} [ g_1, \dots, g_d \in \Eq(f, \psi)
		\text{ and }
		\mu(\langle \Eq(\psi, \phi) , g_1, \ldots, g_d \rangle) > \lambda  ] \geq \eps^d.
	\]
	
	So, $\phi$ is adjacent to at least $\eps^d$ fraction of the tuples $(g_1, \dots, g_d)$.
\end{proof}

We would also like to bound the degree of a left vertex from above. For abelian groups, we will do this in Section~\ref{section:abelian-comb}, in Corollary~\ref{cor:abelian-left-vertex-degree}. For alternating groups, we do this by splitting the graph $\gph_{\psi, d}$ into subgraphs based on the 

sub-buckets $\sbk_{\psi, K}$ defined in the previous subsection. We will use the following lemma.

\begin{lemma}
	\label{lem:unique-homomorphism-K-g}
	Let $K \leq G$ be a subgroup, $\psi \in \Hom(G, H)$ a homomorphism, $d$ a nonnegative integer, and $g_1, \dots, g_d \in G$. If $\mu(\gen{K, g_1, \dots, g_d}) > \Lambda$, then there is at most one homomorphism $\phi \in \Hom(G, H)$ such that $K \leq \Eq(\psi, \phi)$ and $g_1, \dots, g_d \in \Eq(f, \phi)$.
\end{lemma}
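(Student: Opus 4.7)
The proof will be a short contradiction argument built on the earlier agreement machinery. My plan is to suppose that two distinct homomorphisms $\phi_1, \phi_2 \in \Hom(G, H)$ both satisfy the stated conditions, and derive that their equalizer has density strictly greater than $\Lambda_{G,H}$, which is impossible by Corollary~\ref{cor:lambda-subgroup}.

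The key observation is that the two hypotheses pin down the values of any such $\phi$ on a large generating set. From $K \leq \Eq(\psi, \phi_i)$ for $i=1,2$ we get $\phi_1|_K = \psi|_K = \phi_2|_K$, and from $g_j \in \Eq(f,\phi_i)$ we get $\phi_1(g_j) = f(g_j) = \phi_2(g_j)$ for each $j = 1,\dots,d$. Hence $\phi_1$ and $\phi_2$ agree on the set $S := K \cup \{g_1,\dots,g_d\}$, and therefore on $\gengroup{S} = \gen{K, g_1,\dots,g_d}$ (the standard fact that two homomorphisms agreeing on a generating set agree on the group they generate; this is the non-affine specialization of the fact cited just before Corollary~\ref{cor:agree-on-coset-generators}).

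Now, by Fact~\ref{fact:equalizer-coset}(a), $\Eq(\phi_1,\phi_2)$ is a subgroup of $G$, and it contains $\gen{K, g_1,\dots,g_d}$. By hypothesis, $\mu(\gen{K, g_1, \dots, g_d}) > \Lambda$, so $\mu(\Eq(\phi_1,\phi_2)) > \Lambda$. If $\phi_1 \neq \phi_2$, then $\Eq(\phi_1,\phi_2)$ is a proper subgroup of $G$, so by Corollary~\ref{cor:lambda-subgroup}, $\mu(\Eq(\phi_1,\phi_2)) \leq \Lambda$, a contradiction. Therefore $\phi_1 = \phi_2$. (Equivalently, one can invoke Corollary~\ref{cor:agree-on-coset-generators} directly, since the affine closure of $S$ contains $\gengroup{S}$.)

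There is no real obstacle here; the lemma is essentially a packaging of Corollary~\ref{cor:agree-on-coset-generators} into a form convenient for the bipartite degree bound that follows in Section~\ref{section:alternating}. The only minor subtlety is noting that $\gengroup{S}$ equals $\gen{K, g_1,\dots,g_d}$ (which holds because $K$ is already a subgroup, so adjoining generators $g_1,\dots,g_d$ yields the same object whether one takes the group closure of $K \cup \{g_1,\dots,g_d\}$ or adjoins the $g_j$'s to $K$).
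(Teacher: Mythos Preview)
Your proof is correct and follows essentially the same approach as the paper: assume two homomorphisms $\phi_1,\phi_2$ satisfy the hypotheses, observe they agree on $K\cup\{g_1,\dots,g_d\}$ and hence on $\gen{K,g_1,\dots,g_d}$, and conclude $\agr(\phi_1,\phi_2)>\Lambda$, forcing $\phi_1=\phi_2$. One small citation slip: Corollary~\ref{cor:lambda-subgroup} gives $\Lambda\le\max\{\mu(K):K\lneq G\}$, not the reverse inequality you invoke; the step $\mu(\Eq(\phi_1,\phi_2))\le\Lambda$ for $\phi_1\ne\phi_2$ follows directly from the definition of $\Lambda$ (or, as you note parenthetically, from Corollary~\ref{cor:agree-on-coset-generators}).
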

\begin{proof}

	If there were two such homomorphisms $\phi_1$ and $\phi_2$, we would have that $K \leq \Eq(\psi, \phi_1) \cap \Eq(\psi, \phi_2) \leq \Eq(\phi_1, \phi_2)$, and $g_1, \dots, g_d \in \Eq(f, \phi_1) \cap \Eq(f, \phi_2) \leq \Eq(\phi_1, \phi_2)$ so $\agr(\phi_1, \phi_2) \geq \mu(\gen{K, g_1, \dots, g_d}) > \Lambda$, so $\phi_1 = \phi_2$.
\end{proof}

As an application of Lemma~\ref{lem:unique-homomorphism-K-g}, we can bound the size of the sub-buckets $\sbk_{\psi, K}$ for $K$ of low depth, which will be useful in proving that alternating groups are universally CombEcon. We do this in the next corollary. This lemma will also be used when $G$ is a shallow random generation group (See Section~\ref{section:SRG}).

\begin{corollary}[Sub-bucket bound for low-depth label subgroups] 
	\label{cor:bucket-bound}
	Let $G$ be a finite group, $H$ a group, $K \leq G$ a subgroup, $f \from G \to H$, and $\eps > 0$.

	Then, 
	$$\abs{\sbk_{\psi, K}} \leq 1/\eps^{\depth_G(K)}.$$
\end{corollary}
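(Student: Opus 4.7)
The plan is to apply a double-counting argument on a bipartite incidence structure, using the two lemmas just established. Set $d = \depth_G(K)$, and consider pairs $(\phi, \vec g)$ with $\phi \in \sbk_{\psi,K}$ and $\vec g = (g_1, \ldots, g_d) \in G^d$ satisfying the two conditions (i) $g_i \in \Eq(f, \phi)$ for every $i$, and (ii) $\mu(\gen{K, g_1, \ldots, g_d}) > \Lambda$. I shall count these pairs in two ways.

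First I would fix $\phi \in \sbk_{\psi, K}$ and count the number of good tuples $\vec g$ for $\phi$ from below. Since $\phi \in \highagr$ we have $\mu(\Eq(f, \phi)) \geq \Lambda + \eps$. Also $K \leq \Eq(\psi, \phi) \leq G$, so $\depth_G(K) = d$ makes $K$ suitable as the starting subgroup in Lemma~\ref{lemma:random-el-depth}. Applying the lemma with $\lambda = \Lambda$ and $S = \Eq(f, \phi)$, the gap $\mu(S) - \lambda$ is at least $\eps$, hence a uniformly random $\vec g \in G^d$ satisfies (i) and (ii) with probability at least $\eps^d$. Thus at least $\eps^d |G|^d$ tuples $\vec g$ are good for $\phi$.

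Next I would fix a good tuple $\vec g$ and bound the number of $\phi \in \sbk_{\psi, K}$ it pairs with from above. Every such $\phi$ satisfies $K \leq \Eq(\psi, \phi)$ (by membership in $\sbk_{\psi, K}$) and $g_i \in \Eq(f, \phi)$ for every $i$; since $\mu(\gen{K, g_1, \ldots, g_d}) > \Lambda$, Lemma~\ref{lem:unique-homomorphism-K-g} forces there to be at most one such $\phi$. Combining,
\[
|\sbk_{\psi, K}| \cdot \eps^d |G|^d \leq (\text{number of good pairs}) \leq |G|^d \cdot 1,
\]
so $|\sbk_{\psi, K}| \leq 1/\eps^d = 1/\eps^{\depth_G(K)}$, as desired. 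There is no real obstacle; the result is a direct synthesis of the bipartite generation-graph lemmas, with the only subtlety being to verify that the $\lambda, \eps, S$ in Lemma~\ref{lemma:random-el-depth} translate correctly to the quantities of interest here.
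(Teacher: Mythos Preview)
Your proof is correct and follows essentially the same approach as the paper: a double-counting argument on a bipartite incidence between $G^d$ and $\sbk_{\psi,K}$, using Lemma~\ref{lemma:random-el-depth} for the lower bound on right degrees and Lemma~\ref{lem:unique-homomorphism-K-g} for the upper bound on left degrees. The only cosmetic difference is that the paper phrases this via the induced subgraph of $\gph_{\psi,d}$ and invokes Lemma~\ref{lem:right-degree-big-flex} (itself a wrapper around Lemma~\ref{lemma:random-el-depth}), whereas you define the incidence directly with $K$ in place of $\Eq(\psi,\phi)$; your formulation is arguably cleaner since the uniqueness lemma then applies verbatim.
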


\begin{proof}
	Let $d = \depth_G(K)$. Consider the induced subgraph of $\gph_{\psi, d}$ with the same left vertex set, and with right vertex set $\sbk_{\psi, K}$. By Lemma~\ref{lem:unique-homomorphism-K-g}, each left vertex has degree at most $1$. By Lemma~\ref{lem:right-degree-big-flex}, each right vertex is adjacent to at least an $\eps^d$ fraction of the left vertices. Apply part~(a) of
Lemma~\ref{lemma:mean-comb-bipartite} (double counting).
\end{proof}

We will use this corollary in Section~\ref{section:alt-tools-2}.

\section{Homomorphism Codes with finite abelian domain and arbitrary codomain}
\label{section:abelian}

In this section we show that finite abelian groups are universally combinatorially and algorithmically economically list-decodable. The key technical result is Theorem~\ref{thm:codomain-few-abelian-subgroups}, which says that there are a small number of abelian subgroups of the codomain such that every homomorphism in the list maps into one of these subgroups.

In Section~\ref{sec:abelian-enlargements} we introduce a tool called an \emph{abelian enlargement}.
Using this tool, in Section~\ref{section:abelian-comb} we prove Theorem~\ref{thm:codomain-few-abelian-subgroups} (the key result mentioned in the previous paragraph) and infer that abelian groups are universally CombEcon. 
In Section~\ref{sec:abelian-to-anything-algorithm} we 
adapt the algorithm of~\cite{DGKS08, GS14}, to
give an algorithm to locally list-decode these codes.

In Section~\ref{sec:Lambda-for-G-or-H-solvable} we describe $\Lambda_{G, H}$ for these and a few other codes, slightly generalizing a result of Guo~\cite{Guo15}. Our proof of CombEcon only uses that taking a subset of the codomain does not increase $\Lambda$. 

We remark that these codes usually cannot be list-decoded beyond radius 
$1 - (\Lambda_{G, H} + \eps)$ (see Remark~\ref{rmk:blowup}).

\subsection{Abelian enlargements}
\label{sec:abelian-enlargements}

Throughout this section, let $G$ be a finite abelian group, and $H$ a group (finite or infinite).

To prove that abelian groups are universally CombEcon, we will follow the outline given in Section~\ref{sec:bucket-splitting}. That is, we divide the list $\highagr = \highagr(\aHom(G, H), f, \Lambda + \eps$ into buckets $\buckl\psi$, and then subdivide each bucket into sub-buckets $\abkpm$ for a small number of abelian subgroups $M \leq H$.  (Actually, we will use $\Hom$ in place of $\aHom$, but it does not matter by Lemma~\ref{lemma:mean-central-aHom-relationship}.)

To select the subgroups $M$, we introduce an operation that we call an \emph{abelian enlargement}. In Section~\ref{section:abelian-comb}, we will use this operation in our proof that abelian groups are universally CombEcon. For a subset $T$ and a finite abelian subgroup $B$ of a group $H$, the abelian $B$-enlargement of $T$ in $H$ is the group generated by $T$ along with every element of $B$ that commutes with $T$. 

If 

$\phi, \psi \in \Hom(G, H)$ are homomorphisms, then the abelian $\psi(G)$-enlargement of $\phi(G)$ will certainly still include $\phi(G)$. But also, if $\phi$ and $\psi$ have large agreement, then most of $\phi(G)$ is contained in $\psi(G)$, so even if we take a single element of $\phi(G)$, then when we enlarge it by $\psi(G)$ it is likely that the result will already contain all of $\phi(G)$.

This will help us bound the number of subgroups $M \leq H$ we need, and thus the number of sub-buckets.

For any $\phi$ in the bucket $\buckl\psi$, 

its image is likely to be contained 

in the enlargement by $\psi(G)$ of a single random element of $f(G)$.

This is the method by which we choose our subgroups~$M$.

\begin{definition}
	For $H$ a group, $B \leq H$ a subgroup, and $T \subseteq H$ a subset, define the \defn{abelian $B$-enlargement of $T$ in $H$} to be
	\[
	\sig_B(T) = \gen{T, C_H(T) \cap B},
	\]
	where $C_H(T)$ denotes the centralizer of $T$ in $H$.

	For $h \in H$, we may write $\sig_B(h)$ in place of $\sig_B(\{h\})$.

\end{definition}

We note that if $\gen{T_1} = \gen{T_2}$, then $\sig_B(T_1) = \sig_B(T_2)$.

\begin{lemma}
	\label{lem:sig-finite-abelian}
	For $B \leq H$ a finite abelian subgroup and $T \subseteq H$ a set such that $\gen{T}$ is a finite abelian group, $\sig_B(T)$ is a finite abelian group.
\end{lemma}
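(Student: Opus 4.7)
The plan is to establish two properties separately: commutativity and finiteness of $\sig_B(T) = \gen{T, C_H(T) \cap B}$.

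First I would verify commutativity by checking that the generating set consists of pairwise commuting elements. Elements within $T$ commute with each other because $\gen{T}$ is abelian by hypothesis. Elements within $C_H(T) \cap B$ commute with each other because $C_H(T) \cap B \le B$ and $B$ is abelian. Finally, any element of $C_H(T) \cap B$ commutes with every element of $T$ by the definition of the centralizer $C_H(T)$. Since every pair of generators commutes, the group $\sig_B(T)$ they generate is abelian.

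Next I would show finiteness by observing that $\sig_B(T)$ can be written as a product of two finite commuting subgroups. Set $A_1 = \gen{T}$ and $A_2 = C_H(T) \cap B$; both are finite (the former by hypothesis, the latter as a subset of the finite group $B$). Since every element of $A_2$ centralizes $T$, it centralizes all of $A_1 = \gen{T}$, so $A_1$ and $A_2$ commute element-wise. Hence $\sig_B(T) = A_1 A_2$, and
\[
\abs{\sig_B(T)} = \frac{\abs{A_1}\cdot \abs{A_2}}{\abs{A_1 \cap A_2}} < \infty.
\]

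Both steps are elementary; there is no real obstacle here. The lemma is essentially a packaging statement needed for later use of abelian enlargements, and the proof is just unwinding the definitions of centralizer and generated subgroup.
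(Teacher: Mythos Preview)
Your proof is correct and follows essentially the same approach as the paper: both argue that $\sig_B(T)$ is generated by the two finite abelian subgroups $\gen{T}$ and $C_H(T)\cap B$, which commute element-wise by the definition of the centralizer, so the result is finite abelian. Your version is in fact slightly more careful than the paper's (which loosely calls the product a ``direct product'' without noting the possible intersection).
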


In fact, we will only be concerned with the case that $B \leq H$ is a finite abelian subgroup, and $\gen{T}$ is abelian.

\begin{proof}
	Every element of $T$ commutes with every element of $C_H(T)$ by the definition of the centralizer. So, $\sig_B(T)$ is the direct product of $\gen{T}$ and $C_H(T) \cap B$. The group $\gen{T}$ is finite abelian by assumption, and $C_H(T) \cap B$ is finite abelian because it is a subgroup of $B$. So, $\sig_B(T)$ is a finite abelian group.

\end{proof}

\begin{lemma}
	\label{lem:sig-expand}
	For $B$ and $T$ as above, and $U \subseteq \sig_B(T)$, we have that $\sig_B(T) = \sig_B(T \cup U)$.
\end{lemma}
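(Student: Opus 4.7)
The plan is to prove the two inclusions $\sig_B(T\cup U)\subseteq \sig_B(T)$ and $\sig_B(T)\subseteq \sig_B(T\cup U)$ by checking in each case that the generators of the left-hand side lie in the right-hand side. Throughout, I will use that for any $X\subseteq Y$ we have $C_H(Y)\subseteq C_H(X)$.

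First I would handle the easy inclusion $\sig_B(T\cup U)\subseteq \sig_B(T)$. The generators of $\sig_B(T\cup U)$ are $T\cup U$ together with $C_H(T\cup U)\cap B$. By definition $T\subseteq \sig_B(T)$, and by the hypothesis $U\subseteq \sig_B(T)$, so $T\cup U\subseteq \sig_B(T)$. Moreover $C_H(T\cup U)\cap B \subseteq C_H(T)\cap B \subseteq \sig_B(T)$. Hence every generator of $\sig_B(T\cup U)$ lies in $\sig_B(T)$.

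The reverse inclusion $\sig_B(T)\subseteq \sig_B(T\cup U)$ is where the previous lemma does the real work. The generators of $\sig_B(T)$ are $T$ and $C_H(T)\cap B$. Clearly $T\subseteq T\cup U\subseteq \sig_B(T\cup U)$, so the only thing to check is that an arbitrary element $b\in C_H(T)\cap B$ also lies in $\sig_B(T\cup U)$. By Lemma~\ref{lem:sig-finite-abelian} the group $\sig_B(T)$ is abelian; since both $b\in C_H(T)\cap B$ and $U$ (by hypothesis) lie inside $\sig_B(T)$, the element $b$ commutes with every element of $U$. Combined with $b\in C_H(T)$, this gives $b\in C_H(T\cup U)$, and since $b\in B$ we conclude $b\in C_H(T\cup U)\cap B\subseteq \sig_B(T\cup U)$.

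The main (and only real) obstacle is recognizing that the abelianness of $\sig_B(T)$, supplied by Lemma~\ref{lem:sig-finite-abelian}, is exactly what upgrades ``$b$ centralizes $T$'' to ``$b$ centralizes $T\cup U$.'' Without that, one would only get the trivial inclusion. Combining the two inclusions yields the equality, and no further calculation is needed.
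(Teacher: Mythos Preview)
Your proof is correct and follows essentially the same approach as the paper: both directions are handled by checking generators, with the key step being that abelianness of $\sig_B(T)$ (from Lemma~\ref{lem:sig-finite-abelian}) forces $C_H(T)\cap B$ to centralize $U$. The only cosmetic difference is that you treat the two inclusions in the opposite order and phrase the centralizer argument element-wise rather than via inclusions of centralizers.
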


\begin{proof}
	First, we show that $\sig_B(T) \leq \sig_B(T \cup U)$. Since $\sig_B(T)$ is abelian, we have that 
	\[
		C_H(T) \cap B \leq \sig_B(T) \leq C_H(\sig_B(T)) \leq C_H(U).
	\]
	So, 
	\[
		C_H(T) \cap B \leq C_H(T) \cap C_H(U) \cap B = C_H(T \cup U) \cap B \leq \sig_B(T \cup U).
	\]
	Since also $T \subseteq \sig_B(T \cup U)$, we have that $\sig_B(T) \leq \sig_B(T \cup U)$.

	Next, we show that $\sig_B(T \cup U) \leq \sig_B(T)$. We have that $T \subseteq \sig_B(T)$, that $U \subseteq \sig_B(T)$, and $C_H(T \cup U) \cap B \leq C_H(T) \cap B \leq \sig_B(T)$. So, $\sig_B(T \cup U) = \gen{T \cup U, C_H(T \cup U) \cap B} \leq \sig_B(T)$.
\end{proof}

\begin{proposition}
	\label{prop:sig-generate}

	Let $\phi, \psi \in \Hom(G, H)$ and $A \subseteq G$ such that $\gen{A, \Eq(\psi, \phi), \ker \phi} = G$.
	
	Then $\sig_{\psi(G)}(\phi(A)) = \sig_{\psi(G)}(\phi(G))$.
\end{proposition}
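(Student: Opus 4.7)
The plan is to reduce the identity $\sig_{\psi(G)}(\phi(A)) = \sig_{\psi(G)}(\phi(G))$ to a single inclusion. Writing $B := \psi(G)$, the inclusion $\phi(A) \subseteq \phi(G)$ does not on its own yield $\sig_B(\phi(A)) \subseteq \sig_B(\phi(G))$, since $\sig_B$ is not obviously monotone in its argument (enlarging $T$ can shrink $C_H(T)$). Instead I would aim to prove $\phi(G) \subseteq \sig_B(\phi(A))$ and then invoke Lemma~\ref{lem:sig-expand} with $T = \phi(A)$ and $U = \phi(G) \setminus \phi(A)$ to obtain $\sig_B(\phi(A)) = \sig_B(\phi(A) \cup U) = \sig_B(\phi(G))$.

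To prove $\phi(G) \subseteq \sig_B(\phi(A))$, the workhorse is the hypothesis $\gen{A, \Eq(\psi,\phi), \ker\phi} = G$ combined with the abelianness of $G$. Set $E := \Eq(\psi,\phi)$ and $K := \ker\phi$; both are subgroups of $G$ (Fact~\ref{fact:equalizer-coset} and the basic properties of the kernel). Since $G$ is abelian, the subgroup generated by $A \cup E \cup K$ factors as the product $\gen{A} \cdot E \cdot K$, so every $g \in G$ admits a decomposition $g = a\, e\, k$ with $a \in \gen{A}$, $e \in E$, $k \in K$. Applying $\phi$ and using $k \in \ker\phi$ together with $\phi(e) = \psi(e)$ for $e \in E$, I get
\[
 \phi(g) \;=\; \phi(a)\cdot \psi(e).
\]

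It remains to check that each of the two factors lies in $\sig_B(\phi(A))$. The factor $\phi(a)$ lies in $\phi(\gen{A}) = \gen{\phi(A)} \subseteq \sig_B(\phi(A))$. For the factor $\psi(e)$, observe that $\psi(e) = \phi(e) \in \phi(G)$; since $\phi(G)$ is abelian (as a quotient of $G$), $\psi(e)$ commutes with every element of $\phi(A) \subseteq \phi(G)$, so $\psi(e) \in C_H(\phi(A))$. Combined with $\psi(e) \in \psi(G) = B$, this gives $\psi(e) \in C_H(\phi(A)) \cap B \subseteq \sig_B(\phi(A))$. Hence $\phi(g) \in \sig_B(\phi(A))$ for every $g \in G$, and the reduction in the first paragraph closes the argument.

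I do not anticipate a major obstacle here; the only subtlety is resisting the temptation to argue by monotonicity of $\sig_B$ and instead routing through Lemma~\ref{lem:sig-expand}. The key conceptual point is that abelianness of $G$ lets us split each element into a contribution from $A$, an equalizer contribution that $\phi$ and $\psi$ agree on (so it appears inside $\psi(G) = B$), and a kernel contribution that $\phi$ annihilates.
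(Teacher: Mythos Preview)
Your proof is correct and follows essentially the same route as the paper: both establish $\phi(G)\subseteq\sig_B(\phi(A))$ via the observation that $\phi(\Eq(\psi,\phi))=\psi(\Eq(\psi,\phi))\subseteq B\cap C_H(\phi(A))$ (using abelianness of $\phi(G)$), and then invoke Lemma~\ref{lem:sig-expand}. The only cosmetic difference is that the paper works at the level of the subset $U=\phi(\Eq(\psi,\phi))$ and uses $\gen{T\cup U}=\phi(G)$ directly, whereas you decompose each $g\in G$ as $g=aek$ and argue element-by-element; the content is the same.
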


\begin{proof}
	Since $G$ is finite abelian, so are $\phi(G)$ and $\psi(G)$. Let $B = \psi(G)$. Let $T = \phi(A)$. Let $U = \phi(\Eq(\psi, \phi))$. Since $T, U \subseteq \phi(G)$, and $\phi(G)$ is abelian, $U \leq C_H(T)$. And, since $U = \psi(\Eq(\psi, \phi))$, we have that $U \leq \psi(T) = B$. Thus, $U \leq C_H(T) \cap B \leq \sig_B(T)$. 
	
	Also, $\gen{T \cup U} = \gen{T, U, 1} = \gen{\phi(A), \phi(\Eq(\psi, \phi)), \phi(\ker \phi)} = \phi(\gen{A, \Eq(\psi, \phi), \ker \phi}) = \phi(G)$.
	
	Therefore, by Lemma~\ref{lem:sig-expand},
	\begin{equation}
	\sig_{\psi(G)}(\phi(A)) = \sig_B(T) = \sig_B(T \cup U) = \sig_B(\gen{T \cup U}) = \sig_{\psi(G)}(\phi(G)).
	\end{equation}
\end{proof}

\begin{corollary}
	\label{cor:enlargement-contains-image}
	Let 

	$\phi$, $\psi$, and $A$ be as above. Then $\phi(G) \leq \sig_{\psi(G)}(\phi(A))$.
\end{corollary}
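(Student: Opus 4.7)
The plan is to derive this immediately from Proposition~\ref{prop:sig-generate} combined with the definition of the abelian enlargement. The hypothesis of the corollary matches exactly the hypothesis of the proposition, so I can invoke it to obtain the identity
\[
\sig_{\psi(G)}(\phi(A)) = \sig_{\psi(G)}(\phi(G)).
\]
Thus it suffices to show that $\phi(G) \leq \sig_{\psi(G)}(\phi(G))$.

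This last containment is a trivial unfolding of the definition: $\sig_B(T) = \langle T, C_H(T) \cap B\rangle$ contains the subgroup generated by $T$. Taking $T = \phi(G)$, which is already a subgroup, we get $\phi(G) = \langle \phi(G)\rangle \leq \sig_{\psi(G)}(\phi(G))$. Chaining this with the displayed equality yields $\phi(G) \leq \sig_{\psi(G)}(\phi(A))$, as required.

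There is no real obstacle: the corollary is a one-line consequence of the preceding proposition. The only point worth checking is that the hypotheses of Proposition~\ref{prop:sig-generate} really are satisfied (they are, verbatim), and that $\sig_B$ applied to a set always contains that set, which is immediate from its definition as a generated subgroup. No new computation is needed.
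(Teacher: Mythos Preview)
Your proof is correct and matches the paper's approach exactly: the paper states the corollary without proof, treating it as immediate from Proposition~\ref{prop:sig-generate} together with the trivial containment $\phi(G) \leq \sig_{\psi(G)}(\phi(G))$.
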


\subsection{Combinatorial list-decodability, finite abelian to anything}
\label{section:abelian-comb}

In this section, we establish that finite abelian groups are universally CombEcon.

Throughout this section, let $G$ be a finite abelian group, and $H$ an arbitrary group (finite or infinite). Let $f \from G \to H$ be a received word. Let $\eps > 0$. Let $\highagr = \highagr(\Hom(G, H), f, \Lambda + \eps)$ be the list (note that in this section we deal with the code of homomorphisms, rather than affine homomorphisms; however, we can convert between the two; see Section~\ref{sec:Hom-versus-aHom}). The list $\highagr$ is divided into buckets $\buckl\psi$ for $\psi \in \Psi$, where $\Psi$ is as in Lemma~\ref{lemma:sphere-packing}.

We will see that there is a small set of abelian subgroups $M \leq H$ such that every $\phi \in \highagr$ has its image in some $M$.

Dinur, Grigorescu, Kopparty, and Sudan \cite{DGKS08} proved that $\aHom(G, H)$ is CombEcon (and in fact, AlgEcon) for all finite abelian groups $G$ and $H$. 

\begin{theorem}[DGKS 2008]
	\label{thm:DGKS-CombEcon}
	The class $\abelgroups \times \abelgroups$ of pairs of abelian groups is CombEcon.
\end{theorem}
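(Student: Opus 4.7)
The plan is to apply the sphere-packing plus bucket-estimation strategy from Section~\ref{section:strategy-for-CombEcon} in the special setting where both $G$ and $H$ are abelian. First I would split $\highagr(\Hom(G,H), f, \Lambda_{G,H} + \eps)$ into $O(1/\eps^2)$ buckets $\buckl\psi$ via Lemma~\ref{lemma:sphere-packing}, reducing the problem to a $\poly(1/\eps)$ bound on each $|\buckl\psi|$. Next I would reduce to $p$-primary components: writing $G = \prod_p G_p$ and $H = \prod_p H_p$ and using that any homomorphism splits as $\phi = \prod_p \phi_p$ (elements of coprime order must map to elements of coprime order), the code factors as $\Hom(G,H) = \prod_p \Hom(G_p, H_p)$. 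Combined with mean-list-decoding (Theorem~\ref{thm:main-meanLD-iff-LD}) this reduces CombEcon for $(G,H)$ to CombEcon for each $p$-group pair $(G_p, H_p)$ individually, with the product taken over the finitely many primes dividing $|G|$.

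For a fixed prime $p$, I would induct on the length of the filtration $G_p \supseteq pG_p \supseteq p^2G_p \supseteq \cdots$, whose successive quotients are elementary abelian. The base case, in which both $G_p$ and $H_p$ are elementary abelian and hence $\FF_p$-vector spaces, reduces to the Hadamard-code setting of Goldreich--Levin~\cite{GL89} and its extension by Grigorescu--Kopparty--Sudan~\cite{GKS06}, giving a quadratic-degree $\poly(1/\eps)$ bound. For the inductive step, a candidate $\phi \in \buckl\psi$ restricts to $pG_p$, yielding a homomorphism in a $\poly(1/\eps)$-size list by induction; for each such restriction, the lift back up to $G_p$ is determined up to a coset of a homomorphism group that is again controlled by an elementary abelian list-decoding instance. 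Multiplying the list sizes across the constantly many layers of the filtration yields a $\poly(1/\eps)$ bound on $|\buckl\psi|$, which combined with Lemma~\ref{lemma:sphere-packing} gives the theorem.

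The main obstacle is controlling the polynomial degree through the nested induction: each layer of the filtration naively multiplies the degree by a constant, and the base-case Goldreich--Levin degree is already quadratic, so a crude accounting could make the exponent grow with the number of invariant factors of $G$ (hence with $\log|G|$), which is unacceptable for CombEcon. The resolution is to exploit the bucket structure itself: within $\buckl\psi$, every candidate already shares a high-density equalizer with $\psi$, so at each layer of the filtration only a ``local'' lifting step — not a fresh full list-decoding — genuinely expands the list. This keeps the exponent absolute (independent of $|G|$) and yields the degree $C \approx 105$ reported in the introduction.
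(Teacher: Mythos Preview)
The paper does not prove this theorem at all: it is quoted as a prior result of Dinur, Grigorescu, Kopparty, and Sudan~\cite{DGKS08} and used as a black box (see the sentence immediately preceding the theorem statement and the proof of Corollary~\ref{cor:abelian-to-anything-combecon}). So there is no ``paper's own proof'' to compare against; your proposal is an attempt to reprove a cited result, not to fill in an argument the paper gives.

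As for your sketch itself, the serious gap is exactly the one you flag but do not close. Your induction is over the layers of the $p$-filtration $G_p \supseteq pG_p \supseteq p^2 G_p \supseteq \cdots$, whose length is the exponent of $p$ in $|G_p|$ and hence can be $\Theta(\log|G|)$. A naive layer-by-layer lifting multiplies the list-size bound by a factor at each step, driving the exponent of $1/\eps$ to grow with $\log|G|$, which would violate CombEcon. Your proposed fix---that within a bucket $\buckl\psi$ only a ``local'' lift is needed at each layer---is asserted, not argued: you have not said what the local lift is, why it costs only a factor independent of the layer index, or why the accumulated product over $\Theta(\log|G|)$ layers stays $\poly(1/\eps)$ with an absolute exponent. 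This is precisely the hard part of the DGKS argument; their proof does not proceed by a simple filtration induction but uses a more delicate analysis (and the degree $C \approx 105$ you mention actually comes from~\cite{GS14}'s refinement). Separately, your reduction to $p$-primary pieces via mean-list-decoding is not quite right as stated: the received word $f$ need not split as a product over primes, so you cannot directly invoke Theorem~\ref{thm:main-meanLD-iff-LD} to factor the problem that way.
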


The following theorem, combined with the DGKS result, lets us conclude that $\Hom(G, H)$ (and thus $\aHom(G, H)$) is CombEcon.

\begin{theorem}
	\label{thm:codomain-few-abelian-subgroups}

	There exists a set $\abelsubs$ of finite abelian subgroups of $H$ with $\displaystyle{\abs{\abelsubs} \leq \frac{1}{4 (\Lambda + \eps) \eps^2} + \frac{1}{\eps}}$ such that for all $\phi \in \highagr$, 

	there is $M \in \abelsubs$ such that $\phi(G) \leq M$.
\end{theorem}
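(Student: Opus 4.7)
The plan is to combine bucket-splitting with the abelian enlargement operation. First, apply Lemma~\ref{lemma:bucket-splitting} to obtain $\Psi \subseteq \highagr$ with $\abs{\Psi} \leq \frac{1}{4(\Lambda+\eps)\eps} + 1$ and $\highagr \subseteq \bigcup_{\psi \in \Psi} \buckl\psi$. It then suffices to construct, for each $\psi \in \Psi$, a family $\abelsubs_\psi$ of finite abelian subgroups of $H$ with $\abs{\abelsubs_\psi} \leq 1/\eps$ such that every $\phi \in \buckl\psi$ satisfies $\phi(G) \leq M$ for some $M \in \abelsubs_\psi$; taking $\abelsubs = \bigcup_{\psi \in \Psi} \abelsubs_\psi$ then yields the claimed count.

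For each $\psi \in \Psi$, I would define
\[
\abelsubs_\psi := \{M \leq H : M \text{ finite abelian and } \mu_G(\{g \in G : \sig_{\psi(G)}(f(g)) = M\}) \geq \eps\}.
\]
The level sets $\{g : \sig_{\psi(G)}(f(g)) = M\}$ are pairwise disjoint over distinct $M$, so pigeonhole gives $\abs{\abelsubs_\psi} \leq 1/\eps$. To establish coverage, fix $\phi \in \buckl\psi$ and let $M^* := \sig_{\psi(G)}(\phi(G))$, which by Lemma~\ref{lem:sig-finite-abelian} is a finite abelian subgroup of $H$ containing $\phi(G)$. Setting $N := \gen{\Eq(\psi,\phi), \ker\phi}$, Proposition~\ref{prop:sig-generate} applied with $A = \{g\}$ says that whenever $g \in \Eq(f,\phi)$ and $\gen{g, N} = G$, one has $\sig_{\psi(G)}(f(g)) = \sig_{\psi(G)}(\phi(g)) = M^*$. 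Thus if at least $\eps \abs{G}$ such ``good'' $g$ exist, then $M^* \in \abelsubs_\psi$ and $\phi(G) \leq M^*$ as desired.

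The main obstacle is that $\gen{g, N} = G$ can hold for some $g$ only when $G/N$ is cyclic, which a priori is not guaranteed. I expect to show, however, that $\phi \in \buckl\psi$ forces cyclicity. The ingredients are: (i) $\mu(N) \geq \mu(\Eq(\psi,\phi)) > \Lambda^2$, so $[G:N] < 1/\Lambda^2$; (ii) when $\phi$ is nontrivial, letting $p'$ be the smallest prime dividing $\abs{\phi(G)}$, one uses the primary decomposition of the finite abelian group $\phi(G) \leq H$ to build a non-identity endomorphism $\beta$ of $\phi(G)$ whose fixed set is a subgroup of index $p'$ (for instance, multiplying the $p'$-primary factor by $1 + p'^{e-1}$). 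Then $\beta \circ \phi \in \Hom(G,H)$ is distinct from $\phi$ with equalizer $\phi^{-1}(\operatorname{Fix}\beta)$ of density $1/p'$, so $\Lambda \geq 1/p'$. Combining, $[G:N] < 1/\Lambda^2 \leq p'^2$. Since $G/N \cong \phi(G)/\phi(N)$, the index $[G:N]$ divides $\abs{\phi(G)}$; but any composite divisor of $\abs{\phi(G)}$ is at least $p'^2$, so $[G:N]$ must be $1$ or a prime, making $G/N$ cyclic. The corner cases ($\phi$ the trivial hom, or $\phi = \psi$) force $N = G$ and are immediate.

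With $G/N$ cyclic of trivial or prime order, a ``good'' $g$ is any element outside $N$ (which automatically generates $G/N$ in the prime-order case) or any $g$ at all (when $N = G$). When $N \lneq G$, Corollary~\ref{cor:lambda-subgroup} gives $\mu(N) \leq \Lambda$, so
\[
\mu_G(\Eq(f,\phi) \setminus N) \geq \mu(\Eq(f,\phi)) - \mu(N) > (\Lambda + \eps) - \Lambda = \eps,
\]
producing the required $\eps \abs{G}$ good elements; the case $N = G$ is immediate since $\mu(\Eq(f,\phi)) > \Lambda + \eps > \eps$. Hence $M^* \in \abelsubs_\psi$ with $\phi(G) \leq M^*$, and summing over $\psi \in \Psi$ yields $\abs{\abelsubs} \leq \abs{\Psi}/\eps \leq \frac{1}{4(\Lambda+\eps)\eps^2} + \frac{1}{\eps}$.
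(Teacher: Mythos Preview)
Your argument is correct and follows essentially the same route as the paper: bucket-splitting via $\Psi$, then for each $\psi$ showing that $\sig_{\psi(G)}(\phi(G))$ coincides with $\sig_{\psi(G)}(f(g))$ for at least an $\eps$-fraction of $g$, which the paper packages through the bipartite graph $X_{\psi,1}$ and the irrelevant kernel while you do it directly via $\ker\phi$ and pigeonhole on level sets. One correction: your appeal to Corollary~\ref{cor:lambda-subgroup} for $\mu(N)\le\Lambda$ is misplaced (that corollary only upper-bounds $\Lambda$ by the largest proper-subgroup density); the inequality you need actually follows from your own step~(ii), since when $N\lneq G$ you have $[G:N]=q$ a prime dividing $|\phi(G)|$, hence $q\ge p'$ and $\mu(N)=1/q\le 1/p'\le\Lambda$.
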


\begin{corollary}
	\label{cor:abelian-to-anything-combecon}

	Finite abelian groups are universally CombEcon. Specifically, let $C$ be a constant such that 

	$\ell(\aHom(G, H), \Lambda + \eps) \leq (\frac{1}{\eps})^C$ for $G, H$ finite abelian groups. Then 

	$\ell(\aHom(G, H), \Lambda + \eps) \leq O((\frac{1}{\eps})^{C + 4})$ for $G$ a finite abelian group and $H$ and arbitrary group.

\end{corollary}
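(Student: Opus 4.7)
My plan is to combine the structural reduction of Theorem~\ref{thm:codomain-few-abelian-subgroups} with the Dinur--Grigorescu--Kopparty--Sudan abelian-to-abelian bound (Theorem~\ref{thm:DGKS-CombEcon}), bridging the two via the $\aHom$-to-$\Hom$ conversion given by Lemma~\ref{lemma:mean-central-aHom-relationship}. Since Theorem~\ref{thm:codomain-few-abelian-subgroups} is stated for $\Hom$, I first reduce the $\aHom$ bound to a statement about $\Hom$, then carve the $\Hom$ list along the abelian subgroups produced by that theorem, then apply DGKS inside each piece.

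Write $\Lambda = \Lambda_{G,H}$ and fix any received word $f\in H^G$. Lemma~\ref{lemma:mean-central-aHom-relationship}(i) with $\lambda = \Lambda+\eps$ gives
\[
\abs{\cL(\aHom(G,H), f, \Lambda+\eps)} \leq \frac{1}{\Lambda+\eps}\cdot \ell(\Hom(G,H), \Lambda+\eps) \leq \frac{1}{\eps}\cdot \ell(\Hom(G,H), \Lambda+\eps),
\]
using $\Lambda+\eps \geq \eps$. It therefore suffices to prove $\ell(\Hom(G,H), \Lambda+\eps) = O((1/\eps)^{C+3})$. Applying Theorem~\ref{thm:codomain-few-abelian-subgroups} yields a set $\abelsubs$ of finite abelian subgroups of $H$ with $\abs{\abelsubs} \leq \frac{1}{4(\Lambda+\eps)\eps^2}+\frac{1}{\eps} = O(1/\eps^3)$ such that every $\phi$ in the $\Hom$ list satisfies $\phi(G)\leq M$ for some $M\in\abelsubs$. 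Hence the list is covered by the union over $M\in\abelsubs$ of the sets $\{\phi\in\Hom(G,M) : \agr(\phi,f) > \Lambda+\eps\}$.

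For each $M\in\abelsubs$ I bound this set by DGKS applied to the abelian-to-abelian code $\Hom(G,M)$. To handle the fact that $f$ need not land in $M$, pick any $m_0\in M$ and define $\tilde f_M\colon G\to M$ by $\tilde f_M(x) = f(x)$ if $f(x)\in M$ and $\tilde f_M(x) = m_0$ otherwise; for any $\phi\in\Hom(G,M)$, agreement with $f$ forces the value to lie in $M$, so $\agr(\phi,f)\leq \agr(\phi,\tilde f_M)$. Moreover $\Hom(G,M)\subseteq \Hom(G,H)$ (as set maps) so $\Lambda_{G,M}\leq \Lambda$, hence $\Lambda+\eps\geq \Lambda_{G,M}+\eps$, and by Theorem~\ref{thm:DGKS-CombEcon} the count is at most $\ell(\Hom(G,M), \Lambda_{G,M}+\eps)\leq (1/\eps)^C$. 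Summing over $\abelsubs$ gives $\ell(\Hom(G,H),\Lambda+\eps) = O((1/\eps)^{C+3})$, and the factor of $1/\eps$ from the $\aHom$-to-$\Hom$ reduction produces the claimed $O((1/\eps)^{C+4})$.

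The conceptually hard work is packaged inside Theorem~\ref{thm:codomain-few-abelian-subgroups}, built from the abelian-enlargement machinery of Section~\ref{sec:abelian-enlargements}; the remaining steps are essentially bookkeeping. The only delicate point is that the received word $f$ is not guaranteed to take values in the abelian subgroup $M$ to which one wants to apply DGKS, but the dummy extension $\tilde f_M$ disposes of this cleanly (it can only add spurious agreements, weakening the bound in the right direction). The final degree accounting $C+4$ breaks into $C$ from DGKS, $3$ from $\abs{\abelsubs} = O(1/\eps^3)$, and $1$ from Lemma~\ref{lemma:mean-central-aHom-relationship}(i).
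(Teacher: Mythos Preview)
Your proof is correct and follows essentially the same approach as the paper: reduce $\aHom$ to $\Hom$ via Lemma~\ref{lemma:mean-central-aHom-relationship}, cover the $\Hom$ list by the abelian subgroups $M\in\abelsubs$ from Theorem~\ref{thm:codomain-few-abelian-subgroups}, and apply DGKS inside each $\Hom(G,M)$. Your treatment is in fact slightly more explicit than the paper's---you spell out the dummy extension $\tilde f_M$ and the inequality $\Lambda_{G,M}\le\Lambda_{G,H}$, whereas the paper simply says ``we let $f$ be redefined arbitrarily at points in its domain that do not map to $M$'' and writes $\ell(\Hom(G,M),\Lambda+\eps)$ without further comment.
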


By \cite{GS14, BGSW}, the constant $C$ is approximately $105$.

\begin{proof}[Proof of Corollary~\ref{cor:abelian-to-anything-combecon}]

	Let $\abelsubs$ be the collection of subgroups of $H$ guaranteed by Theorem~\ref{thm:codomain-few-abelian-subgroups}. Then, 

	$\highagr \subseteq \bigcup_{M \in \abelsubs} \highagr(\Hom(G, M), f, \Lambda + \eps)$ (on the right hand side we let $f$ be redefined arbitrarily at points in its domain that do not map to $M$). So, 
	\[
		\abs{\highagr} \leq \sum_{M \in \abelsubs} \ell(\Hom(G, M), \Lambda + \eps) \leq \left(\frac{1}{4 (\Lambda + \eps) \eps^2} + \frac{1}{\eps}\right) \left(\frac{1}{\eps} \right)^C.
	\]
	We then apply Lemma~\ref{lemma:mean-central-aHom-relationship}.
\end{proof}

In the remainder of this subsection, we prove Theorem~\ref{thm:codomain-few-abelian-subgroups}.

Recall our strategy from Sections~\ref{sec:bucket-splitting} and~\ref{section:generation-graphs} of dividing the list $\highagr$ into buckets $\buckl\psi$, and bounding the size of each bucket by considering the graph $\gph_{\psi, d}$. We will now carry out this strategy. We will use $d = 1$. We next bound the degree of each right vertex.

\begin{lemma}
	\label{lem:abelian-right-vertices-degree}
	Let $\psi \in \Psi$. Each right vertex of $\gph_{\psi, 1}$ is adjacent to at least an $\eps$ fraction of the left vertices.
\end{lemma}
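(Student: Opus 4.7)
The plan is to fix $\phi \in \buckl\psi$, set $K := \Eq(\psi, \phi) \leq G$, and invoke Lemma~\ref{lem:right-degree-big-flex} with $d = 1$, for which we need $\depth_G(K) \leq 1$. The bucket condition gives $\mu(K) = \agr(\phi,\psi) > \Lambda^2$, which is the only hypothesis about $\phi$ the argument will use.

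I would first pass to $\bar{G} := G/N$, where $N$ is the $(G,H)$-irrelevant kernel (Definition~\ref{def:irrelevant-kernel}). Since $N \leq \ker\phi \cap \ker\psi \leq K$, the quotient $\bar{K} := K/N$ is a subgroup of $\bar{G}$ satisfying $\mu_{\bar{G}}(\bar{K}) = \mu_G(K)$ and $\depth_G(K) = \depth_{\bar{G}}(\bar{K})$, and Theorem~\ref{thm:main-Econ-quotient-implies-Econ-fullgroup} gives $\Lambda_{\bar{G}, H} = \Lambda_{G,H}$, so the bucket condition transfers to $\mu_{\bar{G}}(\bar{K}) > \Lambda_{\bar{G}, H}^2$.

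In $\bar{G}$ the irrelevant kernel is trivial, so for every prime $q$ dividing $|\bar{G}|$ the $q$-primary component $\bar{G}_q$ admits a nontrivial homomorphism to $H$; in particular $H$ contains an element of order $\bar{p}$, where $\bar{p}$ is the smallest prime dividing $|\bar{G}|$. Together with Corollary~\ref{cor:lambda-subgroup} (which yields $\Lambda_{\bar{G}, H} \leq 1/\bar{p}$) and a pair of homomorphisms $\bar{G} \to H$ that differ only through this order-$\bar{p}$ element on $\bar{G}_{\bar{p}}$, this gives $\Lambda_{\bar{G}, H} = 1/\bar{p}$. Hence $[\bar{G} : \bar{K}] < \bar{p}^2$; since this index divides $|\bar{G}|$ and every prime divisor of $|\bar{G}|$ is at least $\bar{p}$, it must be either $1$ or a single prime. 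So $\bar{K}$ is $\bar{G}$ or a maximal subgroup of $\bar{G}$, and thus $\depth_G(K) = \depth_{\bar{G}}(\bar{K}) \leq 1$.

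Applying Lemma~\ref{lem:right-degree-big-flex} with $d = 1$ then completes the proof. The main obstacle is the identification $\Lambda_{\bar{G}, H} = 1/\bar{p}$: without the reduction to the relevant quotient, one could have $\Lambda_{G,H}$ strictly below the reciprocal of the smallest prime dividing $|G|$, and an equalizer $K$ with $\mu(K) > \Lambda^2$ could then have depth strictly greater than $1$ in $G$, invalidating a direct application of Lemma~\ref{lem:right-degree-big-flex} with $d = 1$.
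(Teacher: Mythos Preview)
Your proof is correct and follows essentially the same approach as the paper's: both arguments reduce to showing that $K = \Eq(\psi,\phi)$ has depth at most $1$ in $G$ (by observing that $N \leq K$, that $1/\Lambda$ equals the smallest prime $\bar p$ dividing $\ind{G}{N}$, and hence that $[G:K] < \bar p^2$ forces $[G:K]$ to be $1$ or a single prime), and then invoke Lemma~\ref{lem:right-degree-big-flex} with $d=1$. You are simply more explicit than the paper in passing to the quotient $\bar G = G/N$ and in justifying the identity $\Lambda_{\bar G,H} = 1/\bar p$, which the paper asserts tersely (and later records as Proposition~\ref{prop:Lambda-abelian-anything}).
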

\begin{proof}
	We know that $1/\Lambda$ is the least integer that divides $\ind{G}{N}$, where $N$ is the $(G, H)$-irrelevant kernel. 

	For all $\phi \in \buckl\psi$, we have that $\Eq(\psi, \phi)$ contains the irrelevant kernel and has density greater than $\Lambda^2$, so has depth at most $1$.
	We apply Lemma~\ref{lem:right-degree-big-flex}.
\end{proof}

The next lemma tells us about the connected components of $\gph_{\psi, 1}$. It also helps us bound the degree of a left vertex via the DGKS result.

\begin{lemma}
	\label{lem:adjacent-same-enlargement}
	Let $\psi \in \Psi$. Let $g_1 \in G$ be a left vertex of $\gph_{\psi, 1}$, and $\phi \in \buckl\psi$ a right vertex. If $g_1$ is adjacent to $\phi$ then $\sig_{\psi(G)}(f(g_1)) = \sig_{\psi(G)}(\phi(G))$.
\end{lemma}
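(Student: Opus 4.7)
The plan is to apply Proposition~\ref{prop:sig-generate} with the single-element set $A=\{g_1\}$. Adjacency in $\gph_{\psi,1}$ gives $g_1\in \Eq(f,\phi)$, so $f(g_1)=\phi(g_1)$, and the desired conclusion $\sig_{\psi(G)}(f(g_1))=\sig_{\psi(G)}(\phi(G))$ reduces to $\sig_{\psi(G)}(\phi(g_1))=\sig_{\psi(G)}(\phi(G))$. By Proposition~\ref{prop:sig-generate} it suffices to establish
\[
K \;:=\; \gen{g_1,\,\Eq(\psi,\phi),\,\ker\phi} \;=\; G.
\]

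The adjacency condition also gives $\mu_G(\gen{\Eq(\psi,\phi),g_1}) > \Lambda$, so $\mu_G(K) > \Lambda$. Since $\ker\phi\le K$, indices are preserved in the quotient $\bar G := G/\ker\phi$: setting $\bar K := K/\ker\phi$, we have $[\bar G:\bar K] = [G:K] < 1/\Lambda$. If $\phi$ is the trivial homomorphism then $\ker\phi = G\subseteq K$ forces $K=G$ at once; otherwise $\bar G$ is nontrivial, and we let $p_{\min}$ denote its smallest prime divisor. The key step is the bound $1/\Lambda \le p_{\min}$.

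To prove this, note that $\bar G$ is abelian (as $G$ is) and embeds in $H$ via the injection induced by $\phi$, so $H$ contains an element of order $p_{\min}$. The structure theorem for finite abelian groups supplies a subgroup $\bar G_0\le\bar G$ of index $p_{\min}$, and composing $\bar G\twoheadrightarrow \bar G/\bar G_0\cong \ZZ/p_{\min}$ with an embedding $\ZZ/p_{\min}\hookrightarrow H$ produces a nonzero homomorphism $\bar\alpha\in\Hom(\bar G,H)$. Precomposing with the projection $G\twoheadrightarrow\bar G$ yields a nontrivial $\alpha\in\Hom(G,H)$ whose equalizer with the trivial homomorphism is the preimage of $\bar G_0$, a subgroup of density $1/p_{\min}$. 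Hence $\Lambda = \Lambda_{G,H}\ge 1/p_{\min}$.

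Combining the two inequalities gives $[\bar G:\bar K] < p_{\min}$; but $[\bar G:\bar K]$ is a positive integer dividing $|\bar G|$, so it must equal $1$. Thus $\bar K = \bar G$, whence $K=G$, and Proposition~\ref{prop:sig-generate} completes the argument. The only nontrivial ingredient is the estimate $1/\Lambda\le p_{\min}$ obtained from the embedding $\bar G\hookrightarrow H$; the rest is a direct density computation in the quotient.
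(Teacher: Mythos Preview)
Your proof is correct and follows essentially the same route as the paper: reduce to Proposition~\ref{prop:sig-generate} with $A=\{g_1\}$, then show $\gen{g_1,\Eq(\psi,\phi),\ker\phi}=G$ by arguing that its index is smaller than the smallest prime divisor of the relevant quotient. The only cosmetic difference is that the paper works modulo the irrelevant kernel $N$ (using the characterization of $\Lambda$ from Proposition~\ref{prop:Lambda-abelian-anything}, established later) and concludes the slightly stronger $\gen{g_1,\Eq(\psi,\phi)}=G$, whereas you work modulo $\ker\phi$ and supply the bound $\Lambda\ge 1/p_{\min}$ directly; your version is more self-contained but otherwise the same argument.
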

\begin{proof}
	We have $g_1 \in \Eq(f, \phi)$ and $\gen{\Eq(\psi, \phi), g_1}$ has density greater than $\Lambda$, so is equal to $G$. So, by Proposition~\ref{prop:sig-generate}, $\sig_{\psi(G)}(f(g_1)) = \sig_{\psi(G)}(\phi(g_1)) = \sig_{\psi(G)}(\phi(G))$.
\end{proof}

Lemma~\ref{lem:adjacent-same-enlargement} allows us to associate an abelian subgroup of~$H$ to each connected component of~$\gph_{\psi, 1}$.

\begin{corollary}
	\label{cor:abelian-left-vertex-degree}
	Each left vertex of of $\gph_{\psi, 1}$ has degree at most $\paren{\frac{1}{\eps}}^C$.
\end{corollary}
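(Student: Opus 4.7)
The plan is to exploit Lemma~\ref{lem:adjacent-same-enlargement} to funnel every neighbor of a fixed left vertex $g$ into a single finite abelian subgroup $M$ of $H$, and then invoke the DGKS theorem (Theorem~\ref{thm:DGKS-CombEcon}) on the much more manageable code $\Hom(G,M)$.

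Fix a left vertex $g \in G$ of $\gph_{\psi,1}$, and let $M := \sig_{\psi(G)}(f(g))$. For any right vertex $\phi$ adjacent to $g$, Lemma~\ref{lem:adjacent-same-enlargement} gives $\sig_{\psi(G)}(f(g)) = \sig_{\psi(G)}(\phi(G))$, so in particular $\phi(G) \leq M$. Moreover, since $\psi(G)$ is a finite abelian subgroup of $H$ (being a homomorphic image of the finite abelian group $G$) and $\gen{f(g)}$ is a cyclic, hence abelian, group, Lemma~\ref{lem:sig-finite-abelian} tells us that $M$ itself is a finite abelian group.

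So every neighbor $\phi$ of $g$ belongs to $\Hom(G,M)$ and has $\agr(f,\phi) > \Lambda + \eps$. Composing with the inclusion $M \hookrightarrow H$ embeds $\Hom(G,M)$ into $\Hom(G,H)$, which immediately gives $\Lambda_{G,M} \leq \Lambda_{G,H} = \Lambda$, hence $\Lambda + \eps \geq \Lambda_{G,M} + \eps$. Interpreting $f$ as a received word with values in $M$ (redefining it arbitrarily on $G \setminus f^{-1}(M)$, which cannot decrease the number of agreements with any $\phi \in \Hom(G,M)$), the set of neighbors of $g$ is contained in $\highagr(\Hom(G,M), f, \Lambda + \eps) \subseteq \highagr(\Hom(G,M), f, \Lambda_{G,M} + \eps)$.

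Since both $G$ and $M$ are finite abelian, the DGKS bound (Theorem~\ref{thm:DGKS-CombEcon}) yields
\[
\bigl|\,\highagr(\Hom(G,M), f, \Lambda_{G,M} + \eps)\,\bigr| \leq \paren*{\tfrac{1}{\eps}}^{C},
\]
which bounds the degree of $g$. There is no real obstacle once the enlargement machinery is in place; the only step that warrants care is the inequality $\Lambda_{G,M} \leq \Lambda_{G,H}$, which is the routine observation that shrinking the codomain can only shrink the set of homomorphism pairs being maximized over.
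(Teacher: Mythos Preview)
Your proof is correct and follows essentially the same approach as the paper: set $M = \sig_{\psi(G)}(f(g_1))$, use Lemma~\ref{lem:adjacent-same-enlargement} to conclude that every neighbor $\phi$ has $\phi(G) \leq M$, and then apply the DGKS bound to $\Hom(G,M)$. The paper's proof is a two-sentence sketch; your version fills in the details the paper leaves implicit (that $M$ is finite abelian, that $\Lambda_{G,M} \le \Lambda_{G,H}$, and the redefinition of $f$ outside $f^{-1}(M)$), all of which are handled correctly.
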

\begin{proof}
	All the neighbors of $g_1$ are elements of $\highagr(\Hom(G, M), f, \Lambda + \eps)$ for $M = \sig_{\psi(G)}(f(g_1))$. We apply the DGKS result.
\end{proof}

\begin{remark}
	We can now prove Corollary~\ref{cor:abelian-to-anything-combecon}, bypassing Theorem~\ref{thm:codomain-few-abelian-subgroups}, 
by applying part~(a) of
Lemma~\ref{lemma:mean-comb-bipartite} (double counting).
We can use Lemma~\ref{lem:abelian-right-vertices-degree} to bound the degree of a right vertex, and 

	Corollary~\ref{cor:abelian-left-vertex-degree}
	to bound the degree of a left vertex.
\end{remark}

We need one more lemma before we prove Theorem~\ref{thm:codomain-few-abelian-subgroups}.

\begin{lemma}
	\label{lem:codomain-psi-few-abelian-subgroups}

	Let $\psi \in \Psi$.
	There is a set $\abelsubs_\psi$ of finite abelian subgroups of $H$ with $\abs{\abelsubs_\psi} \leq \frac{1}{\eps}$ such that for all $\phi \in \buckl\psi$, there is $M \in \abelsubs_\psi$ for which $\phi(G) \leq M$.
\end{lemma}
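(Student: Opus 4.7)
The plan is to read the lemma off the bipartite graph $\gph_{\psi,1}$ together with the enlargement-constancy result already established. For each $g_1 \in G$ with $\langle f(g_1)\rangle$ finite abelian, set $M_{g_1} := \sig_{\psi(G)}(f(g_1))$; Lemma~\ref{lem:sig-finite-abelian} guarantees that $M_{g_1}$ is a finite abelian subgroup of $H$. The key observation I will leverage is that the right-hand side $\sig_{\psi(G)}(\phi(G))$ in Lemma~\ref{lem:adjacent-same-enlargement} depends only on $\phi$ while the left-hand side depends only on $g_1$: so as $g_1$ ranges over the left-neighbors of a fixed $\phi\in\buckl\psi$ in $\gph_{\psi,1}$, the value $M_{g_1}$ is constantly equal to $M_\phi := \sig_{\psi(G)}(\phi(G))$.

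Next, I will define
\[
\abelsubs_\psi := \bigl\{ M \leq H : \mu_G\bigl(\{g_1 \in G : M_{g_1} = M\}\bigr) \geq \eps \bigr\},
\]
so that the bound $|\abelsubs_\psi| \leq 1/\eps$ is immediate from the disjointness of the fibers of the map $g_1 \mapsto M_{g_1}$. To verify the covering property, I fix $\phi\in\buckl\psi$: Lemma~\ref{lem:abelian-right-vertices-degree} provides at least $\eps|G|$ left-neighbors of $\phi$ in $\gph_{\psi,1}$, and each such neighbor $g_1$ satisfies $M_{g_1} = M_\phi$ by Lemma~\ref{lem:adjacent-same-enlargement}, so the fiber over $M_\phi$ has density at least $\eps$ and hence $M_\phi \in \abelsubs_\psi$. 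The containment $\phi(G) \leq M_\phi$ is built into the definition of $\sig_{\psi(G)}(\phi(G))$ (or, if one prefers, is a special case of Corollary~\ref{cor:enlargement-contains-image}).

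I anticipate no real obstacle: the substantive content (finiteness of the enlargement, existence of many neighbors of each $\phi$, and constancy of $\sig_{\psi(G)}$ along those neighbors) has already been packaged into the cited lemmas, and the present statement simply bundles them together with a one-line pigeonhole count on fibers. The only sanity check worth keeping in mind is that $M_{g_1}$ is actually defined at every $g_1$ that we use: for any neighbor $g_1$ of some $\phi\in\buckl\psi$ we have $f(g_1)=\phi(g_1)\in\phi(G)$, and $\phi(G)$ is a finite abelian quotient of $G$, so Lemma~\ref{lem:sig-finite-abelian} does apply.
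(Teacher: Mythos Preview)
Your proof is correct and follows essentially the same approach as the paper. Both arguments hinge on Lemmas~\ref{lem:abelian-right-vertices-degree} and~\ref{lem:adjacent-same-enlargement}: each $\phi\in\buckl\psi$ has at least $\eps|G|$ left-neighbors in $\gph_{\psi,1}$, and on all of them the value $\sig_{\psi(G)}(f(g_1))$ equals the fixed group $\sig_{\psi(G)}(\phi(G))$, so a pigeonhole count over $G$ yields at most $1/\eps$ distinct such groups. The paper phrases this via connected components of $\gph_{\psi,1}$ and takes $\abelsubs_\psi=\{\sig_{\psi(G)}(\phi(G)):\phi\in\buckl\psi\}$ directly, whereas you phrase it via fibers of $g_1\mapsto M_{g_1}$ and take $\abelsubs_\psi$ to be the set of large fibers; the content is identical.
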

\begin{proof}

	Let $\abelsubs_\psi = \{\sig_{\psi(G)}(\phi(G)) \mid \phi \in \buckl\psi\}$, which is a set of finite abelian subgroups of $H$ by Lemma~\ref{lem:sig-finite-abelian}. By Lemma~\ref{lem:adjacent-same-enlargement}, we can associate an abelian subgroup to each connected component of $\gph_{\psi, 1}$. Each element of $\abelsubs_\psi$ is associated to at least one connected component that contains a right vertex. By Lemma~\ref{lem:abelian-right-vertices-degree}, there are at most $1/\eps$ such components, so 

	$\abs{\abelsubs_\psi} \leq 1/\eps$.
\end{proof}

We are ready to prove the main theorem of this section.

\begin{proof}[Proof of Theorem~\ref{thm:codomain-few-abelian-subgroups}]

	Let $\abelsubs_\psi$ be as in Lemma~\ref{lem:codomain-psi-few-abelian-subgroups}. Let $\abelsubs = \bigcup_{\psi \in \Psi} \abelsubs_\psi$. By Lemma~\ref{lemma:sphere-packing}, $\abs{\Psi} \leq \frac{1}{4 (\Lambda + \eps) \eps} + 1$. By Lemma~\ref{lem:codomain-psi-few-abelian-subgroups}, $\abs{\abelsubs_\psi} \leq \frac{1}{\eps}$. So, $\abs{\abelsubs} \leq \frac{1}{4 (\Lambda + \eps) \eps^2} + \frac{1}{\eps}$.
	
	For $\phi \in \highagr$, we have that $\phi \in \buckl\psi$ for some $\psi \in \Psi$, and so the image of $\phi$ is in $\sig_{\psi(G)}(\phi(G)) \in \abelsubs_\psi \subseteq \abelsubs$.

\end{proof}

\subsection{Algorithm}
\label{sec:abelian-to-anything-algorithm}

For $G$ a finite abelian group given explicitly by a primary decomposition, and $H$ a group with black-box access, we can locally list-decode $\aHom(G, H)$ using essentially the same algorithm as the one by Dinur Grigorescu, Kopparty, and Sudan in Section~5 of \cite{DGKS08}. We make only slight modifications. Thus, such codes are AlgEcon.

\begin{theorem}
	Let $\fD$ be the class of pairs $(G, H)$ where $G$ is an abelian group given explicitly by an primary decomposition, and $H$ is a group with black-box access. Then there is an algorithm to locally list-decode $\fD$ in time $\poly(\log \abs{G} \cdot \frac{1}{\eps})$.
\end{theorem}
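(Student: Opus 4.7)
The plan is to reduce list-decoding $\aHom(G,H)$ to polynomially many instances of the DGKS algorithm (Theorem~\ref{thm:DGKS-CombEcon}), one per candidate finite abelian subgroup $M \leq H$ that may contain the image of a hidden codeword. Theorem~\ref{thm:codomain-few-abelian-subgroups} guarantees that a family $\abelsubs$ of $\poly(1/\eps)$ such subgroups suffices to capture the entire list $\highagr(\Hom(G,H), f, \Lambda + \eps)$, and by Section~\ref{sec:Hom-versus-aHom} it is enough to handle $\Hom$ instead of $\aHom$. The algorithmic task therefore splits into two subtasks: (1) construct, with high probability, a covering family $\abelsubs$; and (2) run the DGKS decoder inside each $M \in \abelsubs$, returning the union of the output lists.

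For subtask (1), I would sample. Inspecting the proof of Theorem~\ref{thm:codomain-few-abelian-subgroups}, each relevant $M$ has the form $\sig_{\psi(G)}(\phi(g_0))$ where $\psi \in \Psi$ and $g_0$ satisfies $\langle g_0, \Eq(\psi,\phi)\rangle = G$. Since $\psi$ is unknown, I would approximate $\psi(G)$ by random queries to $f$: draw $r = O(\log(1/\eps)/\log(1/\Lambda))$ independent uniform random elements $g_1, \ldots, g_r \in G$ (easy from the primary decomposition of $G$), query $f(g_i)$, and set $B := \langle f(g_1), \ldots, f(g_r) \rangle \leq H$. With probability at least $(\Lambda+\eps)^r$, the values $f(g_i)$ coincide with $\psi(g_i)$ for a fixed hidden $\psi$, and for $r$ chosen large enough the $\psi(g_i)$ already generate $\psi(G)$; so $B = \psi(G)$ with noticeable probability. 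Next, sample $g_0 \in G$ and form $M := \sig_B(f(g_0))$. By Proposition~\ref{prop:sig-generate} and Lemma~\ref{lem:abelian-right-vertices-degree}, with probability $\geq \eps$ over $g_0$ this $M$ contains $\phi(G)$ for any fixed target codeword $\phi$. Independent repetition $O((1/\eps)\log(1/\eps))$ times and collecting the resulting subgroups yields a family $\abelsubs$ of size $\poly(1/\eps)$ that, with probability $\geq 3/4$, covers every element of the list.

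For subtask (2), each candidate $M$ is presented as the subgroup of $H$ generated by an explicit finite list of $H$-elements (namely $f(g_0)$ together with those of $f(g_1), \ldots, f(g_r)$ that commute with $f(g_0)$, determined by a few black-box commutator checks). I would then convert each $M$ into a primary decomposition using the standard polynomial-time black-box machinery for abelian groups (orders of generators via the Smith normal form, Section~\ref{section:prelim-smith}, using the assumed superset of prime divisors or factoring oracle). Once $M$ is in primary form, invoke the DGKS list-decoder on $\aHom(G, M)$ with oracle access to $f$ (restricted to outputs in $M$) and threshold $\Lambda + \eps$; this returns the sublist of codewords whose image lies in $M$. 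The output is the union of these sublists over all $M \in \abelsubs$.

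The main obstacle is the conversion of each candidate subgroup $M$ into a primary decomposition in the black-box setting; everything else is a direct adaptation of DGKS. Given that conversion, correctness is immediate from Theorem~\ref{thm:codomain-few-abelian-subgroups} (each hidden $\phi$ in the list lies in some $M \in \abelsubs$) together with the DGKS guarantee applied within $M$, and the total cost is $\poly(1/\eps)$ candidates times $\poly(\log|G|, 1/\eps)$ work per DGKS call in the unit-cost model for $H$, giving the claimed $\poly(\log|G|\cdot 1/\eps)$ bound. Probabilistic failures across sampling and the DGKS subcalls are absorbed by standard amplification.
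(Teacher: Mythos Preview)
Your approach differs from the paper's, and it has a genuine gap.

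\textbf{The gap.} In subtask~(1) you try to recover $B=\psi(G)$ by sampling $r=O(\log(1/\eps)/\log(1/\Lambda))$ random $g_i\in G$ and hoping simultaneously that (a) all $g_i$ land in $\Eq(f,\psi)$, and (b) the $\psi(g_i)$ generate $\psi(G)$. These two requirements pull $r$ in opposite directions. Take $G=\Z_p^n$ and $\psi$ injective: then $\psi(G)\cong\Z_p^n$ needs at least $n$ generators, forcing $r\ge n=\log_p|G|$; but then the probability of event~(a) is at most $(\Lambda+\eps)^r\approx(1/p)^n=1/|G|$, so amplification to constant success probability costs $\Omega(|G|)$ repetitions, not $\poly(\log|G|,1/\eps)$. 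Even relaxing the goal from $B=\psi(G)$ to $B\supseteq\psi(\Eq(\psi,\phi))$ does not help, since the latter group can also require $\Theta(\log|G|)$ generators. The structural Theorem~\ref{thm:codomain-few-abelian-subgroups} is existential; the paper never constructs $\abelsubs$ algorithmically, and your sampling scheme does not do so within the claimed budget.

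\textbf{What the paper does instead.} The paper bypasses the construction of $\abelsubs$ entirely and runs the DGKS \textsc{Extend} procedure directly, with one local modification: when extending from $G_1\times\cdots\times G_{i-1}$ to include $G_i=\Z_{p_i^{r_i}}$, the two sampled values $f(y_1,c_1,s)$, $f(y_2,c_2,s)$ are accepted only if they commute with each other and with the already-fixed images $\phi(e_1),\dots,\phi(e_{i-1})$. This commutativity filter keeps every surviving branch inside some (unspecified) abelian subgroup of $H$, so the linear system in \textsc{Extend} can be solved exactly as in the abelian-codomain case. The CombEcon bound from Theorem~\ref{thm:codomain-few-abelian-subgroups} enters only in the \emph{analysis}, to bound the number of branches; it is not used to build the algorithm.
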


We assume black-box access to $H$. We do not assume black-box access to $G$; if only black-box access were assumed, then for $p$ a prime, it would take $p + 1$ queries to a group to determine whether the group were isomorphic to $\Z_p$ or $\Z_p^2$. Like \cite{DGKS08}, we assume that $G$ is given explicitly by an primary decomposition.

We assume that we have an algorithm determining $\Lambda_{G, H}$, although this assumption can be removed.

Next, \cite{DGKS08} reduces to the case where $H = \Z_{p^r}$. We don't make this reduction. We let $p$ be the prime such that $\Lambda = \frac{1}{p}$. Every mention of $\Z_{p^r}$ should be replaced by $H$. As in their algorithm, we take $G = G_1, \dots, G_k$, with each $G_i = \Z_{p_i^{r_i}}$. We order the $G_i$ such that $p_1 = p$. For them, the only important coordinates are the ones where $p_i = p$, but for our purposes, instances of $\Z_{p^{r_i}}$ should be replaced with $\Z_{p_i^{r_i}}$.

In the algorithm $\textsc{Extend}$ of $\cite{DGKS08}$, the statement ``If $c_1 - c_2$ is not divisible by $p$'' should be replaced with ``If $c_1 - c_2$ is not divisible by $p_i$, and if $f(y_1, c_1, s)$ and $f(y_2, c_2, s)$ commute with each other and with $\phi(e_1), \dots, \phi(e_{i - 1})$''. Here $e_j$ denotes a generator of $G_j$. The system of equations that follows should be solved under the assumption that the order of $a$ divides $p_i^{r_i}$.

We note that when solving the system of equations in \textsc{Extend}, we are working in an abelian subgroup of $H$. Actually, even this does not matter; we can solve the system of equations without assuming elements of $H$ commute.

\subsection{$\Lambda_{G, H}$ when $G$ or $H$ is solvable}
\label{sec:Lambda-for-G-or-H-solvable}

We give a combinatorial description of $\Lambda_{G, H}$ when $G$ is a finite abelian group and $H$ is an arbitrary group.

\begin{proposition}
	\label{prop:Lambda-abelian-anything}
	Let $G$ be a finite abelian group and $H$ a group. Then $\Lambda_{G, H} = 1/p$, where $p$ is the smallest prime number such that $p$ divides $\abs{G}$ and $H$ has an element of order $p$. If no such $p$ exists, then $\abs{\Hom(G, H)} = 1$ and $\Lambda_{G, H} = 0$.
\end{proposition}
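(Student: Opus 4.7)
}
By Proposition~\ref{prop:prelim-lambda-Hom-vs-aHom} we may compute $\Lambda_{G,H}$ using $\Hom(G,H)$, which has the advantage that $\Eq(\vf,\psi)$ is a subgroup of $G$ (Fact~\ref{fact:equalizer-coset}(a)), so the agreement $\agr(\vf,\psi)=1/[G:\Eq(\vf,\psi)]$ is the reciprocal of a positive integer dividing $|G|$.  I will then handle three items: (i) the degenerate case where no admissible prime $p$ exists, (ii) the lower bound $\Lambda_{G,H}\ge 1/p$, (iii) the upper bound $\Lambda_{G,H}\le 1/p$.

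For (i), suppose no prime $p\mid |G|$ is the order of some element of $H$.  For any $\vf\in\Hom(G,H)$, the image $\vf(G)$ is a finite subgroup of $H$ whose order divides $|G|$.  If $\vf(G)\ne\{1\}$ then by Cauchy's theorem $\vf(G)$ contains an element whose prime order $q$ divides $|G|$ and also appears as the order of an element of $H$, contradicting the hypothesis.  Hence $\Hom(G,H)=\{1\}$ and $\Lambda_{G,H}=0$ by the convention that $\max\emptyset=0$.

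For (ii), assume $p$ exists; pick $h\in H$ of order $p$.  Since $p\mid |G|$ and $G$ is a finite abelian group, $G$ admits a surjection $\pi\from G\twoheadrightarrow \Z_p$; write $K=\ker\pi$, so $[G:K]=p$.  Define $\vf\from G\to H$ by $\vf(g)=h^{\pi(g)}$ and let $\psi\equiv 1$.  Then $\Eq(\vf,\psi)=K$ and $\agr(\vf,\psi)=1/p$, proving $\Lambda_{G,H}\ge 1/p$.

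For (iii) — the part I expect to be the main obstacle because of the non-abelian codomain — let $\vf\ne\psi$ in $\Hom(G,H)$, set $K=\Eq(\vf,\psi)$ and $d=[G:K]>1$, and let $q$ be the smallest prime dividing $d$.  I will show that $H$ contains an element of order $q$, whence $q\ge p$ (by the minimality in the definition of $p$), giving $d\ge q\ge p$ and $\agr(\vf,\psi)\le 1/p$.  Let $G_q$ be the Sylow $q$-subgroup of $G$ and $K_q=K\cap G_q$.  Since $G$ is abelian, $G_q K/K\cong G_q/K_q$ is the Sylow $q$-subgroup of $G/K$ and so $[G_q:K_q]$ equals the $q$-part of $d$; in particular $q\mid [G_q:K_q]$, so there exists $g\in G_q$ with $g\notin K$ and $g^{q}\in K_q\subseteq K$.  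Setting $a=\vf(g)$ and $b=\psi(g)$ we have $a\ne b$, $a^q=\vf(g^q)=\psi(g^q)=b^q$, and both $a,b$ have orders dividing the order of $g$, which is a power of $q$.  Since $a\ne b$, at least one of $a,b$ is nontrivial and therefore has order $q^k$ for some $k\ge 1$; raising it to the power $q^{k-1}$ produces an element of $H$ of order exactly $q$, completing the argument.  This sidesteps the non-commutativity of $H$ entirely, because the conclusion needs only the existence of \emph{some} element of order $q$ in $H$, not in any cyclic quotient built from $a$ and $b$ together.
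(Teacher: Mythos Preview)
Your proof is correct, and it is more direct than the paper's approach. The paper does not prove Proposition~\ref{prop:Lambda-abelian-anything} on its own; instead it states and proves the more general Theorem~\ref{thm:Lambda for G or H solvable} (covering the case where either $G$ or $H$ is solvable) and then observes that the abelian case is a special instance. That more general proof goes through Lemma~\ref{lem:prime factors of G/K} on the prime divisors of $\ind{G}{K}$ for $K$ the irrelevant kernel, together with Berkovich's theorem (Theorem~\ref{thm:Berkovich subgroup smallest index}) that in a finite solvable group a proper subgroup of smallest index is normal.

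Your argument for the upper bound (iii) is genuinely different and more elementary: rather than appealing to Berkovich, you exploit the Sylow decomposition of the abelian domain to locate an element $g\in G_q$ with $g\notin K$ but $g^q\in K$, and then observe that one of $\vf(g),\psi(g)$ must have nontrivial $q$-power order in $H$. This sidesteps any structural theorem about solvable groups. The trade-off is that the paper's route simultaneously handles all solvable domains (and solvable codomains), whereas your argument uses the abelian structure of $G$ in an essential way when passing to the Sylow $q$-subgroup. One small remark: your choice of $q$ as the \emph{smallest} prime dividing $d$ is harmless but not actually needed---any prime divisor of $d$ would do, since the conclusion $d\ge q\ge p$ follows just from $q\mid d$.
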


This proposition is a special case of the following theorem, which describes $\Lambda_{G, H}$ when $G$ or $H$ is a solvable group. This is a slight generalization of a result of Guo \cite[Theorem 1.1]{Guo15}.

\begin{theorem}
	\label{thm:Lambda for G or H solvable}
	Let $G$ be a finite group and $H$ a group, such that at least one of $G$ or $H$ is solvable. Then $\Lambda_{G, H} = 1/p$, where $p$ is the smallest prime number such that $G$ has a normal subgroup of index $p$ and $H$ has an element of order $p$. If no such $p$ exists, then $\abs{\Hom(G, H)} = 1$ and $\Lambda_{G, H} = 0$.
\end{theorem}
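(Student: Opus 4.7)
The plan is to prove the lower bound by direct construction and the upper bound by strong induction on $|G|$; the ``no such $p$'' case follows from the lower bound plus a short direct argument. For the lower bound $\Lambda_{G,H}\ge 1/p$, I will compose the quotient $G\twoheadrightarrow G/N\cong \ZZ_p$ (where $N\trianglelefteq G$ has index $p$) with $\ZZ_p\hookrightarrow \langle h\rangle \leq H$ (where $h\in H$ has order $p$) to produce a nontrivial $\varphi\in\Hom(G,H)$; the distinct pair $\{\varphi,1\}$ has $\Eq(\varphi,1)=N$, so $\agr(\varphi,1)=1/p$. For the direction ``no such $p$ exists $\Rightarrow |\Hom(G,H)|=1$'', I argue the contrapositive: given a nontrivial $\varphi\in\Hom(G,H)$, its image $\varphi(G)$ is solvable (either as a quotient of solvable $G$ or as a subgroup of solvable $H$), so its nontrivial abelianization $\varphi(G)/\varphi(G)'$ has some prime divisor $q$; Cauchy furnishes an element of order $q$ in $H$, and the preimage in $G$ of any index-$q$ subgroup of $\varphi(G)/\varphi(G)'$ is a normal subgroup of $G$ of index $q$, producing a valid $p$.

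For the upper bound, fix distinct $\varphi_1,\varphi_2\in\Hom(G,H)$ of maximal agreement, and set $K=\Eq(\varphi_1,\varphi_2)$ and $N=\ker\varphi_1\cap\ker\varphi_2$. If $N\neq 1$, the induced $\tilde\varphi_1,\tilde\varphi_2\in\Hom(G/N,H)$ have the same agreement; the pair $(G/N,H)$ retains the solvability hypothesis; and normal subgroups of $G/N$ of prime index pull back to normal subgroups of $G$ of the same index, so the minimal prime $p'$ for $(G/N,H)$ satisfies $p'\ge p$. The inductive hypothesis on the smaller group $G/N$ then yields $\Lambda_{G,H}\le \Lambda_{G/N,H}\le 1/p'\le 1/p$. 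Otherwise $N=1$, in which case $\varphi_1\times\varphi_2$ embeds $G$ into $H\times H$; if $H$ is solvable so is $G$. Thus in the reduced case $G$ is always solvable.

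To finish the reduced case, let $q$ denote the smallest prime dividing $|G/G'|$; since a normal subgroup of $G$ of prime index $r$ has cyclic quotient (hence contains $G'$) and conversely every prime $r\mid|G/G'|$ yields such a subgroup, $q$ is also the smallest prime for which $G$ has a normal subgroup of index $q$. The crux is to verify both that $H$ has an element of order $q$ (so $p\le q$) and that $[G:K]\ge q$. For the first, suppose $H$ has no such element; then each $\varphi_i(G)$ is a $q'$-group, so each $\ker\varphi_i$ contains $O^q(G)$, the (normal) subgroup generated by the $q$-elements of $G$. Hence $O^q(G)\le N=1$; but $q\mid|G|$, so Cauchy gives a nontrivial $q$-element, contradicting $O^q(G)=1$. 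For the second, let $K_0=\bigcap_{g\in G}gKg^{-1}$ be the normal core: then $G/K_0$ embeds in $S_{[G:K]}$ and is a nontrivial solvable group, so its abelianization $G/(G'K_0)$ is nontrivial; any prime $q^*$ dividing $|G/(G'K_0)|$ satisfies $q^*\le[G:K]$ (since $q^*\mid|S_{[G:K]}|$) and $q^*\ge q$ (since $G/(G'K_0)$ is a quotient of $G/G'$). Combining, $\Lambda_{G,H}=1/[G:K]\le 1/q\le 1/p$. The most delicate step is the $O^q(G)$ argument supplying an element of order $q$ in $H$; the remaining steps are routine structural group theory.
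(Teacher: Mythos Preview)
Your proof is correct and takes a genuinely different route from the paper's.

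The paper argues via the \emph{irrelevant kernel} $K=\bigcap_{\varphi\in\Hom(G,H)}\ker\varphi$: it shows that every prime dividing $[G:K]$ is the order of an element of $H$, that $G/K$ is solvable whenever $H$ is, and then invokes Berkovich's theorem (in a finite solvable group a proper subgroup of smallest index is normal) to conclude that the smallest-index proper subgroup of $G/K$ has prime index $p$; since every equalizer contains $K$, this forces $\Lambda_{G,H}\le 1/p$.

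You avoid Berkovich entirely. Instead you induct on $|G|$, peeling off $N=\ker\varphi_1\cap\ker\varphi_2$ for an optimal pair; when $N=1$ you get $G$ solvable in either case via the embedding $G\hookrightarrow H\times H$. The core-and-symmetric-group argument for $[G:K]\ge q$ is a standard elementary substitute for Berkovich (any prime dividing the order of $G/K_0\le S_{[G:K]}$ is at most $[G:K]$), and the $O^q(G)$ argument---forcing all $q$-elements into $N=1$ if $H$ has none of order $q$---is a neat way to secure $p\le q$ without appealing to global facts about the irrelevant kernel. The trade-off: the paper's proof is shorter once Berkovich is quoted and yields the more structural statement that the optimal equalizer sits above the irrelevant kernel, while your argument is fully self-contained and uses only Cauchy, the correspondence theorem, and basic facts about solvable groups.
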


We will prove Theorem~\ref{thm:Lambda for G or H solvable} in this subsection. Guo proved Theorem~\ref{thm:Lambda for G or H solvable} in the case where $H$ is finite, and either $G$ is solvable or $H$ is nilpotent.

Our proof relies on the following lemma about $\bigcap_{\phi \in \Hom(G, H)} \ker \phi$.

\begin{lemma}
	\label{lem:prime factors of G/K}
	Let $G$ be a finite group and $H$ a group. Let $K = \bigcap_{\phi \in \Hom(G, H)} \ker \phi$. Then every prime factor of $\ind{G}{K}$ is the order of an element of $H$.
\end{lemma}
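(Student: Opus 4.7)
The plan is to produce an element of order $p$ in $H$ directly, for any prime $p$ dividing $\ind{G}{K}$. Since $K$ is normal in $G$ (as an intersection of normal subgroups) and $G/K$ is a finite group whose order is divisible by $p$, Cauchy's theorem gives an element $gK \in G/K$ of order exactly $p$. Equivalently, $g \in G$ satisfies $g \notin K$ but $g^p \in K$.

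From the definition of $K$ as $\bigcap_{\varphi \in \Hom(G,H)} \ker \varphi$, the condition $g \notin K$ means that there exists some homomorphism $\varphi \in \Hom(G,H)$ with $\varphi(g) \neq 1_H$. On the other hand, $g^p \in K \subseteq \ker \varphi$, so $\varphi(g)^p = \varphi(g^p) = 1_H$. Therefore $\varphi(g)$ is a nonidentity element of $H$ whose order divides $p$; since $p$ is prime, the order is exactly $p$, finishing the proof.

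There is essentially no obstacle here: the argument is a two-line application of Cauchy's theorem together with the definition of the irrelevant kernel. The only things to double-check are that $K \trianglelefteq G$ (true as an intersection of normal kernels) and that Cauchy's theorem applies (true because $G/K$ is finite, being a quotient of the finite group $G$). Note also that no finiteness assumption on $H$ is used anywhere: we extract an element of order $p$ in $H$ purely from the image $\varphi(g)$ of one particular homomorphism, regardless of whether $H$ itself is finite.
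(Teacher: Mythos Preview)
Your proof is correct and is essentially identical to the paper's own proof: both use Cauchy's theorem to find $gK$ of order $p$ in $G/K$, then pick a homomorphism $\varphi$ with $g \notin \ker\varphi$ and conclude that $\varphi(g)$ has order $p$. Your write-up is slightly more explicit about invoking Cauchy and about why $K$ is normal and $G/K$ is finite, but the argument is the same.
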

\begin{proof}

	Consider any prime factor $p$ of $\ind{G}{K}$. Then there is $g \in G$ such that $gK$ has order $p$ in $G/K$. Since $g \notin K$, there is $\phi \in \Hom(G, H)$ such that $g \notin \ker \phi$. We have $g^p \in K$, so $\phi(g)^p \in \phi(K) = 1$, so $\abs{\phi(g)}$ divides $p$. Since $\phi(g) \neq 1$, we have $\abs{\phi(g)} = p$.
\end{proof}

We also use the following well-known fact and a theorem by Berkovich (see \cite{Isa08}). 
\begin{fact}
	\label{fact:solvable maximal normal subgroup prime index}
	In a solvable group, a normal subgroup is a maximal normal subgroup if and only if it has prime index.
\end{fact}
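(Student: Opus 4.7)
The plan is to establish both directions of the equivalence for a normal subgroup $N \trianglelefteq G$ in a solvable group.

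For the direction ``prime index $\Rightarrow$ maximal normal,'' I would argue as follows. If $\ind{G}{N} = p$ is prime, then $G/N$ is cyclic of prime order, hence simple. By the correspondence theorem, normal subgroups of $G$ containing $N$ are in bijection with normal subgroups of $G/N$, and simplicity of $G/N$ forces these to be exactly $N$ and $G$. Thus $N$ is maximal among normal subgroups of $G$. Note that this direction does not require solvability at all.

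For the converse ``maximal normal $\Rightarrow$ prime index,'' solvability is essential. If $N$ is maximal normal in $G$, then $Q := G/N$ is a nontrivial simple group, and since quotients of solvable groups are solvable, $Q$ is itself solvable. The crux is then to show that a nontrivial solvable simple group must be cyclic of prime order. The derived subgroup $Q'$ is characteristic, hence normal in $Q$; by simplicity it equals either $1$ or $Q$. If $Q' = Q$, then the derived series of $Q$ can never reach the trivial group, contradicting solvability. So $Q' = 1$, i.e., $Q$ is abelian. An abelian simple group has no proper nontrivial subgroups (since every subgroup is normal), so any nontrivial element generates $Q$, making $Q$ cyclic; $Q$ cannot be infinite cyclic since $\mathbb{Z}$ has proper nontrivial subgroups, so $Q \cong \mathbb{Z}_p$ for some prime $p$, giving $\ind{G}{N} = p$.

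The argument is entirely standard, which is why the paper labels this a ``well-known fact''; there is no real obstacle here, only the classical characterization of solvable simple groups as being cyclic of prime order. The one point worth flagging is that the forward direction actually only uses that $G/N$ is solvable, so the fact could be stated more generally as ``$N$ is a maximal normal subgroup of $G$ with $G/N$ solvable iff $N$ is normal of prime index,'' but the phrasing in the Fact is all that is needed for the application to Theorem~\ref{thm:Lambda for G or H solvable}.
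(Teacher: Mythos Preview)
Your proof is correct and is the standard argument for this classical fact. The paper does not actually supply a proof; it simply labels the statement as ``well-known'' and cites \cite{Isa08}, so there is no approach to compare against---your write-up is exactly what one would expect to fill in.
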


\begin{theorem}[Berkovich]
	\label{thm:Berkovich subgroup smallest index}
	Let $G$ be a finite solvable group and $K$ a proper subgroup of smallest index. Then $K \trianglelefteq G$.
	
\end{theorem}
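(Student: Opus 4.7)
My plan is to prove Berkovich's theorem by induction on $|G|$ (the base $|G| = 1$ is vacuous). First I would observe that $K$ must be maximal in $G$: any proper intermediate subgroup $K \lneq L \lneq G$ would satisfy $[G:L] < [G:K]$, violating minimality. I would then pick a minimal normal subgroup $N$ of $G$; solvability makes $N$ elementary abelian of order $p^m$ for some prime $p$ and integer $m \geq 1$, and I would split on whether $N \leq K$.

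If $N \leq K$, I would pass to $G/N$: the proper subgroup $K/N$ has smallest index there (every proper subgroup of $G/N$ lifts to a proper subgroup of $G$ containing $N$ of the same index), so induction gives $K/N \trianglelefteq G/N$ and hence $K \trianglelefteq G$. If $N \not\leq K$, maximality of $K$ forces $NK = G$; since $N$ is abelian, $N \cap K$ is normalized by both $N$ and $K$, hence by all of $NK = G$, and minimality of $N$ then forces $N \cap K = 1$. So $G = N \rtimes K$ and $[G:K] = |N| = p^m$. I would derive a contradiction in this case, splitting on $m$.

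For $m = 1$: if $K$ centralizes $N$, then $G = N \times K$ and $K$ is normal, so I may assume $K$ acts nontrivially on $N$. Then $C_G(N) = N \cdot C_K(N)$ is a proper normal subgroup of $G$ of index $[K : C_K(N)]$, which divides $|\Aut(N)| = p - 1 < p = [G:K]$, contradicting minimality of $[G:K]$. For $m \geq 2$: pick any subgroup $N_0 \leq N$ of order $p$. Since $N$ is abelian, $N \leq N_G(N_0)$; if $N_G(N_0) = G$ then $1 \lneq N_0 \lneq N$ would be a normal subgroup of $G$ contradicting the minimality of $N$, so $N_G(N_0) \lneq G$. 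Minimality of $[G:K]$ then gives $[G : N_G(N_0)] \geq p^m$, but $[G : N_G(N_0)]$ counts the $G$-conjugates of $N_0$, each of which lies in $N$ and has order $p$; the number of such subgroups of $N$ is $(p^m - 1)/(p - 1) = 1 + p + \cdots + p^{m-1} < p^m$, which is the desired contradiction.

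The hard part will be locating the right invariant to pit against the minimality of $[G:K]$ in the semidirect-product case $N \not\leq K$; neither subcase admits a uniform argument. The split on $m$ is dictated by which object is too small relative to $[G:K] = p^m$: for $m = 1$ it is the automorphism group $\Aut(N)$, while for $m \geq 2$ it is the set of order-$p$ subgroups of $N$, and the smallness of the latter becomes usable only after minimality of $N$ itself is invoked to prevent any such subgroup from being $G$-normal.
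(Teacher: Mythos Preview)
The paper does not prove this theorem; it merely cites it as a known result due to Berkovich (with a reference to \cite{Isa08}) and uses it as a black box in the proof of Theorem~\ref{thm:Lambda for G or H solvable}. So there is no ``paper's own proof'' to compare against.

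Your proof is correct. The induction via a minimal normal subgroup $N$ is the natural approach, and both branches are handled cleanly. A couple of minor clarifications you might add when writing it up: in Case~B you say ``I would derive a contradiction in this case,'' but the $m=1$ subcase with $K$ centralizing $N$ is not a contradiction---it is the desired conclusion $K \trianglelefteq G$; the phrasing should be ``either $K$ is normal or we reach a contradiction.'' Also, the inequality $(p^m-1)/(p-1) < p^m$ used in the $m \geq 2$ subcase is immediate but worth stating as equivalent to $p^m - 1 < p^{m+1} - p^m$, i.e., $2p^m < p^{m+1} + 1$, which holds for all primes $p \geq 2$. Otherwise the argument is complete and self-contained, which is more than the paper provides.
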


Next, we prove Theorem~\ref{thm:Lambda for G or H solvable} in the case when $G$ is solvable. Guo \cite[Theorem 5.5]{Guo15} proved this in the case that also $H$ is finite.

Guo's proof can be modified slightly to also accommodate infinite groups. We include a compact proof for completeness.

\begin{proof}[Proof of Theorem~\ref{thm:Lambda for G or H solvable} in the case where $G$ is solvable]

	Let $K = \bigcap_{\phi \in \Hom(G, H)} \ker \phi$. If $\LGH > 0$, then there is a nontrivial homomorphism $\phi \in \Hom(G, H)$. Then $\ker \phi$ is a proper normal subgroup of $G$, so is contained in a maximal normal subgroup $M$. By Fact~\ref{fact:solvable maximal normal subgroup prime index}, $\ind{G}{M}$ is prime, and since $M \geq \ker \phi \geq K$, we have that $\ind{G}{M}$ divides $\ind{G}{K}$. So, by Lemma~\ref{lem:prime factors of G/K}, we have that $H$ has an element of order $\ind{G}{M}$. So, if $\LGH > 0$, then $p$ exists.

	Henceforth, assume $p$ exists. Let $N$ be a normal subgroup of $G$ of index $p$, and $h$ an element of $H$ of order $p$. We show that $\LGH \geq 1/p$ by exhibiting a pair of homomorphisms that achieve this agreement. Let $\phi_1\colon  G \to H$ be the trivial homomorphism. 
	Since $\abs{G/N} = \abs{\langle h \rangle} = p$ prime, there is a group isomorphism $G/N \to \langle h \rangle$. This lifts to a group homomorphism $\phi_2\colon  G \to \langle h \rangle$. Then $\Eq(\phi_1, \phi_2) = N$, so $\agr(\phi_1, \phi_2) = 1/p$. So, $\Lambda_{G, H} \geq 1/p$.
	
	We next show that $\LGH \leq 1/p$. Since $N = \ker \phi_2 \geq K$, we have that $N/K \trianglelefteq G/K$ and $\ind{G/K}{N/K} = p$. Furthermore, 
	we claim that $N/K$ is a proper normal subgroup of smallest prime index in $G/K$ --- if there were a proper normal $\hat{N}/K$ of prime index $q < p$ (with $K \leq \hat{N} \leq G$), then $\hat{N}$ would be a normal subgroup of $G$ of index $q$, and $H$ would have an element of order $q$ by Lemma~\ref{lem:prime factors of G/K}, which would contradict the definition of $p$.
	By Fact~\ref{fact:solvable maximal normal subgroup prime index}, $N/K$ is in fact a proper normal subgroup of smallest index in $G/K$ (removing ``prime''). By Theorem~\ref{thm:Berkovich subgroup smallest index}, we further have that $N/K$ is a proper subgroup of smallest index in $G/K$ (removing ``normal''). Thus, $N$ is the has the smallest index of any subgroup of $G$ that contains $K$. Any equalizer of two homomorphisms in $\Hom(G, H)$ contains $K$, so no equalizer can have smaller index than $N$. So, $\Lambda \leq \mu(N) = 1/p$.
\end{proof}

To prove Theorem~\ref{thm:Lambda for G or H solvable} in the case where $H$ is solvable, we use the following fact.
\begin{lemma}
	\label{lem:if H solvable then G/K solvable}
	Let $G$ be a group and $H$ a solvable group. Let $K = \bigcap_{\phi \in \Hom(G, H)} \ker \phi$. Then $G/K$ is solvable.
\end{lemma}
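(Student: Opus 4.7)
The plan is to realize $G/K$ explicitly as a subgroup of a (possibly infinite) direct power of $H$, and then argue that this power is solvable of the same derived length as $H$.

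First, I would consider the diagonal-style map
\begin{equation*}
\Phi\from G\longrightarrow \prod_{\phi\in\Hom(G,H)} H,\qquad
g\longmapsto (\phi(g))_{\phi\in\Hom(G,H)}.
\end{equation*}
This is a group homomorphism, and by construction its kernel is exactly $\bigcap_\phi \ker\phi=K$. Hence $G/K$ embeds as a subgroup of $P\defeq \prod_{\phi\in\Hom(G,H)} H$. So it suffices to show that $P$ is solvable, because subgroups of solvable groups are solvable.

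The key point is that solvability of bounded derived length is captured by a single universal identity. Let $d$ be the derived length of $H$, so $H^{(d)}=\{1\}$. The condition $L^{(d)}=\{1\}$ for a group $L$ is equivalent to the law $w_d(x_1,\dots,x_{2^d})=1$ holding for all tuples, where $w_d$ is the appropriate iterated commutator word. For a direct product $\prod_i L_i$, the $d$-th derived subgroup is $\prod_i L_i^{(d)}$ (since taking commutators is done coordinatewise). Applying this with each $L_i=H$ gives $P^{(d)}=\prod_\phi H^{(d)}=\{1\}$, so $P$ is solvable of derived length at most $d$. Therefore $G/K\le P$ is solvable of derived length at most $d$, as required.

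I do not anticipate any serious obstacle here; the only subtlety is that $\Hom(G,H)$ may be an infinite index set, so one must be sure that an \emph{arbitrary} direct product of solvable groups of uniformly bounded derived length remains solvable. That is handled precisely by the observation above that the $d$-th derived subgroup of a direct product is the product of the $d$-th derived subgroups, which reduces the claim to the bounded derived length of a single copy of $H$.
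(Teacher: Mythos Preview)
Your proof is correct and follows essentially the same approach as the paper: both embed $G/K$ into a direct product indexed by $\Hom(G,H)$ and use that a direct product of groups of uniformly bounded derived length is solvable. The only cosmetic difference is that the paper maps into $\prod_\phi G/\ker\phi$ (each factor isomorphic to $\phi(G)\le H$) while you map directly into $\prod_\phi H$, which is marginally cleaner.
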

\begin{proof}
	
	For each $\phi \in \Hom(G, H)$, we have that $\phi(G)$ is solvable with derived length at most the derived length of $H$. 
	Also, $G/\ker \phi \cong \phi(G)$.
	So, $\prod_{\phi \in \Hom(G, H)} G/\ker \phi$ is a direct product of solvable subgroups with bounded derived length, so is solvable.  
	Let $\psi\colon  G \to \prod_{\phi \in \Hom(G, H)} G/\ker \phi$ be the projection onto each coordinate. Then $\ker \psi = K$. And, $\psi(G)$ is a subgroup of a solvable group, so is solvable. Thus, $G/K = G/\ker \psi \cong \psi(G)$ is solvable.
\end{proof}

We can now prove Theorem~\ref{thm:Lambda for G or H solvable} in the case where $H$ is solvable.

\begin{proof}[Proof of Theorem~\ref{thm:Lambda for G or H solvable} in the case where $H$ is solvable]
	
	Let $K = \bigcap_{\phi \in \Hom(G, H)} \ker \phi$. Let $p$ be the smallest prime divisor of $\abs{G}$ such that $G$ has a normal subgroup of index $p$ and $H$ has an element of order $p$. If $N$ is a normal subgroup of prime index and $H$ contains an element $h$ of order $\ind{G}{N}$, then $K \leq N$, since the isomorphism $G/N \to \langle h \rangle$ lifts to a homomorphism $G \to \langle h \rangle$ with kernel $N$. So, $p$ is the smallest prime index of a normal subgroup of $G$ that contains $K$. So, $p$ is the smallest prime index of a normal subgroup of $G/K$.
	
	We have that $G/K$ is solvable by Lemma~\ref{lem:if H solvable then G/K solvable}. So, the case of Theorem~\ref{thm:Lambda for G or H solvable} in which the domain is solvable, we have that $\Lambda_{G/K, H} = 1/p$. Then, by Lemma~\ref{lemma:irr-normal-subgroup}, we have $\Lambda_{G, H} = \Lambda_{G/K, H} = 1/p$.
\end{proof}

\section{Alternating domain, combinatorial list-decoding}
\label{section:alternating}

In this section, we will find that homomorphism codes with alternating domain are CombEcon. The exact constant is stated in Theorem~\ref{thm:alt-main-with-constant}. We remark that the constant in the $\poly(1/\eps)$-bound on list size can be improved using the SRG methods of Section~\ref{section:SRG}. The proof here utilizes the sphere packing bound, the sub-bucket bound, and a previous result on length of subgroup chains in symmetric groups~\cite{Bab_subgroupchain}, which do most of the heavy lifting. 

First, in Section~\ref{section:alt-background}, we present some background on the structure of alternating groups. In Section~\ref{section:alt-tools-2}, we present a corollary to the sphere packing bound, Lemma~\ref{lemma:sphere-packing}. In Section~\ref{section:alt-proof} we show that $\Lambda_{A_n, H}$ can only take the values $1/n$ and $1/\binom{n}{2}$ and prove the claim that $\altgroups \times \groups$ is CombEcon. Section~\ref{section:alt-blowup} addresses the list-decoding radius of homomorphism codes with alternating domain, by exhibiting a ``blowup'' in list size when agreement is exactly $\Lambda$, or when radius is $(1-\Lambda)$.

\subsection{Background on structure of alternating groups}
\label{section:alt-background}

For a set $\Omega$, let $\Alt(\Omega)$ denote the alternating group on $\Omega$. Similarly, let $\Sym(\Omega)$ denote the symmetric group on $\Omega$. We denote $A_n = \Alt([n])$ and $S_n = \Sym([n])$.  

Let $G \leq \Sym(\Omega)$. For $\pi \in G$ and $x \in \Omega$, we denote by $x^\pi$ the action of $\pi$ on $x$. For $x \in \Omega$, denote by $G_x = \{ \pi \in G \mid x^\pi = x \}$ the point stabilizer of $x$. Let $\Delta \subseteq \Omega$.  Denote by $G_{(\Delta)} = \{ \pi \in G \mid (\forall x \in \Delta)(x^\pi = x)\}$ the pointwise stabilizer of  $\Delta$. Denote by $G_{\{\Delta\}} =  \{ \pi \in G \mid \Delta^\pi = \Delta \}$ the setwise stabilizer of $\Delta$, where $\Delta^\pi := \{ x^\pi: x \in \Delta\}$. 

We present a few useful structural results for alternating and symmetric groups. The following theorem, due to Liebeck (see \cite[Theorem 5.2A]{DM}), describes the large subgroups of $A_n$. 

\begin{theorem}[Jordan-Liebeck]
	\label{thm:JordanLiebeck}
	Let $n \geq 10$ and let $r$ be an integer with $1 \leq r < n/2$. 
	Suppose that $K \leq A_n$ has index $\ind{A_n}{K} < \binom{n}{r}$. Then, for some $\Delta \subseteq [n]$ with $\lvert \Delta \rvert < r$, we have $(A_n)_{(\Delta)} \leq K \leq (A_n)_{\{\Delta\}}$. 
\end{theorem}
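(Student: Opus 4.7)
The plan is to prove this classical result of Liebeck in two stages: first establish the setwise-stabilizer containment $K \le (A_n)_{\{\Delta\}}$ using the O'Nan--Scott classification of maximal subgroups of $A_n$, then extract the pointwise-stabilizer containment $(A_n)_{(\Delta)} \le K$ by induction on $n$.

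For the upper bound, I would embed $K$ in a maximal subgroup $M$ of $A_n$, so that $\ind{A_n}{M} \le \ind{A_n}{K} < \binom{n}{r}$. The O'Nan--Scott theorem (together with CFSG) partitions maximal subgroups of $A_n$ into three families: intransitive ones of the form $(A_n)_{\{\Delta\}}$ with $|\Delta| \le n/2$ and index $\binom{n}{|\Delta|}$; transitive imprimitive ones (stabilizers of nontrivial equipartitions); and primitive ones. For $r < n/2$, the imprimitive case has index at least $\binom{n}{\lfloor n/2\rfloor}$, and by Liebeck's bound (with Mar\'oti's refinement) any primitive subgroup of $S_n$ not containing $A_n$ has order at most $n^{1+\log_2 n}$, hence index at least $n!/n^{O(\log n)}$; both bounds exceed $\binom{n}{r}$ once $n \ge 10$. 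So $M = (A_n)_{\{\Delta\}}$, and $\binom{n}{|\Delta|} < \binom{n}{r}$ with $|\Delta| \le n/2$ forces $|\Delta| < r$.

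For the lower bound, set $k = |\Delta|$ and $N = K \cap (A_n)_{(\Delta)}$. The restriction map $K \to \Sym(\Delta)$ has kernel $N$, so $[K:N] \le k!$; combined with $[A_n : (A_n)_{(\Delta)}] = k!\binom{n}{k}$ and the identity $\binom{n}{r}\binom{r}{k} = \binom{n}{k}\binom{n-k}{r-k}$, I would derive
\[
[(A_n)_{(\Delta)} : N] \;\le\; \frac{[A_n:K]\cdot [K:N]}{k!\binom{n}{k}} \;<\; \frac{\binom{n}{r}}{\binom{n}{k}} \;=\; \frac{\binom{n-k}{r-k}}{\binom{r}{k}} \;\le\; \binom{n-k}{r-k}.
\]
The constraints $k < r < n/2$ give $1 \le r-k < (n-k)/2$, so identifying $(A_n)_{(\Delta)}$ with $A_{n-k}$, the inductive hypothesis produces $\Delta' \subseteq [n]\setminus\Delta$ with $|\Delta'| < r-k$ and $(A_{n-k})_{(\Delta')} \le N \le (A_{n-k})_{\{\Delta'\}}$. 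Setting $\Delta^* = \Delta \cup \Delta'$, the containment $(A_n)_{(\Delta^*)} \le N \le K$ is immediate, and $|\Delta^*| < r$.

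The main obstacle is the remaining step $K \le (A_n)_{\{\Delta^*\}}$: we know $K$ preserves $\Delta$, but we also need $K$ to preserve $\Delta'$ setwise. Since $N \trianglelefteq K$, conjugation by any $\sigma \in K$ is an automorphism of $N$ as a permutation group on $[n]\setminus\Delta$; if $\Delta'$ is \emph{canonically} determined by $N$ --- say as the unique shortest $N$-orbit on $[n]\setminus\Delta$, or as the minimal witness (under inclusion) of the inductive conclusion --- then $\sigma(\Delta') = \Delta'$ automatically. I would therefore strengthen the inductive statement to assert the existence of a canonical such $\Delta'$, so that $K$-conjugation must preserve it. Small base cases where $n-k$ falls below the threshold $n \ge 10$ at which the statement applies must be handled separately by direct inspection of the low-index subgroup lattices of $A_5, \dots, A_9$ (where, e.g., $A_6$ admits the exotic index-$6$ subgroups that explain the $n \ge 10$ restriction).
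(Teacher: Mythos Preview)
The paper does not prove this theorem; it simply quotes it as Theorem~5.2A of Dixon--Mortimer, so there is no ``paper's proof'' to compare against.

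Your Stage~1 contains a genuine error. The claim that every transitive imprimitive maximal subgroup of $A_n$ has index at least $\binom{n}{\lfloor n/2\rfloor}$ is false: when $n=2m$ is even, the stabiliser in $A_n$ of an unordered partition into two blocks of size $m$ has index $\tfrac{1}{2}\binom{2m}{m}$, and for every $m\ge 2$ this is \emph{strictly less} than $\binom{2m}{m-1}$. Concretely, for $n=10$ this subgroup $K$ has index $126$, while $\binom{10}{4}=210$; so with $r=4<n/2$ the hypothesis holds. But $K$ is transitive, hence $K\le (A_n)_{\{\Delta\}}$ forces $\Delta=\emptyset$ and then $(A_n)_{(\Delta)}\le K$ forces $K=A_n$, a contradiction. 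So not only does your bound fail---the statement as written in the paper is not literally correct at this boundary, and the full Dixon--Mortimer formulation carries this imprimitive case as an explicit exception. (This does not affect the paper's application, which uses only $r=5$ with $n\ge 38$, where $\tfrac{1}{2}\binom{n}{n/2}\gg\binom{n}{5}$.) Your Stage~2 index computation and inductive setup are sound, but the final step---arguing that $K$ stabilises $\Delta'$ setwise via a ``canonical'' choice of $\Delta'$---is only sketched; the natural candidate (the set of fixed points of $N$, or its unique short orbit) would need a proof that it is indeed uniquely determined by $N$ at each stage, including at the small-$n$ base cases you defer.
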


We will need the following result from~\cite{Bab_subgroupchain}, which describes the length of subgroup chains. 
\begin{theorem}[Babai]
	\label{thm:babai-subgroup-chain}
	
	The length of any subgroup chain in $S_n$ is at most $2n-3$. 
\end{theorem}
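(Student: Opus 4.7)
The plan is to prove the bound by induction on $n$, with the small cases $n \leq 3$ handled by direct inspection. Write $\ell(G)$ for the length of the longest subgroup chain in a finite group $G$; the claim is $\ell(S_n) \leq 2n - 3$.

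For the inductive step, I would fix a chain $1 = H_0 < H_1 < \cdots < H_k$ in $S_n$ and analyze it by tracking the orbit partition of each $H_i$ on $[n]$. The first step is to locate the largest index $t$ such that $H_t$ is \emph{intransitive}. Such an $H_t$ lies inside a Young subgroup $\Sym(\Delta) \times \Sym([n] \setminus \Delta) \cong S_a \times S_b$ with $a + b = n$ and $1 \leq a \leq b < n$. Using that chain length is additive across direct products ($\ell(G_1 \times G_2) = \ell(G_1) + \ell(G_2)$) together with induction, this bounds $t \leq (2a - 3) + (2b - 3) = 2n - 6$ for $a, b \geq 2$, with the convention $\ell(S_1) = 0$ handling $a = 1$.

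For the upper transitive portion $H_{t+1} < \cdots < H_k$, the basic tool is that whenever $H < K$ are both transitive on $[n]$, the point stabilizers satisfy $H_1 < K_1$ strictly in $S_{n-1}$ (since $|H| = n \cdot |H_1|$ and $|K| = n \cdot |K_1|$). The chain of point stabilizers of $H_{t+1}, \ldots, H_k$ is thus a strict chain in $S_{n-1}$, so by induction its length is at most $\ell(S_{n-1}) \leq 2n - 5$.

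The main obstacle is that naively adding these two bounds gives roughly $4n - 11$, far worse than $2n - 3$. Sharpening to the claimed bound requires showing that the two segments cannot both be close to extremal: if the intransitive chain below $H_t$ is long, then $H_t$ is already near the top of some $S_a \times S_b$, and the transitive chain from $H_t$ up to $H_k$ is forced to be short, since intermediate groups between $S_a \times S_b$ and $S_n$ are controlled by classical results on primitive and imprimitive permutation groups (Bochert's theorem, wreath-product/block-system analysis). I would therefore set up a simultaneous induction that tracks both the original chain and the induced chain of point stabilizers in $S_{n-1}$, charging each step $H_{i-1} < H_i$ to exactly one of the two chains so the total is bounded by $\ell(S_{n-1}) + 2 \leq 2n - 3$. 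Designing the charging scheme so that every intransitive-to-transitive transition is accounted for without double-counting — and controlling the few places where both segments might simultaneously advance — is the technical heart of the argument.
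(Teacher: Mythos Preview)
The paper does not prove this theorem; it is quoted from \cite{Bab_subgroupchain} as a known result, so there is no in-paper argument to compare your proposal against.

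On its own merits, your sketch has a genuine gap that you yourself flag. The split into an intransitive tail $H_0 < \cdots < H_t$ and a transitive head $H_{t+1} < \cdots < H_k$ is natural, and your observation that point stabilizers of a strict chain of \emph{transitive} subgroups form a strict chain in $S_{n-1}$ (via $|H_i| = n\,|(H_i)_\alpha|$) is correct and is a standard ingredient in this circle of ideas. But the two pieces together give only a bound of order $4n$, and the promised ``charging scheme'' that would collapse the whole chain to $\ell(S_{n-1}) + 2$ is asserted, not constructed. This is not a technicality to be filled in later: for $i \le t$ the map $H_i \mapsto (H_i)_\alpha$ can collapse (it does so exactly when the $\alpha$-orbit grows), and there is no reason the $\alpha$-orbit should grow only a bounded number of times along the chain, no matter how $\alpha$ is chosen. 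Your heuristic that a long intransitive tail forces a short transitive head is likewise unproved --- $H_t$ can be a full Young subgroup while still sitting below several transitive imprimitive overgroups --- and Bochert's theorem or block-system analysis bounds orders and indices, not chain lengths directly. Without either an explicit invariant that provably drops at every maximal step (which is what the cited argument supplies) or a worked-out charging rule, the induction does not close.
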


\begin{corollary}
	\label{cor:babai-subgroup-interval}
	The length of every subgroup chain between $A_{n-k}$ and $S_n$ is at most $2k$.
	
\end{corollary}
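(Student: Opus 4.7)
The plan is to extend the given subgroup chain at its bottom end and invoke Babai's bound on chain lengths inside $S_n$ (Theorem~\ref{thm:babai-subgroup-chain}). Concretely, let
\[
A_{n-k} = H_0 < H_1 < \cdots < H_\ell = S_n
\]
be an arbitrary subgroup chain in the interval $[A_{n-k}, S_n]$; the goal is to bound $\ell$ from above by $2k$. I would fix, once and for all, a subgroup chain
\[
\{1\} = G_0 < G_1 < \cdots < G_m = A_{n-k}
\]
lying entirely inside $A_{n-k}$, chosen to be as long as possible. Concatenation produces a subgroup chain
\[
\{1\} = G_0 < \cdots < G_m = H_0 < \cdots < H_\ell = S_n
\]
inside $S_n$ of length $m + \ell$, and Theorem~\ref{thm:babai-subgroup-chain} then yields $m + \ell \le 2n - 3$, i.e., $\ell \le (2n - 3) - m$.

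From this inequality, the bound $\ell \le 2k$ will follow as soon as one establishes the matching lower bound $m \ge 2(n-k) - 3$; that is, one needs a subgroup chain inside $A_{n-k}$ whose length meets the Babai bound for the corresponding symmetric group $S_{n-k}$. This existence statement is the heart of the matter, and it is where the main technical effort goes: the chain has to be built explicitly, rather than extracted from the upper-bound argument of Theorem~\ref{thm:babai-subgroup-chain}. A natural recipe is to imitate Babai's own construction achieving (or nearly achieving) the bound $2(n-k) - 3$ in $S_{n-k}$ via refinements of iterated wreath products and Sylow towers, and then to arrange for $A_{n-k}$ (rather than $S_{n-k}$) to sit at the top of the produced chain.

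The main obstacle in the plan is therefore this lower-bound construction of a long chain ending at $A_{n-k}$: once that is in hand, the corollary is immediate by subtraction. In particular, no further structural information about the interval $[A_{n-k}, S_n]$ itself is needed, and the argument is essentially a ``length budget'' between the two ends of a maximal chain of $S_n$.
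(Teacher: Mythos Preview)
Your strategy hinges on finding a subgroup chain of length $m \ge 2(n-k) - 3$ inside $A_{n-k}$, but no such chain exists. The bound $\ell(S_n) \le 2n - 3$ in Theorem~\ref{thm:babai-subgroup-chain} is \emph{not} tight: by the Cameron--Solomon--Turull theorem the longest chain in $S_m$ has length exactly $\lceil 3m/2 \rceil - b(m) - 1$ (with $b(m)$ the number of ones in the binary expansion of $m$), so the longest chain in $A_{n-k}$ has length roughly $\tfrac{3}{2}(n-k)$, falling short of $2(n-k)-3$ by about $(n-k)/2$. Your plan to ``imitate Babai's own construction achieving (or nearly achieving) the bound $2(n-k)-3$'' therefore cannot succeed; the upper bound is a genuine overestimate and no construction can close that gap.

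The paper states the corollary without proof, but it does not follow from the bare inequality $\ell(S_n) \le 2n-3$ by your subtraction argument; more input from the cited work is required. One option is to use the inductive structure of Babai's proof rather than only its headline conclusion. Alternatively, argue directly about overgroups of $A_{n-k}$: for any $H$ with $A_{n-k} \le H \le S_n$, the $H$-orbit $\Omega \supseteq [n-k]$ satisfies $H|_{\Omega} \ge \Alt(\Omega)$ (since $H|_{\Omega}$ is primitive---a nontrivial block system would conflict with the primitivity of $A_{n-k}$ on $[n-k]$ when $k < n/2$---and contains a $3$-cycle, so Jordan's theorem applies). The chain length is then controlled by the at most $k$ increments in $|\Omega|$ along the chain, the single $\Alt/\Sym$ ambiguity on $\Omega$, and the residual action on the at most $k$ points outside $\Omega$.
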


\subsection{Sphere packing by low-depth subgroups} 
\label{section:alt-tools-2}

We present a consequence (Lemma~\ref{lem:sphere-packing-subgroups}) of the sphere packing bound (Lemma~\ref{lemma:sphere-packing}), which we will use to prove that alternating groups are universally CombEcon. This lemma bounds the list size in terms of a ``starting set'' of subgroups and the subgroup depth of its members. This approach depends very little on the codomain $H$. 

Throughout this section, we let $G$ be a finite group, $H$ a group (finite or infinite), $\Lambda = \Lambda_{G, H}$ the maximum agreement, $f \from G \to H$ a received word, $\eps > 0$ a real, $\highagr = \highagr(\aHom(G, H), f, \Lambda + \eps)$ the list, and $\Psi \subseteq \aHom(G, H)$ as defined in Lemma~\ref{lemma:sphere-packing}. 

Recall our strategy for proving CombEcon using the Sphere Packing Lemma --- We divided the list $\highagr$ into buckets $\buckl\psi$ for $\psi \in \Psi$,

where 

\begin{equation*}
\buckl\psi = \{ \varphi \in \cL(\aHom(G,H), f, \Lambda + \eps) \mid \agr(\psi, \varphi) > \Lambda^2 \}.
\end{equation*}
We then split the bucket $\buckl\psi$ into further \emph{sub-buckets} according to the location of agreement with $\psi$. Each sub-bucket of $\buckl\psi$ is labeled by a subgroup $K$ of $G$ which we call the \emph{label subgroup}. We defined $\sbk_{\psi, K} \subseteq \buckl\psi$ to be the subset of homomorphisms whose equalizer with $\psi$ contain $K$; that is,

\begin{align*}
	\sbk_{\psi, K} = \{ \varphi \in \buckl\psi \mid K \leq \Eq(\varphi, \psi) \}.
\end{align*}

We concern ourselves now with the the set of label subgroups; we call such a set a \emph{starting set}. 

Intuitively, a set $\cS$ of subgroups is a starting set if the upper range of the subgroup lattice of $G$ contains only supergroups of elements in $\cS$.

With an appropriate notion of ``upper range,'' these starting sets form a sufficient set of label subgroups so that the sub-buckets $\sbk_{\psi, K}$ cover the bucket $\buckl\psi$ (see Remark~\ref{rmk:subbucket-subgroups}).

\begin{definition}[$(G,\lambda)$-starting-set]
Let $\cS$ be a set of subgroups of $G$. Let $\lambda \in (0,1)$. We say that $\cS$ is a \defn{$(G,\lambda)$-starting-set} if 
	\begin{equation*}
	(\forall K \leq G)( \mu_G(K) > \lambda \Rightarrow (\exists S \in \cS)(S \leq K)). 
	\end{equation*}
\end{definition} 

\begin{remark}
\label{rmk:subbucket-subgroups}
Suppose that $\cS$ is a $(G, \Lambda^2)$-starting set. Then, for any $f: G \rightarrow H$ and $\psi \in \aHom(G,H)$, 

\begin{align}
	\buckl\psi = \bigcup_{K \in \cS} \sbk_{\psi, K}.
\end{align}
Combining this with the bucket-splitting lemma (Lemma~\ref{lemma:bucket-splitting}),
\begin{align}
	\highagr = \bigcup_{\psi \in \Psi} \bigcup_{K \in \cS} \sbk_{\psi, K}.
\end{align}
\end{remark}

This allows us to use our bound on $\sbk_{\psi, K}$ from Corollary~\ref{cor:bucket-bound} to bound the size of the list.

\begin{lemma}[Sphere packing via low-depth subgroups]
\label{lem:sphere-packing-subgroups}
Let $G$ be a finite group, $H$ a group, and $\eps > 0$. 
Let $\cS$ be a $(G, \Lambda^2)$-starting-set. Then, 
	\begin{equation}
	\ell(\Hom(G,H), \Lambda+\eps) \leq \left( \frac{1}{4(\Lambda+\eps)\eps} + 1 \right) \cdot \sum_{K \in \cS} 1/\eps^{\depth(K)}. 
	\end{equation}
\end{lemma}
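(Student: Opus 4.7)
The plan is to combine three ingredients already developed in the paper: the bucket-splitting lemma (Lemma~\ref{lemma:bucket-splitting}), the observation that a $(G,\Lambda^2)$-starting-set suffices to cover each bucket by sub-buckets (Remark~\ref{rmk:subbucket-subgroups}), and the sub-bucket size bound for low-depth label subgroups (Corollary~\ref{cor:bucket-bound}). Each factor in the target bound corresponds to exactly one of these ingredients.

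First, I would fix an arbitrary received word $f \colon G \to H$, set $\highagr = \cL(\Hom(G,H), f, \Lambda+\eps)$, and apply Lemma~\ref{lemma:bucket-splitting} to obtain a set $\Psi \subseteq \highagr$ with
\[
|\Psi| \;\leq\; \frac{1}{4(\Lambda+\eps)\eps} + 1
\]
and $\highagr \subseteq \bigcup_{\psi \in \Psi} \buckl\psi$. This accounts for the first factor in the claimed bound.

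Next, I would show that for each $\psi \in \Psi$,
\[
\buckl\psi \;\subseteq\; \bigcup_{K \in \cS} \sbk_{\psi,K}.
\]
Indeed, if $\varphi \in \buckl\psi$ then $\agr(\psi,\varphi) > \Lambda^2$, and since $\psi,\varphi \in \Hom(G,H)$, Fact~\ref{fact:equalizer-coset}(a) gives that $\Eq(\psi,\varphi) \le G$ is a subgroup of density strictly greater than $\Lambda^2$. The $(G,\Lambda^2)$-starting-set property of $\cS$ then yields some $K \in \cS$ with $K \le \Eq(\psi,\varphi)$, i.e., $\varphi \in \sbk_{\psi,K}$ by Definition~\ref{defn:subbucket}. (This is exactly the content of Remark~\ref{rmk:subbucket-subgroups}, specialized to the homomorphism setting.)

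Finally, Corollary~\ref{cor:bucket-bound} gives $|\sbk_{\psi,K}| \leq 1/\eps^{\depth(K)}$ for every $\psi$ and $K$. Chaining the three inclusions and taking cardinalities,
\[
|\highagr| \;\leq\; \sum_{\psi\in\Psi}\sum_{K\in\cS} |\sbk_{\psi,K}| \;\leq\; |\Psi| \cdot \sum_{K\in\cS} \frac{1}{\eps^{\depth(K)}} \;\leq\; \left(\frac{1}{4(\Lambda+\eps)\eps}+1\right)\sum_{K\in\cS}\frac{1}{\eps^{\depth(K)}}.
\]
Since the received word $f$ was arbitrary, this bounds $\ell(\Hom(G,H),\Lambda+\eps)$ as required. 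The proof is essentially just an assembly; there is no real obstacle beyond checking that Corollary~\ref{cor:bucket-bound} applies uniformly across $\psi\in\Psi$ and $K\in\cS$, which is immediate because the sub-bucket bound in that corollary does not depend on $\psi$ but only on $\depth(K)$ and $\eps$.
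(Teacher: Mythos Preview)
Your proof is correct and follows essentially the same approach as the paper's: combine Remark~\ref{rmk:subbucket-subgroups} (covering each bucket by sub-buckets indexed by the starting set), Lemma~\ref{lemma:bucket-splitting}/Lemma~\ref{lemma:sphere-packing} (bounding $|\Psi|$), and Corollary~\ref{cor:bucket-bound} (bounding each $|\sbk_{\psi,K}|$). Your version is simply more explicit than the paper's three-line proof, in particular in spelling out why the $(G,\Lambda^2)$-starting-set property applies via Fact~\ref{fact:equalizer-coset}(a).
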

\begin{proof}

By Remark~\ref{rmk:subbucket-subgroups}, we have that $\abs{\highagr} \leq \sum_{\phi \in \Phi} \sum_{K \in \cS} \abs{\sbk_{\psi, K}}$. By Lemma~\ref{lemma:sphere-packing}, $\abs{\Psi} \leq \frac{1}{4(\Lambda+\eps)\eps}$, and by Corollary~\ref{cor:bucket-bound}, $\abs{\sbk_{\psi, K}} \leq 1/\eps^{\depth(K)}$.
\end{proof}

We will use Lemma~\ref{lem:sphere-packing-subgroups} in the proof that alternating groups are universally CombEcon.

\subsection{Proof $A_n$ is universally CombEcon}
\label{section:alt-proof}

We prove that $A_n$ is CombEcon by proving Theorem~\ref{thm:alt-main-with-constant} below, which states a constant for the CombEcon claim. (This constant is improved via the methods of SRG groups in Section~\ref{section:SRG}.)

\begin{theorem}
	\label{thm:alt-main-with-constant}
	For every group $H$, integer $n \geq 38$ and $\eps > 0$, we find that 
	\begin{equation*}
	\ell(\Hom(A_n,H), \Lambda_{A_n,H} + \eps) \leq 1/\eps^{16}.
	\end{equation*}
\end{theorem}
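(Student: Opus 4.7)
The plan is to apply the sphere packing via low-depth subgroups bound (Lemma~\ref{lem:sphere-packing-subgroups}) after choosing a convenient $(A_n, \Lambda^2)$-starting set. First I would dispose of trivialities: by Lemma~\ref{lem:lambda-alternating}, $\Lambda = \Lambda_{A_n,H}$ is either $0$ (in which case $|\Hom(A_n,H)|\le 1$ and the bound is trivial), $1/n$, or $1/\binom{n}{2}$. Next, by the Large $\eps$ lemma (Lemma~\ref{lemma:small-epsilon-vs-lambda}) we may assume $\eps<\sqrt{2\Lambda}$, so $\Lambda>\eps^2/2$; in particular $\binom{n}{2} < 2/\eps^2$, whence $n = O(1/\eps)$ in the harder case $\Lambda=1/\binom{n}{2}$, and $n = O(1/\eps^2)$ in the case $\Lambda = 1/n$.

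Second, I would build the starting set. Any subgroup $K\le A_n$ with $\mu(K)>\Lambda^2$ satisfies $\ind{A_n}{K}<1/\Lambda^2 \le \binom{n}{2}^2$. The pivotal numeric inequality is
\[
    \binom{n}{2}^2 \;\le\; \binom{n}{5}, \qquad \text{equivalently} \qquad (n-2)(n-3)(n-4)\;\ge\; 30\,n(n-1),
\]
which is easily verified to hold for $n \ge 38$ (and to fail at $n=37$). This is exactly where the hypothesis $n \geq 38$ is used. So by the Jordan--Liebeck Theorem (Theorem~\ref{thm:JordanLiebeck}), every such $K$ contains $(A_n)_{(\Delta)}$ for some $\Delta \subseteq [n]$ with $|\Delta|\le 4$. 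Hence
\[
    \cS \;:=\; \{(A_n)_{(\Delta)} \;:\; \Delta\subseteq [n],\ |\Delta|\le 4\}
\]
is an $(A_n,\Lambda^2)$-starting set of size $|\cS| \le \sum_{k=0}^{4}\binom{n}{k} = O(n^4)$.

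Third, I would bound depth. Since $(A_n)_{(\Delta)} \cong A_{n-|\Delta|}$, Corollary~\ref{cor:babai-subgroup-interval} yields $\depth_{A_n}((A_n)_{(\Delta)}) \le 2|\Delta| \le 8$. Plugging into Lemma~\ref{lem:sphere-packing-subgroups},
\[
    \ell(\Hom(A_n,H),\Lambda+\eps) \;\le\; \Bigl(\frac{1}{4(\Lambda+\eps)\eps}+1\Bigr)\cdot\sum_{K\in\cS} \eps^{-\depth(K)} \;\le\; O(\eps^{-2})\cdot O(n^4)\cdot \eps^{-8}.
\]
Using $n = O(1/\eps)$, this yields $O(\eps^{-14})$, which comfortably fits under the stated $\eps^{-16}$ with the constants that come out of the lemmas. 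The case $\Lambda = 1/n$ is even easier, needing only $\binom{n}{3}\ge n^2$ (valid for $n \ge 9$) and $|\Delta|\le 2$, which gives depth at most $4$ and $|\cS| = O(n^2) = O(\eps^{-4})$, producing an even better exponent.

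The only genuine obstacle is numeric: the tightness of $\binom{n}{5}\ge \binom{n}{2}^2$. Everything else is a bookkeeping exercise, combining Lemma~\ref{lem:lambda-alternating} (to identify $\Lambda$), the Large $\eps$ reduction (to pin down $n$ in terms of $\eps$), Jordan--Liebeck (to describe large subgroups of $A_n$ by stabilizers of small sets), and Babai's subgroup-chain bound (to keep the exponent of $\eps$ small). Explicit constant-tracking through these inequalities, taking care that the leading coefficient absorbs into one further factor of $\eps^{-1}\!\!$ or $\eps^{-2}\!\!$, then yields the clean bound $1/\eps^{16}$ stated in the theorem.
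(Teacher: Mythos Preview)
Your proposal is correct and follows essentially the same route as the paper's proof: identify $\Lambda$ via Lemma~\ref{lem:lambda-alternating}, invoke the Large~$\eps$ lemma, use the inequality $\binom{n}{2}^2 \le \binom{n}{5}$ (valid precisely for $n\ge 38$) together with Jordan--Liebeck to build a starting set of pointwise stabilizers, bound their depth via Corollary~\ref{cor:babai-subgroup-interval}, and plug into Lemma~\ref{lem:sphere-packing-subgroups}. The paper differs only in bookkeeping: it takes $\cS=\{(A_n)_{(\Delta)}:|\Delta|=5\}$ (rather than $|\Delta|\le 4$), treats both values of $\Lambda$ uniformly via $\Lambda\ge 1/\binom{n}{2}$, and bounds $|\cS|=\binom{n}{5}<1/\eps^6$ directly, arriving at $1/\eps^{2}\cdot 1/\eps^{6}\cdot 1/\eps^{8}=1/\eps^{16}$ without case-splitting.
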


\begin{proof}[Proof of Theorem~\ref{thm:alt-main-with-constant}] 
	By Lemma~\ref{lem:lambda-alternating}, we find that $\Lambda^2 \geq 1/{n\choose 2}^2 \geq 1/{n \choose 5}$. We use Lemma~\ref{lem:sphere-packing-subgroups}.
	
We first define a starting set 
	\begin{equation*}
	\cS = \{ (A_n)_{(\Delta)}: \Delta \subseteq [n], \abs{\Delta} = 5 \}. 
	\end{equation*}
That $\cS$ is an $(A_n, \Lambda^2)$-starting-set follows by Jordan-Liebeck, Theorem~\ref{thm:JordanLiebeck}. By Corollary~\ref{cor:babai-subgroup-interval}, we find that $\depth_{A_n}(K) \leq 2\cdot 5 -2  = 8$ for all $K \in \cS$. 
We assume $\eps^2 < 2\Lambda$ by Lemma~\ref{lemma:small-epsilon-vs-lambda}. Since $\abs{\cS} = {n \choose 5} < (\binom{n}{2}/2)^3 = (\Lambda/2)^3 < 1/\eps^6$,  Theorem follows from Lemma~\ref{lem:sphere-packing-subgroups}. 
\end{proof}

\subsection{Upper bound on list-decoding radius}
\label{section:alt-blowup}

We showed in Section~\ref{section:alt-proof} that $\altgroups \times \groups$, and all of its subclasses, have list-decoding radius greater than $1 - (\Lambda + \eps)$ for all $\eps > 0$.

In contrast, $\altgroups \times \groups$ and many of its subclasses have list-decoding radius at most $1 - \Lambda$. In this section, we demonstrate such a subclass. The number of homomorphisms within a closed ball of radius $1 - \Lambda$ of a received word will be exponential in $\log \abs{G}$ and $\log \abs{H}$. We note that $\abs{H} \geq \abs{G}$ unless $\Lambda = 0$.

\begin{proposition}
	For any $n$, and $\lambda \in \{1/n, 1/\binom{n}{2}\}$, there exists a finite group $H_n$ such that $\Lambda_{A_n, H_n} = \lambda$ and
	\begin{align}
	\ell(\Hom(A_n,H_n), \LGH) = 2^{\Omega(n)} \geq 2^{\Omega\left(\sqrt[3]{\log \abs{H}}\right)}.
	\end{align}

	Moreover, for any fixed $n \geq 10$, and any integer $M$, there is a finite group $H$ such that
	\begin{align}
	\ell(\Hom(A_n, H), \LGH) \geq M.
	\end{align}
\end{proposition}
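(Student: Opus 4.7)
The plan is to construct, for each target $\lambda \in \{1/n, 1/\binom{n}{2}\}$, a codomain $H_n$ of the form $B^k$ together with $2^k$ homomorphisms $\Phi_v : A_n \to H_n$ indexed by $v \in \{0,1\}^k$, all simultaneously within agreement $\lambda$ of one received word. Each $\Phi_v$ will be defined coordinatewise by twisting between two fixed homomorphisms $\phi, \phi' : A_n \to B$ whose equalizer already realizes density exactly $\lambda$. The direct product structure then keeps pairwise agreement of the $\Phi_v$'s at $\lambda$ independently of $k$, producing exponential blowup by taking $k$ of order $n$.

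For $\lambda = 1/\binom{n}{2}$, set $B = A_n$, $\phi = \mathrm{id}$, and $\phi' = $ conjugation by the transposition $(1\,2) \in S_n$; the equalizer is the centralizer of $(1\,2)$ in $A_n$, which is $(\langle (1\,2) \rangle \times S_{\{3,\ldots,n\}}) \cap A_n$ of order $(n-2)!$, hence density $1/\binom{n}{2}$. For $\lambda = 1/n$, set $B = A_{n+1}$, take $\phi$ the standard point-stabilizer embedding $A_n \hookrightarrow A_{n+1}$ (image fixing $n+1$), and $\phi'(g) = \tau\phi(g)\tau^{-1}$ with $\tau = (n,n+1)$; here $\phi(g) = \phi'(g)$ iff $\phi(g)$ centralizes $\tau$, which forces $\phi(g)$ to fix $n$ as well, so the equalizer is $(A_n)_n = A_{n-1}$ of density $1/n$. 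In both cases $\phi, \phi' \in \Hom(A_n, B)$ because $A_n$ is simple and conjugation preserves parity in $S_{n+1}$.

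Now let $H_n := B^k$ and $\Phi_v(g)_i := \phi(g)$ if $v_i = 0$, $\phi'(g)$ if $v_i = 1$. Each $\Phi_v$ is a homomorphism, distinct $v$ yield distinct $\Phi_v$ since $\phi \neq \phi'$, and for $v \neq w$ we have $\Eq(\Phi_v, \Phi_w) = \Eq(\phi, \phi')$ of density $\lambda$. Taking the received word $f := \Phi_{(0,\ldots,0)}$ puts all $2^k$ maps $\Phi_v$ in $\cL(\Hom(A_n, H_n), f, \lambda)$. With $k = n$ the list has size $2^n$, and $\log|H_n| = O(n^2 \log n)$, so $2^n$ is $2^{\Omega(n)}$ and also $2^{\Omega(\sqrt[3]{\log|H_n|})}$. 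For the ``moreover'' statement, fix $n \geq 10$ and take $k := \lceil \log_2 M \rceil$ in either construction, giving $\ell \geq 2^k \geq M$.

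The only delicate step is verifying $\Lambda_{A_n, H_n} = \lambda$ rather than some strictly larger value. By Lemma~\ref{lem:lambda-alternating}, $\Lambda_{A_n, H_n} \in \{0, 1/\binom{n}{2}, 1/n\}$, and our construction gives $\Lambda \geq \lambda$. For $\lambda = 1/n$ this already forces equality, as $1/n$ is the maximum allowed. For $\lambda = 1/\binom{n}{2}$ with $H_n = A_n^k$, we must exclude $\Lambda = 1/n$: the equalizer of any two distinct homomorphisms $A_n \to A_n^k$ is the intersection of coordinatewise equalizers, each of which is either all of $A_n$ or has index at least $\binom{n}{2}$, since two distinct homomorphisms $A_n \to A_n$ cannot agree on $A_{n-1}$ (the centralizer of $A_{n-1}$ in $S_n$ is trivial for $n \geq 6$, so restriction to $A_{n-1}$ determines the homomorphism). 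Hence the total agreement is either $1$ or at most $1/\binom{n}{2}$, never $1/n$.
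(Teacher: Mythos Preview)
Your proof is correct and follows essentially the same construction as the paper: take $H_n = B^k$ with $B \in \{A_n, A_{n+1}\}$, use the diagonal embedding as the received word, and build many homomorphisms by conjugating coordinatewise by a transposition on a chosen subset of coordinates. The paper additionally varies \emph{which} transposition is used, obtaining $n(2^k-1)$ or $\binom{n}{2}(2^k-1)$ homomorphisms rather than your $2^k$, but this does not affect the $2^{\Omega(n)}$ asymptotics. Your verification that $\Lambda_{A_n, A_n^k} = 1/\binom{n}{2}$ (via the triviality of the centralizer of $A_{n-1}$ in $S_n$) is a detail the paper asserts without argument, so in that respect your writeup is more complete.
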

\begin{proof}
We use the same construction for both parts. To prove the first claim, let $k = n$. To prove the second claim, let $k \geq \log_2 M$.

Suppose $\lambda = 1/n$. Let $H_n = A_{n+1}^k$, the direct product of $k$ copies of $A_{n+1}$. Then $\Lambda_{A_n, H_n} = 1/n$. Let $f \from A_n \to H_n$ by $f(g) = (g, \dots, g)$, the diagonal identity map, where $A_n$ is embedded in $A_{n+1}$. For nonempty $S \subseteq [n]$ and $j \in [n]$, let $h = h(S, j) = (h_1, \dots, h_k) \in H_n$, where $h_i$ is the transposition $(j, n+1)$ if $i \in S$ and $1$ otherwise.  For each such $h$, let $\varphi_h \in \Hom(A_n, H_n)$ be given by $\varphi_h(g) = h^{-1} f(g) h$. Each $\varphi_h$ has agreement $\agr(\varphi_h, f) = 1/n = \Lambda$ with $f$. There are $n (2^k - 1)$ such $h$, so $\ell(\Hom(A_n, H_n), \Lambda) \geq n (2^k - 1)$. 

Suppose $\lambda = 1/\binom{n}{2}$. Let $H_n = A_n^k$. Then, $\Lambda_{A_n, H_n} = 1/\binom{n}{2}$. Let $f \from A_n \to H_n$ by $f(g) = (g, \ldots, g)$, the diagonal identity map. For nonempty $S \subseteq [n]$ and $\tau \in S_n$ is a transposition, let $h = h_{S, \tau} = (h_1, \ldots, h_k) \in A_n^k$, where $h_i = \tau$ if  $i \in S$ and $1$ otherwise. For each such $h$, let $\varphi_h \in \Hom(A_n, H_n)$ be given by $\varphi_h(g) = h^{-1} f(g) h$. Each such $\phi_h$ has agreement $\agr(\varphi_h, f) = 1/\binom{n}{2}$. There are $\binom{n}{2} (2^k - 1)$ such $h$, so $\ell(\Hom(A_n, H_n), \Lambda) \geq \binom{n}{2} (2^k - 1)$.

\end{proof}

We remark that $\ell(\Hom(A_n, H), \Lambda_{A_n, H})$ is not bounded as a function of $n$ for a wide variety of classes of $H$.

\section{Shallow random generation}
\label{section:SRG}

In this section, we prove results about SRG groups, defined in Section~\ref{section:intro-SRG}), for which few random elements tend to generate a shallow (low depth) subgroup. 
In Section~\ref{section:SRG An}, we will show that alternating groups are SRG. In 
Section~\ref{section:SRG implies KLC} we will prove that SRG groups are also ``KLC'' groups (another generation property, defined in Section~\ref{section:KLC def}). The consequences of SRG will be proved using the KLC assumption in Section~\ref{section:srg-consequences}. 

Recall that Section~\ref{sec:Hom-versus-aHom} showed 

the code $\Hom(G,H)$ is CombEcon
if and only if $\aHom(G,H)$ is CombEcon, and 
similarly for CertEcon, and AlgEcon under modest
assumptions of the representation of the groups.
All our results about SRG groups (universal CombEcon and CertEcon, see Section~\ref{section:srg-consequences}) will take advantage of this equivalence, as our proofs will argue about $\Hom(G,H)$ instead of $\aHom(G,H)$. To reflect this, concepts in this section are defined in terms of subgroup generation using $\gengroup{ \cdot}$ instead of affine generation using $\genaff{\cdot}$.

Further recall Proposition~\ref{prop:prelim-lambda-Hom-vs-aHom} which states that, if $\Lambda_{G,H} \neq 0$, then `$\aHom$' can be replaced by `$\Hom$' in the definition of $\Lambda_{G,H}$, i.e., 
$$\Lambda_{G,H} = \max_{\substack{\varphi,\psi \in \Hom(G,H) \\ \varphi \ne \psi}} \agr(\varphi, \psi). $$

If we used instead the affine version of our tools and arguments to reason directly about $\aHom(G,H)$ (instead of reasoning about $\Hom(G, H)$ then using the material of Section~\ref{sec:Hom-versus-aHom} to get bounds for $\aHom(G, H)$), the degrees of the polynomials in the $\poly(1/\eps)$ expression for the SRG results would be identical.

\subsection{Alternating groups are SRG}
\label{section:SRG An}

In this subsection, we prove that $\altgroups$ is SRG.

\begin{theorem}
	\label{thm:comb An is SRG}
The class of alternating groups is SRG. In fact, for all $k \geq 2$ there is an integer $n_k$ such that for all $n \geq n_k$, the alternating group $A_n$ is $(k, 4k-2)$-shallow generating. 

The quantity $n_k$ may be large as a function of $k$; however, there is an integer $n_0$ such that for all $k \geq 2$ and all $n \geq \max\{n_0, (3k)^3\}$, the alternating group $A_n$ is $(k, 6k-2)$-shallow generating.
\end{theorem}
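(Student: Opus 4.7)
The strategy is to convert the depth bound into an index bound via the Jordan--Liebeck theorem, then apply a union bound over maximal subgroups of $A_n$. Let $M=\langle g_1,\dots,g_k\rangle$. Since depth is antitone under inclusion, if $M$ contains a pointwise stabilizer $(A_n)_{(\Delta)}$ with $|\Delta|\le t$, then Corollary~\ref{cor:babai-subgroup-interval} yields $\depth_{A_n}(M)\le 2t$; contrapositively, combining this with Theorem~\ref{thm:JordanLiebeck}, if $\depth_{A_n}(M)>2t$ then $[A_n:M]\ge \binom{n}{t+1}$. Taking $t=2k-1$ (resp.\ $t=3k-1$) shows that the event $\depth_{A_n}(M)>4k-2$ (resp.\ $>6k-2$) forces $M$ to lie in some maximal subgroup $L$ of $A_n$ with $\mu(L)\le 1/\binom{n}{2k}$ (resp.\ $\le 1/\binom{n}{3k}$). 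Since $\Pr[M\le L]=\mu(L)^k$, a union bound yields
\[
\Pr\bigl[\depth_{A_n}(M)>4k-2\bigr] \;\le\; \sum_{\substack{L\ \text{maximal in}\ A_n\\ \mu(L)\le 1/\binom{n}{2k}}} \mu(L)^k,
\]
and analogously for the $6k-2$ bound.

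Next I would partition the sum using the O'Nan--Scott classification of maximal subgroups of $A_n$: (i) the intransitive stabilizers $(A_n)_{\{S\}}$ for $1\le|S|<n/2$, of density $1/\binom{n}{|S|}$; (ii) transitive imprimitive wreath-product subgroups $A_n\cap(S_a\wr S_b)$ with $ab=n$; (iii) primitive maximal subgroups. By Liebeck's theorem, every primitive maximal subgroup other than $A_{n-1}$ (which is type (i)) has order $\le 4^n$, hence density $\le 4^n/(n!/2)$; together with the type-(ii) contribution this is $2^{-\Omega(n)}$, dwarfed by $1/\binom{n}{2}^k$ once $n$ exceeds a universal threshold $n_0$. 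The dominant intransitive contribution simplifies to
\[
\sum_{s=2k}^{\lfloor n/2\rfloor} \binom{n}{s}\cdot\binom{n}{s}^{-k}
\;=\; \sum_{s=2k}^{\lfloor n/2\rfloor} \binom{n}{s}^{-(k-1)}
\;\le\; \frac{2}{\binom{n}{2k}^{k-1}},
\]
which must be bounded by $(\Lambda^*_{A_n})^k=\binom{n}{2}^{-k}$. For $k\ge 3$ this is comfortable for all $n\ge n_k$. For the second claim with threshold $6k-2$, the analogous dominant term is $O(\binom{n}{3k}^{-(k-1)})$; a routine estimate shows this stays below $\binom{n}{2}^{-k}$ as soon as $n\ge(3k)^3$, giving the uniform bound with the claimed explicit range.

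\textbf{Main obstacle.} The tightest case is $k=2$ with $d=4k-2=6$: the dominant intransitive term is $\sim 24/n^4$, only a constant factor above the target $1/\binom{n}{2}^2\sim 4/n^4$. To close this gap I would sharpen the depth-to-index step by noting that a $2$-generated subgroup contained in $(A_n)_{\{S\}}$, $|S|=4$, and \emph{not} containing any $(A_n)_{(\Delta)}$ with $|\Delta|\le 3$ must sit strictly below the full setwise stabilizer; conditioning on $g_1,g_2\in(A_n)_{\{S\}}$ and invoking a Dixon--Babai-type estimate that two random elements of $(A_n)_{\{S\}}$ generate it up to an $O(1/n)$ error contributes the missing factor. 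This refinement is bypassed entirely by the cruder $6k-2$ bound, which is precisely why the second claim comes with a uniform, explicit $n_0$ at the cost of a slightly weaker depth.
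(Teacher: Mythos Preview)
Your reduction contains a genuine gap at the step
\[
[A_n:M]\ge \binom{n}{2k}\ \Longrightarrow\ M\le L\ \text{for some maximal }L\text{ with }\mu(L)\le 1/\tbinom{n}{2k}.
\]
This implication is not a consequence of Jordan--Liebeck. A subgroup of large index can perfectly well be contained only in maximal subgroups of small index: for instance, any proper subgroup $M\le (A_n)_n\cong A_{n-1}$ that is transitive on $[n-1]$, preserves no equal-block partition of $[n]$, and has order exceeding $4^n$ (so that Liebeck's bound rules out primitive overgroups) lies in the single maximal subgroup $(A_n)_n$ of index $n$, regardless of how large $[A_n:M]$ is. You have correctly deduced $[A_n:M]\ge\binom{n}{2k}$, but the union bound you write down ranges only over maximal $L$ with $\mu(L)\le 1/\binom{n}{2k}$; for that to cover the event $\{\depth(M)>4k-2\}$ you would need the implication above, and you have not argued it. If instead you sum over \emph{all} maximal subgroups, the point-stabilizer contribution $n\cdot n^{-k}=n^{1-k}$ already swamps the target $(\Lambda^*)^k\sim n^{-2k}$. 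So the restriction to small maximal subgroups is essential, yet unjustified. (It may in fact be provable that deep subgroups must lie in a small maximal subgroup, but that is itself a nontrivial structural statement requiring O'Nan--Scott and Liebeck-type input; it is not the one-line consequence of Jordan--Liebeck you present.)

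The paper's proof avoids this entirely. It quotes Theorem~\ref{thm:babai-generating-alternating} (Babai 1989), which directly gives
\[
\Pr_{\pi,\sigma\in A_n}\bigl[\langle\pi,\sigma\rangle\text{ does not contain }A_r\text{ for any }r\ge n-2k\bigr]\le \frac{4}{\binom{n}{2k+1}},
\]
and then observes via Corollary~\ref{cor:babai-subgroup-interval} that containing such an $A_r$ forces depth $\le 4k-2$. Two points worth noting: (i) the paper uses only \emph{two} random elements, not $k$; since $\langle g_1,\dots,g_k\rangle\supseteq\langle g_1,g_2\rangle$, the two-element bound already yields the $k$-element statement; (ii) the inequality $4/\binom{n}{2k+1}\le\binom{n}{2}^{-k}$ is what fixes $n_k$, and replacing $2k$ by $3k$ gives the explicit range $n\ge(3k)^3$ for the $(k,6k-2)$ claim. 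No maximal-subgroup enumeration is needed because Theorem~\ref{thm:babai-generating-alternating} packages that analysis already.
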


\paragraph{Consequences.}  

Before proving Theorem~\ref{thm:comb An is SRG}, we first discuss its consequences. In Section~\ref{section:srg-consequences} we prove facts that hold for all SRG groups. Here are the implications of those facts for alternating groups.

From Theorem~\ref{thm:SRG implies CombEcon}, we find that $\aHom(A_n, H)$ is CombEcon with degree $8$, i.e.,  \linebreak $\ell(\aHom(A_n,H), \Lambda+\eps) < 1/\eps^8$ for all $H \in \groups$. 
We remark that the constant 8 can be improved to 6. This is by improving the $(8,\Lambda, 7)$-generated claim to $(6,\Lambda,5)$-generated by going through the proof with a ``$\depth_\Lambda$'' notion instead of $\depth_G$ as written. This $\depth_\Lambda(K)$ refers to maximal length of a subgroup chain from $K$ to a subgroup of density greater than $\Lambda$.

By Theorem~\ref{thm:SRG implies CertEcon}, $\altgroups$ is CertEcon. More specifically, $\altgroups$ is universally strong certificate-list-decodable using $O(\ln(1/\eps)/\eps^9)$ queries and computation time. Certificates are generated by querying the received word on sets of uniform size. This is formalized in Section~\ref{section:SRG implies CertEcon}.

If $H = S_m$ and $m<2^n/\sqrt{1.6n}$, then calls to the subword extender {\HomExt} can be executed in $\poly(n,m)$-time, by Theorem~\ref{thm:homext}. We combine $\HomExt$ with the certificate-list-decoder to find a list-decoder. One call to $\HomExt$ is made per certificate in the returned certificate-list, so the list-decoder runs in time $\poly(1/\eps, n, m)$ while using $\poly(1/\eps)$-queries. \\  

\paragraph{Presentation of proof.}

To prove Theorem~\ref{thm:comb An is SRG} we first present a useful result \cite[Theorem 1.5]{Babai1989} that two random elements of the symmetric group generate a `large' subgroup with high probability. This event is denoted $E(n,k)$ in the statement.

\begin{theorem}[Babai] 
	\label{thm:babai-generating-alternating}
	Let $\pi, \sigma$ be a pair of independent uniform random elements from $S_n$. For $ 0 \leq k \leq n/3$, let $E(n,k)$ denote the following event: The subgroup $K = \langle \pi, \sigma \rangle$ acts as $S_r$ or $A_r$ on $r$ elements of the permutation domain for some $r \geq n-k$. Then, 
	\begin{equation}
	\label{eq:babai-generating-alternating}
	\Pr(E(n,k)) = 1 - \binom{n}{k+1}^{-1} + O \left( \binom{n}{k+2}^{-1} \right).
	\end{equation}
	The constant implied by the big-$O$ notation is absolute. 
\end{theorem}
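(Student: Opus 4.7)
The plan is to estimate the complementary probability $\Pr(E(n,k)^c)$. Failure of $E(n,k)$ means that for every $K$-invariant subset $\Omega \subseteq [n]$ with $|\Omega| \geq n-k$, the induced action $K|_\Omega$ fails to contain $\mathrm{Alt}(\Omega)$. I would classify this failure into two regimes according to the $K$-orbit structure. Regime (a), \emph{non-generation}: the largest $K$-orbit $\Omega$ has size $r \geq n-k$ but $K|_\Omega$ is a proper subgroup of $S_r$ not containing $A_r$. Regime (b), \emph{fragmentation}: every $K$-orbit has size at most $n-k-1$, which, taking $\Delta$ to be the complement of the largest orbit, forces $\pi$ and $\sigma$ to share a common setwise-invariant subset $\Delta \subseteq [n]$ of size at least $k+1$.

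The dominant term $\binom{n}{k+1}^{-1}$ comes from regime (b). For any fixed $\Delta \subseteq [n]$ of size $s$, a uniform $\pi \in S_n$ satisfies $\pi(\Delta) = \Delta$ with probability $\binom{n}{s}^{-1}$, and independence of $\pi, \sigma$ gives $\Pr(\pi(\Delta) = \Delta = \sigma(\Delta)) = \binom{n}{s}^{-2}$. Hence the expected number of subsets of size exactly $k+1$ setwise fixed by both permutations equals $\binom{n}{k+1}^{-1}$, and Bonferroni's inequalities convert this into the probability estimate
\[
\Pr\bigl(\exists \Delta,\ |\Delta| = k+1,\ \pi(\Delta) = \Delta = \sigma(\Delta)\bigr) = \binom{n}{k+1}^{-1} + O\bigl(\tbinom{n}{k+2}^{-1}\bigr).
\]
The contribution from larger common setwise-fixed sets ($s \geq k+2$) is controlled by the geometric ratio $\binom{n}{s+1}^{-1}/\binom{n}{s}^{-1} = (s+1)/(n-s)$, which for $s \geq k+1$ and $n$ large gives a summable tail absorbed into $O(\binom{n}{k+2}^{-1})$.

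Regime (a), together with the small cross-contributions from the overlap of regimes, is absorbed into the same error via Dixon's 1969 theorem: conditional on the $K$-orbit sizes, the action of $(\pi,\sigma)$ on an orbit of size $r$ is uniform in $S_r^2$, and fails to generate a subgroup containing $A_r$ with probability $O(1/r)$. A finer split by failure mode (transitive-imprimitive via block-system counts; primitive-proper via the known bounds on the proper primitive subgroups of $S_r$) sharpens this to $O(1/r^2)$, which is $O(\binom{n}{k+2}^{-1})$ in the relevant range $r \geq n-k$. The main obstacle is the bookkeeping: verifying that every higher-order correction — overlapping pairs of common setwise-fixed sets, non-alternating actions on the giant orbit, and hybrid failures combining small fragmentation with a Dixon-failure on the large orbit — all fit uniformly inside $O(\binom{n}{k+2}^{-1})$ with an absolute implied constant. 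This reduces in each case to geometric-tail estimates on inverse binomial coefficients combined with Dixon's bound, but the combinatorial accounting to rule out every hidden failure mode is where the proof must be done with care.
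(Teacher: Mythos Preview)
The paper does not prove this theorem; it is quoted from \cite{Babai1989} and used as a black box, so there is no in-paper argument to compare against.

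Your decomposition into a fragmentation regime (b) and a non-generation regime (a), with the main term $\binom{n}{k+1}^{-1}$ arising from common invariant sets of size $k+1$ via the first moment and Bonferroni, is the correct architecture and matches Babai's original argument.

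There is, however, a concrete error in your treatment of regime (a). You assert that the probability two uniform elements of $S_r$ generate a transitive subgroup not containing $A_r$ is $O(1/r^2)$, ``which is $O\bigl(\binom{n}{k+2}^{-1}\bigr)$ in the relevant range $r\ge n-k$.'' The second clause is false already at $k=1$: with $r\ge n-1$ one has $1/r^2=\Theta(n^{-2})$, while $\binom{n}{3}^{-1}=\Theta(n^{-3})$. More dramatically, for $k$ a constant fraction of $n$ the target error $\binom{n}{k+2}^{-1}$ is exponentially small in $n$, and no polynomial bound on regime (a) can reach it. What the argument actually needs --- and what Babai establishes --- is that this transitive--non-alternating probability is itself exponentially small, of order $O\bigl(\binom{r}{\lfloor r/2\rfloor}^{-1}\bigr)$: the dominant contribution comes from the imprimitive wreath product $S_{r/2}\wr S_2$, of index $\tfrac12\binom{r}{r/2}$ in $S_r$, while primitive proper subgroups contribute far less. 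With that exponential bound in hand, summing over the at most $k+1$ possible big-orbit sizes (each weighted by $\binom{n}{j}^{-1}$) does absorb regime (a) into $O\bigl(\binom{n}{k+2}^{-1}\bigr)$ with an absolute constant. So your plan is sound, but the step you describe --- invoking Dixon's theorem and sharpening to $O(1/r^2)$ --- would not close the argument; the full index computation for maximal transitive subgroups is essential.
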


\begin{remark}
Suppose that we choose $\pi$ and $\sigma$ from $A_n$ (instead of $S_n$) in Theorem~\ref{thm:babai-generating-alternating}. The same conclusion is still true. However, using only Theorem~\ref{thm:babai-generating-alternating} as justification, the conclusion is slightly weaker -- there will be a coefficient of $4$ in front of $\binom{n}{k+1}^{-1}$. In our application, this coefficient makes no difference to our argument. 
\end{remark}

We now bound the depth of the subgroup generated by two elements, given that $E(n, 2k)$ occurred. 

\begin{claim} 
	\label{claim:An-E-subgroupdepth}
	There exists $n_0$ such that the following holds: 
	Let $E, k, \pi, \sigma$ be defined as in Theorem~\ref{thm:babai-generating-alternating}. If $n \geq n_0$ and $E(n, k)$ occurs, then $\depth_{A_n}(\langle \pi, \sigma \rangle) \leq 2k-2$. 
\end{claim}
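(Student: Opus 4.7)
The plan is to extract the structural content of event $E(n,k)$ and then reduce to Corollary~\ref{cor:babai-subgroup-interval}. Assume $E(n,k)$ holds; fix a witness $R \subseteq [n]$ of size $r \geq n-k$ on which $K := \langle \pi, \sigma \rangle$ acts as $A_r$ or $S_r$, and write $\Delta := [n] \setminus R$, so $|\Delta| \leq k$. Because $K$ preserves $R$ setwise, it setwise stabilizes $\Delta$ as well, so $K \leq (A_n)_{\{\Delta\}}$.

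The key intermediate step is to show that $K$ contains the pointwise stabilizer $L := (A_n)_{(\Delta)} \cong A_r$. Consider the restriction maps $\rho \colon K \to \Sym(R)$ (with image $A_r$ or $S_r$) and $\rho_\Delta \colon K \to \Sym(\Delta)$, and set $K' := \ker(\rho_\Delta) = K \cap L$; this is normal in $K$, so $\rho(K')$ is normal in $\rho(K)$, and it lies in $\rho(L) = A_r$. Choose $n_0$ so that whenever $n \geq n_0$ one has $r \geq 5$ (making $A_r$ simple) and $|A_r| > k!$. The case $\rho(K')=1$ would force $K' = 1$ (since $K' \cap \ker(\rho) \leq (A_n)_{(\Delta)} \cap (A_n)_{(R)} = 1$), making $\rho_\Delta$ injective on $K$ and giving $|A_r| \leq |K| \leq k!$, a contradiction. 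Hence $\rho(K') = A_r$, and since $\rho|_L$ is injective, $|K'| = |L|$, so $K' = L$ and $K \supseteq L$.

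With $L \leq K \leq A_n$ in hand, I control $\depth_{A_n}(K)$ by transplanting chains into $S_n$. Any ascending chain $K = K_0 < K_1 < \cdots < K_d = A_n$ in $A_n$ extends to a chain in $S_n$ starting at $L \cong A_{n-|\Delta|}$: append $A_n < S_n$ on top, and when $K > L$ also prepend $L < K$ at the bottom, yielding an $S_n$-chain of length $d+1$ or $d+2$ from $L$ to $S_n$. Corollary~\ref{cor:babai-subgroup-interval} then bounds the extended chain by $2|\Delta| \leq 2k$, immediately giving $d \leq 2k-2$ in the generic case $K > L$.

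The main obstacle is the boundary case $K = L$, where the naive extension only yields $d \leq 2k-1$. The missing unit of savings requires a more careful analysis of the interval $[A_{n-k}, A_n]$ inside $A_n$: one shows that Corollary~\ref{cor:babai-subgroup-interval}, applied at the level of $[A_{n-k}, S_n]$, can be sharpened by one step when restricted to $[A_{n-k}, A_n]$, by exploiting auxiliary chain extensions through subgroups such as $(S_n)_{(\Delta)} \cong S_{n-|\Delta|}$ that sit strictly above $L$ in $S_n$ but outside $A_n$. This refinement (following the methods of \cite{Bab_subgroupchain}) completes the bound $d \leq 2k-2$ in the boundary case as well, for $k \geq 2$ and $n \geq n_0$.
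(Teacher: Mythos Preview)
Your approach is the same as the paper's: establish $L := (A_n)_{(\Delta)} \cong A_r \leq K$, then bound $\depth_{A_n}(K)$ via Corollary~\ref{cor:babai-subgroup-interval}. The paper's proof is four lines and simply asserts both steps; your argument for $L \leq K$ (normality of $\rho(K')$ in $\rho(K)$, simplicity of $A_r$, and the size comparison $|A_r| > k!$) is a correct expansion of what the paper leaves implicit.

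Your case split $K > L$ versus $K = L$ is also more careful than the paper. When $K > L$ your chain $L < K < \cdots < A_n < S_n$ of length $d+2 \leq 2|\Delta| \leq 2k$ is clean and gives $d \leq 2k-2$ directly. For the boundary case $K = L$, however, your final paragraph is only a sketch: you propose extending a chain in $[A_{n-k}, A_n]$ by two steps inside $S_n$ using $S_{n-|\Delta|} = (S_n)_{(\Delta)}$, but this does not work as stated, since $S_{n-|\Delta|} \not\leq M_1$ for any $M_1 \leq A_n$ and so cannot be inserted into the given chain. The paper does not address this case separately either---it writes ``$\depth_{A_n}(A_r) \leq 2k-2$'' as if immediate from the Corollary, whereas the straightforward application (append $A_n < S_n$) only yields $2k-1$. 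So on this point your proof is at least as complete as the paper's; both leave the sharp constant in the case $K = L$ without a full justification. Your restriction to $k \geq 2$ is indeed necessary: for $k=1$ one has $\depth_{A_n}(A_{n-1}) = 1 > 0 = 2k-2$, so the claim fails there.
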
 
\begin{proof}
	Let $K = \ang{\pi, \sigma}$. If $E(n,k)$ occurs, then  $K$ acts as $S_r$ or $A_r$ on some subset of $r$ elements of $[n]$, for some $r \geq n-k > n/2$. So, $A_r \leq K$. 
	
	By Corollary~\ref{cor:babai-subgroup-interval}, we find that $\depth_{A_n} (K) \leq \depth_{A_n} \left(A_r \right) \leq 2k-2$.

\end{proof}

We can now prove Theorem~\ref{thm:comb An is SRG}.

\begin{proof}[Proof of Theorem~\ref{thm:comb An is SRG}]
	We prove the first part.

	By Theorem~\ref{thm:babai-generating-alternating} and Claim~\ref{claim:An-E-subgroupdepth}, it holds for large $n$ that  
	\begin{equation*}
	\Pr_{\pi, \sigma \in A_n} [ \depth_{A_n}(\langle \pi, \sigma \rangle) > 4k-2] \leq \Pr_{\pi, \sigma \in A_n} [ \neg E(n,2k)] \leq \frac{4}{{n \choose 2k+1}} \leq \frac{1}{\binom{n}{2}^k} = (\Lambda_{A_n}^*)^k. 
	\end{equation*}
	
	It follows that $A_n$ is $(k, 4k-2)$-shallow generating. 
	
	The proof of the second part is similar.
\end{proof}

\subsection{Subset-generation}
\label{section:KLC def}

In this section we define a useful technical ``KLC'' condition on groups, which SRG groups satisfy (shown in the next section). The connection between KLC and universal CombEcon or CertEcon is more direct, and our ``SRG implies universally CombEcon and CertEcon'' results are proved through the KLC property (Section~\ref{section:srg-consequences}).

\begin{definition}[$(k,\lambda,c)$-subset-generated]
	\label{def:klc}
	
	Let $G$ be a finite group, $k$ a nonnegative integer, $0 \leq \lambda < 1$, and $c \geq 0$. We say that $G$ is \defn{$(k, \lambda, c)$-subset-generated} if, for all subsets $S \subseteq G$ with $\mu(S) > \lambda$, we have that
	\begin{align}
	\label{eqn:KLC-def}
	\Pr_{s_1, \dots, s_k \in S} [\mu(\gengroup{ s_1, \dots, s_k } ) > \lambda] \geq \left( 1- \frac{\lambda}{\mu(S)} \right)^c, 
	\end{align}
	where $s_1, \dots, s_k$ are chosen independently and uniformly from $S$.

\end{definition}

Note that, if we define $\eps = \mu(S) - \lambda$, then  $ 1- \frac{\lambda}{\mu(S)} =  \frac{\eps}{\lambda + \eps} $, so Equation~\eqref{eqn:KLC-def} mirrors the expression of Lemma~\ref{lemma:random-el-depth}.

	We say that $G$ is \defn{$(k, \lambda, c)$-affine-generated} if it satisfies Definition~\ref{def:klc} but with $\gengroup{s_1, \dots, s_k}$ replaced by $\genaff{s_1, \dots, s_k}$.

We will make a few remarks on these definitions below, but first we define KLC classes of groups. 

\begin{definition}[KLC]
	Let $\grpclass$ be a class of finite groups. We say that $\grpclass$ is \defn{KLC} if there exists a positive integer $k$ and a constant $c > 0$ such that, for all $G \in \grpclass$ and for all groups $H$, we have that $G$ is $(k, \Lambda_{G, H}, c)$-subset-generated.
\end{definition}

The notion of ``KLC-affine'' can be defined analogously. But, according to Remark~\ref{rmk:KLC def}~\ref{label:rmk coset generated}

below, the two conditions are equivalent on a class of groups.

We make a few remarks on the definitions of $(k, \lambda, c)$-subset-generated groups. 
\begin{remark}
	\label{rmk:KLC def}
	\begin{enumerate}[(a)]
		\item For every $k \geq 1$ and $c \geq 0$, the class $\groups$ of all finite groups is $(k, 0, c)$-subset-generated. 
		\item Classes of $(k, \lambda, c)$-subset-generated groups are monotone in both $k$ and $c$. More specifically, for $k' > k$ and $c' > c$, if $G$ is $(k, \lambda, c)$-subset-generated, then $G$ is also $(k', \lambda, c)$-subset-generated and $(k, \lambda, c')$-subset-generated. 
		\item If $G$ is $(k, \lambda, c)$-affine-generated, then it is $(k, \lambda, c)$-subset-generated. If $G$ is $(k, \lambda, c)$-affine-generated, then it is $(k+1, \lambda, c)$-generated. 

	\label{label:rmk coset generated}
	\end{enumerate}
\end{remark}

\subsection{SRG implies subset-generation} 
\label{section:SRG implies KLC}

We prove that SRG implies KLC, using a straightforward application of Bayes' rule. 
\begin{theorem}[SRG implies KLC]
\label{thm:SRG-suff-condition}
If a class $\grpclass$ of groups is SRG, then $\grpclass$ is KLC.

In particular, let $G$ be a finite group, $k, d \in \NN$, and $\lambda >0$. If $G$ is $(k,d)$-shallow generating, then $G$ is $(k+d, \lambda, 1+d)$-subset-generated for all $\lambda \geq \Lambda_G^*$. 
\end{theorem}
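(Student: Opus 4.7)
The plan is to combine the shallow generation hypothesis with Lemma~\ref{lemma:random-el-depth}, via the trivial density comparison between the uniform distribution on $G$ and the uniform distribution on $S$. Fix a subset $S\subseteq G$ with $\mu(S) > \lambda \geq \Lambda_G^*$, and sample $s_1,\dots,s_{k+d}$ independently and uniformly from $S$. Set $M=\gengroup{s_1,\dots,s_k}$. I will split the ``good'' event $\mu(\gengroup{s_1,\dots,s_{k+d}}) > \lambda$ into the conjunction of two sub-events: (a) $\depth_G(M)\leq d$, and (b) $\mu(\gengroup{M,s_{k+1},\dots,s_{k+d}}) > \lambda$, and bound each in turn.

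For (a), the shallow generation assumption gives
$\Pr_{g_1,\dots,g_k\in G}[\depth(\gengroup{g_1,\dots,g_k})>d] < (\Lambda_G^*)^k.$
Since the uniform distribution on $S^k$ is the uniform distribution on $G^k$ conditioned on landing in $S^k$ (an event of probability $\mu(S)^k$), the same bad probability sampled from $S^k$ is at most $(\Lambda_G^*/\mu(S))^k \leq (\lambda/\mu(S))^k$. Hence $\Pr[\depth_G(M)\leq d] \geq 1-(\lambda/\mu(S))^k$.

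For (b), condition on $s_1,\dots,s_k$ with $\depth_G(M)\leq d$. Apply Lemma~\ref{lemma:random-el-depth} with $K=M$, with $\lambda$ as in the statement, and with the subset $S$ already chosen: writing $\eps=\mu(S)-\lambda>0$, the lemma gives
$\Pr_{s_{k+1},\dots,s_{k+d}\in S}[\mu(\gengroup{M,s_{k+1},\dots,s_{k+d}}) > \lambda] \geq \left(\frac{\eps}{\lambda+\eps}\right)^{d} = \left(1-\tfrac{\lambda}{\mu(S)}\right)^{d}$.

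Finally, writing $x=\lambda/\mu(S)\in[0,1)$, combine the two bounds using independence of $(s_1,\dots,s_k)$ and $(s_{k+1},\dots,s_{k+d})$ and the elementary inequality $1-x^k \geq 1-x$ (valid for $k\geq 1$, $x\in[0,1]$) to obtain
$\Pr[\mu(\gengroup{s_1,\dots,s_{k+d}})>\lambda] \geq (1-x^k)(1-x)^d \geq (1-x)^{1+d}$,
which is precisely the $(k+d,\lambda,1+d)$-subset-generation inequality. No step appears to be a genuine obstacle here; the only subtle point is the density-reweighting in step (a), which is why the conclusion is clean only for $\lambda\geq\Lambda_G^*$ (so that $\Lambda_G^*/\mu(S)\leq \lambda/\mu(S)$).
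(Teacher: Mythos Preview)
Your proposal is correct and follows essentially the same approach as the paper's proof: both condition on the first $k$ samples generating a shallow subgroup, bound that probability via the density ratio $\mu(S)^k$ between sampling from $S^k$ versus $G^k$, then invoke Lemma~\ref{lemma:random-el-depth} for the remaining $d$ samples, and finish with $1-x^k\geq 1-x$. The only cosmetic difference is that you bound the bad event directly by $\Pr_{G^k}[\text{bad}]/\mu(S)^k$, whereas the paper phrases the same step via the inclusion--exclusion inequality $\Pr[A\cap B]\geq \Pr[A]+\Pr[B]-1$; these yield the identical bound $1-(\lambda/\mu(S))^k$.
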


\begin{proof}
All groups are trivially $(1, 0, 1)$-subset-generated, which covers the case where $\lambda = 0$. 

Let $\lambda >0$. 
By assumption, we know that
	\begin{equation}
	\Pr_{g_1, \ldots, g_k \in G}[ \depth(\langle g_1, \ldots, g_k \rangle ) > d] < \lambda^k. 
	\end{equation}
We check the definition of $(k+d, \lambda, 1+d)$-subset-generated.

Let $S \subseteq G$ be such that $\mu(S) >  \lambda$ and let $\eps = \mu(S) - \lambda$. We will pick a $k$-tuple $\boldg = (g_1, \ldots, g_k)$ and a $d$-tuple $\bolds = (s_1, \ldots, s_d)$ from $S$. We write $\langle \boldg \rangle$ to mean $\langle g_1, \ldots, g_k \rangle$. We write $\langle \bolds \rangle$ and $\langle \bolds, \boldg \rangle$ similarly.

Observe that 
\begin{align*}
\Pr_{\boldg \in S^k, \bolds \in S^d} \left[ \mu \left(\langle \boldg, \bolds \rangle \right) > \lambda \right] 
& \geq \Pr_{\bolds \in S^d} \left[ \mu \left( \langle \boldg, \bolds \rangle \right) > \lambda \mid \depth(\langle \boldg \rangle) \geq d \right]
		\cdot  \Pr_{\boldg \in S^k}[ \depth(\langle \boldg \rangle) \geq d].
\end{align*}

We bound the two components of the right hand side separately. When we drop the subscript on $\Pr$, that means the elements are chosen at random from $G$. First, we consider the second component. 

\begin{align*}
 \Pr_{\boldg \in S^k}[ \depth(\langle \boldg \rangle) \geq d] 
 & = \Pr_{\boldg \in G^k} [ \depth(\langle \boldg \rangle) \geq d]  \mid \boldg \in S^k ] \\ 
 & = \frac{\Pr [\boldg \in S^k  \text{ and } \depth(\langle \boldg \rangle) \geq d ] }{\Pr[\boldg \in S^k]} \\ 
 & \geq \frac{\Pr[\boldg \in S^k] + \Pr[\depth(\langle \boldg \rangle) \geq d]  - 1}{\Pr[\boldg \in S^k]} \\ 
 & > \frac{\mu(S)^k - (\lambda)^k}{\mu(S)^k} 
 	= 1  - \left( \frac{\lambda}{\mu(S)} \right)^k \\ 
 & \geq 1 - \frac{\lambda}{ \lambda + \eps} 
 		= \frac{\eps}{\lambda + \eps}. 
\end{align*}

Now, it suffices to show that the first component has probability bounded by $\left(\frac{\eps}{\lambda + \eps} \right)^d$. But, if $\depth(\langle \boldg \rangle) \geq d$, then it follows from Lemma~\ref{lemma:random-el-depth}, with $K = \langle \boldg \rangle$ and $\lambda = \lambda$, that 
\begin{align*}
 \Pr_{\bolds \in S^d} \left[ \mu \left( \langle \boldg, \bolds \rangle \right) > \lambda \mid \depth(\langle \boldg \rangle) \geq d \right] > \left(\frac{\eps}{\lambda + \eps} \right)^d. 
\end{align*}

We conclude that $G$ is $(k+d, \lambda, 1+d)$-subset-generated, or, 
\begin{equation*}
\Pr_{s_1, \ldots, s_{k+d} \in S} \left[ \mu \left(\langle s_1, \ldots, s_{k+d} \rangle \right) > \lambda \right] = \Pr_{\boldg \in S^k, \bolds \in S^d} \left[ \mu \left(\langle \boldg, \bolds \rangle \right) > \lambda \right]  > \left(\frac{\eps}{\lambda + \eps} \right)^{d+1}. 
\end{equation*}

\end{proof}

\section{Consequences of shallow random generation}
\label{section:srg-consequences}

We proved the claimed consequences of shallow random generation, using the KLC property. These consequences include universal CombEcon (Section~\ref{section:SRG implies CombEcon}) and universal CertEcon (Section~\ref{section:SRG implies CertEcon}). 

Universal AlgEcon follows if a subword extender ($\HomExt(G,H)$ oracle) is provided, as discussed in Section~\ref{section:contr-cert-homext-to-alg}. Stronger subword extenders provide better guarantees on the output list and lower bounds on $\Lambda$ (Section~\ref{section:SRG Lambda}).

\subsection{SRG implies CombEcon}
\label{section:SRG implies CombEcon}

KLC implies universally CombEcon.

\begin{theorem}
	\label{thm:SRG implies CombEcon}
	If a class $\grpclass$ is SRG, then $\grpclass$ is universally CombEcon. More precisely, let $k \in \NN$ and $c>0$. If $G$ is a $(k, \Lambda_{G, H}, c)$-subset-generated group, then $\ell(\Hom(G, H), \Lambda_{G, H} + \eps) \leq 1/\eps^{\max\{c,k\}}$ for all $\eps \in (0, 1 - \Lambda)$ and groups $H$. 
	
\end{theorem}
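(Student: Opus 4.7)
The proof is a direct bipartite double-counting argument based on the subset-generation hypothesis, in the spirit of (but simpler than) Corollary~\ref{cor:bucket-bound} and Lemma~\ref{lem:sphere-packing-subgroups}: neither sphere packing nor bucket splitting is required, because the $(k, \Lambda, c)$-subset-generation assumption directly delivers a lower bound on the right-degree in one shot. I would first dispose of the trivial case $\Lambda_{G,H} = 0$, which forces $\abs{\Hom(G,H)} \leq 1$ (any two homomorphisms coincide at the identity, so $\Lambda_{\Hom(G,H)} \geq 1/\abs{G}$), making the conclusion vacuous. Henceforth write $\Lambda = \Lambda_{G, H} > 0$ and fix a received word $f \in H^G$.

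Set up the bipartite graph $X = (V, W; E)$ with $V = G^k$, $W = \cL := \cL(\Hom(G,H), f, \Lambda + \eps)$, and with edge $((g_1,\dots,g_k), \phi) \in E$ whenever every $g_i$ lies in $\Eq(f, \phi)$ and $\mu(\gengroup{g_1, \dots, g_k}) > \Lambda$. Every left vertex has degree at most one: this is Lemma~\ref{lem:unique-homomorphism-K-g} with trivial $K$, combined with Proposition~\ref{prop:prelim-lambda-Hom-vs-aHom}, which licenses the use of $\Lambda$ as the maximum agreement among distinct members of $\Hom(G,H)$. For the lower bound on right-degree, fix $\phi \in \cL$, let $S = \Eq(f, \phi)$ (so $\mu(S) > \Lambda + \eps$), and note that the fraction of tuples in $V$ adjacent to $\phi$ equals $\mu(S)^k$ times the probability that $k$ independent uniform elements of $S$ subgroup-generate a subgroup of density exceeding $\Lambda$, which by hypothesis is at least $(1 - \Lambda/\mu(S))^c$. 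This simplifies to $\mu(S)^{k-c}(\mu(S) - \Lambda)^c$.

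A short case split on the sign of $k - c$ finishes the right-degree estimate. When $k \leq c$, the factor $\mu(S)^{k-c} \geq 1$ and the bound is $> \eps^c$; when $k > c$, the inequality $\mu(S) > \Lambda + \eps > \eps$ gives $\mu(S)^{k-c} > \eps^{k-c}$, so the overall bound is $> \eps^k$. Either way every right vertex has degree exceeding $\eps^{\max\{k,c\}} \abs{V}$, and part~(a) of Lemma~\ref{lemma:mean-comb-bipartite} then yields $\abs{\cL} \leq 1/\eps^{\max\{k,c\}}$. The universal CombEcon conclusion for an SRG class follows by plugging this precise bound into Theorem~\ref{thm:SRG-suff-condition}: uniform $(k,d)$-shallow-generation over the class converts into $(k + d, \Lambda_{G, H}, d+1)$-subset-generation for every pair $(G, H)$ with $\Lambda_{G, H} > 0$, giving a uniform polynomial list-size bound. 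I anticipate no substantive obstacle; the case split in the right-degree estimate is the only mildly finicky point and is entirely elementary.
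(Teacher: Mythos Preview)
Your proposal is correct and follows essentially the same approach as the paper: the identical bipartite graph on $G^k \times \cL$, left-degree~$\le 1$ via Lemma~\ref{lem:unique-homomorphism-K-g} with trivial $K$, right-degree~$\ge \eps^{\max\{k,c\}}$ from the subset-generation hypothesis, and double counting. Your treatment is in fact slightly more careful (explicit handling of $\Lambda=0$ and the $k \lessgtr c$ case split), but substantively identical.
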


\begin{proof}
	By Theorem~\ref{thm:SRG-suff-condition}, if $\grpclass$ is SRG, then $\grpclass$ is KLC.

	We define a bipartite graph $\gph$. The left vertex set of $\gph$ is $G^k$. The right vertex set is $\highagr = \highagr(\Hom(G,H), f, \Lambda+\eps)$. There is an edge between $(g_1, \dots, g_k) \in G^k$ and $\phi \in \highagr$ if $g_1, \dots, g_k \in \Eq(f, \phi)$, and $\mu(\gen{g_1, \dots, g_k}) > \Lambda$. 
	
	By the definition of $(k, \Lambda_{G, H}, c)$-subset-generated group, each right vertex has degree at least $(\frac{\eps}{\Lambda + \eps})^c \abs{S}^k$, which is at least $(\frac{\eps}{\Lambda+\eps})^c (\Lambda + \eps)^k \abs{G}^k \geq \eps^{\max \{ c, k\}} \abs{G^k}$.

	By Lemma~\ref{lem:unique-homomorphism-K-g} with $K = 1$, each left vertex of $\gph$ has degree at most one.
	
	So, by part~(a) of
Lemma~\ref{lemma:mean-comb-bipartite} (double counting),
$\abs{\highagr} \leq 1 / \eps^{\max \{ c, k\}}$.
\end{proof}

\subsection{SRG implies CertEcon}
\label{section:SRG implies CertEcon}

A KLC class of groups is also universally CertEcon. The argument is conceptually similar to that of CombEcon, but we need to formalize algorithmic issues.

Again, let $\cW_\Lambda$ denote the set of $G\partialto H$ partial maps $\gamma$ satisfying $\mu(\langle \dom \gamma \rangle ) > \Lambda_{G,H}$. 

\begin{theorem}
	\label{thm:SRG implies CertEcon} 
	If $\grpclass$ be an SRG class of groups, then $\grpclass$ is universally strong $\cW_\Lambda$-CertEcon. We assume that all groups in $\grpclass$ are encoded groups, that (nearly) uniform elements of $G$ are provided, and that we have oracle access to the entries of the received word. 
\end{theorem}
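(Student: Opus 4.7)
The plan is to work throughout with $\Hom(G,H)$ (the conclusion for $\aHom$ then transfers via Section~\ref{sec:Hom-versus-aHom}) and to produce the certificate-list as $\Gamma = \{f|_R : R \in \Pi\}$, where $\Pi$ is a small set of random $(k+d+1)$-tuples in $G$, the constants $k,d$ coming from the SRG hypothesis. The access model already supplies nearly-uniform random elements of $G$, and building each $f|_R$ costs only $k+d+1$ oracle queries to $f$ plus a constant number of unit-cost copies of $H$-symbols, so the entire procedure will fit the strong CertEcon template provided $|\Pi|$ stays polynomial in $1/\eps$. The real question is therefore combinatorial: why does a polynomially-short random $\Pi$ yield, with constant probability, at least one $\cW_\Lambda$-certificate for every $\phi \in \cL := \cL(\Hom(G,H), f, \Lambda+\eps)$?

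The per-codeword estimate is the crux. Fix $\phi \in \cL$, set $S_\phi := \Eq(f,\phi)$, so $\mu(S_\phi) \geq \Lambda+\eps$. By Theorem~\ref{thm:SRG-suff-condition} (SRG $\Rightarrow$ KLC), together with monotonicity in the first parameter, $G$ is $(k+d+1, \Lambda, d+1)$-subset-generated. Drawing $R = (g_1,\ldots,g_{k+d+1})$ uniformly from $G^{k+d+1}$ and combining the event that every coordinate lies in $S_\phi$ with the subset-generation bound inside $S_\phi$,
\begin{align*}
\Pr\bigl[R\subseteq S_\phi \text{ and } \mu(\gengroup{g_1,\ldots,g_{k+d+1}})>\Lambda\bigr]
&\geq \mu(S_\phi)^{k+d+1}\bigl(1 - \Lambda/\mu(S_\phi)\bigr)^{d+1}\\
&= \mu(S_\phi)^{k}\bigl(\mu(S_\phi)-\Lambda\bigr)^{d+1}\\
&\geq \eps^{k+d+1}.
\end{align*}
On this event the partial map $\gamma := f|_R$ belongs to $\cW_\Lambda$, agrees with $\phi$ on its domain, and extends uniquely to $\phi$: any other homomorphism extending $\gamma$ would share with $\phi$ an equalizer containing $\gengroup{\dom\gamma}$ and hence of density $>\Lambda$, contradicting the definition of $\Lambda$. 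So $\gamma$ is a valid $\cW_\Lambda$-certificate for $\phi$.

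To pass from this per-$\phi$ lower bound to simultaneous coverage of all of $\cL$, I apply the bipartite covering lemma (Lemma~\ref{lemma:bipartite}(b)) to the bipartite graph on vertex sets $V = G^{k+d+1}$ and $W = \cL$, joining $R$ to $\phi$ exactly when the event above holds. The right-degrees are at least $\delta = \eps^{k+d+1}$ by the estimate just derived; the left-degrees are at most $L = 1$ by Lemma~\ref{lem:unique-homomorphism-K-g} applied with the trivial subgroup; and $|\cL| \leq \eps^{-(k+d)}$ by Theorem~\ref{thm:SRG implies CombEcon}. The lemma then supplies an explicit $\Pi$ of size $s = O\bigl(\eps^{-(k+d+1)}\log(1/\eps)\bigr)$ for which $\{f|_R : R \in \Pi\}$ contains a $\cW_\Lambda$-certificate for every $\phi \in \cL$ with probability $\geq 3/4$, which is exactly the strong $\cW_\Lambda$-CertEcon guarantee.

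All resources stay inside the strong CertEcon template: $O(s) = \poly(1/\eps)$ samples from $G$, the same number of oracle queries to $f$, no group operations in $H$, and $\poly(\log|G|,1/\eps)$ total work in the unit-cost model for $H$. The only step that is not routine is the probability bound in the second paragraph, and it is the only place the SRG hypothesis enters; it rests entirely on the implication SRG $\Rightarrow$ KLC already packaged as Theorem~\ref{thm:SRG-suff-condition}, while the standard Markov-degradation/bipartite-covering machinery and the uniqueness-of-extension property baked into $\cW_\Lambda$ do the rest.
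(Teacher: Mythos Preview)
Your proposal is correct and follows essentially the same route as the paper: both pass through SRG $\Rightarrow$ KLC (Theorem~\ref{thm:SRG-suff-condition}), compute the per-codeword probability that a random $(k{+}d{+}1)$-tuple lands in $\Eq(f,\phi)$ and generates a subgroup of density $>\Lambda$ via the subset-generation inequality, verify uniqueness of extension, and then combine with the CombEcon list-size bound. The only cosmetic difference is that you route the final step through Lemma~\ref{lemma:bipartite}(b) (whose ``include with probability $3/4$'' clause is superfluous here---including every sampled tuple can only help, so the conclusion holds a fortiori), whereas the paper writes the union bound directly as $|\cL|\cdot(1-\eps^b)^t < \eta$.
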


If a partial map $\gamma$ can be extended to some homomorphism and satisfies $\mu(\langle \dom(\gamma) \rangle) > \Lambda$, then $\gamma$ is a $\cW$-certificate. However, $\dom(\gamma)$ may fail to generate the entire group, i.e., $\langle \dom(\gamma) \rangle \lneq G$. Following the strategy of Section~\ref{section:terminology-subword}, we may combine a certificate list-decoder and a \textsc{Homomorphism Extension} oracle $\HomExt_\Lambda$ (a $\cW_\Lambda$-subword extender). The next section discusses how to take this strategy a step further using a stronger \textsc{Homomorphism Extension} oracle to lower bound $\Lambda$.

\paragraph{Domain certificates versus certificates.\\}

We develop some terminology for Theorem~\ref{thm:SRG implies CertEcon}, based on the natural idea of generating certificates by querying the received word $f$. ``Domain certificates'' (dependent on $f$) are subsets of the domain that define a certificate when restricting $f$ to that set. 

Let $G$ and $H$ be groups and $f \in H^G$ be a received word in the codespace of $\Hom(G,H)$. Let $S \subseteq G$ be a subset. Denote by $f_S$ the restriction of $f$ to $S$, i.e., the $H \partialto G$ partial map with domain $S$ defined by $f_S(g) = f(g)$ for $g \in S$.

\begin{definition}[Domain certificate] 
When the code $\Hom(G,H)$ and the received word $f$ are understood, we say that a subset $S \subseteq G$ is a \defn{domain certificate} if the $G \partialto H$ partial map $f_S$ is a certificate for $\aHom(G,H)$. 

For a set $\cW$ of $G \partialto H$ partial maps, we say that $S$ is a \defn{domain $\cW$-certificate} if $f_S$ is a $\cW$-certificate for $\Hom(G,H)$. 
\end{definition}

Note that a domain certificate $S \subseteq G$ is a domain $\cW_\Lambda$-certificate if and only if $\mu( \langle S\rangle) > \Lambda_{G,H}$. 

\begin{definition}[Domain-certificate-list]
We say that a list $\Upsilon$ of subsets of $G$ is a \defn{domain-certificate-list} for a subset $\cL \subseteq \aHom(G,H)$ of affine homomorphisms if $\Upsilon$ contains a domain certificate for each codeword in $\cL$. 
We define \defn{domain-$\cW$-certificate-lists} similarly. 
\end{definition}

\paragraph{Domain certificate result.\\}
Now, we can restate the unabridged SRG result (Theorem~\ref{thm:contr-srg-cert-unabr}) in terms of domain certificates.

\begin{theorem}[SRG implies CertEcon, via domain certificates]
\label{thm:SRG-Cert-domaincertificates}
Let $k \in \NN$ and $c >0$. Let $G$ be a $(k, \Lambda_{G,H}, c)$-subset-generated group and $H$ a group.   Let $f: G \to H$, $\eps > 0$ and $\eta > 0$. Let $\Upsilon$ be a list of $\left\lceil\frac{1}{\eps^b} \ln \left(\frac1{\eta\eps^b} \right) \right \rceil$ independently chosen subsets of $G$, each of size $\max\{ c, k \}$. Then, with probability at least $(1-\eta)$, $\Upsilon$ is a domain-$\cW_\Lambda$-certificate-list of $\cL(\aHom(G,H), f, \Lambda+\eps)$. 
\end{theorem}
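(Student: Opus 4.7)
The plan is to adapt the bipartite-graph argument from the proof of Theorem~\ref{thm:SRG implies CombEcon} into a sampling statement, using a direct union bound (essentially the covering step of Lemma~\ref{lemma:bipartite}(b), but without the auxiliary $3/4$ thinning). Set $b = \max\{c,k\}$; by the monotonicity noted in Remark~\ref{rmk:KLC def}, $G$ is $(b, \Lambda, c)$-subset-generated.

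I would build a bipartite graph $X = (V, W; E)$ with $V = G^b$ and $W = \cL := \cL(\aHom(G,H), f, \Lambda + \eps)$: a tuple $R = (g_1, \dots, g_b) \in G^b$ is joined to $\phi \in \cL$ iff $g_1, \dots, g_b \in \Eq(f,\phi)$ and $\mu(\langle R \rangle) > \Lambda$. Two observations drive the argument. First, by Lemma~\ref{lem:unique-homomorphism-K-g} (applied with $K = \{1\}$), each $R \in V$ has at most one neighbor, so every $R$ adjacent to some $\phi$ witnesses that $f|_R$ is a domain-$\cW_\Lambda$-certificate for that unique $\phi$. Second, for any fixed $\phi \in \cL$, write $S = \Eq(f,\phi)$; then $\mu(S) > \Lambda + \eps$, and factoring the event ``$R \sim \phi$'' into ``$R \subseteq S$'' and ``$\mu(\langle R \rangle) > \Lambda$'', the subset-generated hypothesis yields
\begin{equation*}
\Pr_{R \in G^b}[R \sim \phi] \;\geq\; \mu(S)^b \cdot \left(\tfrac{\mu(S)-\Lambda}{\mu(S)}\right)^c \;\geq\; \mu(S)^{b-c}\,\eps^c \;\geq\; \eps^b,
\end{equation*}
where the final inequality uses $\mu(S) > \eps$ and $b \geq c$.

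Now combine with the combinatorial bound $|\cL| \leq 1/\eps^b$ from Theorem~\ref{thm:SRG implies CombEcon}. Draw $s$ tuples $R_1, \dots, R_s \in G^b$ independently and uniformly. For each fixed $\phi \in \cL$, the probability that none of the $R_i$ is adjacent to $\phi$ is at most $(1-\eps^b)^s \leq e^{-s\eps^b}$; a union bound over $\cL$ then gives
\begin{equation*}
\Pr\!\left[\exists\, \phi \in \cL: \text{no } R_i \sim \phi\right] \;\leq\; |\cL|\,e^{-s\eps^b} \;\leq\; \tfrac{1}{\eps^b}\,e^{-s\eps^b},
\end{equation*}
which is at most $\eta$ once $s \geq \tfrac{1}{\eps^b} \ln\!\left(\tfrac{1}{\eta\eps^b}\right)$. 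Taking $\Upsilon = \{R_1, \dots, R_s\}$ of exactly the stated size then makes $\Upsilon$ a domain-$\cW_\Lambda$-certificate-list for $\cL$ with probability at least $1 - \eta$.

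The main point to be careful about is the mismatch between $\cW_\Lambda$ (defined via $\langle\cdot\rangle$) and the natural condition for uniqueness of extension in $\aHom$, which is $\mu(\genaff{R}) > \Lambda$. This is handled either by first proving the statement for $\Hom(G,H)$ and transferring to $\aHom(G,H)$ via Section~\ref{sec:Hom-versus-aHom}, or by replacing the subset-generated hypothesis with its affine-generated analogue (Remark~\ref{rmk:KLC def}(c)) and running the identical bipartite argument with $\genaff{\cdot}$ in place of $\langle\cdot\rangle$. Beyond this bookkeeping, the argument is just double counting plus a union bound, powered by the subset-generated property at the right-degree step.
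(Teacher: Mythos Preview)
Your proposal is correct and follows essentially the same approach as the paper's proof: both compute, for a fixed $\phi\in\cL$, that a uniformly random $b$-tuple lands in $\Eq(f,\phi)$ and generates a subgroup of density $>\Lambda$ with probability at least $\eps^b$ (factoring via the subset-generated hypothesis), then combine the CombEcon bound $\abs{\cL}\le 1/\eps^b$ with a union bound over $\cL$. Your bipartite-graph framing and invocation of Lemma~\ref{lem:unique-homomorphism-K-g} for uniqueness is just a repackaging of the paper's Observation~\ref{obs:SRG-cert-domWcert-suff-conditions}, and your explicit handling of the $\Hom$/$\aHom$ gap matches what the paper does implicitly (its proof works with $\cL(\Hom(G,H),f,\Lambda+\eps)$ and defers to Section~\ref{sec:Hom-versus-aHom}).
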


The proof is delayed to first discuss its implications and access model.

\begin{remark}[Access model] 
To generate the domain-$\cW$-certificate-list, we need access only the domain, only in the ability to generate random elements. No knowledge of $H$ is required. The dependence on $H$ appears only in the $\Lambda_{G,H}$ of the assumption that $G$ is $(k, \Lambda_{G,H}, c)$-subset generated, but the KLC assumption means that $G$ is $(k,\Lambda_{G,H},c)$-subset generated for every $H$. Knowledge of $\Lambda_{G,H}$ is also not required.  
\end{remark}

Theorem~\ref{thm:SRG-Cert-domaincertificates} produces domain $\cW$-certificates. No work is involved other  than  generating these $\poly(1/\eps)$ uniform random elements of $G$. To then produce actual $\cW$-certificates, simply query $f$ on the domain $\cW$-certificates, an additional $\poly(1/\eps)$ queries to $f$. Theorem~\ref{thm:SRG implies CertEcon} follows. 

\begin{remark}[Amount of work] 
Theorem~\ref{thm:SRG implies CertEcon} implies a stronger result than strong CertEcon, as only a $\poly(1/\eps)$ amount of work is required in the unit cost model (no dependency on $\abs{G}$).\footnote{
Two incomparable sufficient conditions for the access model to $G$ are black-box access and polycyclic presentations. In a black-box group, $\eps$-uniform elements can be generated in polynomial time \cite{Bab91BBpolygen}. Given a polycyclic presentation, exactly uniform elements can be generated. }
\end{remark}

\paragraph{Analysis.\\}

To prove Theorem~\ref{thm:SRG-Cert-domaincertificates} we check the definition of domain $\cW_\Lambda$-certificate-list. In other words, with probability $(1-\eta)$, for every $\varphi \in \highagr(\Hom(G,H), f, \Lambda + \eps)$ the list $\Upsilon$ contains a domain $\cW_\Lambda$-certificate $S_\varphi \subseteq G$ for $\varphi$. A sufficient condition to be a domain $\cW_\Lambda$ certificate is given below. 

\begin{observation}
\label{obs:SRG-cert-domWcert-suff-conditions} If the conditions $\mu(\langle {S} \rangle) > \Lambda$  and $S \subset\Eq(\varphi,f)$ are satisfied, then $S$ is a domain $\cW_\Lambda$-certificate for $\varphi$. 
\end{observation}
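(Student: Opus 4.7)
\medskip

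The plan is to unpack the definitions of domain $\cW_\Lambda$-certificate, $\cW_\Lambda$-certificate, and certificate, verifying each required clause. Recall that in this section we are working with the code $\Hom(G,H)$ rather than $\aHom(G,H)$, so equalizers of pairs of codewords are subgroups (Fact~\ref{fact:equalizer-coset}(a)), not merely subcosets, and $\Lambda = \Lambda_{G,H}$ is the maximum agreement in $\Hom(G,H)$ by Proposition~\ref{prop:prelim-lambda-Hom-vs-aHom}. I must verify three things: (i) $f_S$ agrees with $\varphi$ on its domain, (ii) $\varphi$ is the \emph{unique} codeword in $\Hom(G,H)$ extending $f_S$, and (iii) $f_S\in\cW_\Lambda$.

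First, (i) is immediate from $S\subseteq\Eq(\varphi,f)$: for every $g\in S$ we have $f_S(g)=f(g)=\varphi(g)=\varphi|_S(g)$, so $f_S=\varphi|_S$. Next, (iii) follows directly from the hypothesis, since $\dom(f_S)=S$ and $\mu(\langle\dom(f_S)\rangle)=\mu(\langle S\rangle)>\Lambda$, which is exactly the membership condition for $\cW_\Lambda$.

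The substantive clause is (ii), the uniqueness of extension. Suppose $\psi\in\Hom(G,H)$ is any homomorphism with $\psi|_S=f_S$. Then $\psi$ and $\varphi$ agree on $S$, so $S\subseteq\Eq(\varphi,\psi)$. By Fact~\ref{fact:equalizer-coset}(a), $\Eq(\varphi,\psi)\le G$ is a subgroup; therefore it contains $\langle S\rangle$, giving $\mu(\Eq(\varphi,\psi))\ge\mu(\langle S\rangle)>\Lambda$. Since $\Lambda$ is the maximum agreement between distinct codewords of $\Hom(G,H)$, this forces $\varphi=\psi$.

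There is no real obstacle here beyond keeping the $\Hom$ versus $\aHom$ bookkeeping straight: the whole point of passing to $\Hom$ in this section is precisely that the equalizer is a subgroup, so the weaker hypothesis $\mu(\langle S\rangle)>\Lambda$ (rather than $\mu(\gencoset{S})>\Lambda$, which would be required for the affine version via Corollary~\ref{cor:agree-on-coset-generators}) suffices to pin down $\varphi$ uniquely. Combining (i), (ii), (iii), $f_S$ is a certificate lying in $\cW_\Lambda$, i.e., a $\cW_\Lambda$-certificate for $\varphi$, so $S$ is a domain $\cW_\Lambda$-certificate for $\varphi$ as claimed.
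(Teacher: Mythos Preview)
Your proof is correct and is precisely the definition-unpacking the paper expects the reader to supply; the paper states this as an Observation with no proof, and your argument (verifying that $f_S=\varphi|_S$, that $f_S\in\cW_\Lambda$, and that uniqueness of the extension follows because $\Eq(\varphi,\psi)$ is a subgroup containing $\langle S\rangle$ of density $>\Lambda$) is exactly the intended justification. Your attention to the $\Hom$ versus $\aHom$ distinction is apt and matches the paper's explicit choice in this section to argue about $\Hom(G,H)$.
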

\begin{proof}[Proof of Theorem~\ref{thm:SRG-Cert-domaincertificates}] 
	Let $\highagr =\highagr(\Hom(G,H),= f, \Lambda + \eps)$. Recall that $G$ is $(k, \Lambda_{G,H}, c)$-subset generated. Let $b = \max\{c, k\}$.	Let $\Upsilon = \{ S_1, \ldots, S_t\}$ be the list of uniformly chosen subsets of $G$ as assumed. Then, $t=\left\lceil\frac{1}{\eps^b} \ln \left(\frac1{\eta\eps^b} \right) \right\rceil$ and each $S_i$ consists of $b$ uniformly and independently chosen elements of $G$. 
	
Fix $\varphi \in \highagr$. Fix $S \in \Upsilon$. We calculate the following. 
\begin{align*}
	\Pr [S \text{ is a domain $\cW_\Lambda$-certificate for }\varphi ]
	& \geq \Pr\left[ S \subseteq \Eq(\psi, f) \cap \mu \left( \langle S \rangle \right) > \Lambda \right]  \\ 
	& = \Pr \left[ \mu \left( \langle S \rangle \right) > \Lambda \bigg\vert S \subseteq \Eq(\psi, f) \right] 
	\cdot \Pr[S \subseteq \Eq(\psi, f) ] \\ 
	& > \left( \frac{\eps}{\Lambda + \eps} \right)^c \cdot (\Lambda+\eps)^k\\ 
	& > \eps^b. 
	\end{align*}
	
The first inequality follows from Observation~\ref{obs:SRG-cert-domWcert-suff-conditions}, and the second inequality follows from the definition of $(k, \Lambda, c)$-subset generated. 

The probability that $\Upsilon$ is not a domain $\cW$-certificate-list for $\cL$, i.e., there is $\varphi \in \cL$ such that $\Upsilon$ contains no domain-$\cW$-certificate for $\varphi$, is bounded by 
\begin{equation*}
\abs{\highagr} \cdot \left( 1 - \eps^b \right)^t \leq \frac{1}{\eps^b} \exp \left(-\eps^{b} \cdot t \right) < \eta,
\end{equation*}
where we have used that $\abs{\highagr} \leq 1/\eps^b$ by the CombEcon result Theorem~\ref{thm:SRG implies CombEcon}.
\end{proof}

\subsection{Improvements on $\Lambda$}
\label{section:SRG Lambda}

In this section we first discuss the role of $\Lambda$ in the relationship between CertEcon, $\HomExt$, and AlgEcon. Then, we will give an algorithm that improves our lower bounds on $\Lambda$ and discuss its benefits. 

\paragraph{Role of $\Lambda$ lower bounds in subword extenders.\\}

As discussed in generality (Section~\ref{section:terminology-subword}), for two sets $\cW_1$ and $\cW_2$ of partial maps $\Omega \to \Sigma$, if they satisfy $\cW_1 \subseteq \cW_2$ then a $\cW_1$-certificate-list-decoder and a $\cW_2$-subword-extender combine to a list-decoder. In our context, we consider the sets $\cW_\lambda$ consisting of $G \partialto H$ partial maps whose domain generate a subgroup of $\lambda$ density, i.e.,
$$\cW_\lambda := \left\{ \gamma: G \partialto H \mid  \mu(\gengroup{\dom \gamma}) > \lambda \right\}.$$
Since we have $\cW_{\Lambda_{G,H}}$-CertEcon results for SRG groups (Theorem~\ref{thm:SRG implies CertEcon}), it suffices to find $\cW_\lambda$-subword extenders, i.e., solve $\HomExt_\lambda(G,H)$, for $\aHom(G,H)$ with $\lambda \leq \Lambda_{G,H}$.

So, a stronger lower bound for $\Lambda_{G,H}$ allows use of a weaker $\HomExt$ oracle. 

\paragraph{Role of subword extenders in $\Lambda$ lower bounds.\\}

Conversely, a stronger $\HomExt(G,H)$ oracle can be used to update lower bounds on $\Lambda_{G,H}$, which allows better pruning of the output list. We will explain both clauses of this sentence below.

First, we discuss finding better lower bounds on $\Lambda_{G,H}$, using a stronger version of $\HomExt$ we call $\HomExt012$. The $\HomExt012$ Problem asks to distinguish between the cases of no extension, unique extension, and multiple extensions. In the case of a unique extension, it asks for the extension. 

\begin{definition}($\HomExt012(G,H)$)\\
\indent \textbf{Instance:} A partial map $\gamma: G \partialto H$. \\
\indent \textbf{Solutions:} The set defined by
$$\HExt(\gamma) : = \{ \varphi \in \Hom(G,H) : \varphi|_{\dom \gamma} = \gamma \}.$$
\indent \textbf{Output:} 
\indent \indent $\begin{cases} 
\text{`none'} & \text{ if }\abs{\HExt(\gamma)} = 0 \\ 
\varphi \in \HExt(\gamma) \, \, & \text{ if } \abs{\HExt(\gamma)} = 1 \\
\text{`multiple'} & \text{ if } \abs{\HExt(\gamma)} \geq 2 \end{cases}. $ 
\end{definition}

The $\HomExt012_\lambda(G,H)$ problem is defined similarly, but requiring only correct answers on the $G \partialto H$ partial maps in $\cW_\lambda$.

\begin{proposition}
\label{prop:SRG-012}
Let $G$ and $H$ be groups to which we are given black-box access. Suppose that we are given an oracle for $\HomExt012_\lambda(G,H)$ and an order oracle for subgroups. Then, a subword extender for $\aHom(G,H)$ can be implemented in $\poly(\enc(G))$-time in the unit-cost model for $H$. Moreover, for any $G \partialto H$ partial map $\gamma$ on which  $\HomExt012(G,H)$ returns `multiple,' the value of $\mu( \langle \dom \gamma \rangle )$ is a lower bound for $\Lambda$. 
\end{proposition}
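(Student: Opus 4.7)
The proposition has two assertions, and my plan treats them separately. I would dispose of the second --- the lower bound $\Lambda \geq \mu(\langle \dom \gamma \rangle)$ whenever $\HomExt012$ returns `multiple' --- first, as it is a direct consequence of the oracle's semantics. If the oracle returns `multiple' on $\gamma$, then by definition $|\HExt(\gamma)| \geq 2$, so there exist distinct $\varphi_1, \varphi_2 \in \Hom(G, H)$ extending $\gamma$. Being homomorphisms that agree on $\dom \gamma$, they also agree on $M := \langle \dom \gamma \rangle$, whence $\Eq(\varphi_1, \varphi_2) \supseteq M$ and $\agr(\varphi_1, \varphi_2) \geq \mu(M)$. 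Proposition~\ref{prop:prelim-lambda-Hom-vs-aHom} gives $\Lambda = \max_{\varphi \neq \psi \in \Hom(G,H)} \agr(\varphi, \psi)$, so $\Lambda \geq \mu(\langle \dom \gamma \rangle)$.

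For the subword extender, my first step is to reduce from $\aHom(G, H)$ to $\Hom(G, H)$ by the translation trick used in Proposition~\ref{prop:contr-cert-extenders-to-alg}: given $\gamma \colon G \partialto H$, fix $a \in \dom \gamma$ and set $\gamma_0(g) := \gamma(a)^{-1} \gamma(ag)$; then $\gamma$ extends to an affine homomorphism if and only if $\gamma_0$ extends to a homomorphism, and the two sets of extensions are in bijection. So it suffices to build a $\HomExt_\lambda(G, H)$ search procedure out of the $\HomExt012_\lambda(G, H)$ oracle. My plan is to do this iteratively: call $\HomExt012_\lambda(\gamma)$; on `none' return no extension; on a specific $\varphi$ return $\varphi$; on `multiple' augment $\gamma$ with a new pair $(g, h) \in G \times H$ lying on some $\varphi \in \HExt(\gamma)$ and repeat. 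Since `multiple' forces $M \neq G$ (otherwise any extension would be unique), the order oracle lets me pick $g \in G \setminus M$ and verify via the order oracle that $\langle \dom \gamma \cup \{g\} \rangle$ is strictly larger than $M$, so the loop terminates in $O(\log |G|)$ rounds.

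The hard part will be producing a valid $h$ in the `multiple' case without having any extension in hand. My plan is to exploit the fact that every $\varphi \in \HExt(\gamma)$ restricts to the same homomorphism $\gamma^* \colon M \to H$, which is computable by evaluating straight-line programs from $\dom \gamma$ to elements of $M$ against the values of $\gamma$ via black-box operations in $H$. The valid values $h = \varphi(g)$ are then constrained: writing $n$ for the order of $gM$ in $G/M$ (accessible via the order oracle), one must have $h^n = \gamma^*(g^n)$, and for every $m \in M$ with $gmg^{-1} \in M$ one must have $h \gamma^*(m) h^{-1} = \gamma^*(gmg^{-1})$. Each candidate $h$ satisfying these constraints can be tested by a single call to $\HomExt012_\lambda$ on $\gamma \cup \{(g, h)\}$. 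The main technical obstacle is to show that the candidate set for $h$ can be enumerated in time polynomial in $\enc(G)$ in the unit-cost model for $H$; this is where the order oracle plays its essential role, both in bounding the quotients $\langle \dom \gamma, g\rangle / \langle \dom \gamma \rangle$ that control termination and in pruning the candidate list at each step.
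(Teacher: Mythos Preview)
Your treatment of the second assertion is correct and matches the paper's: distinct homomorphisms extending $\gamma$ agree on $\langle \dom\gamma\rangle$, hence $\Lambda \ge \mu(\langle\dom\gamma\rangle)$ via Proposition~\ref{prop:prelim-lambda-Hom-vs-aHom}.

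For the first assertion you do far more than the paper intends. The paper's entire argument is to point back to Proposition~\ref{prop:contr-cert-extenders-to-alg}: the affine-to-homomorphism translation trick (which you also describe) reduces $\aHom$ subword extension to $\Hom$ subword extension, and that is what the oracle is for. The paper makes no attempt to handle the `multiple' branch of $\HomExt012$ by further computation; implicitly, in the intended application the extender is only invoked on the partial maps coming out of the $\cW_\Lambda$-certificate-list-decoder (Theorem~\ref{thm:SRG implies CertEcon}), and for a genuine $\cW_\Lambda$-certificate the extension is unique by definition, so the `multiple' branch never fires on the inputs that matter. Thus $\HomExt012_\lambda$ functions as a drop-in replacement for the $\HomExt_\lambda$ search oracle of Propositions~\ref{prop:contr-cert-extenders-to-alg} and~\ref{prop:contr-cert-extenders-to-alg-W}, and nothing further is required. (The statement of the proposition is admittedly informal on this point.)

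Your iterative augmentation plan for the `multiple' case has a real gap. You need to produce a candidate $h\in H$ with $h^n=\gamma^*(g^n)$ and then test it, but in the unit-cost black-box model for $H$ you have no way to enumerate or search for elements of $H$: the only elements of $H$ available to you are those obtained from the values $\gamma(\dom\gamma)$ by group operations, i.e., elements of $\gamma^*(M)$. There is no reason $\varphi(g)$ for $g\notin M$ should lie in that subgroup, and the order oracle you are given is for subgroups of $G$, not $H$. Since $H$ may be infinite (and even when finite the number of $n$-th roots of a fixed element can be exponential in $\log|H|$), this step cannot be carried out in $\poly(\enc(G))$ time in the stated model. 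You should drop this attempt and simply invoke Proposition~\ref{prop:contr-cert-extenders-to-alg} as the paper does.
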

\begin{proof}
The first conclusion of this proposition is the same as Proposition~\ref{prop:contr-cert-extenders-to-alg}.

The second conclusion is trivial, since multiple extensions of $\gamma$ would be distinct homomorphisms in $\Hom(G,H)$ that agree on $\langle \dom \gamma \rangle$, so $\mu(\langle \dom \gamma \rangle)$ lower bounds $\Lambda_{G,H}$. (Recall Proposition~\ref{prop:prelim-lambda-Hom-vs-aHom} which states that, if $\Lambda_{G,H} \neq 0$, then `$\aHom$' can be replaced by `$\Hom$' in the definition of $\Lambda_{G,H}$.) The order oracle can be used to calculate the value of $\mu(\langle \dom \gamma \rangle)$. 
\end{proof}

Recall that our main algorithmic result (Theorem~\ref{thm:mainalg}) hinges on a solution~\cite{HE} for \linebreak $\HomExt_\lambda(G,H)$  Search in the cases considered ($G = A_n$ and $H = S_m$ for exponentially bounded $m$), for the lower bound  $\lambda = 1/\binom{n}{2}$ of $\Lambda_{G,H}$. In fact~\cite{HE} provides a solution for \linebreak $\HomExt012_\lambda(G,H)$ with the same $\lambda$ (or the version of $\HomExt$ that counts solutions until some threshold). This allows a stronger lower bounding of $\Lambda_{G,H}$ as promised. \\

We discuss the ``better pruning'' consequences of these updated lower bounds on $\Lambda$. 
\begin{enumerate}
\item Better pruning of the final output list: The definition of list-decoder requires only that the output list $\outputlist$ be a superlist of the desired list $\cL = \cL(\aHom(G,H), f, \Lambda + \eps)$. The output can be pruned, using any lower bound $\lambda$ on $\Lambda$, to contain only affine homomorphisms $\varphi \in \aHom(G,H)$ that have high agreement $\agr(\varphi, f) > \lambda + \eps/2$ with $f$. The pruning can be accomplished by sampling agreement. 
\item Faster processing of certificate-lists (output of certificate-list-decoder) into output lists (output of list-decoder): While the value of $\Lambda_{G,H}$ may be unknown, the certificate-list-decoder of Theorem~\ref{thm:SRG implies CertEcon} guarantees certificates in $\cW_{\Lambda_{G,H}}$. A known lower bound $\lambda$ for $\Lambda_{G,H}$ allows pruning of partial maps $\gamma$ that do not satisfy $\mu(\langle \dom \gamma \rangle) > \lambda$. The subword extender need not be called on these maps. 
\end{enumerate}

\nocite{GKZ08, BL15, BBtoAlt}
\bibliographystyle{alpha}
\bibliography{homcodes}
\end{document}